\documentclass[manuscript]{acmart}
\usepackage{booktabs}
\usepackage[linesnumbered,ruled,noend]{algorithm2e}

\SetAlFnt{\small}
\SetAlCapFnt{\small}
\SetAlCapNameFnt{\small}
\SetAlCapHSkip{0pt}
\IncMargin{-\parindent}
\setcopyright{none}
\usepackage{layout}
\linespread{1.1}

 \setlength {\marginparwidth }{2cm}

\usepackage{dsfont}
\usepackage{bbm} 
\usepackage{thmtools}
\usepackage{thm-restate}
\usepackage{cleveref}
\usepackage{newtxtext,newtxmath}
\usepackage[textsize=tiny]{todonotes}

\usepackage{comment}
\usepackage{enumitem}
\usepackage{nicefrac}
\usepackage{pgfplots}
\usepackage{soul}
\usepackage{subcaption}
\usepackage{tabularx}

\usepackage{times}
\usepackage{hyperref} 

\usepackage{array,etoolbox}
\preto\tabular{\setcounter{magicrownumbers}{0}}
\newcounter{magicrownumbers}
\newcommand\rownumber{\stepcounter{magicrownumbers}\arabic{magicrownumbers}}

\newcounter{rownumber}[figure] 
\setcounter{rownumber}{1}

\settopmatter{printacmref=false}

\usepackage{pdflscape}

\DeclareRobustCommand{\citet}[1]{\citeauthor{#1}~\shortcite{#1}}
\newcommand{\cP}{{\sf P}}
\newcommand{\hG}{\hat{G}}
\newcommand{\D}{\mathcal{D}}

\newcommand{\I}{I}
\newcommand{\J}{(G,\V)}

\newcommand{\N}{\mathcal{N}}
\newcommand{\mech}{\mathcal{M}}
\newcommand{\natv}{\texttt{ncl}}
\newcommand{\intv}{\texttt{icl}}
\newcommand{\order}{\text{order}}
\newcommand{\nat}{\text{nat}}
\newcommand{\intt}{\text{int}}
\newcommand{\con}{\text{con}}

\newcommand{\C}{\mathcal{C}}
\newcommand{\segv}{\texttt{iss}}
\newcommand{\sv}{\texttt{isn}}
\newcommand{\X}{\mathcal{X}}

\newcommand{\V}{\mathcal{V}}
\newcommand{\cint}{c_{{\tt int}}}
\newcommand{\Mnat}{\mech_{\nat}}
\newcommand{\Mcon}{\mech_{\con}}
\newcommand{\Mint}{\mech_{\intt}}
\newcommand{\Morder}{\mech_{\order}}

\newcommand{\NP}{{\sf NP}}

\newcommand{\dia}{\hfill{$\diamond$}}
\newcommand{\pG}{(G, \V)}
\newcommand{\gcyc}{\mathcal{C}(\Gamma, \pG)}

\usepackage{boxedminipage}

\newcommand{\decisionprob}[3]{
        \begin{center}
                \begin{boxedminipage}{.99\textwidth}
                       {#1}\\[2pt]
                        \begin{tabular}{ r p{0.8\textwidth}}
                                \textit{~~~~Instance:} & {#2}\\
                                \textit{Question:} & {#3}
                        \end{tabular}
                \end{boxedminipage}
        \end{center}
}

\pgfplotsset{compat=1.18}

\usetikzlibrary{arrows.meta,cd, fit, backgrounds,positioning}
\usetikzlibrary{shapes.geometric}
\usetikzlibrary{decorations.pathreplacing, angles, quotes}
\usetikzlibrary{bending}

\newtheorem{theorem}{Theorem}

\newtheorem{proposition}{Proposition}

\newtheorem{example}{Example}
\newtheorem{lemma}{Lemma}
\newtheorem{remark}{Remark}
\title[International Kidney Exchange Programmes with Country-Specific Parameters]{Complexity and Manipulation of International Kidney Exchange Programmes with Country-Specific Parameters}

\author{Rachael Colley}
\email{rachael.colley@glasgow.ac.uk}
\orcid{0000-0003-1164-4234}
\affiliation{%
  \institution{School of Computing Science, University of Glasgow}
  \city{Glasgow}
  \country{United Kingdom}
}

\author{David Manlove}
\email{david.manlove@glasgow.ac.uk}
\orcid{0000-0001-6754-7308}
\affiliation{%
  \institution{School of Computing Science, University of Glasgow}
  \city{Glasgow}
  \country{United Kingdom}
}

\author{Daniel Paulusma}
\email{daniel.paulusma@durham.ac.uk}
\orcid{0000-0001-5945-9287}
\affiliation{%
  \institution{
Department of Computer Science, Durham University}
  \city{Durham}
  \country{United Kingdom}
}

\author{Mengxiao Zhang}
\email{mengxiao.zhang@auckland.ac.nz}
\orcid{0000-0002-6274-0384}
\affiliation{%
  \institution{School of Computer Science, The University of Auckland}
  \city{Auckland}
  \country{New Zealand}
}

\begin{abstract}
Kidney Exchange Programmes (KEPs) facilitate the exchange of kidneys, and larger pools of recipient-donor pairs tend to yield proportionally more transplants, leading to the proposal of international KEPs (IKEPs).  However, as studied by \citet{mincu2021ip}, practical limitations must be considered in IKEPs to ensure that countries remain willing to participate. Thus, we study IKEPs with country-specific parameters, represented by a tuple $\Gamma$, restricting the selected transplants to be feasible for the countries to conduct, e.g., imposing an upper limit on the number of consecutive exchanges within a country's borders. We provide a complete complexity dichotomy for the problem of finding a feasible (according to the constraints given by $\Gamma$) cycle packing with the maximum number of transplants, for every possible $\Gamma$. We also study the potential for countries to misreport their parameters to increase their allocation.  As manipulation can harm the total number of transplants, we propose a novel individually rational and incentive compatible mechanism $\mathcal{M}_{\text{order}}$.  We first give a theoretical approximation ratio for $\mathcal{M}_{\text{order}}$ in terms of the number of transplants, and show that the approximation ratio of $\Morder$ is asymptotically optimal.  We then use simulations which suggest that, in practice, the performance of $\mathcal{M}_{\text{order}}$ is significantly better than this worst-case ratio.
\end{abstract}

\begin{document}

\keywords{Kidney Exchange Programme, Matching Problem, Cycle Packing, Computational Complexity, Mechanism Design}

\maketitle

\section{Introduction}\label{s-intro}

According to the most recent Global Burden of Disease study, in 2021 around 673.7 million people were affected by Chronic Kidney Disease (CKD), and 1.5 million deaths the same year were attributable to 
CKD. 
The most serious cases of CKD lead to End Stage Renal Disease (ESRD), for which the main treatments are either dialysis or transplantation. However, dialysis has major lifestyle implications for patients, and also leads to a patient life expectancy of only five years, as well as carrying a financial burden for healthcare systems~\cite{ASXITCKAL18}.  Transplantation, on the other hand, is more cost-effective~\cite{HSSGSCWKWSKI17}, 
and leads to better survival prospects for a recipient, especially from a living donor rather than a deceased donor. However, a patient with ESRD may have a willing but medically incompatible donor. \textit{Kidney Exchange Programmes} (KEPs) help to increase possibilities for living kidney donation by allowing recipients to swap their willing but incompatible donors in a cyclic fashion to obtain a compatible kidney~\cite{Rap86}. 

Many countries now have regional and national KEPs, with several examples of international KEPs (IKEPs) in Europe~\cite{ENCKEP19a}.  IKEPs involve multiple countries sharing their pools to provide increased opportunities for recipients, particularly those who are long waiting and/or difficult to match. European IKEPs include the Scandiatransplant Exchange Programme (STEP)~\cite{STEP23} 
that involves several Nordic countries, as well as collaboration among Italy, Portugal, and Spain~\cite{valentin2019international}.

STEP is based around a \emph{merged pool} where participating countries combine their own pools of donor-recipient pairs to form a single IKEP pool; and participating countries do not run their own national KEPs~\cite{STEP23}.  On the other hand, the collaboration between Italy, Portugal and Spain involves a \emph{consecutive pool} -- here, the participating countries first optimise on their national KEPs, and then the remaining pairs are combined to form the pool for the IKEP~\cite{valentin2019international}.  Maximising the number of identified transplants via a consecutive pool is individually rational (IR) for each participating country when considering an optimal solution that could be achieved nationally (throughout, we assume that ``optimal'' refers to maximising the number of transplants). However, this is not the case when optimising via a merged pool, as shown in Figure~\ref{fig:ikep}: any mechanism that maximises the number of transplants in the IKEP pool will select the cycle of length~$3$ comprising pairs $h_1$, $j_1$ and $j_2$. This is not IR for country $H$, which could have achieved two transplants (rather than the one following from the cycle of length~$3$) by optimising nationally.

\begin{figure}[b!]
\centering
\resizebox{0.35\columnwidth}{!}{
\begin{tikzpicture}
\begin{scope}[ every node/.style={draw,rectangle,thick,minimum size=7mm}]
    \node  (x1) at (0,0) {$h_1$};
    \node  (x2) at (-2,0) {$h_2$};
\end{scope}
  \begin{scope}[every node/.style={draw,circle,thick,minimum size=8mm}]  
    \node (y1) at (3,0) {$j_1$};
    \node (y2) at (5,0) {$j_2$};
\end{scope}
\begin{scope}[>={Stealth[black]},
              every edge/.style={draw, very thick}]
    \path [->] (x1) edge[bend left] node {} (x2);
    \path [->] (x2) edge[bend left] node {} (x1);
    \path [->] (y1) edge node {} (y2);
    \path [->] (x1) edge[dashed] node {} (y1);
    \path [->] (y2) edge[dashed, bend right] node {} (x1);
\end{scope}
\draw[rounded corners, dotted] (-2.75,0.75) rectangle (0.75, -0.75) {};
\draw[rounded corners, dotted] (5.75,0.75) rectangle (2.25, -0.75) {};
    
\end{tikzpicture}}
\Description{}

\caption{An IKEP pool involving countries $H$ and $J$. 
Country $H$ and Country $J$'s pools 
 comprise pairs $h_1$ and $h_2$, and $j_1$ and $j_2$, respectively.  Full arcs are national arcs. Dashed arrows are international arcs.} 
\label{fig:ikep}
\end{figure}

Regional and national KEPs face many logistical challenges. Typically, the nephrectomies and transplants are scheduled on the same day. For this reason, there are upper bounds on the maximum lengths of exchange cycles, but these vary in different European national KEPs~\cite{ENCKEP19a}.  Another challenge with national KEPs is that large hospitals may be incentivised to hide their easiest-to-match pairs (finding cycles among them ``in house''), while reporting only their hardest-to-match pairs to the national pool~\cite{ashlagi2014free}. Figure~\ref{fig:ikep} also illustrates this if we assume $H$ and $J$ are two hospitals in a national KEP.

IKEPs must overcome several additional complexities. Firstly, countries participating in an IKEP may have different national cycle length limits. Secondly, they must agree on the maximum length~of an \emph{international cycle} that spans two or more countries.  Thirdly, within an international cycle, countries may wish to impose their own limits on \emph{segment size} (i.e., the maximum number of consecutive transplants within their country) and the \emph{number of segments} within their country (i.e., the maximum number of times the cycle enters that country).  
These restrictions may be made for logistical reasons: nephrectomies and transplants may need to be conducted simultaneously, necessitating available resources of a country, which may limit the segment size, and a cycle entering a country may involve increased distance, air transport and complex international import regulations, limiting the segment number. We remark that a country's restrictions on segment size and number only limit segments involving its own pairs.

Countries participating in an IKEP may behave strategically, not only by withholding pairs from the IKEP pool, but also via the declared values of their country-specific, logistically motivated parameters.  This is also problematic, as any benefit to a country behaving strategically will in general be detrimental to the overall social welfare of the IKEP, thus weakening the collaboration. We can assume that countries do not manipulate their own national cycle length limits, since these are usually public knowledge, especially in the case of an existing national KEP. Hence, such strategising would in practice be easily recognised.  However, segment size and segment number are country-specific parameters that are not public information prior to an IKEP being set up. For example, in Figure~\ref{fig:ikep}, if country~$H$ declares that its maximum segment size is~$0$, i.e., that it does not participate in international exchanges, then $H$ receives two transplants instead of only one (if a maximum solution is selected).

 \subsection{Existing Results}\label{s-related}
Multi-agent KEPs have been studied extensively in the kidney exchange literature, mostly from the perspective of incentivising individual hospitals to participate in a national KEP~\cite{ashlagi2014free,toulis2015design,blum2017opting,agarwal2019market}. In the IKEP context, several recent papers focused on the pressing issue of ensuring long-term stability of IKEPs based on merged pools where, as we saw, IR can no longer be guaranteed. As a solution, \citet{klimentova2021fairness} introduced a credit-based system, in which each country is allocated in each round a “fair” target number of kidney transplants. The difference between the allocated number of transplants for a country and its target number yield (positive or negative) credits for each country in the next round. Their model has been further investigated in~\cite{BiroGKPPV20,biro2019generalized,benedek2022computing,benedek2024computing} both from a theoretical and empirical perspective.
\citet{sun2021fair} gave bounds on the number of transplants that each country should receive, leading to an analogous notion of fairness to the credit-based framework described above. \citet{druzsin2024performance} investigated the extent to which countries' level of cooperation impacts the number of transplants each country receives. \citet{BSS24} introduced the ex-post setting, where countries can modify a proposed set of transplants.

\subsection{Our Focus}\label{s-focus}

The papers referenced in Section~\ref{s-related} do not consider any country-specific parameters, such as international segment number and international segment size. 
Indeed, existing IKEPs typically have small international cycle limits at present, which essentially obviates the need for country-specific parameters. 
However, our focus is on future developments in larger IKEPs, as in, e.g., the European Commission's envisioned pan-European IKEP \citep{PanEURO}, where larger international cycle limits may be anticipated, making country-specific parameters more meaningful.  Such parameters have been considered by \citet{mincu2021ip}, who presented a corresponding ILP model for finding an optimal solution that respects a given set of country-specific parameters, represented by a tuple $\Gamma$.  Here, a solution is a \emph{$\Gamma$-cycle packing}, which is a vertex-disjoint set of cycles satisfying the constraints imposed by $\Gamma$.

The central mechanism of \citet{mincu2021ip}, which we refer to as $\Mint$, returns a maximum (size) $\Gamma$-cycle packing for a round in an IKEP with a merged pool. 
In addition, \citet{mincu2021ip}  considered the national mechanism $\Mnat$ that selects a maximum size cycle packing from each country's national pool, as well as the consecutive mechanism $\Mcon$ which first applies $\Mnat$ on each of the national pools and then $\Mint$ on the international pool consisting of all the remaining patient-donor pairs. 

To evaluate their models, \citet{mincu2021ip} performed extensive simulations involving IKEPs with two countries. They considered neither the computational complexity of
finding a maximum $\Gamma$-cycle packing
nor desirable properties of mechanisms, such as IR and incentive compatibility (IC). However, as we illustrated with the example of Figure~\ref{fig:ikep}, ideally mechanisms for IKEPs should be IR and IC (with respect to segment size and segment number) simultaneously. Hence, our research questions are: 

\begin{itemize}
\item[--] \emph{What is the computational complexity of the problem of finding a $\Gamma$-cycle packing with the maximum number of transplants?}
\item[--] \emph{Is it possible to design IR and IC mechanisms for IKEPs with country-specific parameters?}
\item[--] \emph{What is the price of using such mechanisms in terms of the maximum total number of transplants?}
\end{itemize}

 Two additional country-specific parameters were proposed by \citet{mincu2021ip} that we chose not to model. 
 The first allows a country to limit the number of countries involved in an international exchange involving their pairs; we do not include this parameter as it does not allow a country to reduce its logistical burden. The second parameter allows countries to limit the total number of their recipients that can be selected in any international cycle. 
We do not include this parameter either, as
the product of a country's segment size and segment number already provides an upper bound for this additional parameter. 
\citet{mincu2021ip} also allow for non-directed (altruistic) donors who can trigger chains of transplants instead of cycles. 
Non-directed donors are not permitted in several countries, such as  France~\citep{COMBE2022270} and Poland~\citep{LichodziejewskaNiemierko2021}. However, in several other countries, such as the Netherlands and the UK, non-directed donors are allowed~\citep{biro2019building}, and we therefore consider these in future work.

\subsection{Our New Results}\label{s-new}

To answer our research questions, we first need to analyse the computational complexity of finding a maximum $\Gamma$-cycle packing. We do this by fixing the set $\Gamma$ of country-specific parameters as not part of the input, as $\Gamma$ is expected to be stable for an IKEP. It follows immediately from existing work from \citet{abraham2007origin} that for some specific fixed $\Gamma$,  the problem {\sc Max $\Gamma$-Cycle Packing} of computing a maximum $\Gamma$-cycle packing is \NP-hard.  In Section~\ref{sec:complexityMax} we show a {\it complete} complexity dichotomy of  {\sc Max $\Gamma$-Cycle Packing}, so taking into account {\it all} possible values of $\Gamma$. Our dichotomy shows that for most sets $\Gamma$,  {\sc Max $\Gamma$-cycle Packing} is \NP-hard even if only two countries are involved, making it hard for countries to manipulate. 

Despite the above, pool sizes in practice can allow optimal solutions to be computed within seconds or minutes, so the issue of manipulation should certainly not be underestimated.  
We first show in Section~\ref{sec:mech} that $\Mint$ is not IR and also not IC (with respect to segment size and segment number). Thus, it is natural to design a mechanism that is both IR and IC.  The mechanism $\Mnat$ returns a maximum national $\Gamma$-cycle packing, which is the union of maximum $\Gamma$-cycle packings for each of the national pools only. It is readily seen that $\Mnat$ is both IR and IC. However, unfortunately, $\Mnat$ fails to allow the benefits of international collaboration.

We present a novel mechanism in Section~\ref{sec:mech} that is both IC and IR and that does not have the drawback of $\Mnat$. Our new mechanism, $\Morder$, first behaves like $\Mnat$ and computes a maximum national $\Gamma$-cycle packing. Afterwards, it
considers a random ordering of all international cycles not exceeding the international cycle length limit and replaces those that are not $\Gamma$-cycles by smaller ones, which are again considered in some random ordering. We show that the running time of $\Morder$ compares favourably with $\Mint$. Afterwards, we establish that $\Morder$ is indeed IR and IC.  We give a tight upper bound of $\max\{\cint,d^*\}$ on the approximation ratio of $\Morder$, where $\cint$ is the maximum length of any international $\Gamma$-cycle, and $d^*$ is the maximum number of international $\Gamma$-cycles involving any vertex of an international $\Gamma$-cycle.  We also prove that any IR mechanism cannot achieve an approximation ratio less than $\cint$, and any IC mechanism cannot achieve an approximation ratio less than $d^*$. We show how this implies the approximation ratio of $\Morder$ is asymptotically optimal. 

Our above results provide theoretical answers to all three of our research questions. In addition, we also wish to provide some empirical answers. Therefore, in Section~\ref{sec:simulations}, we perform a simulation study that compares the performance of $\Morder$ with $\Mint$ and $\Mnat$, and also with one further mechanism $\Mcon$, which constructs a maximum $\Gamma$-cycle packing on a consecutive pool. We do the latter comparison, as $\Morder$ can also be used for consecutive pools.
Our simulation results suggest that the performance of $\Morder$ greatly exceeds its worst-case upper bound, producing on average $72\%$ of the number of transplants found by (the non-IC and non-IR mechanism) $\Mint$, $91\%$ of the number of transplants found by (the IR but non-IC mechanism) consecutive mechanism $\Mcon$, and $84\%$ more transplants than the number found by $\Mnat$. This gives evidence of the potential benefit of joining an IKEP, and suggests that the price of ensuring IC and IR via our consecutive mechanism $\Morder$ is very modest in comparison with the IR consecutive mechanism $\Mcon$.  

Finally, Section \ref{sec:conc} presents our conclusions and suggestions for future work.

\section{Preliminaries}\label{sec:model}

The merged pool in some round of an IKEP is modelled by {\it compatibility graph}, which is a directed graph $G=(V,A)$ without self-loops or multi-arcs. A vertex of $V$ represents a recipient-donor pair, and an arc $(u,v)$ indicates that the donor of pair $u$ is compatible with the patient of pair $v$ (see also Figure~\ref{fig:ikep}). Below we give further graph terminology; see Section~\ref{sec:mech} for mechanism design terminology.

A {\it (directed) cycle} is a directed graph $C=(V,A)$ with vertices $v_1,\ldots,v_s$ for some integer $s\geq 2$, such that the set $A$ consists of arcs $(v_i,v_{i+1})$ for $i=1,\ldots,s$ where $v_{s+1}=v_1$; we also write $C=\langle v_1,\ldots,v_s \rangle$. The {\it length} of $C$ is its number of arcs.
A {\it cycle packing} of a directed graph $G$ is a set~$\C$ of cycles in $G$ that are pairwise vertex-disjoint. The {\it size} $\Vert \C\Vert$ of~$\C$ is the sum of the lengths of the cycles of~$\C$, or equivalently, the total number of arcs belonging to the cycles in $\C$. A cycle packing $\C$ of a graph $G$ is {\it perfect} if every vertex of $G$ belongs to some cycle in $\C$.  In our context, a cycle packing prescribes a set of kidney transplants in a compatibility graph $G$, and in perfect cycle packing, every patient will receive a kidney. 
Given $G=(V,A)$ and a subset $S\subseteq V$, we let $G[S]$ denote the subgraph of~$G$ {\it induced by} $S$, that is, $G[S]$ is the subgraph of $G$ obtained after deleting the vertices $V\setminus S$ from $G$. 

For an integer $n\geq 1$, let $\V=(V_1,\ldots,V_n)$ be a partition of $V$, that is, every $V_i$ is non-empty, and every two $V_i$ and $V_j$ with $i\neq j$ are disjoint. In our context, $\N=\{1,\ldots,n\}$ is the {\it set of countries} and a set~$V_i$ is the set of recipient-donor pairs belonging to country~$i$. 
We say that the pair $(G,\V)$ is an {\it $n$-partitioned} graph.
An arc of $G$ is {\it international} if its two end-vertices belong to different sets of $\V$; else it is a {\it national} arc.
A cycle $C$ of $G$ is {\it international} if it has at least one international arc; else $C$ is {\it national}.
A path $P$ in an international cycle $C$ is a {\it $V_i$-segment} of $C$ for some $i\in \N$ if $P$ only contains vertices from $V_i$; the {\it size} of $P$ is the number of vertices $|V(P)|$ of $P$. 
With $(G,\V)$, we associate the following country-specific parameters:
\begin{itemize}
\item the \emph{number of countries} $n$;
\item a {\it national cycle (length) limit vector} $\natv$, where for $i\in\N$, $\natv_i$ denotes the maximum national cycle length in $G[V_i]$;
\item an {\it international cycle (length) limit} $\intv$ is the maximum length of an international cycle that is collectively agreed;
\item an {\it international segment size vector} $\segv$, where for $i\in \N$,  
$\segv_i$ denotes the maximum size of a $V_i$-segment in an international cycle in $G$; and
\item an {\it international segment number vector}  $\sv$, where for $i\in \N$,  
$\sv_i$ denotes the maximum number of $V_i$-segments in an international cycle in $G$.
\end{itemize}

We say that $\Gamma= (n, \intv, \natv,\segv, \sv)$ is a {\it set of country-specific parameters} for $n$-partitioned graphs $(G,\V)$.
A national cycle $C$, say with $V(C)\subseteq V_i$ for some $i\in \N$, is a {\it $\Gamma$-cycle} if $C$ has length at most $\natv_i$.
An international cycle $C$ is a {\it $\Gamma$-cycle} if: {\bf(i)} $C$ has length at most $\intv$; {\bf(ii)} for every $i\in \N$, every $V_i$-segment of $C$ has size at most $\segv_i$; and {\bf(iii)} the number of $V_i$-segments of $C$ is at most $\sv_i$.  
A cycle packing $\mathcal{C}$ of $G$ is a {\it $\Gamma$-cycle packing} if every cycle of $\mathcal{C}$ is a $\Gamma$-cycle.
As self-loops are not allowed, cycles cannot have a length of~$1$. Hence, we assume $\intv\neq 1$ and $\natv_i \neq 1$ for every $i\in \N$.

In the example of Figure~\ref{fig:ikep}, we have a $2$-partitioned graph $(G,\V)$ with $\N=\{1,2\}$, where country~$1$ is $H$; country~$2$ is $J$; $V_1=\{h_1,h_2\}$ and $V_2=\{j_1,j_2\}$. Let $\intv=3$, $\segv_1=1$, $\segv_2=2$ and $\sv_1=\sv_2=1$. Now, $C=\langle h_1,j_1,j_2 \rangle$ is an international $\Gamma$-cycle. If $\natv_2=2$, then $D=\langle h_1,h_2 \rangle$ is a national $\Gamma$-cycle packing. There are three $\Gamma$-cycle packings: $\emptyset$, $\{C\}$ and $\{D\}$, none of which is perfect. If country~$1$ sets $\segv_1'=0$, this yields a new set $\Gamma'$ of country-specific parameters. There are only two $\Gamma'$-cycle packings: $\emptyset$ and $\{D\}$. 
From now on, we assume that every country is willing to participate in an international cycle; that is, $\segv_i\geq 1$ and $\sv_i\geq 1$ for every $i\in \N$. Moreover, we write $\intv=\infty$ if we do not put any bound on $\intv$. We also allow $\natv_i$, $\segv_i$ and $\sv_i$ to be $\infty$. 

For an $n$-partitioned graph $(G,\V)$ and set of country-specific parameters, we say that $I= \langle G, \V, \Gamma\rangle $ is an IKEP {\it instance} (corresponding to a round in an IKEP with set of country-specific parameters~$\Gamma$).

\section{The Computational Complexity of Finding Maximum Cycle Packings}\label{sec:complexityMax}

Let $I= \langle G, \V, \Gamma\rangle$ be an IKEP instance.
A $\Gamma$-cycle packing $\C$ is {\it maximum} if $G$ contains no $\Gamma$-cycle packing~${\C}'$ with $\Vert\mathcal{C'} \Vert$ > $\Vert\mathcal{C} \Vert$.
We recall that the mechanism $\Mint$ of \citet{mincu2021ip} returns a maximum $\Gamma$-cycle packing. In this section, we consider the complexity of this optimisation problem. As mentioned in Section~\ref{s-intro}, we do this in a fine-grained manner by fixing $\Gamma$, so $\Gamma$ is not part of the problem input. This yields the problem {\sc Max $\Gamma$-Cycle Packing}, which is to determine a maximum $\Gamma$-cycle packing of an $n$-partitioned graph $(G,\V)$. 
We determine the complexity of {\sc Max $\Gamma$-Cycle Packing} for {\it all} values of $\Gamma$.
All our hardness results hold even for the variant {\sc Perfect $\Gamma$-Cycle Packing}, which is to decide if an $n$-partitioned graph $(G,\V)$ has a perfect $\Gamma$-cycle packing.

\citet{abraham2007origin} established a complete dichotomy for the case of one country (which corresponds to the KEP setting).

\begin{theorem}[\citeauthor{abraham2007origin}, \cite{abraham2007origin}]\label{t-dicho}
For a set of country-specific parameters $\Gamma$ with $n=1$, {\sc Max $\Gamma$-Cycle Packing} is polynomial-time solvable if $\emph{\natv}\in \{0,2,\infty\}$, and else even {\sc Perfect $\Gamma$-Cycle Packing} is \NP-complete.
\end{theorem}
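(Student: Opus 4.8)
The plan is to reduce $\Gamma$ to its single relevant parameter, the national cycle-length bound $L := \natv$, and to split into the polynomial-time cases $L \in \{0,2,\infty\}$ and the hard cases $L \in \{3,4,5,\dots\}$. Since $n=1$, every cycle is national and $\segv,\sv,\intv$ play no role, so a $\Gamma$-cycle is simply a cycle of length at most $L$. Membership in \NP\ of \pcp\ is immediate: a cycle packing is a polynomial-size certificate whose feasibility (vertex-disjointness and the length bound) is checkable in polynomial time. To transfer hardness from \pcp\ to \textsc{Max $\Gamma$-Cycle Packing}, I would observe that a maximum $\Gamma$-cycle packing has size $\|\C\|=|V|$ iff it is perfect, so any polynomial algorithm for the maximisation problem also decides \pcp.

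For the polynomial cases: when $L=0$ no cycle is allowed, so the empty packing is the unique, hence maximum, $\Gamma$-cycle packing. When $L=2$, a $\Gamma$-cycle has length exactly $2$, i.e. it is a pair of antiparallel arcs $(u,v),(v,u)$; thus a maximum $\Gamma$-cycle packing corresponds exactly to a maximum matching in the undirected graph whose edges are the vertex pairs joined by antiparallel arcs, solvable by Edmonds' blossom algorithm. When $L=\infty$ there is no length constraint, so I would compute a maximum vertex-disjoint cycle packing (equivalently, cover the maximum number of vertices by disjoint cycles) via a reduction to maximum-weight bipartite matching: split each vertex $v$ into $v^+$ (left) and $v^-$ (right), add an edge $v^+w^-$ of weight $1$ for each arc $(v,w)$ and a \emph{bypass} edge $v^+v^-$ of weight $0$. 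A maximum-weight perfect matching (which always exists via the bypass edges) then selects, for each $v$, either an in- and an out-arc, placing $v$ on a cycle, or the bypass, leaving $v$ uncovered; maximising the weight maximises the number of covered vertices.

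For the hard cases I would show \pcp\ is \NP-complete for every fixed finite $L\geq 3$ by reduction from \textsc{3-Dimensional Matching}. The core is the base case $L=3$: from an instance with element sets $X,Y,Z$ and triple set $T$, I would build a digraph with no pair of antiparallel arcs, so that $2$-cycles are structurally impossible and every cycle of length at most $3$ is a directed triangle. The graph contains one vertex per element and, for each triple $t$, a gadget admitting exactly two \emph{perfect local covers}: a \emph{selecting} cover routing a triangle through the three element vertices of $t$, and a \emph{non-selecting} cover using only internal gadget vertices. Designed so the element vertices can be covered only through selecting covers, a perfect $\Gamma$-cycle packing then corresponds precisely to a subset of $T$ covering each element exactly once, i.e. a perfect $3$-dimensional matching; the converse is routine.

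To lift the hardness from $L=3$ to an arbitrary fixed $L\geq 4$, I would pad the triangle inside each gadget into a cycle of length exactly $L$ by subdividing it with $L-3$ private vertices, keeping the two-cover dichotomy intact. I expect this last step to be the main obstacle: the gadget must simultaneously (i) forbid $2$-cycles, (ii) forbid all unintended cycles of length up to the full bound $L$ (a constraint that only tightens as $L$ grows), and (iii) preserve exactly the selecting/non-selecting structure after padding. Consequently the delicate part of the proof is the gadget design together with the verification that these three properties hold uniformly in $L$.
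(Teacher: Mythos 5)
Your treatment of the easy directions is sound and matches the standard argument that this paper relies on: membership in \NP, the observation that $\Vert\C\Vert=|V|$ exactly when $\C$ is perfect (so hardness of \pcp{} transfers to {\sc Max $\Gamma$-Cycle Packing}), the case $\natv=0$ being trivial, the case $\natv=2$ being maximum matching on the antiparallel arc pairs via Edmonds' algorithm, and the case $\natv=\infty$ being the folklore bypass-edge reduction to maximum-weight bipartite perfect matching.

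The gap is in the hardness construction, which is the substance of the theorem, and the two concrete ideas you sketch for it would both fail. First, a selecting cover consisting of ``a triangle routed through the three element vertices of $t$'' cannot work for $\natv=3$: such a triangle has no room for gadget-private vertices, so its arcs must run directly between element vertices, and those arcs are shared across triples. If $T$ contains, say, $(x,y,c)$, $(a,y,z)$ and $(x,b,z)$, then the arcs $x\rightarrow y$, $y\rightarrow z$, $z\rightarrow x$ all exist and form a spurious triangle even though $(x,y,z)\notin T$; a perfect packing may use it, so perfect packings no longer certify perfect matchings. The gadget that works (due to \citeauthor{abraham2007origin}, and mirrored in this paper's own Lemma~\ref{lem:Gl=3c=3s=1n=3}, Figure~\ref{fig:Gl=3c=3s=1n=3}) never places an arc between two element vertices: each element of $t_i$ gets its own private triangle, e.g.\ $\langle x,a_i^1,a_i^2\rangle$, and the three private triangles are coupled through a central triangle $\langle a_i^3,a_i^6,a_i^9\rangle$ that enforces all-or-nothing selection. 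Second, your padding for fixed $\natv=L\geq 4$ --- subdividing a gadget triangle with $L-3$ private vertices --- destroys the non-selecting cover: the subdivision vertices lie on no other cycle, so any packing that does not use the padded cycle leaves them uncovered and cannot be perfect. The constructions that do work attach companion cycles to every padding vertex so that both local covers remain perfect (compare the companion vertices $w_{i,2}^q$, $w_{i,3}^q$ in Figure~\ref{fig:l=inftyk>=0c=1s>=2n=2*} of this paper), and one must additionally verify that the enlarged bound $L$ does not enable unintended longer cycles. These are exactly the points you defer as ``the main obstacle'', so the proposal as it stands does not contain a proof of the hard direction.
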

 
Our hardness results are inspired by the gadget used in the proof of Theorem~\ref{t-dicho} for the case where $n=1$ and $\natv=3$. However, we need more involved arguments, as we illustrate by highlighting one of our hardness results. We refer to Appendix~\ref{app:Lems} for its full proof. For $k\geq 0$, we write $\mathbb{N}_{\geq k}=\{k,k+1,\ldots\}$.

\begin{restatable}{lemma}{sampleLem}\label{lem:l=inftyk>=0c=1s>=2n=2}
For a set of country-specific parameters $\Gamma= (n, \emph{\intv}, \emph{\natv},\emph{\segv}, \emph{\sv})$ with 
$n \geq 2$,  
$\emph{\intv}=\infty$, 
$\emph{\natv}_i \in \mathbb{N}_{\geq 2} \cup \{0,\infty\}$ for all $i\in\N$,
$\emph{\sv}_i \in \mathbb{N}_{\geq 2} \cup \{\infty\}$ for all $i\in \N$,
$\emph{\segv}_j\in \mathbb{N}_{\geq 2}$ for some $j\in \N$, and
$\emph{\segv}_i\in \mathbb{N}_{\geq 1} \cup \{\infty\}$ for all $i\in \N\setminus \{j\}$,
{\sc Perfect $\Gamma$-Cycle Packing} is \NP-complete.
\end{restatable}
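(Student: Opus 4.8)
The plan is to first note that \pcp{} lies in \NP: a perfect $\Gamma$-cycle packing is a certificate of size linear in $|V|$, and verifying that its cycles are pairwise vertex-disjoint, cover $V$, and individually meet the length, segment-size, and segment-number bounds imposed by $\Gamma$ is clearly polynomial. The work is therefore in establishing \NP-hardness, for which I would give a polynomial reduction from a suitable \NP-complete problem; the natural choice, matching the $n=1$, $\natv=3$ gadget of \citet{abraham2007origin} that the lemma is built on, is \textsc{3-Dimensional Matching} (3DM). Relabel countries so that $j=1$ is the country with finite $\segv_1=c$, $c\geq 2$; since this $c$ is a fixed constant, any gadget whose size depends on $c$ inflates the instance by only a constant factor.

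The key conceptual move is that, because $\intv=\infty$ and $\natv$ may be unbounded, neither the overall cycle length nor national cycles can serve as the hardness lever; the single enforcer must be the segment-size bound $\segv_1=c$. To make this bound the \emph{only} active constraint, I would build $\pG$ so that the subgraph induced by each $V_i$ is acyclic. Then $G$ contains no national cycles at all, every cycle of $G$ is international, $\natv$ becomes vacuous regardless of whether $\natv_i$ is $0$, finite, or $\infty$, and every $V_1$-run of a covering cycle is genuinely capped at $c$. Concretely, I would place one vertex per element of $X\cup Y\cup Z$ in country~$2$, and, for each triple, a selection gadget whose internal vertices lie in country~$1$ and are strung along directed paths (never closing a national cycle). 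The gadget is designed, exactly as in the $\natv=3$ construction but with the length-$3$ bound replaced by the $c$-segment bound, to admit precisely two ways of covering its own vertices by $\Gamma$-cycles: a \emph{select} covering, which additionally consumes the three element vertices of the triple, and a \emph{skip} covering, which leaves them untouched. A perfect $\Gamma$-cycle packing then forces each element vertex to be consumed by exactly one selected gadget, which is precisely a perfect 3-dimensional matching.

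For the two directions: from a perfect matching I would read off the select/skip choices and check that the induced cycles respect $c$ (segment sizes), use at most two $V_i$-segments per country (so $\sv_i\geq 2$ suffices), and have size-$1$ segments in country~$2$ (so any $\segv_2\geq 1$ is met); conversely, any perfect packing must cover every gadget, and acyclicity together with the $c$-cap leaves only the select/skip options, yielding a matching. The remaining parameters are handled by padding: the constant $c$ is baked into the gadget, and for $n>2$ I would place in each otherwise-unused country a vertex coverable only by a forced international $2$-cycle with a dedicated partner (valid since $\intv=\infty$ and all $\segv_i,\sv_i\geq 1$), keeping every $V_i$ nonempty without creating new feasible routings.

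The hard part will be discharging the case $\sv_1=\infty$, where a cycle may legally re-enter country~$1$ arbitrarily often, each time in a fresh run of up to $c$ vertices. Here the $c$-bound alone does not stop a long ``cheating'' cycle from threading all gadget vertices in short bursts, so the gadget must be engineered so that each departure from country~$1$ consumes one of a small, fixed set of country-$2$ \emph{connector} vertices that must itself be covered exactly once; the scarcity of connectors then caps the usable re-entries independently of $\sv_1$, restoring the select/skip dichotomy. Proving that no such unintended cycle survives — i.e.\ that the two intended coverings are the only $\Gamma$-feasible ones for every admissible $(\natv,\segv,\sv)$ and every $c\geq 2$ — is the crux of the argument and the part requiring the ``more involved arguments'' alluded to in the text.
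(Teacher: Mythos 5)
Your overall strategy---\NP{} membership plus a reduction from a dimensional-matching problem via per-triple gadgets admitting exactly two coverings (\emph{select}/\emph{skip}), with the finite segment size $\segv_j$ as the sole enforcer and a construction free of national cycles so that $\natv$ becomes vacuous---is the same skeleton as the paper's proof (which reduces from {\sc Perfect $4$-Dimensional Matching} rather than 3DM; that difference is immaterial). But your proposal stops exactly where the real work begins: you never exhibit the gadget, and you explicitly defer the claim that no unintended $\Gamma$-cycle exists, which you yourself identify as the crux. As written, this is a plan, not a proof.

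Moreover, the mechanism you sketch for that crux is not strong enough. To control cycles that re-enter country~$1$ arbitrarily often (legal when $\sv_1$ is large or $\infty$), you propose ``connector scarcity'': each departure from country~$1$ consumes a country-$2$ connector vertex that must be covered exactly once. Scarcity of this kind does not by itself exclude unintended coverings: a single long cycle threading several gadgets also uses each connector it meets exactly once, so it is perfectly compatible with a perfect packing; ruling it out requires a structural argument, not a counting one. The paper resolves this with a direct mechanism: the gadget is wired so that \emph{any} path entering and exiting a gadget through its element vertices is forced to traverse a run of $m=\segv_2+1$ consecutive country-$2$ vertices (the segment $[x_{i,1}^1,\ldots,x_{i,1}^m]$ in the forced path from $w$ to $x$), which violates the segment-size bound outright, independently of $\sv$. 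Only after this confinement does a within-gadget case analysis (using the auxiliary paired vertices $w_{i,2}^q$, $w_{i,3}^q$, etc.)\ pin down the select/skip dichotomy. Until you produce a gadget with an analogous ``any crossing forces an over-long segment'' property and carry out that case analysis for all admissible $(\natv,\segv,\sv)$, the hardness direction is not established. (One small point in your favour: your padding for $n>2$ via forced international $2$-cycles is cleaner than the paper's isolated-vertex padding, which sits awkwardly with a perfectness requirement.)
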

\begin{proof}[Proof Sketch.]
Let $\Gamma= (n, \intv, \natv,\segv, \sv)$ with 
$n \geq 2$,  
$\intv=\infty$, 
$\natv_i \in \mathbb{N}_{\geq 2} \cup \{0,\infty\}$ for all $i\in\N$,
$\sv_i \in \mathbb{N}_{\geq 1} \cup \{\infty\}$ for all $i\in \N$,
$\segv_j\in \mathbb{N}_{\geq 2}$ for some $j\in \N$, and
$\segv_i\in \mathbb{N}_{\geq 1} \cup \{\infty\}$ for all $i\in \N\setminus \{j\}$.  We assume $\segv_2\in \mathbb{N}_{\geq 2}$ and $m = \segv_2 + 1$.

It is readily seen that \textsc{Perfect $\Gamma$-Cycle Packing}  belongs to \NP. To show \NP-hardness, we reduce from the \NP-complete problem {\sc Perfect $4$-Dimensional Matching}.
The latter problem takes as input
four sets of elements $W$, $X$, $Y$, $Z$, and a set of quadruples $T\subseteq W\times X \times Y \times Z$. 
The question is whether $T$ contains a {\it perfect matching}, which is a subset $M \subseteq T$ such that each $u\in W \cup X \cup Y \cup Z$ is contained in exactly one element of $M$.
A straightforward reduction from the \NP-complete problem  {\sc Perfect $3$-Dimensional Matching} \citep{karp1972} shows that {\sc Perfect $4$-Dimensional Matching} is also \NP-complete. 

        \begin{figure*}[t]
    \centering
\resizebox{0.7\columnwidth}{!}{
\begin{tikzpicture}[triangle/.style = {regular polygon, regular polygon sides=3 }]

\begin{scope}[every node/.style={circle,thick,draw,minimum size=0.9cm}]
    \node[regular polygon, thick, regular polygon sides=3, draw, inner sep=-0.1cm] (a1) at (-5,3.5) {$w_{i,1}^{m-1}$};
    \node (w) at (-5,0) {$w$};
    \node (a2) at (-5,10) {$w_{i,1}^0$};

    \node (a3) at (0,3) {$x_{i,1}^{m+1}$};
    \node[regular polygon, thick, regular polygon sides=3, draw, inner sep=0.005cm] (x) at (0,0) {$x$};
    \node[regular polygon, thick, regular polygon sides=3, draw, inner sep=0.005cm] (a4) at (0,10) {$x_{i,1}^2$};

    \node[regular polygon, thick, regular polygon sides=3, draw,
     inner sep=-0.1cm] (a5) at (5,3.5) {$y_{i,1}^{m-1}$};
    \node (y) at (5,0) {$y$};
    \node (a6) at (5,10) {$y_{i,1}^0$};

    \node (a7) at (10.5,3) {$z_{i,1}^{m+1}$};
    \node[regular polygon, thick, regular polygon sides=3, draw, inner sep=0.005cm] (z) at (10.5,0) {$z$};
    
    \node[regular polygon, thick, regular polygon sides=3, draw, inner sep=0.005cm] (a8) at (10.5,10) {$z_{i, 1}^2$};

    \node[regular polygon, thick, regular polygon sides=3, draw, inner sep=0.005cm] (z1) at (7.25,10) {$z_{1,i}^1$};
    \node[regular polygon, thick, regular polygon sides=3, draw, inner sep=0.005cm] (zs) at (11.5,5) {$z_{i,1}^{m}$};

    \node[regular polygon, thick, regular polygon sides=3, draw,
     inner sep=0.005cm] (y1) at (6,8) {$y_{i,1}^1$};
    \node[regular polygon, thick, regular polygon sides=3, draw, inner sep=0.005cm] (ys) at (6,1.5) {$y_{i,1}^{m}$};
    \node[regular polygon, thick, regular polygon sides=3, draw,
     inner sep=0.005cm] (x1) at (1,5) {$x_{i,1}^{m}$};
    \node[regular polygon, thick, regular polygon sides=3, draw, inner sep=0.005cm] (xs) at (-1.75,10) {$x_{i, 1}^{1}$};

    \node[regular polygon,  regular polygon sides=3, draw,
     inner sep=-0.09cm] (wm-2) at (-4,5) {$w_{i, 1}^{m-2}$};
     \node[regular polygon, thick, regular polygon sides=3, draw,
     inner sep=0.005cm] (x3) at (1,8) {$x_{i, 1}^{3}$};
     \node[regular polygon, thick, regular polygon sides=3, draw,
     inner sep=-0.1cm] (ym-2) at (6,5) {$y_{i,1}^{m-2}$};
     \node[regular polygon, thick, regular polygon sides=3, draw,
     inner sep=0.005cm] (z3) at (11.5,8) {$z_{i,1}^{3}$};
    
    \node[regular polygon, thick, regular polygon sides=3, draw,
     inner sep=0.005cm] (w1) at (-4,8) {$w_{i, 1}^1$};
    \node[regular polygon, thick, regular polygon sides=3, draw, inner sep=0.005cm] (ws) at (-4,1.5) {$w_{i,1}^{m}$};

    \node (k1) at (-3,9) {$w_{i,3}^{1}$};
    \node[regular polygon, thick, regular polygon sides=3, draw, inner sep=-0.05cm] (k2) at (-3,7
    ) {$w_{i,2}^1$};

    \node (k3) at (-3,2.5) {$w_{i,3}^{m}$};
    \node[regular polygon, thick, regular polygon sides=3, draw, inner sep=-0.005cm] (k4) at (-3,0.5
    ) {$w_{i,2}^m$};

    \node (k5) at (-2.75,11.1) {$x_{i,3}^{1}$};
    \node[regular polygon, thick, regular polygon sides=3, draw, inner sep=-0.05cm] (k6) at (-0.5,11.1) {$x_{i,2}^{1}$};

    \node (k7) at (2,6) {$x_{i,3}^{m}$};
    \node[regular polygon, thick, regular polygon sides=3, draw, inner sep=0.005cm] (k8) at (2,4
    ) {$x_{i,2}^{m}$};

    \node (k9) at (7,9) {$y_{i,3}^{1}$};
    \node[regular polygon, thick, regular polygon sides=3, draw, inner sep=-0.05cm] (k10) at (7,7
    ) {$y_{i,2}^{1}$};

    \node (k11) at (7,2.5) {$y_{i,3}^{m}$};
    \node[regular polygon, thick, regular polygon sides=3, draw, inner sep=0.05cm] (k12) at (7,0.5
    ) {$y_{i,2}^{m}$};

    \node (k13) at (6,11) {$z_{i,3}^{1}$};
    \node[regular polygon, thick, regular polygon sides=3, draw, inner sep=-0.05cm] (k14) at (8.25,11) {$z_{i,2}^{1}$};

    \node (k15) at (12.5,6) {$z_{i,3}^{m}$};
    \node[regular polygon, thick, regular polygon sides=3, draw, inner sep=0.005cm] (k16) at (12.5,4) {$z_{i,2}^{m}$};
    
    \node (k17) at (12.5,9) {$z_{i,3}^{3}$};
    \node[regular polygon, thick, regular polygon sides=3, draw, inner sep=-0.05cm] (k18) at (12.5,7) {$z_{i,2}^{3}$};

    \node[inner sep=-0.05cm] (k19) at (7,6) {$y_{i,3}^{m-2}$};
    \node[regular polygon, thick, regular polygon sides=3, draw, inner sep=-0.05cm] (k20) at (7,4) {$y_{i,2}^{m-2}$};

    \node (k21) at (2,9) {$x_{i,3}^{3}$};
    \node[regular polygon, thick, regular polygon sides=3, draw, inner sep=-0.05cm] (k22) at (2,7) {$x_{i,2}^{3}$};

    \node[inner sep=-0.03cm] (k23) at (-2.5,6) {$w_{i,3}^{m-2}$};
    \node[regular polygon, thick, regular polygon sides=3, draw, inner sep=-0.09cm] (k24) at (-2.5,3.75) {$w_{i,2}^{m-2}$};

\end{scope}
 \node (zdot) at (11.5,6.5) {$\cdots$};
  \node (ydot) at (6,6.5) {$\cdots$};
   \node (xdot) at (1,6.5) {$\cdots$};
    \node (wdot) at (-4,6.5) {$\cdots$};

 \begin{scope}[>={Stealth[black]},
               every edge/.style={draw,very thick}]

    \path [->] (w) edge[bend left] node {} (a1);
    \path [->] (ws) edge[bend left] node {} (w);
    \path [->] (a1) edge[bend left] node {} (a2);
    \path [->] (a2) edge[bend left=10] node {} (w1);
    \path [-] (w1) edge node {} (wdot);
    \path [->] (wdot) edge node {} (wm-2);
    \path [->] (wm-2) edge[bend left=10] node {} (a1);
    \path [->] (a1) edge[bend left=10] node {} (ws);

    \path [->] (x) edge[bend left] node {} (a3);
    \path [->] (a3) edge[bend left] node {} (x);
    \path [->] (a3) edge[bend left] node {} (a4);
    \path [->] (a4) edge[bend left=10] node {} (x3);
    \path [-] (x3) edge node {} (xdot);

    \path [->] (x1) edge[bend left=10] node {} (a3);
    \path [->] (xdot) edge node {} (x1);
    \path [->] (a2) edge node {} (xs);
    \path [->] (xdot) edge node {} (x1);
    \path [->] (xs) edge node {} (a4);

    \path [->] (y) edge[bend left] node {} (a5);
    \path [->] (ys) edge[bend left] node {} (y);
    \path [->] (a5) edge[bend left] node {} (a6);
    \path [->] (a6) edge[bend left=10] node {} (y1);

    \path [-] (y1) edge node {} (ydot);
    \path [->] (ydot) edge node {} (ym-2);
    \path [->] (ym-2) edge[bend left=10] node {} (a5);
    
    \path [->] (a5) edge[bend left=10] node {} (ys);

    \path [->] (z) edge[bend left] node {} (a7);
    \path [->] (a7) edge[bend left] node {} (z);
    \path [->] (a7) edge[bend left] node {} (a8);

    \path [->] (a8) edge[bend left=10] node {} (z3);
    \path [-] (z3) edge node {} (zdot);
    \path [->] (zdot) edge node {} (zs);
    \path [->] (zs) edge[bend left=10] node {} (a7);

    \path [->] (a4) edge[] node {} (a6);
    \path [->] (a6) edge[] node {} (z1);
    \path [->] (z1) edge[] node {} (a8);
    \path [->] (a8) edge[bend right=45] node {} (a2);

    \path [->] (k1) edge[] node {} (w1);
    \path [->] (w1) edge[] node {} (k2);
    \path [->] (k1) edge[bend left=15] node {} (k2);
    \path [->] (k2) edge[bend right=15] node {} (k1);

    \path [->] (k3) edge[] node {} (ws);
    \path [->] (ws) edge[] node {} (k4);
    \path [->] (k3) edge[bend left=15] node {} (k4);
    \path [->] (k4) edge[bend right=15] node {} (k3);

    \path [->] (k5) edge[] node {} (xs);
    \path [->] (xs) edge[] node {} (k6);
    \path [->] (k5) edge[bend left=15] node {} (k6);
    \path [->] (k6) edge[bend right=15] node {} (k5);

    \path [->] (k7) edge[] node {} (x1);
    \path [->] (x1) edge[] node {} (k8);
    \path [->] (k7) edge[bend left=15] node {} (k8);
    \path [->] (k8) edge[bend right=15] node {} (k7);

    \path [->] (k9) edge[] node {} (y1);
    \path [->] (y1) edge[] node {} (k10);
    \path [->] (k9) edge[bend left=15] node {} (k10);
    \path [->] (k10) edge[bend right=15] node {} (k9);

    \path [->] (k11) edge[] node {} (ys);
    \path [->] (ys) edge[] node {} (k12);
    \path [->] (k11) edge[bend left=15] node {} (k12);
    \path [->] (k12) edge[bend right=15] node {} (k11);

    \path [->] (k13) edge[] node {} (z1);
    \path [->] (z1) edge[] node {} (k14);
    \path [->] (k13) edge[bend left=15] node {} (k14);
    \path [->] (k14) edge[bend right=15] node {} (k13);

    \path [->] (k15) edge[] node {} (zs);
    \path [->] (zs) edge[] node {} (k16);
    \path [->] (k15) edge[bend left=15] node {} (k16);
    \path [->] (k16) edge[bend right=15] node {} (k15);

    \path [->] (k17) edge[bend left=15] node {} (k18);
    \path [->] (k18) edge[bend right=15] node {} (k17);

    \path [->] (k19) edge[bend left=15] node {} (k20);
    \path [->] (k20) edge[bend right=15] node {} (k19);

    \path [->] (k21) edge[bend left=15] node {} (k22);
    \path [->] (k22) edge[bend right=15] node {} (k21);

    \path [->] (k23) edge[bend left=15] node {} (k24);
    \path [->] (k24) edge[bend right=15] node {} (k23);
    \path [->] (k23) edge node {} (wm-2);
    \path [->] (wm-2) edge node {} (k24);

    \path [->] (k19) edge node {} (ym-2);
    \path [->] (ym-2) edge node {} (k20);

    \path [->] (k21) edge node {} (x3);
    \path [->] (x3) edge node {} (k22);

    \path [->] (k17) edge node {} (z3);
    \path [->] (z3) edge node {} (k18);

\end{scope}
\end{tikzpicture}}
\Description{}
    \caption{The gadget that is used in the proof of Lemma~\ref{lem:l=inftyk>=0c=1s>=2n=2}.}
    \label{fig:l=inftyk>=0c=1s>=2n=2*}
\end{figure*}

Let $(W,X,Y,Z,T)$ be an instance of {\sc Perfect $4$-Dimensional Matching}.
We construct an instance $(G,\V)$
of \textsc{Perfect $\Gamma$-Cycle Packing} as follows. 
    We first create a vertex for each $u \in W \cup X \cup Y \cup Z$. We create a gadget for every $t_i = (w, x, y, z) \in T$, and then all remaining vertices and edges are given per gadget, as depicted in Figure~\ref{fig:l=inftyk>=0c=1s>=2n=2*}.  
  We set $$V_1 = \{w, w_{i,1}^0, x_{i,1}^{m+1}, y, y_{i,1}^0, z_{i,1}^{m+1}, w_{i,3}^{q}, x_{i,3}^{p}, y_{i,3}^{q}, z_{i,3}^{p}\mid  t_i \in T, q \in \{1,\ldots, m-2, m\}, p \in \{1\}\cup\{3, \ldots, m\}\},$$ 
  and $V_2 = V \setminus V_1$. For every $i\in \N\setminus \{1,2\}$, we introduce an isolated vertex (which we subsequently ignore). Countries~$1$ and~$2$ are represented by circular and triangular vertices in Figure~\ref{fig:l=inftyk>=0c=1s>=2n=2*}, respectively. 

We first prove that every $\Gamma$-cycle~$C$ is contained within a single gadget.  
For a contradiction, suppose $C$ spans over more than one gadget. Hence, $C$ must have a path that enters and exits some gadget $t_i$. We may assume without loss of generality that $C$ enters via $w$ and exits via $x$. This implies that $C$ contains the path 
$
[w, w_{i,1}^{m-1}, w_{i,1}^0, x_{i,1}^1, x_{i,1}^2, \cdots, x_{i,1}^m, x_{i,1}^{m+1}, x].
$
This path contains the $V_2$-segment $[x_{i,1}^1,\ldots,  x_{i,1}^m]$ of size~$m$. As $m = \segv_2 + 1$, we find that $C$ is not a $\Gamma$-cycle. Similar reasoning can be made when entering or exiting a gadget via different vertices. Thus, every $\Gamma$-cycle is contained within a gadget.

We claim that $G$ has a perfect $\Gamma$-cycle packing if and only if $(W,X,Y,Z,T)$ has a perfect matching~$M$. 

First suppose that $(W,X,Y,Z,T)$ has a perfect matching $ M $. We construct a perfect $\Gamma$-cycle packing by considering each $ t_i \in T $ one by one. If $ t_i \in M $, we select the following cycles and add them to $\C$: 
\[\begin{array}{l}
\langle w, w_{i,1}^{m-1}, w_{i,1}^m \rangle,  
\langle x, x_{i,1}^{m+1} \rangle, 
\langle y, y_{i,1}^{m-1}, y_{i,1}^m \rangle, 
\langle z, z_{i,1}^{m+1} \rangle, 
\langle w_{i,1}^0, x_{i,1}^1, x_{i,1}^2, y_{i,1}^0, z_{i,1}^1, z_{i,1}^2 \rangle, 
\langle w_{i,3}^{m}, w_{i,2}^{m} \rangle, \\
\langle w_{i,1}^q, w_{i,2}^{q}, w_{i,3}^{q} \rangle, 
\langle x_{i,3}^{1}, x_{i,2}^{1} \rangle, 
\langle x_{i,1}^p, x_{i,2}^{p},
x_{i,3}^{p} \rangle,
\langle y_{i,3}^{m}, y_{i,2}^{m} \rangle, 
\langle y_{i,1}^q, y_{i,2}^{q}, y_{i,3}^{q} \rangle 
\langle z_{i,3}^{1}, z_{i,2}^{1} \rangle, 
\langle z_{i,1}^p, z_{i,2}^{p},
z_{i,3}^{p} \rangle, 
\end{array}
\]
for $ q \in [1, m-2] \text{ and } p \in [3,  m]$. Thus, all the nodes within the gadget are covered by some $\Gamma$-cycle.  If $t_i\notin M$, then the following cycles are added to $\C$:
\[
\begin{array}{c}
\langle w_{i,1}^0, w_{i,1}^1, \cdots, w_{i,1}^{m-1} \rangle, 
\langle x_{i,1}^2, \cdots, x_{i,1}^m, x_{i,1}^{m+1} \rangle, 
\langle y_{i,1}^0, y_{i,1}^1, \cdots, y_{i,1}^{m-1} \rangle,  
\langle z_{i,1}^2, \cdots, z_{i,1}^m, z_{i,1}^{m+1} \rangle, 
\langle w_{i,3}^{q}, w_{i,2}^q \rangle, \\
\langle w_{i,1}^m, w_{i,2}^{m}, w_{i,3}^{m} \rangle, 
\langle x_{i,3}^{p}, x_{i,2}^{p} \rangle, 
\langle x_{i,1}^1, x_{i,2}^{1},
x_{i,3}^{1} \rangle,
\langle y_{i,3}^{q}, y_{i,2}^{q} \rangle, 
\langle y_{i,1}^m y_{i,2}^{m}, y_{i,3}^{m} \rangle, 
\langle z_{i,3}^{p}, z_{i,2}^{p} \rangle, 
\langle z_{i,1}^1, z_{i,2}^{1},
z_{i,3}^{1} \rangle,
\end{array}
\]
for $ q \in [1, m-2] \text{ and } p \in [3,  m]$. In this second case, all of the gadget's vertices have been selected in some $\Gamma$-cycle, apart from $w$, $x$, $y$, and $z$.  As $M$ is a perfect matching, every vertex $u \in W\cup X \cup Y\cup Z$ will be covered by some gadget exactly once. Therefore, $\C$ is a perfect $\Gamma$-cycle packing of $G$.

Conversely suppose $G$ has a perfect $\Gamma$-cycle packing $\C$. We construct a perfect matching $M$ of $(W,X,Y,Z,T)$. We defer the full proof to Appendix~\ref{app:Lems}. There, we show that if $\C$ contains the cycle $\langle w, w_{i,1}^{m-1}, w_{i,1}^m \rangle$, then  $\C$ covers the vertices $w,x,y$, and $z$ via the gadget corresponding to $t_i$. Moreover, if $\langle w, w_{i,1}^{m-1}, w_{i,1}^m \rangle\notin \C$, we show that $\C$ cannot cover the vertices $w,x,y$, and $z$ with cycles within the gadget corresponding to $t_i$. Hence, a perfect matching $M$ of $(W,X,Y,Z,T)$ can be constructed by adding those quadruples~$t_i$ to $M$ whose four elements $w,x,y,z$ are all selected by the gadget corresponding to $t_i$. 
\end{proof}
 
We now state our dichotomy; see Appendix~\ref{app:complexity} for its proof. First, we recall from Section~\ref{sec:model} that we assume that for a set of country-specific parameters~$\Gamma$ we have $\intv\neq 1$; $\natv \neq 1$; and $\segv_i\geq 1$ and $\sv_i\geq 1$ for all $i\in N$.
We also note that the polynomial cases follow, in the same way as the polynomial cases in Theorem~\ref{t-dicho}, from straightforward reductions to either a maximum matching problem in a general graph or a maximum weight perfect matching problem in a bipartite graph. The hard cases are proven in a series of lemmas using similar arguments as in the proof of Lemma~\ref{lem:l=inftyk>=0c=1s>=2n=2}.

\begin{restatable}{theorem}{dicho}\label{thm:dico}
For a set of country-specific parameters $\Gamma$, {\sc Max $\Gamma$-Cycle Packing} is polynomial-time solvable if $\Gamma$ belongs to one of the following seven cases:
\begin{enumerate}
\item $n=1$ and $\emph{\natv}_1\in \{0,2,\infty\}$ (Theorem~\ref{t-dicho}~\cite{abraham2007origin});
\item $n \geq 2$, $\emph{\intv}=0$, $\emph{\natv}_i\in \{0,2,\infty\}$  for all $i \in \N$;
\item $n \geq 2$,  $\emph{\intv}=2$, and  $\emph{\natv}_i \in \{0,2\}$ for all $i \in \N $;
\item $n=2$, $\emph{\intv}=3$,  $\emph{\segv}=(1,1)$ and $\emph{\natv}_i \in \{0,2\}$ for all $i \in \N$; 
\item $n=2$, $\emph{\intv} \in \mathbb{N}_{\geq 4}\cup\{\infty\}$,   $\emph{\segv}=(1,1)$ and there is some $j \in\N$ such that $\emph{\sv}_j=1$ and $\emph{\natv}_i \in \{0,2\}$ for all $i \in \N$; 
\item $n \geq 2$,  $\emph{\intv}=\infty$,  $\emph{\natv}=\{0\}^n$, $\emph{\segv}=\{1\}^n$, $\emph{\sv}=\{\infty\}^n$; or
\item $n \geq 2$,   $\emph{\intv} =\infty$,  $\emph{\natv}=\emph{\segv}=\emph{\sv}=\{\infty\}^n$;
\end{enumerate}
and else even {\sc Perfect $\Gamma$-Cycle Packing} is \NP-complete.
\end{restatable}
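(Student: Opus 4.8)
The plan is to prove the two directions separately: first verify that each of the seven listed parameter regimes admits a polynomial-time algorithm, and then show that \emph{every} remaining choice of $\Gamma$ makes {\sc Perfect $\Gamma$-Cycle Packing} (and hence {\sc Max $\Gamma$-Cycle Packing}) \NP-complete. Membership in \NP\ is immediate throughout, since a claimed $\Gamma$-cycle packing can be checked against the length, segment-size and segment-number bounds in polynomial time. I keep the standing assumptions $\intv \neq 1$, $\natv_i \neq 1$, and $\segv_i, \sv_i \geq 1$, and I aim for reductions that use only two active countries (giving each further country a single isolated vertex), so that hardness already holds for $n = 2$.

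For the polynomial side I would group the seven cases by the reduction they invoke. Cases~1 and~2 have either $n=1$ or $\intv=0$, so no international cycle can be selected and the instance decomposes into independent single-country problems; each is solved by the algorithm behind Theorem~\ref{t-dicho} (trivial for $\natv_i = 0$, general-graph maximum matching for $\natv_i = 2$, and maximum-weight perfect matching in a bipartite split graph for $\natv_i = \infty$). Cases~3, 4 and~5 are handled by the uniform observation that, under those parameters, \emph{every} $\Gamma$-cycle has length exactly~$2$: when $\intv = 2$ and $\natv_i \in \{0,2\}$ this is immediate, and when $n = 2$ and $\segv = (1,1)$ a length-$3$ international cycle would force two consecutive vertices into the same country (an odd cycle is not properly $2$-colourable), violating $\segv = (1,1)$, while a length-$\geq 4$ international cycle would need $\sv_j \geq 2$ for both countries, which is excluded by $\intv = 3$ in Case~4 and by the hypothesis $\sv_j = 1$ in Case~5. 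Hence a maximum $\Gamma$-cycle packing is exactly a maximum matching in the undirected graph whose edges are the vertex pairs spanning a $\Gamma$-admissible $2$-cycle. Finally, Cases~6 and~7 impose no binding constraint on the shape of a cycle (all relevant bounds are $\infty$, and $\segv = \{1\}^n$ in Case~6 simply restricts attention to international arcs), so a maximum $\Gamma$-cycle packing is a maximum cycle packing of the appropriate arc set; this is computed by a maximum-weight perfect matching in the bipartite graph on out- and in-copies of the vertices, where arc edges have weight~$1$ and weight-$0$ self-loops absorb uncovered vertices, the active part of any perfect matching being a vertex-disjoint union of cycles.

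For the hardness side I would partition the complement of the seven regimes (for $n \geq 2$; the case $n = 1$ is Theorem~\ref{t-dicho}) into finitely many sub-regimes, each proved \NP-hard by its own lemma in the style of Lemma~\ref{lem:l=inftyk>=0c=1s>=2n=2}. Every such reduction starts from {\sc Perfect $3$-Dimensional Matching} or {\sc Perfect $4$-Dimensional Matching} and builds one gadget per tuple; the gadget is engineered so that (i) it is \emph{fenced}, meaning any cycle leaving the gadget is forced to violate one of the active bounds and hence is not a $\Gamma$-cycle, so that every $\Gamma$-cycle lies inside a single gadget, and (ii) its internal vertices admit a perfect cover by $\Gamma$-cycles in exactly two ways, corresponding to ``tuple selected'' and ``tuple not selected'', with the element-vertices left free for other gadgets precisely when the tuple is selected. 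The fencing mechanism is chosen according to which parameter is binding in the sub-regime: an oversized segment (as in Lemma~\ref{lem:l=inftyk>=0c=1s>=2n=2}, using $m = \segv_j + 1$) when some $\segv_j$ is finite and $\geq 2$; an excess of segments when some $\sv_j$ is limiting; and the cycle-length bounds $\intv$ or $\natv_i$ when those are the small quantities. One then argues, as in Lemma~\ref{lem:l=inftyk>=0c=1s>=2n=2}, that perfect matchings of the underlying instance and perfect $\Gamma$-cycle packings of $G$ correspond.

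The main obstacle is the hardness side, and it is twofold. First, the sub-regimes must be shown to \emph{exhaustively and disjointly} partition the parameter complement, which is genuine bookkeeping across the five quantities $n, \intv, (\natv_i), (\segv_i), (\sv_i)$ and their interaction with the standing assumptions. Second, and more seriously, when the only available fencing parameter is small --- for instance $\segv = \{1\}^n$ together with large $\sv$ and $\intv$, which rules out the segment-size trick of Lemma~\ref{lem:l=inftyk>=0c=1s>=2n=2} --- one must design a different gadget whose correctness survives the entire admissible range of the remaining parameters (e.g.\ $\natv_i \in \mathbb{N}_{\geq 2} \cup \{0,\infty\}$), so that a single construction certifies hardness uniformly over the whole sub-regime. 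Getting these gadgets to force the intended two-way cover while simultaneously respecting every varying upper bound is where the bulk of the work lies.
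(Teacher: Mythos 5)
Your proposal follows essentially the same route as the paper: the polynomial cases are argued exactly as the paper argues them (decomposition into national pools for Cases~1--2; the observation that every $\Gamma$-cycle has length~$2$, via the non-$2$-colourability of odd cycles and the segment-number bound, followed by Edmonds' matching for Cases~3--5; and the bipartite perfect-matching folklore construction on the international-arc, respectively full, arc set for Cases~6--7), while your hardness plan --- gadget reductions from {\sc Perfect $3$/$4$-Dimensional Matching} with parameter-specific fencing, two-way internal covers per gadget, isolated vertices to lift $n=2$ results to larger $n$, and a bookkeeping case analysis over the complement of the seven regimes --- is precisely what the paper executes through its fourteen hardness lemmas and the case distinction organised by the value of $\intv$. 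The one caveat is that your proposal stops at the strategy level on the hardness side, which is where nearly all of the paper's content lies: the sub-regimes you correctly flag as delicate (e.g.\ $\segv_j=1$ with large $\sv$ and $\intv$, or finite $\sv_j$ with everything else unbounded, or small $\natv_j$ with all international parameters unbounded) each require a bespoke gadget whose fencing and two-way-cover properties must be verified over the entire admissible parameter range, so the plan is sound and faithful to the paper but its decisive constructions remain unexecuted.
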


Unlike Theorem~\ref{t-dicho}, Theorem~\ref{thm:dico} shows  even if, e.g.,
$\Gamma= (2, \intv, \natv,\segv, \sv)$ with 
$\intv=\segv_1=\segv_2=\sv_1=\sv_2=\infty$ and $\natv=(2,\infty)$, {\sc Perfect Cycle Packing} is already \NP-complete. 
Thus, combining the parameter values $2$ and $\infty$ for which
{\sc Max $\Gamma$-Cycle Packing} is polynomial-time solvable for $n=1$, i.e., in the KEP setting (see Theorem~\ref{t-dicho}) can lead to \NP-complete cases in the IKEP setting, already for $n=2$.

We now briefly return to the three mechanisms $\Mint$, $\Mnat$, $\Mcon$ introduced in Section~\ref{s-intro}. Theorem~\ref{thm:dico} indicates exactly for which sets $\Gamma$ of country-specific parameters for a given IKEP, $\Mint$, $\Mnat$, $\Mcon$ can return a $\Gamma$-cycle packing in polynomial time.

\section{Manipulation with Country-Specific Parameters}\label{sec:mech}

In this section, we will present and analyse our IR and IC mechanism for international kidney exchange that respects country-specific parameters. In order to analyse our mechanism, we first provide various existence results for mechanisms as well as
several lower bounds on approximation ratios.

\subsection{Existence Results and Lower Bounds}
Recall that $I= \langle G, \V, \Gamma\rangle $ is an IKEP {\it instance} involving $n$ countries with a set of country-specific parameters $\Gamma$; so $(G,{\mathcal V})$ is the $n$-partitioned compatibility graph for a certain round.  We let $\D_I$  be the set of all $\Gamma$-cycle packings of~$G$. 
Note that $\D_I$ contains the empty set as a $\Gamma$-cycle packing as well. 
A {\it (randomised) mechanism} $\mech$ takes an IKEP instance~$I$ as input and returns exactly one $\Gamma$-cycle packing from $\D_I$ according to some probability distribution~$p$ defined on $\D_I$. Note that the output of a mechanism $\mech$ may not be a maximum $\Gamma$-cycle packing. That is, $\mech$ does not necessarily solve {\sc Max $\Gamma$-Cycle Packing} (but could be modified to do so if desired). 

Let $I= \langle G, \V, \Gamma\rangle $ be an IKEP instance with $\D_I=\{{\mathcal C}_1,\ldots, {\mathcal C}_r\}$ for some $r\geq 1$.
For a $\Gamma$-cycle~$C$, we define the {\it utility} $u_i(C)$ for country~$i$ as the number of arcs $(u,v)$ of $C$ with $v\in V_i$. 
That is,  $u_i(C)$ is the number of kidney transplants for country~$i$ if $C$ is used.
For ${\mathcal C}\in \D_I$, we define the {\it utility}  for country~$i$ as $u_i(\C)=\sum_{C\in \C}u_i(C)$.
In other words, $u_i(\C)$ is the total number of kidney transplants for country~$i$ if $\C$ is used. For a mechanism $\mech$ with probability distribution $p$ on instance $I$, the {\it expected utility} for country~$i$ is $U_i(\mech(I))=p_1(\C_1)u_i(\C_1) + \ldots + p_r(\C_r)u_i(\C_r)$, and the {\em expected social welfare}  $\mbox{SW}(\mech(I))= \sum_{i=1}^n U_i(\mech(I))$ is the sum of all the expected utilities.
 
Let $I= \langle G, \V, \Gamma\rangle $ be an IKEP instance. 
Recall that the size $\|\C\|$ of a $\Gamma$-cycle packing of $G$ is the sum of the lengths of the cycles of ${\mathcal C}$, that is, the total number of kidney transplants if the IKEP uses $\|\C\|$.
We let {\tt opt}$(I)$ denote the size of a maximum $\Gamma$-cycle packing of $G$. To measure the performance of a mechanism~$\mech$, we say that $\mech$ provides an {\em $\alpha$-approximation ratio} if for every IKEP instance $I$: 
$$\frac{{\tt opt}(I)}{\mbox{SW}(\mech(I))} \leq \alpha.$$
We now give some desirable properties that mechanisms for IKEP instances may or may not have, starting with the following natural property:

\begin{itemize}
\item {\bf efficiency:} a mechanism $\mech$ is {\em efficient} if for every $I=\langle G, \V, \Gamma\rangle$, $\mbox{SW}(\mech(I))={\tt opt}(I)$.
\end{itemize}

Let $I= \langle G, \V, \Gamma\rangle $ be an IKEP instance. 
A $\Gamma$-cycle packing $\C$ of $G$ is {\em maximal} if $G$ has no $\Gamma$-cycle packing $\C'$ with $\C'\supsetneq \C$.  We can relax the efficiency requirement by only demanding the following:

\begin{itemize}
\item {\bf maximality:} a mechanism $\mech$ is {\em maximal} if for every $I=\langle G, \V, \Gamma\rangle$, $\mech(I)$ always returns a maximal $\Gamma$-cycle packing of $G$.
\end{itemize}

We can even relax further by demanding:

\begin{itemize}
\item {\bf nonemptiness:} a mechanism $\mech$ is {\em nonempty} if for every $I=\langle G, \V, \Gamma\rangle$, $\mech(I)$ always returns a nonempty $\Gamma$-cycle packing of $G$.
\end{itemize}

Let  $I= \langle G, \V, \Gamma\rangle $ be an IKEP instance. 
Recall that the graph $G[V_i]$ is the subgraph of $G$ induced by $V_i$, so $G[V_i]$ only contains vertices from country~$i$.  
For each country~$i$, we let ${\tt nat}_i(I)$ denote the size of a maximum $\Gamma$-cycle packing in $G[V_i]$. By requiring the following property,  no country has an incentive to leave the IKEP and run a KEP on its own:

\begin{itemize}
\item {\bf individual rationality (IR):} a mechanism $\mech$ is {\it individual rational} if for every $I=\langle G, \V, \Gamma\rangle$ it holds that for every country~$i$, $U_i(\mech(I))\geq {\tt nat}_i(I)$.
\end{itemize}

Finally, we define incentive compatibility. We do this only with respect to the international segment size and the international segment number. Moreover, we assume that due to resource limits, countries can only misreport their international segment size and international segment number by reporting {\it lower} values than their actual values. 
Now, let $I= \langle G, \V, \Gamma\rangle $ be an IKEP instance with $\Gamma=(n,\natv, \intv, \segv, \sv)$. We say that $\Gamma' \leq_i \Gamma$ if $\Gamma'=(n,\natv, \intv, \segv', \sv')$ with
 $\segv'_h=\segv_h$ if $h\neq i$ and $\segv'_i\leq \segv_i$ and $\sv'_h=\sv_h$ if $h\neq i$ and $\sv'_i\leq \sv_i$.

\begin{itemize}
\item {\bf incentive compatibility (IC):} a mechanism $\mech$ is {\it incentive compatible} if for every~$I=\langle G, \V, \Gamma\rangle$ it holds that for every country~$i$ and every $\Gamma' \leq_i \Gamma$, $U_i(\mech(I))\geq U_i(\mech(\langle G, \V, \Gamma'\rangle)$.
\end{itemize}

Our aim is to design mechanisms that are IR and IC, and to determine how these two requirements impact the approximation ratio. 
We start with two propositions.

\begin{proposition}\label{o-efir}
There exists no mechanism that is both efficient and IR. 
\end{proposition}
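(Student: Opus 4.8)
The plan is to prove this impossibility by exhibiting a single small IKEP instance on which no mechanism can satisfy both properties at once, and the instance of Figure~\ref{fig:ikep} is tailor-made for this. First I would record the key bookkeeping fact that, for any $\Gamma$-cycle packing $\C$, its social welfare equals its size: $\mathrm{SW}(\C)=\sum_i u_i(\C)$ counts, for each arc $(u,v)$ of $\C$, the unique country owning the head $v$, so the total is exactly the number of arcs of $\C$, i.e.\ $\|\C\|$. Consequently, for any randomised mechanism $\mech$ with distribution $p$ on $\D_I=\{\C_1,\dots,\C_r\}$, we have $\mathrm{SW}(\mech(I))=\sum_j p(\C_j)\,\|\C_j\|\le \mathtt{opt}(I)$, with equality if and only if $\mech$ places all its probability mass on maximum $\Gamma$-cycle packings of $G$. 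Thus efficiency forces $\mech$ to be supported entirely on the maximum packings.

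Next I would instantiate the witness. Take $n=2$ with $V_1=\{h_1,h_2\}$, $V_2=\{j_1,j_2\}$ and the arcs of Figure~\ref{fig:ikep}, together with the parameters $\intv=3$, $\natv=(2,2)$, $\segv=(1,2)$ and $\sv=(1,1)$. As already observed in Section~\ref{sec:model} for this example, the only $\Gamma$-cycle packings are $\emptyset$, $\{D\}$ with $D=\langle h_1,h_2\rangle$ of size~$2$, and $\{C\}$ with $C=\langle h_1,j_1,j_2\rangle$ of size~$3$; since $C$ and $D$ both use $h_1$, they cannot coexist in a packing. Hence $\mathtt{opt}(I)=3$, attained \emph{uniquely} by $\{C\}$.

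I would then combine the two observations. By the first paragraph, any efficient $\mech$ must return $\{C\}$ with probability~$1$. Now I compute the utility of country~$1$ under this output: the only arc of $C$ whose head lies in $V_1$ is $(j_2,h_1)$, so $u_1(C)=1$ and therefore $U_1(\mech(I))=1$. On the other hand, $G[V_1]$ contains the $\Gamma$-cycle $D$ of size~$2$, giving $\mathtt{nat}_1(I)=2$. Thus $U_1(\mech(I))=1<2=\mathtt{nat}_1(I)$, so $\mech$ violates IR. Since efficiency alone pinned down the output on this instance, no efficient mechanism can be IR, which proves the proposition.

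There is no genuine obstacle here beyond selecting the right witness: all the content lies in the fact that efficiency collapses every admissible mechanism onto the unique size-$3$ packing $\{C\}$, which is exactly the packing that is bad for country~$1$. The only points needing care are the elementary identity that social welfare equals packing size (so that ``efficient'' is forced to mean ``supported on maximum packings''), and the orientation of the arc $(j_2,h_1)$, which is what makes $u_1(C)=1$ rather than $2$ and hence creates the gap against $\mathtt{nat}_1(I)=2$.
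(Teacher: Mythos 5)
Your proof is correct and takes essentially the same approach as the paper: the paper's own proof constructs an isomorphic witness (a length-$3$ cycle $C=\langle u_1,u_2,v_1\rangle$ and a national $2$-cycle $D=\langle v_1,v_2\rangle$ sharing one vertex, Figure~\ref{fig:noeffIR}) and argues that efficiency forces $\{C\}$ while IR forces $\{D\}$. You simply reuse the Figure~\ref{fig:ikep} instance (the mirror image, with the national cycle in country $H$) and spell out the bookkeeping—the identity $\mathrm{SW}(\C)=\|\C\|$ and the explicit utility computation—that the paper leaves implicit.
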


\begin{proof}
We construct an IKEP instance $I= \langle G, \V, \Gamma\rangle$ as follows. Let $V_1=\{u_1,u_2\}$ and $V_2=\{v_1,v_2\}$. Let $G$ consist of two cycles $C=\langle u_1, u_2, v_1\rangle$ and $D=\langle v_1,v_2\rangle$ that intersect in $v_1$ only. See also Figure~\ref{fig:noeffIR}.
Choose $\Gamma$ such that $\{C\}$ and $\{D\}$ are both $\Gamma$-cycle packings. Now, every efficient mechanism will return $C$, whereas every IR mechanism will return $D$.
\end{proof}

\begin{figure}[t]
\centering
    \begin{subfigure}[t]{0.35\columnwidth}
    \centering
    \resizebox{\columnwidth}{!}{
    \begin{tikzpicture}[triangle/.style = {regular polygon, regular polygon sides=3 }]
    
    \begin{scope}[every node/.style={circle,thick,draw,minimum size=0.35cm}]  
        \node (u1) at (0,0) {$u_1$};
        \node (u2)[right=of u1] {$u_2$};
    \end{scope}
    
    \begin{scope}[every node/.style={rectangle,thick,draw,minimum size=0.6cm}]
        \node (v1)[right=of u2] {$v_1$};
        \node (v2)[right=of v1] {$v_2$};
    \end{scope}

    \begin{scope}[>={Stealth[black]}, every edge/.style={thick,draw}]
        \path [->] (u1) edge[bend left] node {} (u2);
        \path [->] (u2) edge[bend left] node {} (v1);
        \path [->] (v1) edge[bend left=22] node {} (u1);

        \path [->] (v1) edge[bend left] node {} (v2);
        \path [->] (v2) edge[bend left] node {} (v1);
    \end{scope}
    
    \end{tikzpicture} 
    }
    \Description{}
    \caption{The example from Proposition~\ref{o-efir}. }
    \label{fig:noeffIR}
    \end{subfigure}
    \begin{subfigure}[t]{0.5\columnwidth}
    \centering
    \resizebox{\columnwidth}{!}{
    \begin{tikzpicture}[triangle/.style = {regular polygon, regular polygon sides=3 }]
    \begin{scope}[every node/.style={circle,thick,draw,minimum size=0.35cm}]  
        \node (u1) at (6,0) {$u_1$};
        \node (u2) at (2,0.8) {$u_2$};
        \node (u3) at (4,0.8) {$u_3$};
        \node (u4) at (6,0.8) {$u_4$};
        \node (u5) at (4,-0.8) {$u_5$};
        \node (u6) at (8,-0.8) {$u_6$};
    \end{scope}
    
    \begin{scope}[every node/.style={rectangle,thick,draw,minimum size=0.6cm}]
        \node (v1) at (8,0.8) {$v_1$};
        \node (v2) at (10,0.8) {$v_2$};
        \node (v3) at (2,-0.8) {$v_3$};
        \node (v4) at (6,-0.8) {$v_4$};
        \node (v5) at (10,-0.8) {$v_5$};
    \end{scope}

    \begin{scope}[>={Stealth[black]}, every edge/.style={thick,draw}]
        \path [->] (u1) edge[bend left=6] node {} (u2);
        \path [->] (u2) edge[bend left] node {} (u3);
        \path [->] (u3) edge[bend left] node {} (u4);
        \path [->] (u4) edge[bend left] node {} (v1);
        \path [->] (v1) edge[bend left] node {} (v2);
        \path [->] (v2) edge[bend left=6] node {} (u1);

        \path [->] (u1) edge[bend right=6] node {} (v3);
        \path [->] (v3) edge[bend right] node {} (u5);
        \path [->] (u5) edge[bend right] node {} (v4);
        \path [->] (v4) edge[bend right] node {} (u6);
        \path [->] (u6) edge[bend right] node {} (v5);
        \path [->] (v5) edge[bend right=6] node {} (u1);
    \end{scope}
    \end{tikzpicture} 
    }
    \Description{}
    \caption{The example from Proposition~\ref{p-ic}.}
    \label{fig:emptyIC}   
    \end{subfigure}

    \vspace{0.2cm}
    \begin{subfigure}{0.65\columnwidth}
    \resizebox{\columnwidth}{!}{
    \begin{tikzpicture}[triangle/.style = {regular polygon, regular polygon sides=3 }]
    
    \begin{scope}[every node/.style={circle,thick,draw,minimum size=0.6cm,inner sep=0pt}]  
        \node (u1) at (0,0) {$u_1$};
        \node (u2) at (4,0) {$u_2$};
        \node (u3) at (8,0) {$u_{r-1}$};
        \node (u4) at (10,0) {$u_r$};
    \end{scope}
    
    \begin{scope}[every node/.style={rectangle,thick,draw,minimum size=0.6cm}]
        \node (v1) at (-1,1) {};
        \node (v2) at (3,1) {};
        \node (v3) at (7,1) {};
    \end{scope}
    
    \begin{scope}[every node/.style={diamond,thick,draw,minimum size=0.8cm}]
        \node (w1) at (1,1) {};
        \node (w2) at (5,1) {};
        \node (w3) at (9,1) {};
    \end{scope}

    \node (dot1) at (0,1) {$\cdots$};
    \node (dot2) at (4,1) {$\cdots$};
    \node (dot3) at (6,0) {$\cdots$};
    \node (dot4) at (8,1) {$\cdots$};
    
    \begin{scope}[>={Stealth[black]}, every edge/.style={thick,draw}]
        \path [->] (u1) edge (u2);
        \path [->] (u2) edge (dot3);
        \path [->] (dot3) edge (u3);
        \path [->] (u3) edge (u4);
        \path [->] (u4) edge [bend left=14] (u1);

        \path [->] (u1) edge (v1);
        \path [->] (v1) edge (dot1);
        \path [->] (dot1) edge (w1);
        \path [->] (w1) edge (u1);

        \path [->] (u2) edge (v2);
        \path [->] (v2) edge (dot2);
        \path [->] (dot2) edge (w2);
        \path [->] (w2) edge (u2);

        \path [->] (u3) edge (v3);
        \path [->] (v3) edge (dot4);
        \path [->] (dot4) edge (w3);
        \path [->] (w3) edge (u3);
    \end{scope}
    
    \end{tikzpicture}
    }
    \Description{}
    \caption{The example from Proposition~\ref{p-alpha1}.}
    \label{fig:IRalpha}    
    \end{subfigure}
    \caption{The three examples from Propositions~\ref{o-efir}, \ref{p-alpha1} and~\ref{p-ic}.}
\end{figure}

\begin{proposition}\label{p-ir}
There exists a mechanism that is both maximal and IR. 
\end{proposition}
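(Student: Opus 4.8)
The plan is to construct a deterministic mechanism $\mech$ that begins from a maximum national $\Gamma$-cycle packing — which immediately secures IR — and then greedily enlarges it to a maximal packing, exploiting the fact that adding vertex-disjoint cycles can never decrease any country's utility.

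Concretely, $\mech$ would first, for each country $i\in\N$, compute a maximum $\Gamma$-cycle packing $\C_i$ of the national subgraph $G[V_i]$ (this is exactly $\Mnat$; no polynomial bound is required since we only claim existence). Set $\C:=\bigcup_{i\in\N}\C_i$; since the subgraphs $G[V_i]$ are pairwise vertex-disjoint, $\C$ is a valid $\Gamma$-cycle packing. Then, while $G$ contains a $\Gamma$-cycle that is vertex-disjoint from every cycle currently in $\C$, add one such cycle (say, the first in some fixed ordering of cycles) to $\C$, and output the resulting packing once no further cycle can be added.

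Maximality is immediate: the extension loop halts precisely when no $\Gamma$-cycle disjoint from $\C$ remains, which is the definition of a maximal $\Gamma$-cycle packing. For IR, I would use that utility is additive over cycles and each $u_i(C)\ge 0$, so $u_i$ is monotone non-decreasing as cycles are appended. After the initial step, every arc of each cycle in $\C_i$ ends in $V_i$, whence $u_i(\C_i)=\|\C_i\|={\tt nat}_i(I)$, while $u_i(\C_j)=0$ for $j\neq i$; thus $u_i(\C)={\tt nat}_i(I)$ at that point. Monotonicity then gives $U_i(\mech(I))=u_i(\C)\ge {\tt nat}_i(I)$ for the final output, so $\mech$ is IR.

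The only point requiring care — though it is elementary — is this monotonicity of utility under extension: one must confirm that slotting in international cycles during the greedy phase can only create new transplants for a country and never reallocate away the national ones already guaranteed by $\C_i$. This follows directly from the additive definition $u_i(\C)=\sum_{C\in\C}u_i(C)$ with nonnegative summands, so no genuine obstacle arises. The proposition contrasts with Proposition~\ref{o-efir} precisely because maximality, unlike efficiency, does not force the mechanism to abandon a country's national optimum.
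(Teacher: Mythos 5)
Your proposal is correct and follows essentially the same approach as the paper's proof: pre-select a maximum national $\Gamma$-cycle packing (which secures IR, since each country's utility from its own national cycles equals ${\tt nat}_i(I)$ and utilities only grow as disjoint cycles are added), then exhaustively extend with vertex-disjoint $\Gamma$-cycles to obtain maximality. Your write-up merely spells out the additivity/monotonicity details that the paper leaves implicit.
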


\begin{proof}
For an IKEP instance $I= \langle G, \V, \Gamma\rangle$, the mechanism selects a maximum national $\Gamma$-cycle packing $\C$ of $G$ and then extends $\C$ to a maximal $\Gamma$-cycle packing $\C'$ of $G$ by exhaustively adding (international) $\Gamma$-cycles that are vertex-disjoint from the already selected $\Gamma$-cycles.
\end{proof}

Let $I= \langle G, \V, \Gamma\rangle $ be an IKEP instance.
Recall that a $\Gamma$-cycle  in $G$ is international if it has at least one international arc and that else it is national.
Let $c_{{\tt nat}}$ and $c_{{\tt int}}$ denote the length of a largest national $\Gamma$-cycle and the length of a largest international $\Gamma$-cycle, respectively, in a graph $G$ in an IKEP instance. Note that $c_{{\tt nat}} \leq \max\{\natv_i\; |\; 1\leq i\leq n\}$ and that 
$c_{{\tt int}}\leq \intv$.

We now observe the following lower bound for the approximation ratio of IR mechanisms. As can be seen from the proof of Proposition~\ref{p-alpha1}, this lower bound even holds for near-perfect IKEP instances; here, an IKEP instance is {\it near-perfect} if $G$ has a {\it near-perfect} $\Gamma$-cycle packing ${\mathcal C}$, that is, there exists exactly one vertex in $G$ that belongs to no cycle in ${\mathcal C}$.

\begin{proposition}\label{p-alpha1}
No IR mechanism can provide an $\alpha$-approximation ratio with $\alpha\leq (1-\epsilon)c_{{\tt int}}$, for any $\epsilon>0$.
\end{proposition}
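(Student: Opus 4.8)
The plan is to prove the bound by exhibiting, for every $\epsilon>0$, a single near-perfect instance on which \emph{every} IR mechanism is forced a factor arbitrarily close to $\cint$ away from optimal. I would realise the instance sketched in Figure~\ref{fig:IRalpha}: country~$1$ owns vertices $u_1,\dots,u_r$ that induce exactly one directed national cycle $N=\langle u_1,\dots,u_r\rangle$ and nothing else (so I would take $\natv_1\ge r$ so that $N$ is a $\Gamma$-cycle), and for each $j\in\{1,\dots,r-1\}$ I would attach a pairwise vertex-disjoint international $\Gamma$-cycle $C_j$ of length $\cint$ through $u_j$, built from fresh vertices in the remaining countries (the rectangles and diamonds of Figure~\ref{fig:IRalpha}) and routed so as to respect all constraints of $\Gamma$ (length $\le\intv$, the segment sizes $\segv$, and segment numbers $\sv$). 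The decisive feature is that $u_r$ lies in no international cycle, so $N$ is the only cycle covering $u_r$.

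I would then pin down three quantities. Since $G[V_1]$ is just the cycle $N$, we have ${\tt nat}_1(I)=r$, and because each cycle contributes exactly one arc into every country-$1$ vertex it covers, the utility $u_1(\C)$ equals the number of country-$1$ vertices that $\C$ covers. Next I would observe that the only $\Gamma$-cycles in $G$ are $N$ and $C_1,\dots,C_{r-1}$; as $N$ meets every $C_j$ in $u_j$ while the $C_j$ are mutually disjoint, any packing is either $\{N\}$ (of size $r$) or a subpacking of $\{C_1,\dots,C_{r-1}\}$. Hence $\{C_1,\dots,C_{r-1}\}$ is a near-perfect packing leaving only $u_r$ uncovered, and ${\tt opt}(I)=(r-1)\cint$ for $\cint\ge 2$ and $r\ge 2$; note this optimum is itself not IR, since it gives country~$1$ utility only $r-1$.

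The heart of the argument is the IR constraint. For any IR mechanism $\mech$ we have $U_1(\mech(I))\ge {\tt nat}_1(I)=r$; but $u_1(\C)\le r$ for every packing $\C$, so the expected value $r$ can be attained only if $\mech(I)$ covers all $r$ country-$1$ vertices with probability~$1$. Covering $u_r$ forces $N\in\mech(I)$ almost surely, and $N$ blocks every $C_j$, whence $\mech(I)=\{N\}$ and $\mbox{SW}(\mech(I))=r$. Therefore $\frac{{\tt opt}(I)}{\mbox{SW}(\mech(I))}=\frac{(r-1)\cint}{r}=(1-1/r)\cint$, and choosing $r>1/\epsilon$ makes this exceed $(1-\epsilon)\cint$, which proves the claim.

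The step I expect to require the most care is not the counting but the gadget design: realising the $C_j$ as genuine international $\Gamma$-cycles of length exactly $\cint$ under the prescribed $\segv$ and $\sv$ (which is why Figure~\ref{fig:IRalpha} routes each cycle through two auxiliary countries rather than one), while guaranteeing both that $N$ stays the unique cycle through $u_r$ and that no unintended cycles appear among the auxiliary vertices. The conceptual crux is forcing $N$: one must use that country-$1$ utility is capped by $r$, so that the expected-utility form of IR collapses to almost-sure full coverage, which in turn triggers the all-or-nothing conflict between the single national cycle and all $r-1$ international cycles.
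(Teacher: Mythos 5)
Your proposal is correct and follows essentially the same approach as the paper: the identical construction (the national cycle $N$ on $u_1,\dots,u_r$ with $r-1$ vertex-disjoint international $\Gamma$-cycles of length $c_{{\tt int}}$ attached at $u_1,\dots,u_{r-1}$), the same counts ${\tt nat}_1(I)=r$ and ${\tt opt}(I)=(r-1)c_{{\tt int}}$, and the same conclusion by taking $r$ large. Your only addition is to spell out why IR forces the mechanism to output $\{N\}$ with probability one (utility capped at $r$, hence almost-sure coverage of $u_r$, hence $N$ blocks all $C_j$), a step the paper's proof asserts without elaboration.
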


\begin{proof}
We construct an IKEP instance $I= \langle G, \V, \Gamma\rangle$ as follows. Let $C=\langle u_1,u_2, \cdots, u_r\rangle$, for some $r\geq 2$,  be a national cycle with only vertices from $V_1$. For $i\in \{1,\ldots,r-1\}$, let $D_i$ be an international cycle of length $\intv \geq 2$ intersecting with $C$ in $u_i$ only. Assume $C$, $D_1,\ldots,D_{r-1}$ are all $\Gamma$-cycles. See also Figure~\ref{fig:IRalpha}.
Note that $c_{{\tt nat}}=r$.  Every IR mechanism $\mech$ must pick $C$, so $\mbox{SW}(\mech(I))=c_{{\tt nat}}$, but ${\tt opt}(I)=(c_{{\tt nat}}-1)\cdot c_{{\tt int}}$. Hence, $\mech$ has an $\alpha$-approximation ratio with $\alpha \geq \frac{c_{{\tt nat}}-1}{c_{{\tt nat}}}\cdot c_{{\tt int}} > (1-\epsilon)c_{{\tt int}}$ for any $\epsilon>0$ if $c_{{\tt nat}}$ is chosen large enough.
\end{proof}

We now show that IC mechanisms may need to return the empty set as the $\Gamma$-cycle packing for an IKEP instance~$I$. As a consequence, IC mechanisms cannot be maximal (or efficient) either. 

\begin{proposition}\label{p-ic}
There exists no mechanism that is both nonempty and IC. 
\end{proposition}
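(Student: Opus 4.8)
The plan is to exhibit a single IKEP instance on which nonemptiness and IC become mutually incompatible, using the graph of Figure~\ref{fig:emptyIC}. First I would let $G$ consist of exactly the two length-$6$ international cycles $C_1=\langle u_1,u_2,u_3,u_4,v_1,v_2\rangle$ and $C_2=\langle u_1,v_3,u_5,v_4,u_6,v_5\rangle$, which share only $u_1$, with $\V=(V_1,V_2)$ where $V_1=\{u_1,\dots,u_6\}$ (circles) and $V_2=\{v_1,\dots,v_5\}$ (squares). Choosing $\Gamma$ with $n=2$, $\intv=6$, $\segv=(4,2)$, $\sv=(3,3)$ (and any valid $\natv$, which is irrelevant as $G$ has no national cycle) makes both $C_1$ and $C_2$ into $\Gamma$-cycles: $C_1$ has one $V_1$-segment of size $4$ and one $V_2$-segment of size $2$, while $C_2$ is fully alternating, contributing three unit $V_1$-segments and three unit $V_2$-segments. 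Since $C_1$ and $C_2$ are the only cycles of $G$ and they intersect, the only $\Gamma$-cycle packings are $\emptyset$, $\{C_1\}$, and $\{C_2\}$, so a nonempty mechanism $\mech$ on $I=\langle G,\V,\Gamma\rangle$ outputs $\{C_1\}$ with some probability $p$ and $\{C_2\}$ with probability $1-p$. A direct count gives country~$1$'s utilities $u_1(C_1)=4$, $u_1(C_2)=3$ and country~$2$'s utilities $u_2(C_1)=2$, $u_2(C_2)=3$, so country~$1$ strictly prefers $C_1$ and country~$2$ strictly prefers $C_2$.

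Next I would exploit that each country can delete exactly the cycle the other country prefers. Because $C_2$ has three $V_1$-segments whereas $C_1$ has only one, country~$1$ can report $\Gamma'\leq_1\Gamma$ with $\sv_1'=2$; under $\Gamma'$ the cycle $C_2$ is no longer a $\Gamma'$-cycle while $C_1$ still is, so nonemptiness forces $\mech(\langle G,\V,\Gamma'\rangle)=\{C_1\}$ and hence $U_1(\mech(\langle G,\V,\Gamma'\rangle))=4$. Symmetrically, $C_1$ has a $V_2$-segment of size $2$ while every $V_2$-segment of $C_2$ has size $1$, so country~$2$ can report $\Gamma''\leq_2\Gamma$ with $\segv_2'=1$, which removes $C_1$ but not $C_2$, forcing $\mech(\langle G,\V,\Gamma''\rangle)=\{C_2\}$ and $U_2(\mech(\langle G,\V,\Gamma''\rangle))=3$. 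Applying IC to these two deviations gives $U_1(\mech(I))\geq4$ and $U_2(\mech(I))\geq3$. But $U_1(\mech(I))=4p+3(1-p)=3+p\leq4$ forces $p=1$, whereas $U_2(\mech(I))=2p+3(1-p)=3-p\leq3$ forces $p=0$, a contradiction; hence no mechanism is simultaneously nonempty and IC.

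The one genuinely delicate point is the gadget design rather than the arithmetic: the two cycles must be arranged so that the cycle each country prefers is precisely the one the \emph{other} country can veto through a permitted (downward) misreport of its own segment parameters, which is what produces the opposing pressures $p=1$ and $p=0$. Concretely this requires $C_1$ and $C_2$ to differ in the two segment statistics in opposite directions, so that lowering $\sv_1$ removes only $C_2$ while lowering $\segv_2$ removes only $C_1$. I would verify in passing that each reported parameter set still satisfies $\segv_i\geq1$ and $\sv_i\geq1$, that each deviation is a legitimate $\leq_i$ reduction touching only the deviating country's own coordinates, and that $G$ contains no spurious shorter cycles (none do, as $G$ is just two chordless cycles meeting at $u_1$).
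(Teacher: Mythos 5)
Your proof is correct and is essentially the paper's own argument: the same instance (Figure~\ref{fig:emptyIC}) with $\intv=6$, $\segv=(4,2)$, $\sv=(3,3)$, the same deviations (country~$1$ lowering $\sv_1$ to kill $C_2$, country~$2$ lowering $\segv_2$ to kill $C_1$), and the same utility arithmetic. The only difference is presentational — you derive the incompatible constraints $p=1$ and $p=0$ simultaneously, whereas the paper argues sequentially — so there is nothing substantive to add.
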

\begin{proof}
We construct an IKEP instance $I= \langle G, \V, \Gamma\rangle$ as follows. Let $V_1=\{u_1,\ldots,u_6\}$ and $V_2=\{v_1,\ldots,v_5\}$. Let $G$ consist of two cycles $C=\langle u_1,u_2,u_3,u_4,v_1,v_2\rangle$ and $D=\langle u_1,v_3,u_5,v_4,u_6,v_5\rangle$ that both contain only vertex $u_1$. We assume $\intv=6$, $\segv=(4,2)$ and $\sv=(3,3)$, so both $C$ and $D$ are $\Gamma$-cycles. See also Figure~\ref{fig:emptyIC}.
Note that $\D_I=\{\emptyset, \{C\}, \{D\}\}$.

Let $\mech$ be a nonempty IC mechanism that selects $C$ with probability $p$ and that selects $D$ with probability $q$. As $\mech$ is nonempty, $p+q=1$. First assume $p<1$. We observe that $U_1(\mech(I))=p\cdot u_1(\{C\}) + q \cdot u_1(\{D\})=4p+3q = 4p + 3(1-p) = p+3<4$. However, if country~$1$ misreports its segment number as 
$\sv_1'=1$ (or as $\sv_1'=2$), then $\mech$ will return $C$ and country~1 has expected utility~$4$.
As $\mech$ is IC, this means that $p=1$ and thus $q=0$. We now have $U_2(\mech(I))=p\cdot u_2(\{C\}) + q \cdot u_2(\{D\}) = 2$. However, if country~$2$ misreports its segment size as $\segv_2'=1$, then $\mech$ will return $D$ and country~2 has expected utility~$3$, a contradiction.
\end{proof}

Let $I= \langle G, \V, \Gamma\rangle $ be an IKEP instance. 
The {\it international cycle degree} $d(C)$ of $C$ is the number of international $\Gamma$-cycles in $G$ that share at least one vertex with $C$.
So, the two international $\Gamma$-cycles in the proof of Proposition~\ref{p-ic} both have international cycle degree~$2$, as each of them shares a vertex with itself and the other one. We let $d^*$ be the maximum international cycle degree over all international $\Gamma$-cycles of $G$. 
We can now show a lower bound for the approximation ratio of IC mechanisms.

\begin{theorem}\label{t-alpha2}
No IC mechanism can provide an $\alpha$-approximation ratio with $\alpha \leq (1-\epsilon)d^*$, for any $\epsilon>0$.
\end{theorem}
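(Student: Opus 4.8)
The plan is to follow the template of Proposition~\ref{p-alpha1}, but to replace the IR bottleneck by an IC one and to amplify the gap all the way to $d^*$ by using \emph{extreme} country preferences. Fix a target degree $d$ and a large integer $M$ (to be chosen in terms of $\epsilon$ and $d$), and construct an instance $I=\langle G,\V,\Gamma\rangle$ whose only international $\Gamma$-cycles are $C_1,\dots,C_d$, all passing through a single common vertex $h$, so that every $\Gamma$-cycle packing of $G$ contains at most one of them. Then $d(C_i)=d$ for each $i$, so $d^*=d$. Concretely, $G$ is the union of $d$ directed cycles sharing only $h$, where $C_i$ first traverses a long block of $M$ vertices of country~$i$ and then weaves once through two vertices of every other country. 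The gadget, generalising the one in Proposition~\ref{p-ic}, is designed so that: (i) in $C_i$ country~$i$ appears as a single segment of size $M$, so $u_i(C_i)=M$, while in every $C_j$ with $j\neq i$ country~$i$ appears as exactly two singleton segments, so $u_i(C_j)=2$; and (ii) lowering the reported segment number to $\sv_i'=1$ is a valid under-report $\Gamma^{(i)}\leq_i\Gamma$ that disables every $C_j$ with $j\neq i$ (in which country~$i$ has two segments) while keeping $C_i$ (in which it has one). By monotonicity, the only $\Gamma^{(i)}$-cycle packings are then $\emptyset$ and $\{C_i\}$.

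Since the $C_i$ pairwise conflict, the output $\mech(I)$ of any mechanism $\mech$ is supported on $\{\emptyset,\{C_1\},\dots,\{C_d\}\}$; write $p_i=\Pr[\mech(I)=\{C_i\}]$, so $\sum_i p_i\leq 1$, and write $a_i=\Pr[\mech(I^{(i)})=\{C_i\}]$, where $I^{(i)}=\langle G,\V,\Gamma^{(i)}\rangle$. Suppose for contradiction that $\mech$ is IC and provides an $\alpha$-approximation ratio with $\alpha\leq(1-\epsilon)d^*$. Applying the approximation guarantee to $I^{(i)}$, where ${\tt opt}(I^{(i)})=\|C_i\|$ and $\mbox{SW}(\mech(I^{(i)}))=a_i\|C_i\|$, gives $1/a_i\leq\alpha$, i.e.\ $a_i\geq 1/\alpha$. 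Applying IC for country~$i$ between $I$ and $I^{(i)}$ and using the utilities from (i) yields
\[
p_iM+2 \;\geq\; p_iM+2\!\!\sum_{j\neq i}p_j \;=\; U_i(\mech(I)) \;\geq\; U_i(\mech(I^{(i)})) \;=\; a_iM \;\geq\; M/\alpha,
\]
so $p_i\geq 1/\alpha-2/M$. Summing over all $i$ and using $\sum_i p_i\leq 1$ gives $1\geq d/\alpha-2d/M$, that is, $\alpha\geq dM/(M+2d)$. Choosing $M>2d(1-\epsilon)/\epsilon$ makes $dM/(M+2d)>(1-\epsilon)d=(1-\epsilon)d^*$, contradicting $\alpha\leq(1-\epsilon)d^*$.

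I expect the only real work to be the explicit gadget realising (i) and (ii) simultaneously with a single consistent $\Gamma$: country~$i$ must be a heavy single segment in $C_i$ yet split into two light segments in every other $C_j$, and the common vertex $h$ that forces the cycles to conflict must be embedded in each cycle without altering these segment counts or creating spurious shorter $\Gamma$-cycles. I would address this in the style of Lemma~\ref{lem:l=inftyk>=0c=1s>=2n=2}, taking $\intv=\infty$, $\segv_i\geq M$ and $\sv_i=2$ for all $i$, and national limits small enough that no national $\Gamma$-cycle exists, and then verifying that $C_1,\dots,C_d$ are indeed the only international $\Gamma$-cycles (since they share only $h$, any cycle through $h$ must be one of them, and their union contains no other cycle), so that $d^*=d$. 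Once the gadget is established, the linear-inequality argument above is routine; note in particular that it never uses ${\tt opt}(I)$ itself, only the optima of the single-cycle under-reported instances $I^{(i)}$.
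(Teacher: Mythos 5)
Your proposal is correct and follows essentially the same route as the paper's proof: the same construction of $d$ international $\Gamma$-cycles meeting at a single hub vertex (which, as in the paper, should be placed in its own extra country so the segment counts work out), the same misreport $\sv_i'=1$ killing all cycles but $C_i$, and the same two ingredients of the IC inequality plus the approximation guarantee applied to the misreported single-cycle instance. The only cosmetic difference is bookkeeping: you sum the resulting lower bounds $p_i \geq 1/\alpha - 2/M$ over all countries to reach a contradiction, whereas the paper applies a pigeonhole argument to the one country with $p_1 \leq 1/n$ and derives the bound directly.
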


\begin{proof}
We construct an IKEP instance $I= \langle G, \V, \Gamma\rangle$ as follows. Let $\N=\{1,\ldots,n+1\}$.
Let $L$ be an arbitrarily large integer. Let $\intv=1+L+2(n-1)=L-1+2n$; $\segv=\{L\}^{n+1}$; and $\sv=\{2\}^{n+1}$.

Let $G$ consist of $n$ international $\Gamma$-cycles $C_1,\ldots,C_n$, which all have exactly one vertex~$x\in V_{n+1}$ in common.
Cycle $C_1$ consists of $x$, followed by $L$ consecutive vertices from country~$1$, followed by an $(n-1)$-sequence of one vertex from countries $2,\ldots,n$, respectively, followed by another $(n-1)$-sequence of one vertex from countries $2,\ldots,n$, respectively. For $i\in \{2,\ldots,n\}$, cycle $C_i$ is constructed in the same way except that every vertex of country~$1$ is replaced by a vertex from country~$i$, and vice versa. So, each $C_i$ has length $1+L+2(n-1)$. See also Figure~\ref{f-alpha2}.
We note that $d^*=n$, and moreover that each $C_i$ is indeed a $\Gamma$-cycle. Hence, $\D_I=\{\emptyset,\{C_1\},\dots,\{C_n\}\}$.

Consider an arbitrary mechanism $\mech$ with probability distribution $p$ on $\D_I$. For $i\in \{1,\ldots,n\}$, let $p_i=p(\{C_1\})$. 
We may assume without loss of generality that $p_1 \leq \frac{1}{n}$.
This means that the expected utility for country~1 is at most 
$$p_1L+(1-p_1)\cdot 2=p_1(L-2)+2 \leq \frac{L-2}{n}+2 = \frac{L-2+2n}{n},$$
where the last inequality follows from the fact that $L$ is large, so $L>2$ holds. Now suppose country~1 misreports its segment number as $\sv_1'= 1$. Let $\Gamma'$ be the new set of country-specific parameters. Now $C_1$ is the only $\Gamma'$-cycle, as $C_2,\ldots,C_n$ all have two $V_1$-segments. Let $q$ be the probability that $C_1$ is selected by $\mech$ for $I'= \langle G, \V, \Gamma'\rangle$.  To ensure IC, we must have $qL \leq \frac{L-2+2n}{n}$, so $q \leq \frac{L-2+2n}{Ln}$. Hence, $\mech$ has an $\alpha$-approximation ratio with 
$$\alpha \geq \frac{{\tt opt}(I')}{\mbox{SW}(\mech(I'))} = \frac{|C_1|}{q|C_1|} = \frac{1}{q} \geq \frac{Ln}{L-2+2n} = \frac{n}{\frac{2n-2}{L} + 1}>(1-\epsilon)d^*,$$
for any $\epsilon>0$ if $L > > n$ is chosen large enough.
\end{proof}

\begin{figure}[t]
\centering
\resizebox{\columnwidth}{!}{
\begin{tikzpicture}[pentagon/.style = {regular polygon, regular polygon sides=5 }, hexagon/.style = {regular polygon, regular polygon sides=6 }]
    
    \node (dot1) at (6,0) {{\huge$\cdots$}};
    \node (dot2) at (15,0) {{\huge$\cdots$}};
    \node (dot3) at (24,0) {{\huge$\cdots$}};
    
    \begin{scope}[every node/.style={circle,ultra thick,draw,minimum size=1.5cm,inner sep=0pt,font=\huge}]
        \node (a1) at (0,0) {$v_1^1$};
        \node (a2) at (3,0) {$v_1^2$};
        \node (aM) at (9,0) {$v_1^L$};
    \end{scope}

    \begin{scope}[every node/.style={rectangle,ultra thick,draw,minimum size=1.3cm,inner sep=0pt,font=\huge}]
        \node (b1) at (12,0) {$v_2^{L+1}$};
        \node (b2) at (21,0) {$v_2^{L+2}$};
    \end{scope}
    
    \begin{scope}[every node/.style={diamond,ultra thick,draw,minimum size=1.65cm,inner sep=0pt,font=\huge}]
        \node (z1) at (18,0) {$v_n^{L+1}$};
        \node (z2) at (27,0) {$v_n^{L+2}$};
    \end{scope}
    
    \node (dot4) at (6,-4.5) {{\huge$\cdots$}};
    \node (dot5) at (15,-4.5) {{\huge$\cdots$}};
    \node (dot6) at (24,-4.5) {{\huge$\cdots$}};
    
    \begin{scope}[every node/.style={rectangle,ultra thick,draw,minimum size=1.3cm,inner sep=0pt,font=\huge}]
        \node (b3) at (0,-4.5) {$v_2^1$};
        \node (b4) at (3,-4.5) {$v_2^2$};
        \node (bM) at (9,-4.5) {$v_2^L$};
    \end{scope}
    
    \begin{scope}[every node/.style={circle,ultra thick,draw,minimum size=1.5cm,inner sep=0pt,font=\huge}]
        \node (a4) at (12,-4.5) {$v_1^{L+1}$};
        \node (a5) at (21,-4.5) {$v_1^{L+2}$};
    \end{scope}

    \begin{scope}[every node/.style={diamond,ultra thick,draw,minimum size=1.65cm,inner sep=0pt,font=\huge}]
        \node (z3) at (18,-4.5) {$v_n^{L+3}$};
        \node (z4) at (27,-4.5) {$v_n^{L+4}$};
    \end{scope}

    \node (dot7) at (15,-7.5) {{\huge$\cdots$}};

    \begin{scope}[every node/.style={hexagon,ultra thick,draw,minimum size=1.65cm,inner sep=0pt,font=\huge}]
        \node (star) at (36,-6) {$x$};
    \end{scope}
    
    \node (dot8) at (6,-10) {{\huge$\cdots$}};
    \node (dot9) at (15,-10) {{\huge$\cdots$}};
    \node (dot10) at (24,-10) {{\huge$\cdots$}};
    
    \begin{scope}[every node/.style={diamond,ultra thick,draw,minimum size=1.6cm,inner sep=0pt,font=\huge}]
        \node (z5) at (0,-10) {$v_n^1$};
        \node (z6) at (3,-10) {$v_n^2$};
        \node (z7) at (9,-10) {$v_n^L$};
    \end{scope}

    \begin{scope}[every node/.style={circle,ultra thick,draw,minimum size=1.5cm,inner sep=0pt,font=\huge}]
        \node (a6) at (12,-10) {$v_1^{2n-1}$};
        \node (a7) at (21,-10) {$v_1^{2n}$};
    \end{scope}

    \begin{scope}[every node/.style={pentagon,ultra thick,draw,minimum size=1.65cm,inner sep=0pt,font=\huge}]
        \node (y1) at (18,-10) {$v_{n-1}^{2n-1}$};
        \node (y2) at (27,-10) {$v_{n-1}^{2n}$};
    \end{scope}

    \begin{scope}[>={Stealth[black]}, every edge/.style={ultra thick,draw}]
    \path [->,ultra thick] (a1) edge (a2);
    \path [->,ultra thick] (a2) edge (dot1);
    \path [->,ultra thick] (dot1) edge (aM);
    \path [->,ultra thick] (aM) edge (b1);
    \path [->,ultra thick] (b1) edge (dot2);
    \path [->,ultra thick] (dot2) edge (z1);
    \path [->,ultra thick] (z1) edge (b2);
    \path [->,ultra thick] (b2) edge (dot3);
    \path [->,ultra thick] (dot3) edge (z2);
    \path [->,ultra thick] (z2) edge (star);
    \path [->,ultra thick, bend right=38] (star) edge (a1);
    
    \path [->,ultra thick] (b3) edge (b4);
    \path [->,ultra thick] (b4) edge (dot4);
    \path [->,ultra thick] (dot4) edge (bM);
    \path [->,ultra thick] (bM) edge (a4);
    \path [->,ultra thick] (a4) edge (dot5);
    \path [->,ultra thick] (dot5) edge (z3);
    \path [->,ultra thick] (z3) edge (a5);
    \path [->,ultra thick] (a5) edge (dot6);
    \path [->,ultra thick] (dot6) edge (z4);
    \path [->,ultra thick] (z4) edge (star);
    \path [->,ultra thick, bend right=22] (star) edge (b3);
    
    \path [->,ultra thick] (z5) edge (z6);
    \path [->,ultra thick] (z6) edge (dot8);
    \path [->,ultra thick] (dot8) edge (z7);
    \path [->,ultra thick] (z7) edge (a6);
    \path [->,ultra thick] (a6) edge (dot9);
    \path [->,ultra thick] (dot9) edge (y1);
    \path [->,ultra thick] (y1) edge (a7);
    \path [->,ultra thick] (a7) edge (dot10);
    \path [->,ultra thick] (dot10) edge (y2);
    \path [->,ultra thick] (y2) edge (star);
    \path [->,ultra thick, bend right=11] (star) edge (z5);
    \end{scope}
    
\end{tikzpicture}
}
\Description{}
\caption{The example used to prove the lower bound for the approximation ratio in Theorem~\ref{t-alpha2}.}\label{f-alpha2}
\end{figure}

 We note that the proof of Theorem \ref{t-alpha2} makes use of long cycles in the construction of $G$. These may be reasonable within IKEPs where individual countries only have to perform shorter segments as part of a longer cycle. However, a potential requirement for the overall simultaneity of transplants may still pose a challenge. Yet, if we were to consider chains, which can be longer in practice due to simultaneity not being required, then such a construction could be more applicable.

\subsection{Our IR and IC Mechanism}

We will now give a mechanism, which we call $\mech_{{\tt order}}$, and which we prove is both IR and IC. Another important feature of $\mech_{{\tt order}}$ will be that its algorithmic implementation runs in polynomial time as long as we can find a maximum national $\Gamma$-cycle packing in polynomial time and $\intv$ is a constant.

Let $I= \langle G, \V, \Gamma\rangle$ be an IKEP instance.
We say that a mechanism $\mech$ {\it pre-selects} a national $\Gamma$-cycle packing ${\mathcal C}$ of $G$ if $\mech$ always selects ${\mathcal C}$ for every IKEP instance $\langle G, \V, \Gamma'\rangle$ with the same national cycle limit vector as $I$. 
We let $\X(G)$ consist of all international cycles of $G$ of length at most $\intv$. Note that two cycles in $\X(G)$ may have common vertices and also that $\X(G)$ may contain cycles that are not $\Gamma$-cycles. Hence, $\X(G)$ may not be a $\Gamma$-cycle packing of $G$. Let $C$ be a cycle in $\X(G)$ that is not a $\Gamma$-cycle. We say that $C$ is {\it turnable} if $C$ contains a subgraph $D$ that is an international $\Gamma$-cycle of $G$ with at most one $V_i$-segment for every $i\in \{1,\ldots,n\}$. Such a $\Gamma$-cycle $D$ is a {\it substitute} for $C$. 

We now present our mechanism $\mech_{{\tt order}}$.
From its algorithmic description,
Algorithm~\ref{alg:mech}, we see that it always outputs a $\Gamma$-cycle packing (which might be empty). Moreover, 
from Step~1 of Algorithm~\ref{alg:mech}, we see that $\mech_{{\tt order}}$ satisfies IR.
Before proving that $\mech_{{\tt order}}$ is also IC and determining an upper bound for its approximation ratio, we first give some intuition behind this proof by giving three examples and a running time analysis. Examples~1 and~2 show how $\mech_{{\tt order}}$ maintains IC if the instance has no turnable cycles. Example~3 provides intuition if the instance has turnable cycles by showing that we cannot allow substitutes with two (or more) $V_i$-segments for some country~$i$.

\begin{algorithm}[t]
\caption{The mechanism $\mech_{{\tt order}}$}\label{alg:mech}
\Description{}
\KwIn{An IKEP instance $I= \langle G, \V, \Gamma\rangle$}
\KwOut{A $\Gamma$-cycle packing of $G$}
\begin{itemize}
\item []
\end{itemize}
\vspace*{-0.4cm}
{\bf Step 1.} Pre-select a maximum national $\Gamma$-cycle packing ${\mathcal C}_{{\tt nat}}$. 

\smallskip
\noindent
{\bf Step 2.} Construct a graph $\hG$ by deleting all the vertices of the cycles in ${\mathcal C}_{{\tt nat}}$ from $G$. 

\smallskip
\noindent
{\bf Step 3.} Consider a random order $\Pi$ of the cycles in $\X({\hG})$.  

\smallskip
\noindent
{\bf Step 4.} Find a cycle packing ${\mathcal C}$ of $\hG$ 
by exhaustively adding vertex-disjoint cycles from~$\Pi$. 

\smallskip
\noindent
{\bf Step 5.} For each $C\in {\mathcal C}$, do as follows:
\begin{itemize}
\item if $C$ is a $\Gamma$-cycle, keep $C$
\item if $C$ is not a $\Gamma$-cycle but turnable, replace $C$ with a randomly selected substitute for $C$
\item else delete $C$ from ${\mathcal C}$.
\end{itemize}
\noindent
{\bf Step 6.} Return ${\mathcal C}_{{\tt nat}}\cup {\mathcal C}$.
\end{algorithm}

\begin{example}
 \textup{
Consider the IKEP instance $I$ in the proof of Proposition~\ref{p-ic}. So $\intv=6$, $\segv=(2,2)$ and $\sv=(4,2)$, and both $C$ and $D$ are $\Gamma$-cycles, implying that $\mech_{{\tt order}}$ may return $\{C\}$ or $\{D\}$, each with probability $\frac{1}{2}$. Hence, the expected utility for country~1 is $\frac{1}{2}\cdot 4+\frac{1}{2}\cdot 3=3\frac{1}{2}$ and the expected utility for country~2 is $\frac{1}{2}\cdot 2+\frac{1}{2}\cdot 3=2\frac{1}{2}$.
If country~$1$ misreports its segment number as $\sv_1'=1$, then $\mech_{{\tt order}}$ will either return the empty cycle packing with probability $\frac{1}{2}$, or it will return $\{C\}$ with probability $\frac{1}{2}$. Hence, the expected utility for country~1 becomes $\frac{1}{2}\cdot 4=2\leq 3\frac{1}{2}$. Similarly, if country~$2$ misreports its segment size as $\segv_2'=1$, then $\mech_{{\tt order}}$ will either return the empty cycle packing with probability $\frac{1}{2}$, or it will return $\{D\}$ with probability $\frac{1}{2}$. Hence, the expected utility for country~$2$ becomes $\frac{1}{2}\cdot 3=1\frac{1}{2}\leq 2\frac{1}{2}$. In other words, neither country can benefit from misreporting.} \dia
   
\end{example}

\begin{example}
 \textup{
Consider the IKEP instance $I$ in the proof of Theorem~\ref{t-alpha2}. We recall that each $C_i$ is a $\Gamma$-cycle. Hence, $\mech_{{\tt order}}$ will return each $C_i$ with probability $\frac{1}{n}$, and the expected utility for each country~$i$ is equal to $\frac{L-2+2n}{n}$ if $i\in \{1,\ldots,n\}$ and equal to $1$ for $i=n+1$.
If country~1 misreports its segment number as $\sv_1'= 1$, then $\mech_{{\tt order}}$ will return either the empty cycle packing with probability $\frac{n-1}{n}$, or $\{C_1\}$ with probability $\frac{1}{n}$.
Hence, the expected utility for country~1 becomes $\frac{1}{n}\cdot L=\frac{L}{n} \leq \frac{L-2+2n}{n}$, showing that country~1 cannot benefit from misreporting. We also note that $\mech_{{\tt order}}$ has an $\alpha$-approximation ratio with $\alpha \geq n$, which is in line with the lower bound given in Theorem~\ref{t-alpha2}. 
}\dia
\end{example}

\begin{example}
\textup{
We construct an IKEP instance $I= \langle G, \V, \Gamma\rangle$ as follows. Let $\N=\{1,2\}$.
Let $L$ be an arbitrary integer. We set $\intv=L+4$ and $\segv=(L,1)$ and $\sv=(2,2)$.
We obtain $G$ from a cycle $D=\langle u_1,u_2,\cdots, u_L,v_1,u_{L+1},v_2,u_{L+2}\rangle$ after adding the arcs $(v_1,u_1)$ and $(u_{L+2},v_1)$ as well. Let $V_1=\{u_1,\ldots,u_{L+2}\}$ and $V_2=\{v_1,v_2\}$.
See also Figure~\ref{f-ex3}.
We observe that $G$ has exactly three cycles, namely $D$ and the cycles $C_1=\langle u_1,u_2, \cdots, u_L,v_1\rangle$ and $C_2=\langle v_1,u_{L+1},v_2,u_{L+2}\rangle$. The first one, $D$, 
is not a $\Gamma$-cycle, as it contains a $V_1$-segment of size~$L+1>L$, but $C_1$ and $C_2$ are $\Gamma$-cycles. }

\textup{As $G$ has no national $\Gamma$-cycles, we have ${\mathcal C}_{{\tt nat}}=\emptyset$ and $\hG=G$.
As $D$ has length~$\intv$, and $C_1$ and $C_2$ are $\Gamma$-cycles, we find that $\X(G)=\{C_1,C_2,D\}$.
So, $\mech_{{\tt order}}$ may still select $D$ in Step~4. However, if it does, then it must replace $D$ in Step 5 with its only substitute $C_1$. }

\textup{We now consider country~1. The expected utility of country~1 is $\frac{1}{3}\cdot L+ \frac{1}{3}\cdot L + \frac{1}{3}\cdot 2 = \frac{2}{3} L + \frac{2}{3}$.
If country~$1$ misreports its segment number as $\sv_1'=1$, then the expected utility for country~1 is $\frac{1}{3}\cdot L + \frac{1}{3}\cdot L = \frac{2}{3} L$. Hence, country~1 cannot benefit from misreporting.}

\textup{
We also note that it is essential to consider only substitutes with at most one $V_i$-segment per country. In order to see this, suppose that we would modify $\mech_{{\tt order}}$ by allowing substitutes with at most two $V_i$-segments. We consider the same IKEP instance $I$. Now $D$ has both $C_1$ and $C_2$ as its two substitutes. This means that the expected utility of country~1 is $\frac{1}{3} (\frac{1}{2}\cdot L+\frac{1}{2}\cdot 2) + \frac{1}{3}\cdot L + \frac{1}{3}\cdot 2=\frac{1}{2}L+1$. If country~$1$ misreports its segment number as $\sv_1'=1$, then the expected utility for country~1 is $\frac{1}{3}\cdot L + \frac{1}{3}\cdot L = \frac{2}{3} L$. Hence, in this case, country~1 {\it can} benefit significantly from misreporting (if $L$ is large), so our modified mechanism would not be IC. 
}\dia
\end{example}

\begin{figure}
\centering
\begin{subfigure}[t]{0.58\columnwidth}
\centering
\resizebox{\columnwidth}{!}{
\begin{tikzpicture}[triangle/.style = {regular polygon, regular polygon sides=3 }]
    \begin{scope}[every node/.style={circle,thick,draw,minimum size=0.8cm,inner sep=0pt,font=\large}]  
        \node (u1) at (-1.5,0) {$u_1$};
        \node (u2) at (0,0) {$u_2$};
        \node (uL) at (3,0) {$u_L$};
        \node (uL1) at (6,0) {$u_{L+1}$};
        \node (uL2) at (9,0) {$u_{L+2}$};
    \end{scope}

    \begin{scope}[>={Stealth[black]}, every node/.style={rectangle,thick,draw,minimum size=0.8cm,inner sep=0pt,font=\large}]
        \node (v1) at (4.5,0) {$v_1$};
        \node (v2) at (7.5,0) {$v_2$};
    \end{scope}

    \node (dot) at (1.5,0) {$\cdots$};

    \begin{scope}[>={Stealth[black]}, every edge/.style={thick,draw}]
        \path [->] (u1) edge node {} (u2);
        \path [->] (u2) edge node {} (dot);
        \path [->] (dot) edge node {} (uL);
        \path [->] (uL) edge node {} (v1);
        \path [->] (v1) edge node {} (uL1);
        \path [->] (uL1) edge node {} (v2);
        \path [->] (v2) edge node {} (uL2);
        \path [->] (uL2) edge[bend left=25] node {} (u1);

        \path [->] (v1) edge[bend right=30] node {} (u1);
        \path [->] (uL2) edge[bend right=30] node {} (v1);
    \end{scope}
\end{tikzpicture} 
}
\Description{}
\caption{The graph $G$ from Example~3.}\label{f-ex3}
\end{subfigure}
\begin{subfigure}[t]{0.4\columnwidth}
\centering
\resizebox{\columnwidth}{!}{
\begin{tikzpicture}[triangle/.style = {regular polygon, regular polygon sides=3 }]
    \begin{scope}[every node/.style={circle,thick,draw,minimum size=0.8cm,inner sep=0pt,font=\large}]  
        \node (u1) at (0,0) {$u_1$};
        \node (u2) at (2,0) {$u_2$};
        \node (u3) at (4.5,0) {$u_3$};
        \node (u4) at (7,0) {$u_{4}$};
    \end{scope}
    
    \begin{scope}[every node/.style={rectangle,thick,draw,minimum size=0.8cm,inner sep=0pt,font=\large}]
        \node (v1) at (1,1) {$v_1$};
        \node (v2) at (5,1) {$v_2$};
        \node (v3) at (3,1) {$v_3$};
    \end{scope}

    \begin{scope}[>={Stealth[black]}, every edge/.style={thick,draw}]
        \path [->] (u1) edge node {} (u2);
        \path [->] (u2) edge node {} (u3);
        \path [->] (u3) edge node {} (u4);
        \path [->] (u4) edge node {} (v2);
        \path [->] (v2) edge node {} (v3);
        \path [->] (v3) edge node {} (u1);

        \path [->] (v2) edge node {} (u2);
        \path [->] (u2) edge node {} (v3);
        \path [->] (u1) edge node {} (v1);
        \path [->] (v1)[bend left=45] edge node {} (u4);
    \end{scope}
\end{tikzpicture}
}
\Description{}
\caption{The graph $G$ from Remark~1.}\label{f-rm1}
\end{subfigure}
\Description{}
\caption{The graphs from Example~3 and Remark~1.}
\end{figure}

We now analyse the running time of $\mech_{{\tt order}}$.

\begin{proposition}
$\mech_{{\tt order}}$ can be executed in polynomial time if Step~1 takes polynomial time and $\emph{\intv}$ is a constant.
\end{proposition}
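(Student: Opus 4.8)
The plan is to walk through the six steps of Algorithm~\ref{alg:mech} and bound each one by a polynomial in the size of $G=(V,A)$, writing $N=|V|$ and treating $\intv$ as a fixed constant. Step~1 is polynomial by hypothesis, and Step~2 merely deletes from $G$ the vertices covered by the cycles of $\mathcal{C}_{{\tt nat}}$, so it runs in $O(N)$ time and produces $\hG$ on at most $N$ vertices.

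The crux is bounding $|\X(\hG)|$. Every cycle in $\X(\hG)$ has length at most the constant $\intv$, so it is determined by an ordered sequence of at most $\intv$ distinct vertices, of which there are $O(N^{\intv})$. I would enumerate $\X(\hG)$ by iterating over all such sequences and checking, in $O(\intv)$ time each, whether the required arcs are present in $\hG$ and whether the resulting cycle is international. Since $\intv$ is constant this gives $|\X(\hG)|=O(N^{\intv})$, so the enumeration is polynomial. Step~3 (drawing a random permutation $\Pi$ of $\X(\hG)$) and Step~4 (greedily adding vertex-disjoint cycles while maintaining a Boolean array of used vertices, at cost $O(\intv)$ per candidate) then both run in time polynomial in $|\X(\hG)|$, hence polynomial in $N$.

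For Step~5, I would process the cycles of $\mathcal{C}$ one at a time; as they are pairwise vertex-disjoint there are $O(N)$ of them. Testing whether a given $C$ is a $\Gamma$-cycle is a single traversal checking the length, segment-size, and segment-number conditions, costing $O(\intv)$. If $C$ is not a $\Gamma$-cycle, I would decide turnability and list all substitutes by brute force over $V(C)$: because $|V(C)|\le\intv$ is constant, there are at most $2^{\intv}$ subsets $S\subseteq V(C)$, and for each one I test all at most $|S|!$ cyclic orderings to see whether $G[S]$ supports a cycle (using arcs of $G$) that is an international $\Gamma$-cycle with at most one $V_i$-segment per country, as in Example~3. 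This yields the complete list of substitutes in $O(2^{\intv}\,\intv!)$ time, which is constant, so picking one uniformly at random is immediate. Step~6 returns the union of the two disjoint packings in $O(N)$ time.

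The main obstacle — indeed essentially the only substantive point — is making precise that capping cycle lengths at the constant $\intv$ collapses the a priori exponential collection of cycles into the polynomial-size set $\X(\hG)$, and that substitute-finding is confined to a single turnable cycle on at most $\intv$ vertices, so its brute-force cost is constant rather than exponential in $N$. Once these two observations are pinned down, every remaining step is routine bookkeeping, and the overall running time is the sum of these polynomial bounds.
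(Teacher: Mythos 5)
Your proof is correct and takes essentially the same approach as the paper's: the decisive observation in both is that every cycle in $\X(\hG)$ has length at most the constant $\intv$, so $|\X(\hG)|=O(N^{\intv})$ is polynomial and all subsequent checks ($\Gamma$-cycle tests, turnability, substitute selection) are routine. Your only deviation is an implementation detail in Step~5 — you enumerate substitutes by brute force over subsets of $V(C)$ in constant time per cycle, whereas the paper simply notes that all substitutes already belong to the enumerated set $\X(\hG)$ — which does not change the argument.
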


\begin{proof}
In Step~1, we must compute a maximum national $\Gamma$-cycle packing. Assuming $\cP\neq \NP$, this problem is only polynomial-time solvable if $\natv_i\in \{0,2,\infty\}$ for every $i\in \{1,\ldots,n\}$ by Theorem~\ref{t-dicho}. The other steps take polynomial time. This is because the set $\X(\hG)$ has size at most $O(n^{\intv})$, which is polynomial since $\intv$ is a constant.
It also takes polynomial time to check if a cycle $C$ is a $\Gamma$-cycle and to determine its sets of substitutes, which all belong to $\X(\hG)$. 
\end{proof}

\begin{remark}
\textup{
 If computational time is not an issue, we could enhance any mechanism that is IR and/or IC with one additional property, namely \emph{perfectness}. 
A mechanism $\mech$ is {\it perfect} if on any input $I=(G,\V,\Gamma)$, it always returns a perfect $\Gamma$-cycle packing of $G$ if there exists one. A check on this could take place at the start. It is not possible to relax this property by requiring only {\it near-perfectness}, so returning a near-perfect $\Gamma$-cycle packing if there exists one. For not being IR, this is illustrated by the example in the proof of Proposition~\ref{p-alpha1}. For not being IC, let $I=(G,\V,\Gamma)$ be an IKEP instance with $V_1=\{u_1,u_2,u_3.u_4\}$ and $V_2=\{v_1,v_2,v_3\}$. Let $G$ be constructed from a 6-vertex cycle $C = \langle u_1,u_2,u_3,u_4,v_2,v_3\rangle$ that we extend by adding arcs $(v_2,u_2)$ and $(u_2,v_3)$ and by adding $v_1$ with arcs $(u_1,v_1)$ and $(v_1,u_4)$, as shown in Figure \ref{f-rm1}. So, we have exactly one other $6$-vertex cycle, namely $D=\langle u_1,v_1,u_4,v_2,u_2,v_3 \rangle$. We can now use the same arguments as in the proof of Proposition~\ref{p-ic}.
}\dia
\end{remark}

We are now ready to formally prove that our mechanism is both IR and IC.

\begin{theorem}\label{t-ms4}
 $\mech_{{\tt order}}$ is a mechanism that satisfies IR and IC.
\end{theorem}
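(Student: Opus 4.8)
The plan is to verify the two properties separately, with IR being essentially immediate from Step~1 and IC requiring a coupling argument followed by a per-cycle case analysis.

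For IR, I would argue directly from Step~1. The pre-selected packing $\mathcal{C}_{\tt nat}$ is a maximum national $\Gamma$-cycle packing, i.e.\ the union of maximum $\Gamma$-cycle packings of the graphs $G[V_i]$, so the cycles of $\mathcal{C}_{\tt nat}$ lying in country~$i$ contribute exactly ${\tt nat}_i(I)$ arcs with head in $V_i$. Since the international cycles added in Steps~4--5 contribute a nonnegative number of further such arcs, we obtain $u_i(\Morder(I)) \geq {\tt nat}_i(I)$ with probability~$1$, and hence $U_i(\Morder(I)) \geq {\tt nat}_i(I)$, which is IR.

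For IC, fix a country~$i$ and a report $\Gamma' \leq_i \Gamma$; I must show $U_i(\Morder(I)) \geq U_i(\Morder(I'))$ for $I'=\langle G,\V,\Gamma'\rangle$. The first step is a coupling. Since $\Gamma'$ differs from $\Gamma$ only in $\segv_i$ and $\sv_i$, the vectors $\natv$ and the limit $\intv$ are unchanged, so $\mathcal{C}_{\tt nat}$, the graph $\hG$, and the set $\X(\hG)$ are identical in both runs; moreover Steps~3--4 build the packing $\mathcal{C}$ using only the random order $\Pi$ and vertex-disjointness, never consulting $\Gamma$. I can therefore couple the two runs to use the same $\mathcal{C}_{\tt nat}$ and the same $\Pi$, producing the \emph{same} packing $\mathcal{C}$ before Step~5. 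Expected utility then decomposes as a sum over the cycles $C\in\mathcal{C}$ of their expected per-cycle contribution to $u_i$ in Step~5, so it suffices to prove, for each fixed $C$, that this expected contribution under $\Gamma$ is at least that under $\Gamma'$. Two structural facts drive the comparison: first, each vertex is the head of exactly one arc of a cycle, so $u_i(C)=|V(C)\cap V_i|$, and in particular a substitute $D$ (which has at most one $V_i$-segment) satisfies $u_i(D)=$ the size of that single segment, with $V(D)\subseteq V(C)$ giving $u_i(D)\leq u_i(C)$; second, because $\Gamma'$ only tightens country~$i$'s parameters, every $\Gamma'$-cycle is a $\Gamma$-cycle, so the set $S'$ of $\Gamma'$-substitutes of $C$ is contained in the set $S$ of its $\Gamma$-substitutes, and $D\in S'$ precisely when its single $V_i$-segment respects $\Gamma'$, which (as $D$ has at most one such segment) reduces to the size threshold $u_i(D)\leq\segv'_i$. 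I would then split on the status of $C$: if $C$ is already a $\Gamma'$-cycle it is kept in both runs; if $C$ is a $\Gamma$-cycle but not a $\Gamma'$-cycle, its $\Gamma$-contribution is $u_i(C)$ while its $\Gamma'$-contribution is either $0$ or an average of values $u_i(D)\leq u_i(C)$; and if $C$ is not a $\Gamma$-cycle, both runs either delete $C$ or average over substitutes.

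The main obstacle is this last case, where both runs replace $C$ by a uniformly random substitute but drawn from the nested pools $S'\subseteq S$: one must rule out that averaging over the smaller pool $S'$ \emph{raises} country~$i$'s expected utility. Here the segment-size identity is decisive: every $D\in S'$ has $u_i(D)\leq\segv'_i$, whereas every lost substitute $D\in S\setminus S'$ has $u_i(D)>\segv'_i$, so all removed substitutes strictly dominate all retained ones in $u_i$-value, and deleting the uniformly heaviest elements of a set cannot increase its mean, giving $\tfrac{1}{|S|}\sum_{D\in S}u_i(D)\geq \tfrac{1}{|S'|}\sum_{D\in S'}u_i(D)$ (the degenerate cases being covered by nonnegativity of $u_i$ and by $S=\emptyset\Rightarrow S'=\emptyset$). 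This is exactly the phenomenon flagged in Example~3: allowing substitutes with two or more $V_i$-segments would break the identity $u_i(D)=$ segment size and hence the clean threshold separation between retained and lost substitutes, destroying the averaging inequality. Summing the per-cycle inequalities over $C\in\mathcal{C}$ and over the coupled randomness yields $U_i(\Morder(I))\geq U_i(\Morder(I'))$, establishing IC.
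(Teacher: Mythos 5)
Your proof is correct, but it takes a genuinely different route from the paper's. The paper works with explicit selection probabilities: it asserts that Step~4 selects each cycle $C\in\X(\hG)$ with probability $1/|\X_C|$, writes out $U_1(\Morder(I))$ and $U_1(\Morder(I'))$ as sums over a partition of $\X(\hG)$ into classes ($\Gamma$-cycles that are, respectively are not, substitutes for turnable cycles; non-$\Gamma$-cycles; and the finer classes ${\mathcal Z}_1',{\mathcal Z}_{2{\tt a}}',{\mathcal Z}_{2{\tt b}}',{\mathcal F}_1',{\mathcal F}_2',{\mathcal F}_3'$ induced by the misreport), and then compares the two expressions class by class. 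You instead \emph{couple} the two runs: since $\natv$ and $\intv$ are unchanged under $\Gamma'\leq_i\Gamma$, Steps~1--4 never consult the manipulated parameters, so under the same random order $\Pi$ both runs produce the same packing ${\mathcal C}$ entering Step~5, and IC reduces to a deterministic per-cycle comparison of expected Step-5 contributions. This buys you two things: it dispenses with the paper's bookkeeping over cycle classes, and it avoids committing to the exact form of the Step-4 selection probabilities (the claim that a cycle is selected with probability exactly $1/|\X_C|$ under greedy packing from a random order is delicate, whereas your argument only needs that the Step-4 distribution is \emph{identical} in the two runs, which is immediate). The heart of both proofs is the same, namely the paper's inequality~(4): because substitutes have at most one $V_i$-segment, $u_i(D)$ equals that segment's size, so every substitute destroyed by the misreport has strictly larger $u_i$-value than every surviving one, and the mean over the surviving pool $S'$ cannot exceed the mean over the original pool $S$. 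Your three-way case split (kept in both runs; kept under $\Gamma$ but degraded under $\Gamma'$; turnable or deleted in both) covers exactly what the paper's class decomposition covers, and like the paper you implicitly use the standing assumption that the reported $\sv_i'$ is at least $1$, so that for a one-segment substitute only the size threshold $\segv_i'$ matters.
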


\begin{proof}
As mentioned, it is readily seen from its description that $\mech_{{\tt order}}$ always returns a $\Gamma$-cycle packing for every IKEP instance $I$; in particular, if in Step~5 we replace a turnable cycle $C$ with a substitute $D$ and another turnable cycle $C'$ by a substitute~$D'$, then $D$ and $D'$ will be vertex-disjoint (as $C$ and $C'$ are vertex-disjoint, and $D$ is a subgraph of $C$, and $D'$ is a subgraph of $C'$).
Hence, $\mech_{{\tt order}}$ is indeed a mechanism.
Step~1 ensures that $\mech_{{\tt order}}$ is IR. We will now prove that $\mech_{{\tt order}}$ is IC. 

Let $I= \langle G, \V, \Gamma\rangle $ be an IKEP instance.  Assume country~$1$ misreports either its segment size $\segv_1$ or its segment number $\sv_1$ by choosing a smaller value. Let $\Gamma'$ be the new set of country-specific parameters and let $I'= \langle G, \V, \Gamma'\rangle $ be the new IKEP instance. We will prove the theorem by showing that $U_1(\mech_{{\tt order}}(I')) \leq U_1(\mech_{{\tt order}}(I))$.

Let ${\mathcal C}_{{\tt nat}}$ be the maximum national $\Gamma$-cycle packing that is pre-selected by $\mech_{{\tt order}}$ in Step~1. So, ${\mathcal C}_{{\tt nat}}$ will always be selected in Step~1 also after the misreporting of country~$1$.
Hence, the graph $\hG$ constructed in Step~2 by deleting all vertices that belong to a cycle in ${\mathcal C}_{{\tt nat}}$ from $G$ will be the same graph irrespective of the misreporting of country~$1$. Let ${\mathcal Y}$ consist of all national $\Gamma$-cycles of $\hG$. Let ${\mathcal Z}$ consist of all international $\Gamma$-cycles of $\hG$. Similarly, let ${\mathcal Y}'$ denote the set of all national $\Gamma'$-cycles of $\hG$, and let ${\mathcal Z}'$ denote the set of all international $\Gamma'$-cycles of $\hG$. 

For a cycle $C\in \X(\hG)$, let $p(C)$ be the probability that $C$ will be selected for the $\Gamma$-cycle packing returned by $\mech_{{\tt order}}$ on input $I$. Similarly, we let
$q(C)$ be the probability that $C\in \X(\hG)$ will be selected for the $\Gamma'$-cycle packing returned by $\mech_{{\tt order}}$ on input $I'$. Note that $p(C)=0$ if $C$ is not a $\Gamma$-cycle (even if $C$ is selected in Step 4, it would be discarded in Step~5). Similarly, $q(C)=0$ if $C$ is not a $\Gamma'$-cycle. We observe that $U_1(\mech_{{\tt order}}(I')) \leq U_1(\mech_{{\tt order}}(I))$ if and only if
\begin{equation}
\sum_{C\in {\mathcal Y}'}q(C)\cdot u_1(C) + \sum_{C\in {\mathcal Z}'}q(C)\cdot u_1(C) \leq  \sum_{C\in {\mathcal Y}}p(C)\cdot u_1(C) + \sum_{C\in {\mathcal Z}}p(C)\cdot u_1(C).
\end{equation}
Recall that ${\mathcal C}_{{\tt nat}}$ will always be selected in Step~1, also after the misreporting of country~$1$. 
This means that ${\mathcal Y}={\mathcal Y}'$, and for every $C\in {\mathcal Y}$, we have $p(C)=q(C)$ (which is equal to $0$ or $1$). Hence,  $\sum_{C\in {\mathcal Y}'}q(C)\cdot u_1(C)=\sum_{C\in {\mathcal Y}}p(C)\cdot u_1(C)$, and we are left to prove:
\begin{equation}
\sum_{C\in {\mathcal Z}'}q(C)\cdot u_1(C) \leq  \sum_{C\in {\mathcal Z}}p(C)\cdot u_1(C).
\end{equation}
Note that  ${\mathcal Z}'\subseteq {\mathcal Z}$, as country~$1$ misreports by claiming a smaller value for its segment size $\segv_1$ or its segment number $\sv_1$. 
We need some more notation. With a cycle $C\in \X(\hG)$, we associate a set $\X_C$ that consists of all the cycles in $\X(\hG)$ sharing at least one vertex with $C$, so $C$ also belongs to $\X_C$. Note that in Step~4, every cycle $C\in \X(\hG)$ is selected with probability $\frac{1}{|\X_C|}$.

\begin{figure}[ht]
\centering
\includegraphics[width=0.4\linewidth]{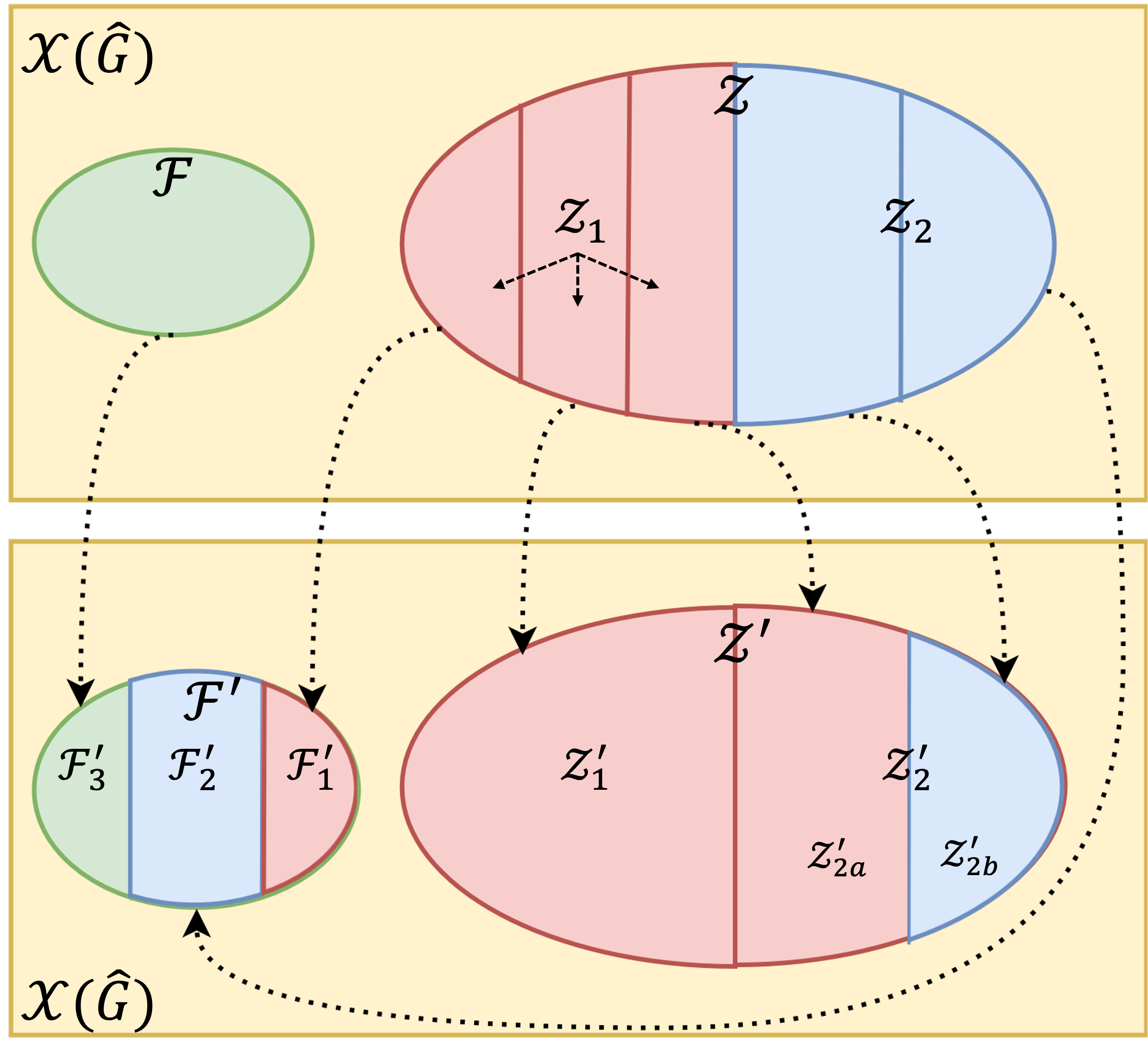}
\Description{}
\caption{An illustration of the cycles in $\X(\hG)$ for instances $I$ (top figure) and $I'$ (bottom figure).}\label{fig:sets}
\end{figure}

We first consider instance $I$. We partition ${\mathcal Z}$ into sets ${\mathcal Z}_1$ and ${\mathcal Z}_2$ as follows. Let ${\mathcal Z}_1$ consist of all $\Gamma$-cycles in ${\mathcal Z}$ that are not substitutes for turnable cycles in $\X(\hG)$ for~$I$. Note that cycles in ${\mathcal Z}_1$ may still be contained as subgraphs of larger $\Gamma$-cycles in $\X(\hG)$. Let ${\mathcal Z}_2={\mathcal Z}\setminus {\mathcal Z}_1$, so ${\mathcal Z}_2$ consists of $\Gamma$-cycles that are a substitute for at least one turnable cycle in $\X(\hG)$ for $I$.  

Let ${\mathcal F}=\{F_1,\ldots,F_r\}$ consist of all the cycles in $\X(\hG)$ that are not $\Gamma$-cycles. We note that $\X(\hG)={\mathcal Z}_1\cup {\mathcal Z}_2\cup {\mathcal F}$; see also Figure~\ref{fig:sets}. For $i=1,\ldots,r$, we let ${\mathcal A}_i$ consist of all $\Gamma$-cycles that are substitutes for $F_i$. Note that every~${\mathcal A}_i$ consists of cycles from ${\mathcal Z}_2$ only, but ${\mathcal A}_i$ might be empty. Moreover, two sets ${\mathcal A}_i$ and ${\mathcal A}_j$ for some $i\neq j$ may have a nonempty intersection.  If $\mech_{{\tt order}}$ selects $F_i$ in Step~4, then in Step~5 each cycle from ${\mathcal A}_i$ has equal probability $\frac{1}{|{\mathcal A}_i|}$ to be chosen. We now find that
$$\sum_{C\in {\mathcal Z}}p(C)\cdot u_1(C) = \sum_{C\in {\mathcal Z}_1} \frac{1}{|\X_C|}u_1(C) + \sum_{C\in {\mathcal Z}_2} \frac{1}{|\X_C|}u_1(C) + \sum_{i=1}^r \frac{1}{|\X_{F_i}|}\sum_{C\in {\mathcal A}_i}\frac{1}{|{\mathcal A}_i|}u_1(C).$$

We now consider instance $I'$. 
Let ${\mathcal Z}_1'$ consist of all $\Gamma$-cycles in ${\mathcal Z}'$ that are not substitutes for turnable cycles in $\X(\hG)$ for~$I'$. Note that ${\mathcal Z}_1'\subseteq {\mathcal Z}_1$. This is because country~$1$ only misreports by giving a lower value for $\segv_1$ or $\sv_1$. Hence, any turnable cycles for instance~$I$ are also turnable for~$I'$, and thus, any cycle $C\in {\mathcal Z}_2$ is still a substitute for a turnable cycle in $I'$, or $C$ is not a $\Gamma'$-cycle.

Let ${\mathcal Z}_2' ={\mathcal Z}'\setminus {\mathcal Z}_1'$, so ${\mathcal Z}_2'$ consists of $\Gamma'$-cycles that are a substitute for at least one turnable cycle in $\X(\hG)$ for instance $I'$. We partition ${\mathcal Z}_2'$ into sets ${\mathcal Z}_{2{\tt a}}'$ and ${\mathcal Z}_{2{\tt b}}'$ as follows. We let ${\mathcal Z}_{2{\tt a}}'$ consist of all $\Gamma'$-cycles of ${\mathcal Z}_1$ that became a substitute for at least one turnable cycle in $\X(\hG)$ for $I'$. We let ${\mathcal Z}_{2{\tt b}}'$ consist of all $\Gamma'$-cycles of ${\mathcal Z}_2$ that are a substitute for at least one turnable cycle in $\X(\hG)$ for $I'$.

Let ${\mathcal F}'$  consist of all cycles in $\X(\hG)$ that are not $\Gamma'$-cycles. So, $\X(\hG)={\mathcal Z}_1'\cup {\mathcal Z}_2'\cup {\mathcal F}' (= {\mathcal Z}_1\cup {\mathcal Z}_2\ \cup {\mathcal F})$. We partition ${\mathcal F}'$ into three sets ${\mathcal F}_1'$, ${\mathcal F}_2'$ and ${\mathcal F}_3'$ as follows (see also Figure~\ref{fig:sets}):

\begin{enumerate}
\item 
There may exist cycles in ${\mathcal Z}_1$ that are not $\Gamma'$-cycles. If so, we let ${\mathcal F}_1'=\{F_1',\ldots F_s'\}$, for some $s\geq 1$, be the set of such cycles. For $i=1,\ldots,s$, we let ${\mathcal A}_i'$ consist of all $\Gamma'$-cycles that are substitutes for $F_i'$. 
\item There may exist cycles in ${\mathcal Z}_2$ that are not $\Gamma'$-cycles. If so, we let ${\mathcal F}_2'=\{F_1'',\ldots,F_t''\}$, for some $t\geq 1$, be the set of such cycles. For $i=1,\ldots,t$, we let ${\mathcal A}_i''$ consist of all $\Gamma'$-cycles that are substitutes for $F_i''$.  
\item We note that each cycle in $\X(\hG)$ that is not a $\Gamma$-cycle is also not a $\Gamma'$-cycle (as country~$1$ only misreports by giving a lower value of $\segv_1$ or $\sv_1$). Hence, ${\mathcal F}\subseteq {\mathcal F}'$, and we set ${\mathcal F}_3'={\mathcal F}$. For $i=1,\ldots,r$, we let ${\mathcal A}_i^*$ consist of all $\Gamma'$-cycles that are substitutes for $F_i$.
\end{enumerate}

We note that indeed $({\mathcal F}_1',{\mathcal F}_2',{\mathcal F}_3')$ forms a partition of ${\mathcal F}'$. We also note that the sets ${\mathcal A}_i'$, ${\mathcal A}_i''$ and ${\mathcal A}_i^*$ might overlap, and some of them might be empty. 
Finally, a set ${\mathcal A}_i'$ may consist of cycles from both ${\mathcal Z}_{2{\tt a}}'$ and ${\mathcal Z}_{2{\tt b}}'$, whereas a set ${\mathcal A}_i''$ and a set ${\mathcal A}_i^*$  may only consist of cycles from  ${\mathcal Z}_{2{\tt b}}'$ (if these sets contains a cycle from ${\mathcal Z}_{2{\tt a}}'\subseteq {\mathcal Z}_1$, then such a cycle would have been placed in ${\mathcal Z}_2$ instead of in ${\mathcal Z}_1$). Hence, in particular, we have  $A_i^*\subseteq A_i$ for every $i\in \{1,\ldots,r\}$.
We now find that
\[\begin{array}{lcl}
\displaystyle\sum_{C\in {\mathcal Z}'}q(C)\cdot u_1(C) &= &\displaystyle\sum_{C\in {\mathcal Z}_1'} \frac{1}{|\X_C|}u_1(C) + \displaystyle\sum_{C\in {\mathcal Z}_{2{\tt a}}'} \frac{1}{|\X_C|}u_1(C) +\displaystyle\sum_{C\in {\mathcal Z}_{2{\tt b}}'} \frac{1}{|\X_C|}u_1(C)\\[20pt]
&+ &\displaystyle\sum_{i=1}^s \frac{1}{|\X_{F_i'}|}\sum_{C\in {\mathcal A}_i'}\frac{1}{|{\mathcal A}_i'|}u_1(C) +  \displaystyle\sum_{i=1}^t \frac{1}{|\X_{F_i''}|}\sum_{C\in {\mathcal A}_i''}\frac{1}{|{\mathcal A}_i''|}u_1(C)\\[20pt] &+  &\displaystyle\sum_{i=1}^r \frac{1}{|\X_{F_i}|}\sum_{C\in {\mathcal A}_i^*}\frac{1}{|{\mathcal A}_i^*|}u_1(C).
\end{array}\]
We have that ${\mathcal Z}_1={\mathcal Z}_1'\cup {\mathcal Z}_{2{\tt a}}' \cup {\mathcal F}_1'$. Moreover, for every $i\in \{1,\ldots,s\}$ the number of vertices of country~1 in every cycle $C$ in the set of substitutes ${\mathcal A}_i'$ for $F_i'$ is at most the number of vertices of country~1 in $F_i$, that is, $u_1(C)\leq u_1(F_i')$ for every $C\in {\mathcal A}_i'$. Hence, we also have
$\sum_{C\in {\mathcal A}_i'}\frac{1}{|{\mathcal A}_i'|}u_1(C)\leq \sum_{C\in {\mathcal A}_i'}\frac{1}{|{\mathcal A}_i'|}u_1(F_i')=u_1(F_i')$ for every $i\in \{1,\ldots,s\}$. This means that:
$$\displaystyle\sum_{C\in {\mathcal Z}_1'} \frac{1}{|\X_C|}u_1(C) + \displaystyle\sum_{C\in {\mathcal Z}_{2{\tt a}}'} \frac{1}{|\X_C|}u_1(C)+\displaystyle\sum_{i=1}^s \frac{1}{|\X_{F_i'}|}\sum_{C\in {\mathcal A}_i'}\frac{1}{|{\mathcal A}_i'|}u_1(C)\leq
 \sum_{C\in {\mathcal Z}_1} \frac{1}{|\X_C|}u_1(C).$$
Similarly, as ${\mathcal Z}_2={\mathcal Z}_{2{\tt b}}' \cup {\mathcal F}_2'$ and $\sum_{C\in {\mathcal A}_i''}\frac{1}{|{\mathcal A}_i''|}u_1(C)\leq \sum_{C\in {\mathcal A}_i''}\frac{1}{|{\mathcal A}_i''|}u_1(F_i'')=u_1(F_i'')$ for $i\in \{1,\ldots,t\}$, we also have:
$$\displaystyle\sum_{C\in {\mathcal Z}_{2{\tt a}}'} \frac{1}{|\X_C|}u_1(C) + \displaystyle\sum_{i=1}^t \frac{1}{|\X_{F_i''}|}\sum_{C\in {\mathcal A}_i''}\frac{1}{|{\mathcal A}_i''|}u_1(C) \leq  \sum_{C\in {\mathcal Z}_2} \frac{1}{|\X_C|}u_1(C).$$
Hence, it remains to prove that 
\begin{equation}\displaystyle\sum_{i=1}^r \frac{1}{|\X_{F_i}|}\sum_{C\in {\mathcal A}_i^*}\frac{1}{|{\mathcal A}_i^*|}u_1(C)
\leq \sum_{i=1}^r \frac{1}{|\X_{F_i}|}\sum_{C\in {\mathcal A}_i}\frac{1}{|{\mathcal A}_i|}u_1(C).\end{equation}
In order to prove this, we focus on some $F_i$ for some $i\in \{1,\ldots,r\}$. 
We recall that country~$1$ misreports by claiming a smaller value for $\segv_1$ or $\sv_1$. We also recall that ${\mathcal A}_i^*\subseteq {\mathcal A}_i$ and that
by definition, every substitute has at most one $V_1$-segment. This means that the reason that a substitute in ${\mathcal A}_i\setminus {\mathcal A}_i^*$ is not in ${\mathcal A}_i^*$ (i.e., is not a $\Gamma'$-cycle) is that its only $V_1$-segment was larger than the value of $\sv_1$ reported by country~1.
Hence, if we compare the average utility for country~1 of the cycles in ${\mathcal A}_i$ that are not $\Gamma'$-cycles with the average utility of the $\Gamma'$-cycles in ${\mathcal A}_i$, then we find that the value of the latter is at most the value of the former:
\begin{equation}\frac{\sum_{C\in {\mathcal A}_i^*}u_1(C)}{|{\mathcal A}_i^*|} \leq \frac{\sum_{C\in {\mathcal A}_i\setminus {\mathcal A}_i^*}u_1(C)}{|{\mathcal A}_i\setminus {\mathcal A}_i^*|}.\end{equation}
We use inequality~(4) in the following deduction:
\[\begin{array}{lcl}
\displaystyle\frac{\sum_{C\in {\mathcal A}_i^*}u_1(C)}{|{\mathcal A}_i^*|} &= &\displaystyle\frac{\sum_{C\in {\mathcal A}_i^*}u_1(C)}{|{\mathcal A}_i^*|}\bigg(\frac{|{\mathcal A}_i^*|}{|{\mathcal A}_i|}+\frac{|{\mathcal A}_i|-|{\mathcal A}_i^*|}{|{\mathcal A}_i|}\bigg)\\[20pt]
&= &\displaystyle\frac{|{\mathcal A}_i^*|}{|{\mathcal A}_i^*|} \displaystyle\frac{\sum_{C\in {\mathcal A}_i^*}u_1(C)}{|{\mathcal A}_i|} +  \frac{(|{\mathcal A}_i|-|{\mathcal A}_i^*|)}{|{\mathcal A}_i|}\displaystyle\frac{\sum_{C\in {\mathcal A}_i^*}u_1(C)}{|{\mathcal A}_i^*|}\\[20pt]
&\leq &\displaystyle\frac{|{\mathcal A}_i^*|}{|{\mathcal A}_i^*|} \displaystyle\frac{\sum_{C\in {\mathcal A}_i^*}u_1(C)}{|{\mathcal A}_i|} +  \frac{(|{\mathcal A}_i|-|{\mathcal A}_i^*|)}{|{\mathcal A}_i|} \frac{\sum_{C\in {\mathcal A}_i\setminus {\mathcal A}_i^*}u_1(C)}{|{\mathcal A}_i\setminus {\mathcal A}_i^*|}\\[20pt]
&= & \displaystyle\frac{\sum_{C\in {\mathcal A}_i}u_1(C)}{|{\mathcal A}_i|}.
\end{array}\]
From the above, inequality (3) directly follows. Thus, $\mech_{{\tt order}}$ is IC, completing the proof.
\end{proof}

\begin{remark}
\textup{
We observe that we do not necessarily need to pre-select a maximum national $\Gamma$-cycle packing in Step~1 of $\mech_{{\tt order}}$. For example, let the set of maximum national $\Gamma$-cycle packings of an IKEP-instance $I=(G,\V,\Gamma)$ be $\{{\mathcal C}^{{\tt nat}}_1,\ldots,{\mathcal C}^{{\tt nat}}_r\}$.
We can modify Step~1 by picking a maximum national $\Gamma$-cycle packing ${\mathcal C}^{{\tt nat}}_i$ of $I$
with probability $p^*({\mathcal C}^{{\tt nat}}_i)$ for some probability distribution $p^*$ defined on $\{{\mathcal C}^{{\tt nat}}_1,\ldots,{\mathcal C}^{{\tt nat}}_r\}$.
See Section~\ref{a-remark2} for a proof. }\dia
\end{remark}

Recall that for an IKEP instance $I= \langle G, \V, \Gamma\rangle$, we denote the length of a longest international $\Gamma$-cycle in $G$ by $c_{{\tt int}}$ and the maximum international cycle degree of $G$ by $d^*$. 

\begin{theorem}\label{t-bounds}
$\mech_{{\tt order}}$ has an $\alpha$-approximation ratio with $\alpha \leq \max\{c_{{\tt int}},d^*\}$, which is tight and asymptotically optimal.
\end{theorem}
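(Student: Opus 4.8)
The plan is to prove the upper bound $\alpha\le\max\{c_{{\tt int}},d^*\}$ first, and then to obtain tightness and asymptotic optimality from the two lower bounds already established. Since the utilities $u_i(C)$ of a $\Gamma$-cycle sum over all countries to its length, I would begin by observing that $\mbox{SW}(\mech_{{\tt order}}(I))$ equals the expected size $E[\,\|\mech_{{\tt order}}(I)\|\,]$ of the returned packing; so it suffices to prove $E[\,\|\mech_{{\tt order}}(I)\|\,]\ge {\tt opt}(I)/\max\{c_{{\tt int}},d^*\}$. I fix a maximum $\Gamma$-cycle packing $\C^*$ and write the output as ${\mathcal C}_{{\tt nat}}\cup{\mathcal C}_{{\tt int}}$, where ${\mathcal C}_{{\tt nat}}$ is the deterministic maximum national packing of Step~1 and ${\mathcal C}_{{\tt int}}$ is the random international packing produced on $\hG$ in Steps~3--5.

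I would then split the cycles of $\C^*$ into three groups: the national cycles of $\C^*$; the international cycles of $\C^*$ that meet $V({\mathcal C}_{{\tt nat}})$; and the international cycles of $\C^*$ lying inside $\hG$. The first two groups I charge against $\|{\mathcal C}_{{\tt nat}}\|$ and the third against $E[\,\|{\mathcal C}_{{\tt int}}\|\,]$, combining the two estimates via the mediant inequality $\tfrac{a+b}{c+d}\le\max\{\tfrac ac,\tfrac bd\}$ for positive $c,d$. Because ${\mathcal C}_{{\tt nat}}$ is a \emph{maximum} national packing it is in particular \emph{maximal}, so every national $\Gamma$-cycle meets $V({\mathcal C}_{{\tt nat}})$; hence all cycles in the first two groups meet $V({\mathcal C}_{{\tt nat}})$, and being pairwise vertex-disjoint they occupy pairwise distinct vertices of $V({\mathcal C}_{{\tt nat}})$. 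Using that each international cycle has length at most $c_{{\tt int}}$ and that $\|\C^*_{{\tt nat}}\|\le\|{\mathcal C}_{{\tt nat}}\|$, I would aim to bound the combined length of the first two groups by $\max\{c_{{\tt int}},d^*\}\cdot\|{\mathcal C}_{{\tt nat}}\|$. The care required here is to avoid a spurious additive factor: charging the national and the blocked international cycles separately only yields $(1+c_{{\tt int}})\|{\mathcal C}_{{\tt nat}}\|$, and to remove the extra term I would exploit that ${\mathcal C}_{{\tt nat}}$ is maximum (not merely maximal), so that no improving swap with cycles of $\C^*$ exists; this prevents long national cycles of $\C^*$ from occupying only a few $V({\mathcal C}_{{\tt nat}})$-vertices.

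For the third group I must show $E[\,\|{\mathcal C}_{{\tt int}}\|\,]\ge(\text{its total length})/d^*$. The natural route is to argue that each international $\Gamma$-cycle $D^*$ contained in $\hG$ survives into the output with probability at least $1/d^*$ under the random order $\Pi$. I expect \emph{this} randomized analysis to be the main obstacle. The difficulty is that $\X(\hG)$ also contains international cycles that are \emph{not} $\Gamma$-cycles, and such a cycle can block $D^*$ when it is greedily selected in Step~4 even though it is shrunk or deleted in Step~5, so a naive argument gives only the far weaker bound $1/|\X_{D^*}|$. This is exactly where the turning mechanism and the restriction of substitutes to at most one $V_i$-segment per country (illustrated in Example~3) are needed: a blocking non-$\Gamma$-cycle is replaced by a $\Gamma$-substitute, and I would show that the conflicts that effectively persist are with international $\Gamma$-cycles only, so that the competition faced by $D^*$ is governed by its international cycle degree $d(D^*)\le d^*$. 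Feeding the two charges into the mediant inequality then yields $\alpha\le\max\{c_{{\tt int}},d^*\}$.

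Finally, $\mech_{{\tt order}}$ is IR and IC by Theorem~\ref{t-ms4}, so Proposition~\ref{p-alpha1} and Theorem~\ref{t-alpha2} both apply: its approximation ratio exceeds $(1-\epsilon)c_{{\tt int}}$ and also $(1-\epsilon)d^*$, hence $(1-\epsilon)\max\{c_{{\tt int}},d^*\}$, for every $\epsilon>0$. Matching this with the upper bound shows the bound cannot be asymptotically improved by any IR and IC mechanism, giving asymptotic optimality. For tightness of the bound for $\mech_{{\tt order}}$ itself I would exhibit attaining instances in each regime: the instance of Proposition~\ref{p-alpha1} has $d^*=1$, so there $\max\{c_{{\tt int}},d^*\}=c_{{\tt int}}$ and $\mech_{{\tt order}}$, being forced to pre-select the national cycle, attains ratio tending to $c_{{\tt int}}$; symmetrically, a construction with many short international $\Gamma$-cycles through a common vertex (so that $d^*\ge c_{{\tt int}}$), analysed as in Theorem~\ref{t-alpha2}, forces ratio tending to $d^*$. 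Together these establish both tightness and asymptotic optimality.
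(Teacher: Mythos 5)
Your proposal is not a different proof: it reconstructs the architecture of the paper's own argument. The three-way split of $\C^*$ (national cycles, international cycles blocked by the pre-selected ${\mathcal C}_{{\tt nat}}$, international cycles surviving in $\hG$), the charge of the blocked part against $\|{\mathcal C}_{{\tt nat}}\|$ with factor $c_{{\tt int}}$, the charge of the surviving part with factor $1/d^*$, and the final combination (your mediant inequality versus the paper's two-case computation) correspond exactly to the paper's inequalities (5)--(8); the paper also obtains asymptotic optimality precisely as you do, by taking $c_{{\tt int}}\geq d^*$ in the construction of Proposition~\ref{p-alpha1}. The difficulty is that the two steps you defer are exactly the load-bearing ones, and the justifications you sketch for them do not hold. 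First, maximality of ${\mathcal C}_{{\tt nat}}$ does \emph{not} prevent a long national cycle of $\C^*$ from meeting $V({\mathcal C}_{{\tt nat}})$ in a single vertex: take two national $\Gamma$-cycles $D_1,D_2$ of length $L\leq\natv_1$ sharing exactly one vertex, so that $\{D_1\}$ and $\{D_2\}$ are both maximum national packings, and suppose Step~1 of Algorithm~\ref{alg:mech} pre-selects $\{D_2\}$; attach to each of the $L-1$ vertices of $D_2$ not on $D_1$ a private international $3$-cycle (with $\intv=3$, $\segv=(1,2)$, $\sv=(1,1)$ these are $\Gamma$-cycles, pairwise disjoint, so $c_{{\tt int}}=3$ and $d^*=1$). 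Then $\C^*$ consists of $D_1$ together with these $L-1$ international $3$-cycles, your first two groups have total length $L+3(L-1)=4L-3$, and $\max\{c_{{\tt int}},d^*\}\cdot\|{\mathcal C}_{{\tt nat}}\|=3L<4L-3$ for $L>3$, even though no improving swap between ${\mathcal C}_{{\tt nat}}$ and the national part of $\C^*$ exists. So the inequality you aim for cannot be derived from maximality alone; it is sensitive to how Step~1 breaks ties among maximum national packings (the paper's inequality (5), which is even stronger, rests on this same delicate point).

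Second, your route to the $1/d^*$ bound --- that after the turning step ``the conflicts that effectively persist are with international $\Gamma$-cycles only'' --- is refuted by the mechanism as defined. Let $D=\langle a,b\rangle$ be an international $\Gamma$-cycle and let $F_1,\ldots,F_m$ be chordless international $4$-cycles through $a$ that alternate twice between the two countries, with $\sv=(1,1)$ and $\intv\geq 4$: each $F_i$ belongs to $\X(\hG)$, is not a $\Gamma$-cycle, and, being chordless, has no substitute. In Step~4, $D$ is selected only if it precedes every $F_i$ in the random order $\Pi$, which happens with probability $1/(m+1)$; otherwise the first $F_i$ in the order is selected, blocks $D$, and is merely deleted in Step~5 --- $D$ is never reinstated. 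Here the only international $\Gamma$-cycle is $D$ itself, so $d^*=1$, yet $D$ survives with probability $1/(m+1)$: blocking by discarded non-$\Gamma$-cycles is real, and any correct proof of the paper's inequality (6) must account for it rather than argue it away. A smaller point: your tightness construction for the $d^*$ regime (many short $\Gamma$-cycles through one common vertex) yields a ratio of at most $c_{{\tt int}}/2$, since both the optimum and $\Morder$ can use only one such cycle; the paper's example for that regime has a different shape (one longer cycle, with the degradation caused by a misreported instance).
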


\begin{proof}
Let $I= \langle G, \V, \Gamma\rangle$ be an IKEP instance.
Let ${\mathcal C}^*$ be a maximum $\Gamma$-cycle packing for $I$, so $||{\mathcal C^*}|| =${\tt opt}$(I)$. Let ${\mathcal C}^*_{{\tt nat}}$ be the national $\Gamma$-cycle packing of $G$ that consists of all national $\Gamma$-cycles in ${\mathcal C}^*$. Let ${\mathcal C}^*_{{\tt int}}={\mathcal C}^*\setminus  {\mathcal C}^*_{{\tt nat}}$ be the international $\Gamma$-cycle packing of $G$ that consists of all international $\Gamma$-cycles in ${\mathcal C}^*$. 
In Step~2, $\mech_{{\tt order}}$ selects a maximum national $\Gamma$-cycle packing ${\mathcal C}_{{\tt nat}}$. We construct a $\Gamma$-cycle packing ${\mathcal C}'$ by adding every international cycle of ${\mathcal C^*}$ to ${\mathcal C}_{{\tt nat}}$ that does not share a vertex with any (national) $\Gamma$-cycle in ${\mathcal C}_{{\tt nat}}$. We let ${\mathcal C}^{**}_{{\tt int}} = {\mathcal C}'\setminus {\mathcal C}_{{\tt nat}}$.
In the worst case, each (national) $\Gamma$-cycle~$C$ in ${\mathcal C}_{{\tt nat}}$ blocks $c_{{\tt int}}|V(C)|$ cycles from ${\mathcal C}$ 
from joining ${\mathcal C'}$. Hence, we find that
\begin{equation}
||{\mathcal C}_{{\tt nat}}|| - ||{\mathcal C}^*_{{\tt nat}}|| \geq \frac{1}{c_{{\tt int}}}\big(||{\mathcal C}^*_{{\tt int}}||-||{\mathcal C}^{**}_{{\tt int}}||\big) \Leftrightarrow ||{\mathcal C}_{{\tt nat}}|| + \frac{1}{c_{{\tt int}}}||{\mathcal C}^{**}_{{\tt int}}|| \geq 
 ||{\mathcal C}^*_{{\tt nat}}||+\frac{1}{c_{{\tt int}}}||{\mathcal C}^*_{{\tt int}}||
\end{equation}
Recall that $\mech_{{\tt order}}$ returns ${\mathcal C}_{{\tt nat}}\cup {\mathcal C}$, where ${\mathcal C}$ is an international $\Gamma$-cycle packing whose construction is completed at the end of Step~5.
As $\mech_{{\tt order}}$ selects each international $\Gamma$-cycle with probability at least $\frac{1}{d^*}$, we also find that
\begin{equation}
||{\mathcal C}|| \geq \frac{1}{d^*}||{\mathcal C}^{**}_{{\tt int}}||.
\end{equation}
We first assume that $d^*\leq c_{{\tt int}}$. Using (5) and (6) we get:
\begin{equation}
||{\mathcal C}^*_{{\tt nat}}\cup {\mathcal C}|| \geq  ||{\mathcal C}^*_{{\tt nat}}|| + \frac{1}{d^*}||{\mathcal C}^{**}_{{\tt int}}||\geq ||{\mathcal C}^*_{{\tt nat}}|| + \frac{1}{c_{{\tt int}}}||{\mathcal C}^{**}_{{\tt int}}||\geq  ||{\mathcal C}_{{\tt nat}}^*|| + \frac{1}{c_{{\tt int}}}||{\mathcal C}^{*}_{{\tt int}}|| \geq \frac{1}{c_{{\tt int}}}||{\mathcal C}^*||. 
\end{equation}
Now we assume that $d^*>c_{{\tt int}}$. Again, we use (5) and (6), but this time we get:
\begin{equation}
\begin{array}{lclclcl}
||{\mathcal C}^*_{{\tt nat}}\cup {\mathcal C}|| &\geq  &||{\mathcal C}^*_{{\tt nat}}||+\displaystyle\frac{1}{c_{{\tt int}}}||{\mathcal C}^*_{{\tt int}}|| &\geq  &||{\mathcal C}^*_{{\tt nat}}||+\displaystyle\frac{1}{d^*}||{\mathcal C}^*_{{\tt int}}||\\[8pt]
&\geq  
&\displaystyle\frac{c_{{\tt int}}}{d^*}\bigg(||{\mathcal C}^*_{{\tt nat}}||+\displaystyle\frac{1}{c_{{\tt int}}}||{\mathcal C}^*_{{\tt int}}||\bigg)&\geq 
&\displaystyle\frac{c_{{\tt int}}}{d^*}\displaystyle\frac{1}{c_{{\tt int}}}||{\mathcal C}^*|| &=
&\displaystyle\frac{1}{d^*}||{\mathcal C}^*||.
\end{array}
\end{equation}
Due to (7) and (8),  $\mech_{{\tt order}}$ has an $\alpha$-approximation ratio with $\alpha \leq \max\{c_{{\tt int}},d^*\}$.
We claim that this bound is tight. First, we can take a national cycle $C=\langle u,v\rangle$ of length~$2$ and identify $u$ with a vertex of an international cycle $D_u$, and $v$ with a vertex of another international cycle $D_v$. Let $D_u$ and $D_v$ both have length $c_{{\tt int}}\geq 2$. Second, we can take an international cycle with three vertices $u_1,u_2,u_3$ from country~1 and one vertex $v$ from country~2. We then identify $v$ with a unique vertex of $d^*$ international cycles $D_1,\ldots, D_{d^*}$, in all of which countries~1 and~2 alternate twice, so $c_{{\tt int}}=4$. We now choose $d^*\geq 4$, and we assume country~$1$ misreports its segment number $\sv_1=2$ as $\sv_1'=1$.

Finally, to show that $\mech_{{\tt order}}$ is asymptotically optimal, we can set $c_{{\tt int}}\geq d^*$ in the construction of Proposition~\ref{p-alpha1} (displayed in Figure~\ref{fig:IRalpha}).
\end{proof}

\subsection{Alternative IR and IC Mechanisms}\label{a-remark2}

In Remark~2, we claimed
that we do not necessarily need to pre-select a maximum national $\Gamma$-cycle packing in Step~1 of $\mech_{{\tt order}}$. For example, we could let the set of maximum national $\Gamma$-cycle packings of an IKEP-instance $I=(G,\V,\Gamma)$ be $\{{\mathcal C}^{{\tt nat}}_1,\ldots,{\mathcal C}^{{\tt nat}}_r\}$.
We said we can modify Step~1 by picking a maximum national $\Gamma$-cycle packing ${\mathcal C}^{{\tt nat}}_i$ of $I$
with probability $p^*({\mathcal C}^{{\tt nat}}_i)$ for some probability distribution $p^*$ defined on $\{{\mathcal C}^{{\tt nat}}_1,\ldots,{\mathcal C}^{{\tt nat}}_r\}$.
We prove this statement below.

Let for every $i\in \{1,\ldots,r\}$, the set $\{{\mathcal C}^i_1,\ldots, {\mathcal C}^i_{s_i}\}$ be the set of $\Gamma$-cycle packings in the graph $\hat{G}_i$ that is obtained from $G$ after deleting all the vertices of the cycles in ${\mathcal C}^{{\tt nat}}_i$ from $G$. Let $p^i_h({\mathcal C}^i_h)$ be the probability that $\mech_{{\tt order}}$ selects international $\Gamma$-cycle packing ${\mathcal C}^i_h$ if it first selected ${\mathcal C}^{{\tt nat}}_i$.
The expected utility for country~$j\in \N$ is
$$U_j(\mech_{{\tt order}}(I)) = 
\displaystyle\sum_{i=1}^1\bigg(
p^*({\mathcal C}^{{\tt nat}}_i)u_j({\mathcal C}^{{\tt nat}}_i) + p^*({\mathcal C}^{{\tt nat}}_i)\cdot \sum_{h=1}^{s_i}p^i_h({\mathcal C}^i_h)u_j({\mathcal C}^i_h)\bigg).$$ 
Assume country~$1$ misreports either its segment size $\segv_1$ or its segment number $\sv_1$ by choosing a smaller value. Let $\Gamma'$ be the new set of country parameters, and let $I'= \langle G, \V, \Gamma'\rangle $ be the new IKEP instance.
 Let $q^i_h({\mathcal C}^i_h)$ be the probability that $\mech_{{\tt order}}$ selects international $\Gamma$-cycle packing ${\mathcal C}^i_h$ if it first selected ${\mathcal C}^{{\tt nat}}_i$. So, now the expected utility for country~$j\in \N$ is
$$U_j(\mech_{{\tt order}}(I')) = 
\displaystyle\sum_{i=1}^1\bigg(
p^*({\mathcal C}^{{\tt nat}}_i)u_j({\mathcal C}^{{\tt nat}}_i) + p^*({\mathcal C}^{{\tt nat}}_i)\cdot \sum_{h=1}^{s_i}q^i_h({\mathcal C}^i_h)u_j({\mathcal C}^i_h)\bigg).$$
In the proof of Theorem~\ref{t-ms4}, we showed
for every fixed $1\leq i \leq r$, that
$$\sum_{h=1}^{s_i}q^i_h({\mathcal C}^i_h)u_1({\mathcal C}^i_h)
\leq
\sum_{h=1}^{s_i}p^i_h({\mathcal C}^i_h)u_1({\mathcal C}^i_h).$$
Hence, we obtain $U_1(\mech_{{\tt order}}(I'))\leq U_1(\mech_{{\tt order}}(I))$, implying that the modified version of $\mech_{{\tt order}}$ is still IC.
We also note that this modification still satisfies the bounds in Theorem~\ref{t-bounds}.

\subsection{Manipulating the National Cycle Limit}\label{sec:nclmanip}

\begin{figure}[t]
\centering
\resizebox{0.55\columnwidth}{!}{
\begin{tikzpicture}
\begin{scope}[ every node/.style={draw,rectangle,thick,minimum size=7mm}]
    \node  (x1) at (8,1) {$h_1$};
    \node  (x2) at (4,-0.5) {$h_2$};
    \node  (x3) at (4,-1.5) {$h_3$};
    \node  (x4) at (6,-1) {$h_4$};
    
\end{scope}

  \begin{scope}[every node/.style={draw,circle,thick,minimum size=8mm}]  
    \node (y2) at (6,1) {$j_2$};
    \node (y3) at (4,0.5) {$j_3$};
    \node (y1) at (4,1.5) {$j_1$};
    \node (y4) at (2,1) {$j_4$};
    \node (y5) at (0,0) {$j_5$};
    \node (y6) at (2,-1) {$j_6$};
\end{scope}

\begin{scope}[>={Stealth[black]},
              every edge/.style={draw, very thick}]
    \path [->] (y2) edge[bend right=15] node {} (y1);
    \path [->] (y1) edge[bend right=15] node {} (y4);
    \path [->] (y4) edge[bend right=15] node {} (y3);
    \path [->] (y3) edge[bend right=15] node {} (y2);
    \path [->] (y5) edge[bend left=15] node {} (y4);
    \path [->] (y6) edge[bend left=15] node {} (y5);

    \path [->] (y4) edge node {} (y6);

    \path [->] (x3) edge[bend right=15] node {} (x4);
    \path [->] (x4) edge[bend right=15] node {} (x2);

    \path [->] (y6) edge[dashed, bend right=15] node {} (x3);
    \path [->] (x2) edge[dashed, bend right=15] node {} (y6);
    \path [->] (y1) edge[dashed, bend left=30] node {} (x1);
    \path [->] (x1) edge[dashed, bend left=30] node {} (y3);
\end{scope}
\end{tikzpicture}}
\Description{}
\caption{The graph $\pG$ showing that $\Morder$ is not IC with respect to $\natv$. Note that $\N=\{H,J\}$ and the vertices in $V_H$ and $V_J$ are represented by square and circular vertices, respectively.  } 
\label{fig:MordernotICnatv}
\end{figure}

Theorem \ref{t-ms4} showed that $\Morder$ is IR and is also IC with respect to $\segv$ and $\sv$. However, the following example shows that $\Morder$ is not IC with respect to $\natv$. We assume that the true value of $\natv_i$ for country $i$ is constrained for logistical reasons, and hence a country can only misreport its national cycle limit by giving a lower value.

\begin{example}
 \textup{
Let $\N=\{H,J\}$ with vertices $V_H=\{h_1, \ldots, h_4\}$ and $V_J=\{j_1, \ldots, j_6\}$ and arcs given in Figure~\ref{fig:MordernotICnatv}. The vertices in $V_H$ and $V_J$ are represented in Figure~\ref{fig:MordernotICnatv} using squares and circles, respectively. Solid lines represent national arcs, whereas dashed lines represent international arcs.
Suppose that $\intv=4$, $\natv_H\in \{0,1,\ldots\} \cup \{\infty\}$ (there are no national cycles for $H$ and thus $\natv_H$ can have any value), $\natv_J=4$, $\segv_H,\segv_J\geq 2$ and $\sv_H,\sv_J\geq 1$. 
Observe that $\Morder$ and $\Mnat$ will both first select the cycle $\langle j_1, j_4, j_3, j_2\rangle$. 
In the second step of $\Morder$, the cycle $\langle h_2, j_6, h_3, h_4\rangle$ is selected. The total number of transplants given by $\Morder$ is $8$. However, country $J$ has one pair, namely $j_5$, that is left uncovered by $\Morder$. 
}

\textup{
Suppose now that $\natv_J'=3$, i.e., country $J$ misreports its national cycle limit to be $3$ instead of~$4$. In this case, observe that $\Morder$ and $\Mnat$ will instead select the national cycle $\langle j_4, j_6, j_5\rangle$. In the second step of $\Morder$, the cycle $\langle j_1, h_1, j_3, j_2 \rangle$ will now be selected. In this case, the total number of transplants has decreased to $7$; however, all pairs in $V_J$ are covered by $\Morder$.
}

\textup{
Hence, $\Morder$ is not IC with respect to $\natv$, as country $J$ has misreported $\natv_J$ to strictly increase its number of pairs covered by $\Morder$. 
}\dia
\end{example}

We note that on the other hand, $\Mnat$ is trivially IR and IC with respect to $\natv$, $\segv$ and $\sv$.

\section{Simulations}\label{sec:simulations}

\begin{figure}[b!]
\centering
\begin{subfigure}{.45\textwidth}
    \centering
    \includegraphics[width=.8\linewidth]{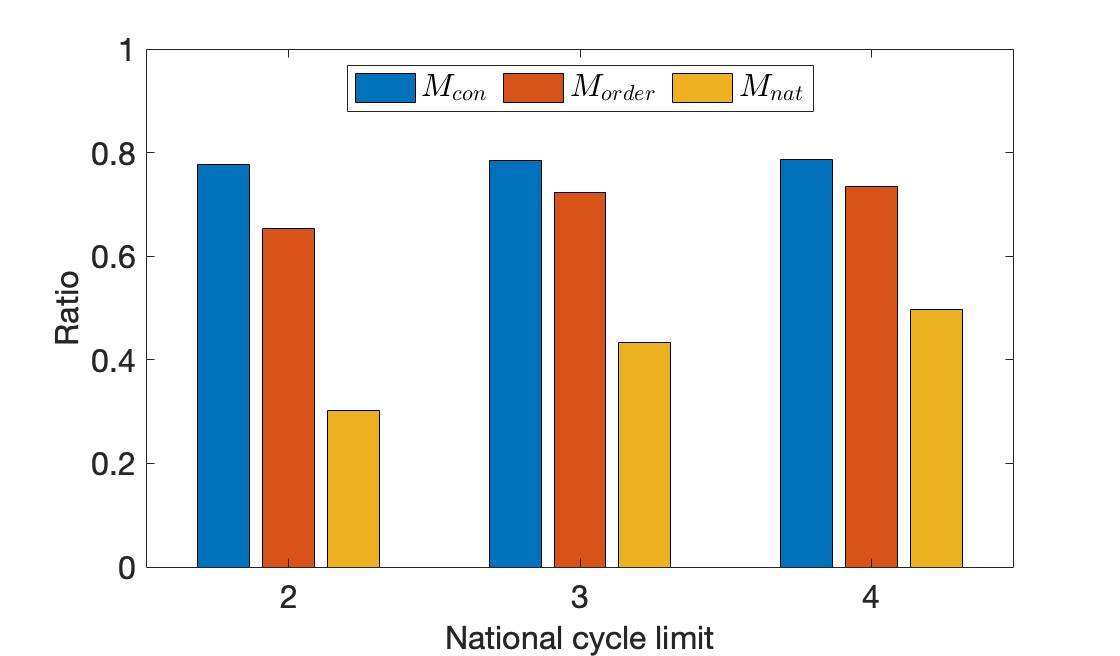}  
    \Description{}
    \caption{Performance when varying  $\natv$ when fixing other parameters $\intv=4,\segv=\{2\}^3,\sv=\{1\}^3$.}
    \label{sub:k}
\end{subfigure}
\hspace{5mm}
\begin{subfigure}{.45\textwidth}
    \centering
    \includegraphics[width=.8\linewidth]{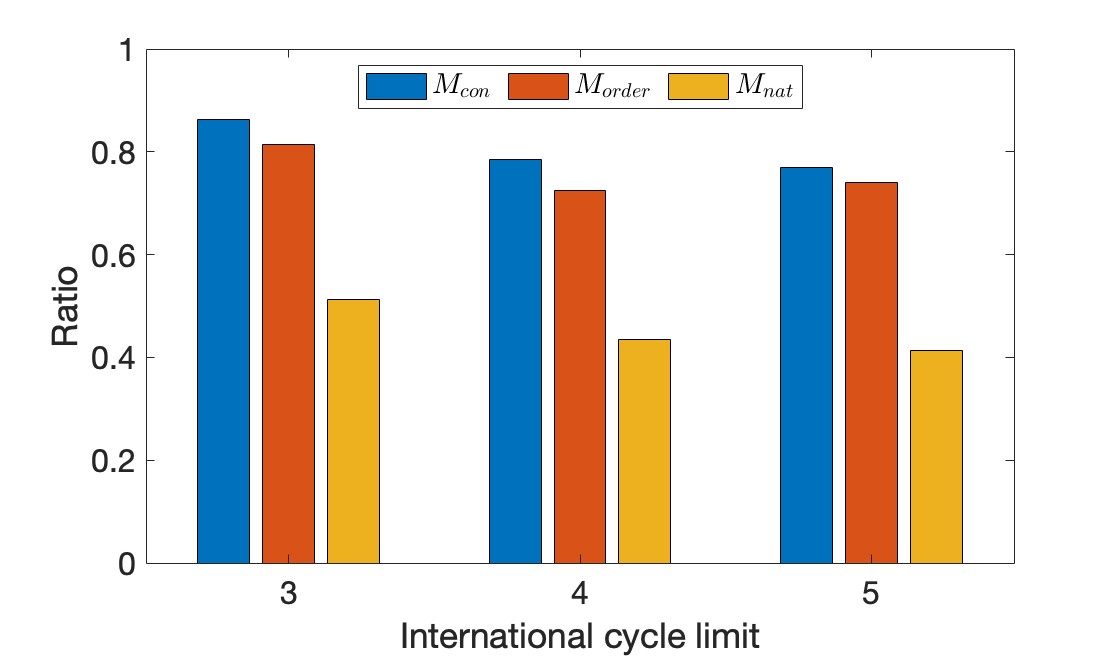}  
    \Description{}
    \caption{Performance when varying $\intv$ when fixing other parameters $\natv=\{3\}^3,\segv=\{2\}^3,\sv=\{1\}^3$.}
    \label{sub:l}
\end{subfigure}
\begin{subfigure}{.47\textwidth}
    \centering
    \includegraphics[width=\linewidth]{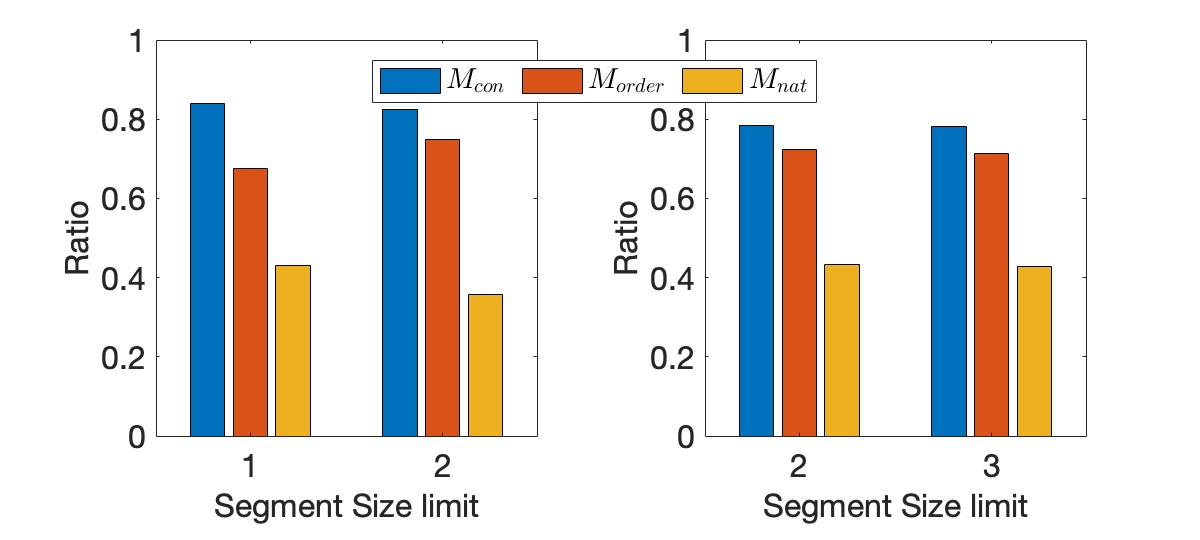} 
    \Description{}
    \caption{Performance when varying $\segv$. 
     In the left figure, we change $\segv$ from $\{1\}^3$ to $\{2\}^3$, fixing $\natv=\{2\}^3,\intv=3,\sv=\{1\}^3$. In the right figure, we change $\segv$ from $\{2\}^3$ to $\{3\}^3$, fixing $\natv=\{3\}^3,\intv=4,\sv=\{1\}^3$. }
    \label{sub:c}
\end{subfigure}
\hspace{5mm}
\begin{subfigure}{.47\textwidth}
    \centering
    \includegraphics[width=\linewidth]{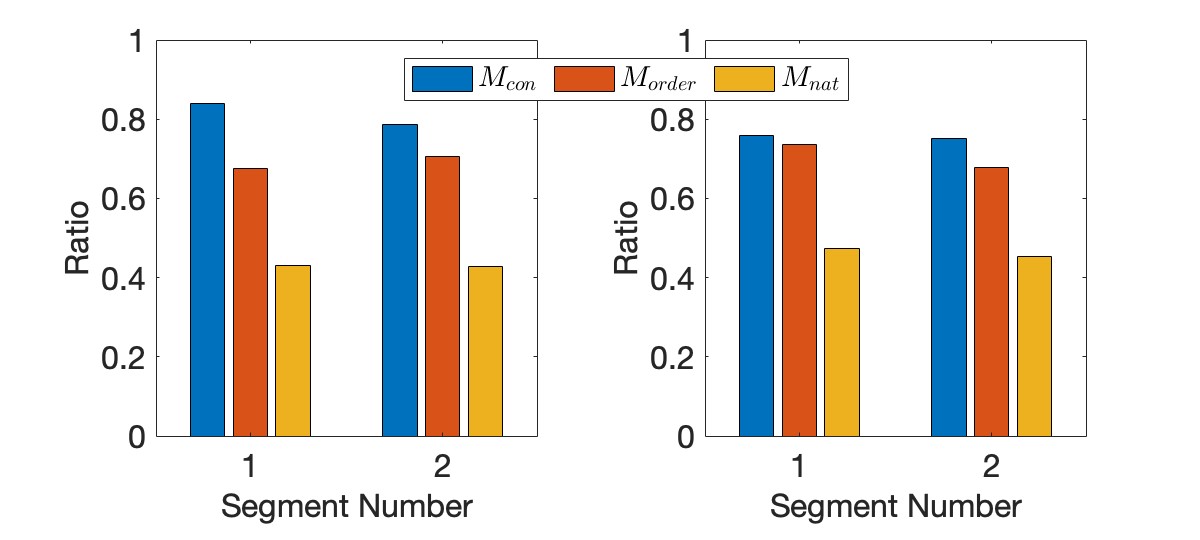} 
    \Description{}
    \caption{Performance when varying $\sv$. In the left figure, we change $\sv$ from $\{1\}^3$ to $\{2\}^3$, fixing $\natv=\{3\}^3,\intv=4,\segv=\{2\}^3$. In the right figure, we change $\segv$ from $\{2\}^3$ to $\{3\}^3$, fixing $\natv=\{4\}^3,\intv=5,\sv=\{2\}^3$.}
    \label{sub:s}
\end{subfigure}
\Description{}
\caption{Performance of $\Mcon, \Morder,\Mnat$ compared with $\Mint$ when varying a single parameter.}
\label{fig:bigSims}
\end{figure}
In this section, we present the results of simulations that were carried out to investigate the performance in practice of $\Morder$ against the three other mechanisms $\Mnat$, $\Mcon$ and $\Mint$ we discussed in Section~\ref{s-intro}.
We recall their definitions. First, $\Mnat$ is the \emph{national} mechanism, which finds a maximum $\Gamma$-cycle packing for each national country pool.
Second, $\Mcon$ is the \emph{consecutive} mechanism, which finds a maximum $\Gamma$-cycle packing for each national country pool, and then finds a maximum $\Gamma$-cycle packing among the remaining uncovered recipient-donor pairs in the international pool.  Note that $\Mcon$ is IR, and it also initially finds a maximum $\Gamma$-cycle packing for each national country pool. Third, $\Mint$ is the \emph{international} mechanism, which finds a maximum $\Gamma$-cycle packing for the entire pool (including the national and international arcs together). 
We measure the performance of the mechanisms as the proportion of transplants found by each mechanism against the maximum number found by $\Mint$ and investigate the effects of country-specific parameters on the performance of $\Morder$.

Our simulations were conducted in Python using the \texttt{kep\_solver} package~\cite{pettersson2022kep}. 
We created instances with the generator described by \citet{delorme2022improved}, using their blood group distributions for recipients and donors and their \emph{TweakXMatch-PRA0} compatibility distribution to obtain consistency with key characteristics of real data from the UK KEP.  We generated $15$ instances, each with $300$ recipient-donor pairs, to be consistent with the typical current size of UK KEP datasets.
We let $n=3$, as many current IKEP collaborations involve a small number of countries, e.g., the \emph{KEPSAT} collaboration between Italy, Portugal, and Spain \citep{valentin2019international}. Indeed, the simulations conducted by \citet{mincu2021ip} and \citet{druzsin2024performance} also assume $n=2$ and $n=3$ respectively.  
The vertices of $\V$ are therefore partitioned into three sets $V_1, V_2, V_3$. To test the effect of the country size on the performance of our mechanism, we considered two cases: (1) countries all have the same size, and (2) countries have different sizes with a ratio of $3:2:1$.
We ensured that $\Mnat$, $\Mcon$ and $\Morder$ all chose the same national cycle packing (in their first steps in the last two cases), as constructed using the ILP solver.  We note that the number of transplants found by $\Mcon$ will in general be greater than that found by $\Morder$, yet  $\Mcon$ is not IC. 
To counteract the randomness of $\Morder$, we ran $\Morder$ five times per instance and report the average number of transplants selected.

We focused on values of country-specific parameters for our countries $\N=\{1,2,3\}$ that satisfied $\segv_i \leq \natv_i \leq \intv$ for all $i\in \mathcal{N}$. We motivate this rule through practical considerations, i.e., that the maximum size of an international segment for a given country is not expected to exceed its national cycle limit, which in turn is not expected to exceed the international cycle limit.
When varying country-specific parameters, we focused on the following four analyses: {\bf (A)} to test the effect of varying the national cycle limit, we considered $\natv=\{2\}^3,$ $\{3\}^3, \{4\}^3$, while fixing $\intv, \segv, \sv$; {\bf (B)} to test the effect of varying the international cycle limit, we considered $\intv = 3,4,5$ while fixing $\natv, \segv, \sv$; {\bf (C)} to test the effect of varying the segment size limit, we considered $\segv = \{1\}^3, \{2\}^3$, and  $\{3\}^3$ while fixing  $\natv, \intv, \sv$; and {\bf (D)} to test the effect of varying the segment number, we considered $\sv= \{1\}^3$, $\{2\}^3$ whilst fixing  $\natv, \intv, \segv$.   When fixing country-specific parameters, the following baseline values are used unless otherwise stated: $\natv=\{3\}^3,\intv=4,\segv=\{2\}^3,\sv=\{1\}^3$.

\paragraph{Simulation Results for Equal Country Sizes.}
We begin by presenting results for case (1) where all countries have the same number of vertices.  Figures~\ref{sub:k} and~\ref{sub:l} show the proportion of transplants (relative to $\Mint$) achieved by $\Morder$, $\Mcon$ and $\Mnat$ when (A) we vary \natv, and (B) we vary \intv, with the other variables fixed at their baseline values.  On average, $\Morder$ selects $72\%$ of pairs covered by $\Mint$, with $\Mcon$ covering $79\%$ of pairs covered by $\Mint$. Moreover, on average, $\Morder$ covers  $91\%$ of the number of transplants found by $\Mcon$.
These percentages give an indication of the price of ensuring IR (which $\Mint$ does not satisfy in general) and IC (which both $\Mint$ and $\Mcon$ do not satisfy in general), both of which are satisfied by $\Morder$.  These results also suggest that in practice, $\Morder$'s performance is substantially superior to its theoretical worst-case performance guarantee.
We also note that countries benefit significantly from joining an IKEP, since $\Morder$ covers $84\%$ more pairs than $\Mnat$ on average.

We next report on the performance of $\Morder$ and other mechanisms when varying a single country-specific parameter, whilst fixing the remaining parameters.  
(A) Figure~\ref{sub:k} shows the effect of varying the national cycle limit.  
As $\natv$ increases, the performance of $\Mcon$ and $\Morder$ marginally improves, while $\Mnat$ performs significantly better.
(B) Figure~\ref{sub:l}, shows the effect of varying the international cycle limit. 
As $\intv$ increases, the three mechanisms perform slightly worse. One explanation is that $\Mint$ can utilise the additional international cycle length more readily whilst the other mechanisms are more constrained by the need to satisfy IR.  
(C)  The effect of varying segment size limits $\segv$ is shown in Figure~\ref{sub:c}. In the left of Figure~\ref{sub:c}, we change $\segv$ from $\{1\}^3$ to $\{2\}^3$, whilst fixing $\natv=\{2\}^3,\intv=3,\sv=\{1\}^3$. In the right of Figure~\ref{sub:c}, we change $\segv$ from $\{2\}^3$ to $\{3\}^3$, whilst fixing $\natv=\{3\}^3,\intv=4,\sv=\{1\}^3$. There is no clear pattern when increasing  $\segv$, perhaps showing that the number of transplants selected increases proportionally with $\Mint$ or that more instances should be checked. 
(D)  In Figure~\ref{sub:s}, we examine the effect of increasing the segment number from $\sv{=}\{1\}^3$ to $\sv{=}\{2\}^3$. Due to the dependencies between the parameters, the other parameters have two sets of values, namely $\natv{=}\{3\}^3, \intv=4, \segv{=}\{2\}^3$ and $\natv=\{4\}^3, \intv=5, \segv=\{2\}^3$. We do not consider the case where $\sv=\{3\}^3$ when $\intv=4,5$, as the international cycle limit is not large enough to allow 
for this. 
As $\sv$ increases from $\{1\}^3$ to $\{2\}^3$, there is no significant change in the number of transplants, however when $\sv$ increases from $\{2\}^3$ to $\{3\}^3$, the proportion of transplants selected relative to $\Mint$ decreases. 

\paragraph{Simulation Results for Unequal Country Sizes}\label{app:unequal}

\begin{figure}[t!]
    \centering
    \includegraphics[width=1\linewidth]{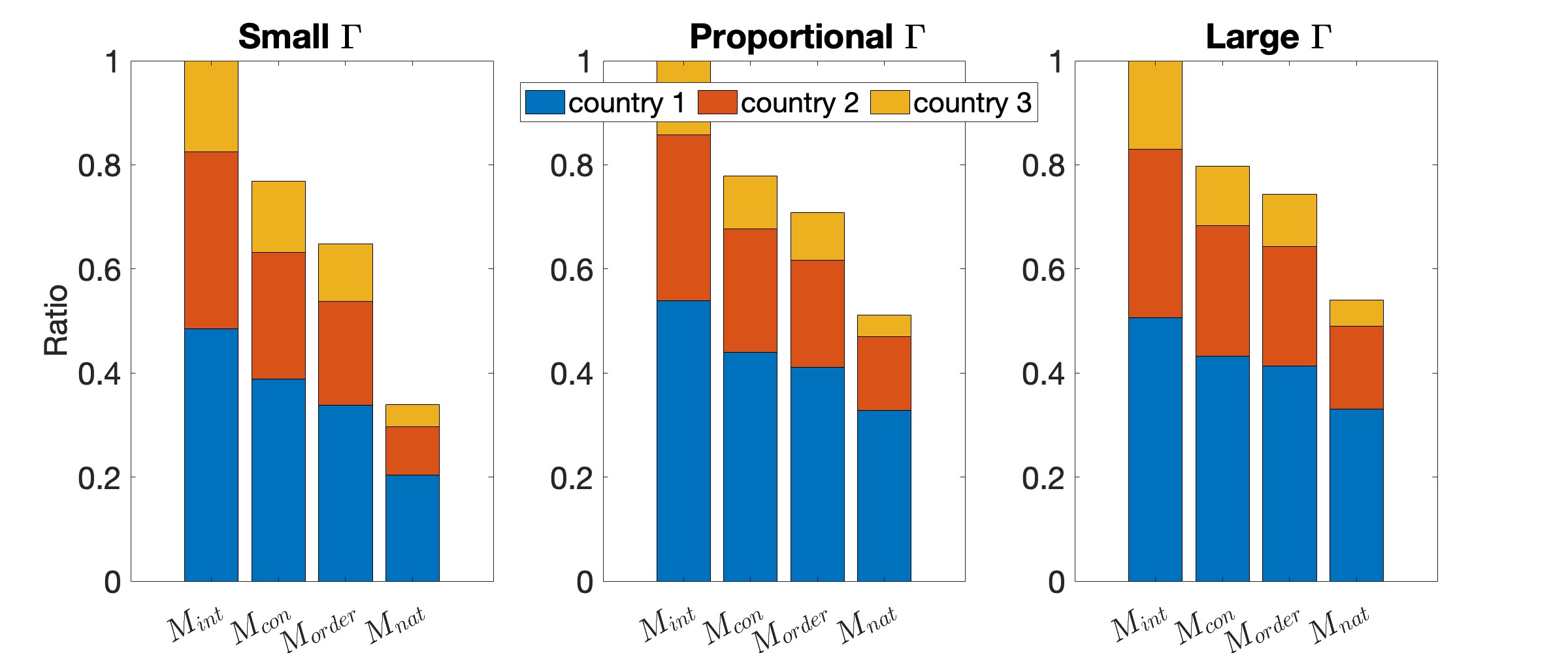}
    \Description{}
    \caption{Performance of $\Mint, \Mcon, \Morder,\Mnat$. The results on \textbf{Small $\Gamma$}, \textbf{Proportional}, and \textbf{Large $\Gamma$} are shown in the left, middle and right plots, respectively.}
    \Description{}
    \label{fig:uneven}
\end{figure} 
A second aim of our simulations was to investigate the impact of countries of different sizes. We study the case where $|\N|=3$ and the countries' sizes are determined by the ratio $3:2:1$. That is, country~$1$ has half of the recipient-donor pairs, country~$2$ has a third of the recipient-donor pairs, and country~$3$ has a sixth of the recipient-donor pairs. 
We studied three sets of parameters. The first of parameters, denoted by \textbf{Small $\Gamma$}, lets all countries have the same smaller parameters: $\natv=\{2\}^3$, $\sv=\{1\}^3$, $\segv=\{2\}^3$, and $\intv=4$. The second set of parameters, \textbf{Proportional $\Gamma$} sets the parameter's values more proportionally with respect to the countries' sizes: $\natv=(4,3,2)$, $\sv=\{1\}^3$, $\segv=(3,2,1)$, and $\intv=4$. This can be thought of as larger countries having more medical facilities allowing more transplants to proceed.  The final set of parameters, denoted by \textbf{Large $\Gamma$}, lets all countries have a larger set of parameters: $\natv=\{4\}^3$, $\sv=\{1\}^3$, $\segv=\{2\}^3$, and $\intv=4$. 

Figure~\ref{fig:uneven} depicts the average proportion of transplants selected by $\Mint$, $\Mcon$, $\Morder$, and $\Mnat$, broken down by country, for our three sets of parameters. 

It is clear in all cases that participating in the IKEP is beneficial for all countries and all sets of parameters as more transplants are selected in each case. 
Proportionally, we see that the smallest country, country~$3$, benefits the most when the parameters are smaller, i.e., \textbf{Small $\Gamma$}. The proportional parameters \textbf{Proportional $\Gamma$} significantly benefit the largest country the most for each method of selecting $\Gamma$-cycle packings.

In Table~\ref{tab:unequalratio}, we display the average ratio between the transplants selected by one mechanism to another. We observe that $\Morder$ and $\Mcon$ with the \textbf{Large $\Gamma$} consistently yield proportionally more transplants, with respect to $\Mint$, than the other sets of country-specific parameters. A similar pattern is observed for $\Morder$ with \textbf{Large $\Gamma$}, with respect to $\Mcon$.

\begin{table}
    \centering
    \begin{tabular}{|c|ccc|} \hline
    & \textbf{Small $\Gamma$} & \textbf{Proportional $\Gamma$} & \textbf{Large $\Gamma$}\\
         \hline
$\Morder:\Mint$& $64\%$ & $71\%$ & $74\%$ \\
 $\Mcon:\Mint$& $77\%$ & $78\%$  &$80\%$\\
 $\Morder:\Mcon$& $84\%$ & $91\%$ &$93\%$ \\ \hline
    \end{tabular}
    \caption{The ratio between the number of transplants selected between the different mechanisms for each of our different values of $\Gamma$. For example, the first row shows the average number of transplants selected by $\Morder$ over the average number selected by $\Mint$.  }
    \label{tab:unequalratio}
\end{table}

\section{Conclusion}\label{sec:conc}
We have addressed our research questions concerning computational complexity and manipulation for IKEPs with a set $\Gamma$ of country-specific parameters.
We completely classified the complexity of {\sc Max $\Gamma$-Cycle Packing}, namely for all values of $\Gamma$. We demonstrated that manipulation through misreporting country-specific parameters, such as segment size or segment number, is possible, which led us to propose a mechanism $\Morder$ that is IR and IC with respect to segment size and segment number.  By giving upper and lower bounds, we showed that the approximation ratio of $\Morder$ is asymptotically best possible.  Our simulations suggested that $\Morder$ performs well in relation to other mechanisms for IKEPs and does better in practice than its worst-case approximation ratio would suggest. 

Key future directions are to incorporate non-directed donors (chains) and compatible pairs (self-loops) and to study the weighted case, where we seek a maximum weight (rather than maximum size) solution. Another possibility is to investigate whether a similar lower bound to that established by Theorem~\ref{t-alpha2} holds if $\intv$ has a fixed small value (as long international cycles currently feature in the construction).
Additional objectives could also be considered, even optimised in a hierarchical way, as is the case in many European KEPs~\cite{biro2021european}.

Further work could also focus on strategy-proofness. As illustrated in Section~\ref{sec:nclmanip}, it is possible for countries to manipulate $\Morder$ via their national cycle limit. Hence, a natural next step would be to find a mechanism that is IC with respect to $\natv$, $\segv$, and $\sv$.
Another direction for future work is to study the possibility of countries manipulating via the country-specific parameters and by hiding pairs from the IKEP pool. This could involve extending the results of \citet{ashlagi2014free} to these more complex forms of manipulation. 
Finally, the definition of $\Gamma$ could be extended to include the number of countries involved in an international cycle, and the total number of recipients from a given country that can be selected in any international cycle, as discussed in Section~\ref{s-intro}.

\begin{acks}
This work was supported by the Engineering and Physical Sciences Research Council [grant numbers EP/X01357X/1, EP/X013618/1].
\end{acks}
\section*{Data Availability Statement}
The data supporting the findings reported in Section~\ref{sec:simulations} of this paper are available under a CC-BY licence from the Enlighten Research Data repository at
\url{https://doi.org/10.5525/gla.researchdata.1989}. In particular, the repository contains the simulation code and the datasets generated for the simulations.

\bibliographystyle{ACM-Reference-Format}
\bibliography{bib.bib}

\clearpage

\appendix

\section{Proof of the Dichotomy Result: Theorem~\ref{thm:dico}}\label{app:complexity}

To prove Theorem~\ref{thm:dico} we provide 14 general lemmas that each prove a hardness case for various sets of country-specific parameters~$\Gamma$s. These are summarized in Table~\ref{tab:dicho}.

\begin{table}[t]
    \centering
    \resizebox{\columnwidth}{!}{
    \begin{tabular}{|c|c||c|c|c|c|c|}
    \hline
 Row
 No. & Lemma no.     & $n$  & $\intv$ &  $\natv$ & $\segv$ & $\sv$ \\

   &  {\tiny Min values} & {\tiny 1} & { \tiny 0} & {\tiny 0} & {\tiny 1} &{\tiny 1}  \\
     \hline
     \hline

      \refstepcounter{rownumber}\label{row:n=2l=2k=inftyAnycs} \rownumber  &  \ref{lem:n=2l=2k=inftyAnycs}   &2 &$2$  & $(N^\infty(3), N^\infty(0))$ & $(N^\infty(1),N^\infty(1))$ & $(N^\infty(1),N^\infty(1))$\\

    \refstepcounter{rownumber}\label{row:Gl=3c=3s=1n=3}  \rownumber&     \ref{lem:Gl=3c=3s=1n=3}  & 3&   $3$&  $(N^\infty(0), N^\infty(0), N^\infty(0))$&  $(N^\infty(1), N^\infty(1), N^\infty(1))$&$(N^\infty(1), N^\infty(1), N^\infty(1))$ \\

    \refstepcounter{rownumber}\label{row:n=2l=2k=inftyAnycsSEGV} \rownumber  &  \ref{lem:n=2l=2k=inftyAnycsSEGV}   &2 &$N^\infty(3)$  & $(N^\infty(3), N^\infty(0))$ & $(1,N^\infty(1))$ & $(N^\infty(1),N^\infty(1))$\\

    \refstepcounter{rownumber}\label{row:n=2l=2k=2Anycs} \rownumber  &     \ref{lem:n=2l=2k=2Anycs}  & 2 & $3$  & $(N^\infty(0), N^\infty(0))$ & $(N^\infty(2), N^\infty(1))$ & $(N^\infty(1), N^\infty(1))$\\

    \refstepcounter{rownumber}\label{row:Gl=3c=3s=1n=4}  \rownumber  &\ref{lem:Gl=3c=3s=1n=4}  & 3&   $N^\infty(4)$&  $(N^\infty(0), N^\infty(0), N^\infty(0))$&  $(N^\infty(1), N^\infty(1), N^\infty(1))$&$(1, N^\infty(1), N^\infty(1))$ \\

    \refstepcounter{rownumber}\label{row:n=2l=2k=2Anycs4}\rownumber  &  \ref{lem:n=2l=2k=2Anycs4}  & 2 & $N^\infty(4)$  & $(N^\infty(0), N^\infty(0))$ & $(N^\infty(2), N^\infty(1))$ & $(1, N^\infty(1))$\\

    \refstepcounter{rownumber}\label{row:n=2l=2k=inftyAnycsSV}\rownumber  &  \ref{lem:n=2l=2k=inftyAnycsSV}   &2 &$N^\infty(3)$  & $(N^\infty(3), N^\infty(0))$ & $(N^\infty(1),N^\infty(1))$ & $(1,N^\infty(1))$\\

    \refstepcounter{rownumber}\label{row:l=4k>=2c=1s>=2n=2}  \rownumber  &   \ref{lem:l=4k>=2c=1s>=2n=2}    & 2 &  $\infty$ & $(N^\infty(0), N^\infty(0))$ & $(N^\infty(1), N^\infty(1))$ & $(N(2), N^\infty(2))$\\
  \refstepcounter{rownumber}\label{row:l=4k>=2c=1s=2n=2}\rownumber  &  \ref{lem:l=4k>=2c=1s=2n=2}    & 2 &  $N^\infty(4)$ & $(N^\infty(0), N^\infty(0))$ & $(N^\infty(1), N^\infty(1))$ & $(N^\infty(2), N^\infty(2))$\\
     \refstepcounter{rownumber}\label{row:l=inftyk=0c=1s=inftyn=2} \rownumber  &    \ref{lem:l=inftyk=0c=1s=inftyn=2}     & 2 &  $N^\infty(4)$ & $(N^\infty(2), N^\infty(0))$ & $(1,N^\infty(1))$  & $(N^\infty(2), N^\infty(2))$ \\

     \refstepcounter{rownumber}\label{row:l=inftyk>=0c=1s>=2n=2} \rownumber  &    \ref{lem:l=inftyk>=0c=1s>=2n=2}      & 2 & $\infty$ &  $(N^\infty(0), N^\infty(0))$ & $(N(2), N^\infty(1))$  &  $(N^\infty(2), N^\infty(2))$\\

     \refstepcounter{rownumber}\label{row:l=inftyc=inftys=1,infty} \rownumber  &  \ref{lem:l=inftyc=inftys=1,infty}    & 2 & $\infty$ & $(N^\infty(0), N^\infty(0))$  &$(\infty, \infty)$  & $(N(1), N^\infty(1))$\\

     \refstepcounter{rownumber}\label{row:lk=intyc=mathbbs=infty}\rownumber  &  \ref{lem:lk=intyc=mathbbs=infty}        & 2 & $\infty$ &  $(N^\infty(2), N^\infty(2))$ &  $(N(1), N^\infty(1))$ & $(N^\infty(2), N^\infty(2))$\\

     \refstepcounter{rownumber}\label{row:l=inftyk=2c=inftys=inftyn=2}\rownumber  &   \ref{lem:l=inftyk=2c=inftys=inftyn=2}   & 2& $\infty$ & $(N(0), N^\infty(0))$ & $(\infty,\infty)$ &$(\infty,\infty)$ \\
     
    \hline
    \end{tabular}}\\[5pt]
    \caption{Summary of the hardness lemmas that are proven in Appendix~\ref{app:Lems}, where 
    $N(0)=\mathbb{N}_{\geq 0}\setminus \{1\}=\{0,2,3,\ldots\}$ and $N(0)^\infty= N(0) \cup \{\infty\}$ (as we exclude self-loops) and for $x\geq 1$,
    $N(x)= \mathbb{N}_{\geq x}=\{x,x+1,\ldots\}$ and $N^\infty(x)=N(x)\cup \{\infty\}$. 
   Note that some rows are overlapping, and that in any row entry we may swap the order of the two elements in a pair. As an example, Row~\ref{row:lk=intyc=mathbbs=infty} indicates that Lemma~\ref{lem:lk=intyc=mathbbs=infty} shows the hardness of {\sc Perfect $\Gamma$-Cycle Packing} for every set of country-specific parameters $\Gamma=(n,\intv,\natv,\segv,\sv)$ such that  $n=2$, $\intv=\infty$,  every $i\in \N$ satisfies $\natv_i, \sv_i\in\mathbb{N}_{\geq 2}\cup\{\infty\}$, and some $j \in \N$ has $\segv_j\in \mathbb{N}_{\geq 1}$ (i.e., a natural number greater than or equal to one, but cannot be unbounded) while the remaining $j' \in\N\backslash\{j\}$ have  $\segv_{j'}\in\mathbb{N}_{\geq 1}\cup\{\infty\}$.   }\label{tab:dicho}
\end{table}

The remainder of this section is organized as follows: we next prove the hardness lemmas, as presented in Table~\ref{tab:dicho} in Appendix~\ref{app:Lems}, after we give the case distinction in Appendix~\ref{app:Cases} to prove that there is, in fact, a dichotomy, together entails that Theorem~\ref{thm:dico} has been shown.

\subsection{Proofs of the Lemmas in Table~\ref{tab:dicho}}\label{app:Lems}

We first recall the problem that we are going to show is \NP-complete for various sets of country-specific parameters $\Gamma$.

\decisionprob{{\sc Perfect $\Gamma$-Cycle Packing}}{an $n$-partitioned graph $(G,\V)$.}{does $G$ have a perfect $\Gamma$-cycle packing?}

\noindent
We next establish that {\sc Perfect $\Gamma$-Cycle Packing} is a member of \NP.

\begin{lemma}\label{lem:member}
     {\sc Perfect $\Gamma$-Cycle Packing} is in \NP. 
\end{lemma}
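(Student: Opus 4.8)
The plan is to provide a polynomially bounded certificate for yes-instances together with a polynomial-time verifier. The natural certificate is the claimed perfect $\Gamma$-cycle packing $\C=\{C_1,\ldots,C_k\}$ itself, where each $C_j$ is supplied as an ordered vertex list $\langle v_1^j,\ldots,v_{s_j}^j\rangle$. The first thing I would observe is that this certificate has polynomial size: since a perfect cycle packing is vertex-disjoint and covers all of $V(G)$, the cycle vertex sets partition $V(G)$, so $\sum_j s_j=|V(G)|$ and the whole description is $O(|V(G)|)$. This remains true even when $\intv$ or the other parameters equal $\infty$, since although an individual cycle may be long, collectively the cycles use each vertex exactly once.

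I would then specify the verifier in two phases. The first phase checks that $\C$ is a perfect cycle packing of $G$: for each listed $C_j$ confirm $s_j\geq 2$ and that every consecutive pair $(v_i^j,v_{i+1}^j)$, including the wrap-around pair $(v_{s_j}^j,v_1^j)$, is an arc of $G$; then verify that no vertex is listed twice and that every vertex of $G$ occurs (so that the packing is indeed perfect). Each of these tests runs in time polynomial in $|V(G)|+|A(G)|$.

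The second phase checks that each $C_j$ is a $\Gamma$-cycle according to the definition in Section~\ref{sec:model}. For a cycle all of whose vertices lie in one set $V_i$ (a national cycle), I check the single condition $s_j\leq \natv_i$. For an international cycle I make one pass around $C_j$, grouping its vertices into maximal runs of same-country vertices; each such run is a maximal $V_i$-segment, and from this single traversal I obtain, for every country $i$, both the sizes of its $V_i$-segments and the number of them. I then verify the three conditions: $s_j\leq \intv$, every $V_i$-segment has size at most $\segv_i$, and the number of $V_i$-segments is at most $\sv_i$, treating any bound that equals $\infty$ as automatically satisfied. Because $n$ is a fixed constant (it is part of $\Gamma$ and not of the input) and each traversal is linear in $s_j$, this phase runs in time $O(\sum_j s_j)=O(|V(G)|)$.

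I do not expect any genuine obstacle. The two points that merit a little care are keeping the certificate polynomial despite possibly unbounded parameters, which is handled by the perfectness and disjointness observation above, and testing the segment conditions against the \emph{cyclic} structure of each international cycle; this is why the single traversal must also inspect the wrap-around between the last and first listed vertex, so as to correctly merge a $V_i$-segment that straddles the start of the list. With a polynomial-size certificate and a polynomial-time verifier in hand, {\sc Perfect $\Gamma$-Cycle Packing} is in \NP.
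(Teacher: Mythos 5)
Your proposal is correct and takes essentially the same approach as the paper: the certificate is the claimed perfect $\Gamma$-cycle packing itself, and the verifier checks in polynomial time that every vertex is covered exactly once and that each cycle satisfies the $\Gamma$-cycle conditions. Your version simply spells out the details (polynomial certificate size via vertex-disjointness, arc checks, segment bookkeeping with wrap-around) that the paper's one-line proof leaves implicit.
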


\begin{proof}
    Consider an arbitrary instance of {\sc Perfect $\Gamma$-Cycle Packing} and a certificate  $\mathcal{C}$. We can verify if $\C$ is indeed a perfect $\Gamma$-cycle packing by checking in polynomial time if each $v \in V$ appears exactly once in $\C$ and if each $C\in \C$ is a $\Gamma$-cycle.
\end{proof}

\noindent
 We now recall the two \NP-complete problems from which we will reduce. The first is {\sc Perfect $3$-Dimensional Matching} ({\sc P3DM})~\cite{karp1972}.

\decisionprob{{\sc Perfect $3$-Dimensional Matching} ({\sc P3DM})}{three sets of elements $X, Y,$ $Z$ and a set of triples $T\subseteq X \times Y \times Z$.  }{is there a perfect matching $M \subseteq T$ such that for each $u\in X \cup Y \cup Z$, there exists exactly one $t \in M$ with $u \in t$? }

\noindent
The second is {\sc Perfect $4$-Dimensional Matching} ({\sc P4DM}). Recall that a simple reduction from {\sc P3DM} shows that {\sc P4DM} is also \NP-complete.

\noindent
\decisionprob{{\sc Perfect $4$-Dimensional Matching} ({\sc P4DM})}{four sets of elements $W,X, Y, Z$ and a set of quadruples $T\subseteq W\times X \times Y \times Z$.  }{is there a perfect matching $M \subseteq T$ such that for each $u\in W \cup X \cup Y \cup Z$, there exists exactly one $t \in M$ with
$u \in t$? }

Given an $n$-partitioned graph $(G,\V)$ and a tuple $\Gamma$ of country-specific parameters, we let $\gcyc$ denote the set of $\Gamma$-cycles in $(G,\V)$.  We are now ready to prove the lemmas from Table~\ref{tab:dicho}.
 
\begin{lemma}\label{lem:n=2l=2k=inftyAnycs}
For a set of country-specific parameters $\Gamma= (n, \emph{\intv}, \emph{\natv},\emph{\segv}, \emph{\sv})$ with 
$n \geq 2$,  
$\emph{\intv}=2$, 
$\emph{\sv}_i \in \mathbb{N}_{\geq 1}\cup\{\infty\}$ for all $i\in \N$,
$\emph{\segv}_i\in \mathbb{N}_{\geq 1}\cup\{\infty\}$  for all $i\in \N$,
$\emph{\natv}_j \in \mathbb{N}_{\geq 3}\cup\{\infty\}$ some $j\in\N$, and
$\emph{\segv}_i\in \mathbb{N}_{\geq 0}\cup\{\infty\}$ for all $i\in \N\setminus \{j\}$,
{\sc Perfect $\Gamma$-Cycle Packing} is \NP-complete.
\end{lemma}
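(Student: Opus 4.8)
The plan is to establish membership in \NP{} via Lemma~\ref{lem:member} and then prove \NP-hardness by a reduction from \textsc{Perfect $3$-Dimensional Matching} (P3DM). The guiding observation is that, because $\intv=2$, every international $\Gamma$-cycle has length exactly two, so international arcs can only ever be used as $2$-cycles spanning two countries; meanwhile the distinguished country $j$, with $\natv_j\geq 3$, can host length-$3$ national cycles. I would exploit this split by using disjoint national triangles in $V_j$ as binary ``triple selectors'' and international $2$-cycles as the mechanism that pulls in element vertices.

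Concretely, given a P3DM instance $(X,Y,Z,T)$, I would pick one country $j'\neq j$ (which exists as $n\geq 2$) to hold one element vertex for each member of $X\cup Y\cup Z$, placing \emph{no} arcs among them. For each triple $t=(x,y,z)\in T$ I would create three \emph{connector} vertices $p_t,q_t,r_t\in V_j$, add the national triangle arcs $(p_t,q_t),(q_t,r_t),(r_t,p_t)$, and add the three international $2$-cycles $\langle p_t,x\rangle$, $\langle q_t,y\rangle$, $\langle r_t,z\rangle$ (i.e.\ the arcs in both directions). To satisfy $n\geq 2$ and keep every remaining country coverable, for each $i\in\N\setminus\{j,j'\}$ I would add a private international $2$-cycle between a fresh vertex of $V_i$ and a fresh vertex of $V_j$. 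Each national cycle here has length $3\leq\natv_j$, and each international cycle has length $2=\intv$ with a single size-$1$ segment in each of the two countries it meets, so it respects $\segv_i\geq 1$ and $\sv_i\geq 1$; hence all of these are genuine $\Gamma$-cycles.

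The correctness argument has two halves resting on one local claim: in any $\Gamma$-cycle packing, the connectors $p_t,q_t,r_t$ are covered in an \emph{all-or-nothing} fashion. The only cycles through $p_t$ are its triangle and its $2$-cycle $\langle p_t,x\rangle$; using the triangle forces $q_t,r_t$ onto it as well, while using $\langle p_t,x\rangle$ destroys the triangle and forces $q_t,r_t$ onto $\langle q_t,y\rangle,\langle r_t,z\rangle$. Thus each triple is either ``out'' (connectors on the triangle, elements untouched) or ``in'' (connectors matched to $x,y,z$, covering them). Since each element vertex lies in at most one $2$-cycle of the packing, the ``in'' triples are pairwise element-disjoint, so a \emph{perfect} packing forces them to form a perfect matching of $T$; conversely a perfect matching yields a perfect packing by setting matched triples to ``in'' and the rest to ``out''. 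This gives the required equivalence, and the construction is clearly polynomial in the size of the P3DM instance.

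The step I expect to be the main obstacle is ruling out ``cheating'' covers when $\natv_j$ is large or $\infty$, since unbounded national cycle covers would otherwise make single-country cycle packing easy. This is precisely why the element vertices live in a \emph{different} country: any cycle using a connector-to-element arc is international, hence (as $\intv=2$) can only be the length-$2$ cycle and cannot be extended. Consequently $G[V_j]$ is a disjoint union of triangles, of circumference $3$, and there are no national cycles of length exceeding $3$ available anywhere in $G$, so the all-or-nothing analysis holds uniformly for every admissible $\natv_j\in\mathbb{N}_{\geq 3}\cup\{\infty\}$; likewise $V_{j'}$ carries no national cycle at all (its induced subgraph is edgeless), so the value of $\natv_{j'}$ is irrelevant. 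I would emphasise this robustness as the crux, since it is exactly where the combination of $\intv=2$ and $n\geq 2$ is used.
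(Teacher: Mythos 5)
Your proof is correct and takes essentially the same route as the paper's: a reduction from \textsc{Perfect $3$-Dimensional Matching} in which each triple is represented by a gadget consisting of a national triangle inside the country $j$ with $\natv_j\geq 3$ plus international $2$-cycles attaching to the element vertices, with correctness resting on the same all-or-nothing covering argument, made sound by the observation that $\intv=2$ (together with the absence of any other national arcs) excludes every cycle that strays outside a single gadget. The only differences are minor: your gadget drops the paper's three buffer vertices (its $a_i^1,a_i^3,a_i^5$, inherited from the single-country gadget of Abraham et al.), which halves the gadget and inverts the role of the triangle (you select it when a triple is \emph{not} in the matching, the paper when it is), and you pad the countries outside $\{j,j'\}$ with private international $2$-cycles, which is if anything cleaner for the \textsc{Perfect} variant than the paper's device of adding isolated vertices.
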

\begin{proof}
\begin{figure*}[t]
\centering
\resizebox{0.35\columnwidth}{!}{
\begin{tikzpicture}[triangle/.style = {regular polygon, regular polygon sides=3 }]

\begin{scope}[every node/.style={circle,thick,draw,minimum size=0.9cm}]
    \node[regular polygon, thick, regular polygon sides=3, draw,
     inner sep=0.005cm] (a1) at (2.5,3) {$a_i^1$};
    \node (x) at (2.5,0) {$x$};
    \node (a3) at (2.5,7) {$a_i^2$};

    \node[regular polygon, thick, regular polygon sides=3, draw,
     inner sep=0.005cm] (a4) at (5,3) {$a_i^3$};
    \node (y) at (5,0) {$y$};
    \node (a6) at (5,6) {$a_i^4$};

    \node[regular polygon, thick, regular polygon sides=3, draw,
     inner sep=0.005cm] (a7) at (7.5,3) {$a_i^5$};
    \node (z) at (7.5,0) {$z$};
    
    \node (a9) at (7.5,7) {$a_i^6$};

\end{scope}

 \begin{scope}[>={Stealth[black]},
               every edge/.style={draw,very thick}]
   
    \path [->] (x) edge[bend left] node {} (a1);
    \path [->] (a1) edge[bend left] node {} (x);
    \path [->] (a1) edge[bend left] node {} (a3);
    \path [->] (a3) edge[bend left] node {} (a1);

    \path [->] (y) edge[bend left] node {} (a4);
    \path [->] (a4) edge[bend left] node {} (y);
    \path [->] (a4) edge[bend left] node {} (a6);
    \path [->] (a6) edge[bend left] node {} (a4);

    \path [->] (z) edge[bend left] node {} (a7);
    \path [->] (a7) edge[bend left] node {} (z);
    \path [->] (a7) edge[bend left] node {} (a9);
    \path [->] (a9) edge[bend left] node {} (a7);

    \path [->] (a3) edge[] node {} (a6);
    \path [->] (a6) edge[] node {} (a9);
    \path [->] (a9) edge[] node {} (a3);

\end{scope}
\end{tikzpicture}}
\Description{}
    \caption{Gadget used in the proof of Lemma~\ref{lem:n=2l=2k=inftyAnycs} showing \NP-hardness when  $n=2$, $\intv=2$, $\natv_j\in \mathbb{N}_{\geq 3}\cup\{\infty\}$ for some $j \in \N$ (represented by the circular nodes) while $j'\in\N\backslash\{j\}$ is such that $\natv_{j'}\in \mathbb{N}_{\geq 0}\cup\{\infty\}$, and for all $i\in\N$ we have $\segv_i, \sv_i \in \mathbb{N}_{\geq 1}\cup\{\infty\}$.}
    \label{fig:l2kinty}
\end{figure*}
Membership of {\sc Perfect $\Gamma$-Cycle Packing} in \NP\ was shown in Lemma~\ref{lem:member}. To show \NP-hardness, we transform an arbitrary instance $\I$ of \textsc{P3DM} to a  constructed 
instance $\J$  of {\sc Perfect $\Gamma$-Cycle Packing} when $n =2$, $\intv\in \mathbb{N}_{\geq 2}\cup\{\infty\}$, $\natv_j\in \mathbb{N}_{\geq 3}\cup\{\infty\}$ for some $j \in \N$, and $\segv=\sv=(1,1)$. 

In our reduction, we create one vertex for every $w \in  X \cup Y \cup Z$. The remaining vertices and all edges are added via gadgets. One gadget is created for each $t_i=(x,y,z) \in T$, as depicted in Figure~\ref{fig:l2kinty}. We add the following vertices $\{a_i^1, \cdots, a_i^6\}$ for the gadget corresponding to $t_i$, giving the following vertices and edges:
\[ \begin{array}{c}
V= X \cup Y\cup Z\cup \{a_i^1, \cdots, a_i^6 \mid  t_i \in T\} \text{  and}\\
    A=\{(x, a_i^1), (a_i^1, x),(y, a_i^3), (a_i^3, y,(z, a_i^5),(a_i^5, z),(a_i^1, a_i^2), (a_i^2, a_i^1),(a_i^3, a_i^4),\\
    (a_i^4, a_i^3),(a_i^5, a_i^6),(a_i^6, a_i^5),(a_i^2, a_i^4),(a_i^4, a_i^6), (a_i^6, a_i^2)\mid t_i \in T\}.
\end{array}
\]
In our reduction, we have $\N=\{1,2\}$, with the vertices belonging to countries~$1$ and $2$ depicted by the circular and triangular vertices in Figure~\ref{fig:l2kinty}, respectively. Thus, $V_1=\{x,y,z,a_i^2,a_i^4, a_i^6\mid  t_i=(x,y,z)\in T\}$ and $V_2 =V\backslash V_1$.
We let $\natv_1\in \mathbb{N}_{\geq 3}\cup\{\infty\}$, $\natv_2\in \mathbb{N}_{\geq 0}\cup\{\infty\}$, $\segv=(1,1)$ and $\sv=(1,1)$ and, thus, our restrictions for the country-specific parameters are as given in the lemma statement.
Hence, six international cycles of length $2$ and one national $V_1$-cycle of length $3$ have been created. 
Moreover, note that as $\segv=(1,1)$ and $\sv=(1,1)$, there are no international $\Gamma$-cycles with more than two vertices.
Hence, the  only $\Gamma$-cycles in the compatibility graph are:
\[\gcyc=\{\langle x, a_i^1\rangle, \langle y, a_i^3\rangle, \langle z, a_i^5\rangle, \langle a_i^1, a_i^2\rangle, \langle a_i^3, a_i^4\rangle,\langle a_i^5, a_i^6\rangle,\langle a_i^2, a_i^4, a_i^6\rangle \mid  t_i \in T \}.\]

Next, we give a proof of correctness and first assume that there is a perfect matching $M$ of instance $\I$ of \textsc{P3DM} and construct a perfect $\Gamma$-cycle packing $\C$ of instance $\J$.
 If a triple $t_i=(x,y,z)$ is chosen in $M$, then the following cycles are added to $\C$: 
\[\langle x, a_i^1\rangle, \langle y, a_i^3\rangle, \langle z, a_i^5\rangle,\langle a_i^2, a_i^4, a_i^6\rangle\in \C.\]  
Thus, all the nodes within the gadget are covered by some $\Gamma$-cycle.  If a triple $t_i=(x,y,z)$ is not chosen in $M$, then the following cycles are added to $\C$: 
\[\langle a_i^1, a_i^2\rangle, \langle a_i^3, a_i^4\rangle, \langle a_i^5, a_i^6\rangle\in \C.\] 
Observe that all of the gadget's vertices have been selected in some cycle, apart from $x$, $y$, and $z$.  As $M$ is a perfect matching, every vertex $w \in X \cup Y\cup Z$ will be covered by some gadget exactly once. Therefore, all vertices are covered by a cycle in $\gcyc$ and instance $\J$ has a perfect $\Gamma$-cycle packing.  

Conversely, suppose there is a perfect $\Gamma$-cycle packing $\C$ of instance $\J$. We construct a perfect matching $M$ in {\sc P3DM} instance $\I$ as follows. First, recall that no $\Gamma$-cycle contains edges from different gadgets. Hence, each cycle in $\C$ belongs to a single gadget, and we argue about the cycles within the gadget $t_i$ that are selected in a perfect $\Gamma$-cycle packing $\C$ via the following two cases:

\noindent
(1) Suppose $\langle x, a_i^1\rangle\in \C$. Then $\langle  a_i^2,  a_i^4,  a_i^6\rangle\in\C$ to ensure that $a_i^2$ is covered. Thus, any perfect $\Gamma$-cycle packing $\C$ with $\langle x, a_i^1\rangle\in \C$ must also contain $\langle y, a_i^3\rangle, \langle z, a_i^5\rangle\in \C$ to ensure that $a_i^3$ and $a_i^5$ are covered. Thus, if $\langle x, a_i^1\rangle\in \C$ then $\langle y, a_i^3\rangle, \langle z, a_i^5\rangle\in \C$ and when $x$ is selected in the gadget corresponding to $t_i$, then so are vertices $y$ and $z$.\\

\noindent
(2) Suppose $\langle x, a_i^1\rangle\notin \C$. Thus, $a_i^1$ must be covered by $\langle a_i^1, a_i^2\rangle\in \C$. In turn, $\langle a_i^3, a_i^4\rangle, \langle a_i^5, a_i^6\rangle\in \C$ must be selected in order to to cover $a_i^4$ and $a_i^6$. Hence, when $\langle x, a_i^1\rangle\notin \C$ and $x$ is not covered by gadget $t_i$, then $y$ and $z$ cannot be covered by $t_i$ without leaving some vertex uncovered that only exists in gadget $t_i$. \\

 Therefore, it must be the case that either $\{\langle x, a_i^1\rangle$, $\langle y, a_i^3\rangle$, $\langle z, a_i^5\rangle$, $\langle a_i^2, a_i^4, a_i^6\rangle\}\subseteq \C$ or  $\{\langle a_i^1, a_i^2\rangle$, $\langle a_i^3, a_i^4\rangle$, $\langle a_i^5, a_i^6\rangle\} \subseteq \C$, for each $t_i \in T$. In the former case, we add $t_i$ to $M$.
 Finally, $M$ is a perfect matching in $\I$ because $\C$ is a perfect $\Gamma$-cycle packing of $\J$, and in view of points~$(1)$ and ~$(2)$ above. 

Therefore, {\sc Perfect $\Gamma$-Cycle Packing} is \NP-complete  when $n=2$, $\intv=2$, $\natv_j\in \mathbb{N}_{\geq 3}\cup\{\infty\}$ for some $j \in \N$ while $j'\in\N\backslash\{j\}$ is such that $\natv_{j'}\in \mathbb{N}_{\geq 0}\cup\{\infty\}$, and for all $i\in\N$ we have $\segv_i, \sv_i \in \mathbb{N}_{\geq 1}\cup\{\infty\}$. 
\end{proof}

\begin{lemma}\label{lem:Gl=3c=3s=1n=3}
    For a set of country-specific parameters $\Gamma= (n, \emph{\intv}, \emph{\natv},\emph{\segv}, \emph{\sv})$ with 
$n \geq 3$,  
$\emph{\intv}=3$, 
$\emph{\natv}_i \in \mathbb{N}_{\geq 0}\cup\{\infty\}$ for all $i\in\N$,
$\emph{\sv}_i, \emph{\segv}_i \in \mathbb{N}_{\geq 1}\cup\{\infty\}$ for all $i\in \N$,
{\sc Perfect $\Gamma$-Cycle Packing} is \NP-complete.
\end{lemma}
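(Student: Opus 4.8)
The plan is to prove membership in \NP\ via Lemma~\ref{lem:member} and then establish \NP-hardness by a reduction from \textsc{P3DM}. Given a \textsc{P3DM} instance $\I=(X,Y,Z,T)$, I would place the element vertices of $X$, $Y$, $Z$ into countries $1$, $2$, $3$ respectively, and build one gadget per triple $t_i=(x,y,z)\in T$. The central idea is to mimic the gadget of Lemma~\ref{lem:n=2l=2k=inftyAnycs}, but here $\natv_i$ may be as small as $0$, so I cannot rely on any national cycle. Hence I would make \emph{every} cycle of the construction international, using only $2$-cycles between two countries and triangles spanning all three countries, so that each cycle has exactly one vertex per country. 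Its segments then have size $1$ and number $1$, making it a $\Gamma$-cycle for every admissible $\segv$ and $\sv$ (all $\geq 1$) and keeping it within $\intv=3$, while the total absence of national arcs renders the value of each $\natv_i$ irrelevant.

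Concretely, for each $t_i=(x,y,z)$ I would introduce six auxiliary vertices $a_i^1,\dots,a_i^6$ so that the following are valid international $\Gamma$-cycles: the three ``attachment'' $2$-cycles $\langle x,a_i^1\rangle$, $\langle y,a_i^3\rangle$, $\langle z,a_i^5\rangle$; the three ``bypass'' $2$-cycles $\langle a_i^1,a_i^2\rangle$, $\langle a_i^3,a_i^4\rangle$, $\langle a_i^5,a_i^6\rangle$; and one ``hub'' triangle $\langle a_i^2,a_i^4,a_i^6\rangle$. A feasible country assignment is $a_i^1\in V_2$, $a_i^2\in V_1$, $a_i^3\in V_3$, $a_i^4\in V_2$, $a_i^5\in V_1$, $a_i^6\in V_3$, under which every listed cycle uses one vertex per country (in particular the hub triangle spans countries $1,2,3$). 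For each extra country $i\in\{4,\dots,n\}$, needed only to keep $V_i$ nonempty, I would add a self-contained triangle spanning countries $i$, $1$, $2$, which is always coverable and interacts with nothing else. This is essential because, unlike in the \textsc{Max} variant, a perfect packing must cover \emph{every} vertex, so I cannot place genuinely isolated vertices in the extra countries.

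I would then argue, as in Lemma~\ref{lem:n=2l=2k=inftyAnycs}, two structural facts. First, no $\Gamma$-cycle uses vertices of two different gadgets: cross-gadget arcs occur only at the shared element vertices $x,y,z$, and a length-$\leq 3$ cycle touching two element vertices would need a direct arc between two elements, which does not exist; so every $\Gamma$-cycle lies inside a single gadget. Second, within a gadget the auxiliary vertices can be covered in exactly two ways: the \emph{select} state $\{\langle x,a_i^1\rangle,\langle y,a_i^3\rangle,\langle z,a_i^5\rangle,\langle a_i^2,a_i^4,a_i^6\rangle\}$, which also covers $x,y,z$, and the \emph{bypass} state $\{\langle a_i^1,a_i^2\rangle,\langle a_i^3,a_i^4\rangle,\langle a_i^5,a_i^6\rangle\}$, which covers none of $x,y,z$. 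This dichotomy follows from a short forcing argument: how $a_i^1$ is covered determines whether $a_i^2$, and hence the hub triangle, is available, which then forces the remaining choices. With these two facts, a perfect $\Gamma$-cycle packing corresponds precisely to choosing select or bypass for each triple so that every element vertex is covered exactly once, i.e., to a perfect matching $M\subseteq T$, and both implications follow directly.

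The hard part will be the combined constraint satisfaction in the gadget: I must route the hub cycle internationally across all three countries (to neutralise the unrestricted $\natv$), keep every cycle at one vertex per country (to satisfy $\segv,\sv\geq 1$ uniformly and stay within $\intv=3$), avoid multi-arcs when a triple shares an element with many others, and simultaneously guarantee that the only $\Gamma$-cycles meeting the gadget are the seven intended ones, so that the select/bypass dichotomy is genuinely forced. Verifying this last point -- that the chosen country placement creates no spurious short international cycles, in particular no unintended triangle through $a_i^2$ and no cross-gadget triangle through a pair of element vertices -- is the delicate step on which the correctness of both directions rests.
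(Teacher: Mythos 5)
Your proposal is correct, and at its core it follows the same strategy as the paper's proof: a reduction from \textsc{P3DM} with one gadget per triple, two mutually exclusive covering states (select/bypass) forced by how the attachment cycle at $x$ is resolved, and---crucially---a construction in which every arc is international and every intended cycle has exactly one vertex per country, so that all admissible values of $\natv$, $\segv$ and $\sv$ are neutralised simultaneously, while $\intv=3$ (together with the absence of arcs between element vertices) excludes every cross-gadget cycle. The differences are in the concrete realisation, and they are worth noting. First, the paper builds a new nine-auxiliary-vertex gadget (Figure~\ref{fig:Gl=3c=3s=1n=3}) in which all seven intended cycles are triangles spanning the three countries (attachment triangles $\langle x,a_i^1,a_i^2\rangle$, bypass triangles $\langle a_i^1,a_i^2,a_i^3\rangle$ sharing an arc with them, and a hub $\langle a_i^3,a_i^6,a_i^9\rangle$), whereas you recycle the six-auxiliary-vertex gadget of Lemma~\ref{lem:n=2l=2k=inftyAnycs} (three attachment $2$-cycles, three bypass $2$-cycles, one hub triangle) and merely re-assign countries so that every cycle becomes international with one vertex per country; both gadgets support the identical forcing argument, and yours is the more economical of the two. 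Second, you handle $n\geq 4$ by adding, for each extra country, a disconnected but coverable triangle through countries $i$, $1$, $2$. This is a genuine improvement in care: the paper proves the lemma explicitly only for $\N=\{1,2,3\}$, and the padding device it invokes elsewhere for extra countries (an isolated vertex per country, ``which we subsequently ignore'') is not sound for the {\sc Perfect} variant, since an isolated vertex can never lie on a cycle and would turn every constructed instance into a trivial no-instance. Your dummy-triangle fix is exactly the right repair, so on this point your write-up is more rigorous than the paper's own.
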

\begin{proof}

\begin{figure*}[t]
    \centering
\resizebox{0.85\columnwidth}{!}{
\begin{tikzpicture}[triangle/.style = {regular polygon, regular polygon sides=3 }]

\begin{scope}[every node/.style={circle,thick,draw,minimum size=0.9cm}]
    \node[style={diamond,thick,draw}]  (a1) at (1,2) {$a_i^1$};
    \node (x) at (2.5,0) {$x$};
    \node[regular polygon, thick, regular polygon sides=3, draw,
     inner sep=0.005cm]  (a2) at (4, 2) {$a_i^2$};

    \node (a4) at (7.5,2) {$a_i^4$};
    \node[regular polygon, thick, regular polygon sides=3, draw,
     inner sep=0.005cm] (y) at (9.5,0) {$y$};
    \node[style={diamond,thick,draw}]   (a5) at (11.5,2) {$a_i^5$};

    \node (a7) at (14.5,2) {$a_i^7$};
    \node[style={diamond,thick,draw}] (z) at (16,0) {$z$};
    \node[regular polygon, thick, regular polygon sides=3, draw,
     inner sep=0.005cm]  (a8) at (17.5,2) {$a_i^8$};

    \node[regular polygon, thick, regular polygon sides=3, draw,
     inner sep=0.005cm] (a6) at (9.5,6) {$a_i^6$};

    \node (a3) at (2.5,7) {$a_i^3$};
    \node[style={diamond,thick,draw}] (a9) at (16,7) {$a_i^9$};
   
\end{scope}

 \begin{scope}[>={Stealth[black]},
               every edge/.style={draw,very thick}]
   
    \path [->] (a2) edge node {} (x);
    \path [->] (x) edge node {} (a1);
    \path [->] (a1) edge node {} (a2);
    \path [->] (a3) edge[] node {} (a1);
    \path [->] (a2) edge[] node {} (a3);

    \path [->] (z) edge node {} (a7);
    \path [->] (a7) edge node {} (a8);
    \path [->] (a8) edge node {} (z);
    \path [->] (a9) edge node {} (a7);
    \path [->] (a8) edge node {} (a9);


    \path [->] (a4) edge node {} (a5);
    \path [->] (a5) edge node {} (y);
    \path [->] (y) edge node {} (a4);

    \path [->] (a5) edge node {} (a6);

    \path [->] (a6) edge node {} (a4);

    \path [->] (a3) edge node {} (a6);
    \path [->] (a6) edge node {} (a9);

    \path [->] (a9) edge node {} (a3);
\end{scope}
\end{tikzpicture}}
\Description{}
    \caption{Gadget used in the proof of Lemma~\ref{lem:Gl=3c=3s=1n=3} when $n=3$,  $\intv=3$, and for all $i\in \N$ we have that  $\segv_i, \sv_i\in \mathbb{N}_{\geq 1}\cup\{\infty\}$ and $\natv_i \in\mathbb{N}_{\geq 0}\cup\{\infty\}$.
    }
    \label{fig:Gl=3c=3s=1n=3}
\end{figure*}
Membership of {\sc Perfect $\Gamma$-Cycle Packing} in \NP\ was shown in Lemma~\ref{lem:member}. To show \NP-hardness, we  reduce from an instance $\I$ of \textsc{P3DM}~\cite{karp1972} to a constructed instance $\J$  of {\sc Perfect $\Gamma$-Cycle Packing}. In particular, we show hardness when $\N=\{1,2,3\}$, $\intv\in \mathbb{N}_{\geq 3}\cup\{\infty\}$, $\natv=(0,0,0)$, $\segv,\sv=(1,1,1)$. Our following proof in fact extends to the different $\Gamma$ listed in the Lemma statement. 

In our reduction, we create one vertex for every $w \in  X \cup Y \cup Z$. The remaining vertices and all edges are added via gadgets. One gadget is created for each $t_i=(x,y,z) \in T$, as depicted in Figure~\ref{fig:Gl=3c=3s=1n=3}. We add the following vertices  $\{a_i^1, \dots, a_i^9\}$ for the gadget corresponding to $t_i$. We have the following vertices and edges: 
\[
\begin{array}{c}
V = X \cup Y \cup Z \cup \{a_i^1, \dots, a_i^9 \mid  t_i \in T\} \quad \text{   and}\\
A=\{(x, a_i^1), (a_i^1, a_i^2), (a_i^2,x), (y, a_i^4), (a_i^4,a_i^5), (a_i^5, y), (z, a_i^7), (a_i^7,a_i^8),(a_i^7, z),(a_i^3, a_i^1), \\
  (a_i^2, a_i^3),(a_i^6, a_i^4), (a_i^5, a_i^6),(a_i^9, a_i^7),(a_i^8, a_i^9),(a_i^3, a_i^6),(a_i^6, a_i^9), (a_i^9, a_i^3)\mid  t_i \in T\}. 
\end{array}
\]
The vertices are distributed among the three countries as follows: $\mathcal{V} = (V_1, V_2, V_3)$, where  
\[
V_1 = \{x, a_i^3, a_i^4, a_i^7 \mid t_i \in T\}, \quad
V_2 = \{y, a_i^2, a_i^6, a_i^8 \mid t_i \in T\}, \quad
V_3 = \{z, a_i^1, a_i^5, a_i^9 \mid t_i \in T\}.
\]

These sets are represented in Figure~\ref{fig:Gl=3c=3s=1n=3} using circular, triangular, and diamond-shaped vertices, respectively. 
All international $\Gamma$-cycles can have at most $3$ vertices when $\intv\in\mathbb{N}_{\geq 4}\cup\{ \infty\}$. Hence, the  only $\Gamma$-cycles in the compatibility graph are:
\[
\X(I) = \{\langle x, a_i^1, a_i^2\rangle, \langle y, a_i^4, a_i^5\rangle, \langle z, a_i^7, a_i^8\rangle, \langle a_i^1, a_i^2, a_i^3\rangle, \langle a_i^4, a_i^5, a_i^6\rangle, \langle a_i^7, a_i^8, a_i^9\rangle, \langle a_i^3, a_i^6, a_i^9\rangle \mid t_i \in T \}.
\]
Any other cycle must exit and enter a gadget via the vertices corresponding to the elements of $t_i = (x, y, z)$. However, this would require a cycle with at least two segments from the same country.  
Assume that we exit via vertex $x$. In this case, the cycle must contain the path $[a_i^3, a_i^1, a_i^2, x]$, which requires two $V_1$-segments, namely $a_i^3$ and $x$. This would not be part of a $\Gamma$-cycle. Similar arguments apply for exiting via $y$ and $z$.

We next show that the instance $\I$ of \textsc{P3DM} has a perfect matching $M$ exactly when the constructed {\sc Perfect $\Gamma$-Cycle Packing} instance $\J$ has a perfect $\Gamma$-cycle packing. 
First, assume that $M$ is a perfect matching of \textsc{P3DM} instance $\I$. We construct a $\Gamma$-cycle packing in $\J$ given a perfect matching $M$. If a triple $t_i=(x,y,z)$ is chosen in $M$, then we add the $\Gamma$-cycles to $\C$: 
$\langle x, a_i^1, a_i^2\rangle,   \langle y, a_i^4, a_i^5\rangle,   \langle z, a_i^7, a_i^8\rangle,$  and   $\langle a_i^3, a_i^6, a_i^9\rangle$. 
 Thus, all the nodes within the gadget are covered by some $\Gamma$-cycle in this case. Alternatively, if a triple $t_i=(x,y,z)$ is not chosen in $M$, then the following cycles are added to $\C$: 
$\langle a_i^1, a_i^2, a_i^3\rangle,   \langle a_i^4, a_i^5, a_i^6\rangle,   \text{and}   \langle a_i^7, a_i^8, a_i^9\rangle$. 
In this case, all vertices are selected except for $x$, $y$, and $z$. By assumption, $x$, $y$, and $z$ will be selected in a different gadget, given that $M$ is a perfect matching.  
Thus, when there is a perfect matching $M$, there exists a perfect $\Gamma$-cycle packing $\C$.

Conversely, suppose there is a perfect $\Gamma$-cycle packing $\C$ of instance $\J$. We construct a perfect matching $M$ in {\sc P3DM} instance $\I$ as follows. First, recall that no $\Gamma$-cycle contains edges from different gadgets. Hence, each cycle in $\C$ belongs to a single gadget, and we argue about the cycles of gadget $t_i$ that are selected in perfect $\Gamma$-cycle packing $\C$ via the following two cases:
\begin{enumerate}
    \item Suppose $\langle x, a_i^1, a_i^2
    \rangle\in \C$. Then, $a_i^3$ can only be selected by the cycle $\langle a_i^3, a_i^6, a_i^9\rangle$, and it is in $\C$. Thus, $\langle y, a_i^4, a_i^5\rangle$, $\langle z, a_i^7, a_i^8\rangle\in \C$, in order for $a_i^4, a_i^5,a_i^7$, and $a_i^8$ to be selected. Hence, we see that if in a perfect $\Gamma$-cycle packing $\langle x, a_i^1, a_i^2
    \rangle\in \C$, then vertices $y$ and $z$ must also be selected in the same gadget.
    \item Suppose $\langle x, a_i^1, a_i^2\rangle\notin \C$. Then $\langle a_i^1, a_i^2, a_i^3\rangle\in \C$ to ensure that $ a_i^1$ and $a_i^2 $ are covered by some $\Gamma$-cycle. In turn, $\langle a_i^4, a_i^5, a_i^6\rangle, \langle a_i^7, a_i^8, a_i^9\rangle\in \C$ to ensure that $a_i^6$ an $a_i^7$ are covered by a $\Gamma$-cycle. Therefore, when $\langle x, a_i^1, a_i^2\rangle\notin \C$, the vertices $x,y$, and $z$ cannot be selected by a cycle within the gadget corresponding to $t_i$.  
\end{enumerate}

Therefore, it must be the case that either $$\{\langle x, a_i^1, a_i^2\rangle, \langle y, a_i^4, a_i^5\rangle, \langle z, a_i^7, a_i^8\rangle, \langle a_i^3, a_i^6, a_i^9\rangle\} \subseteq \C \text{  or  }\{ \langle a_i^1, a_i^2, a_i^3\rangle, \langle a_i^4, a_i^5, a_i^6\rangle, \langle a_i^7, a_i^8, a_i^9\rangle\} \subseteq \C$$, for each $t_i \in T$. In the former case, we add $t_i$ to $M$. 
Finally, $M$ is a perfect matching in $\I$ because $\C$ is a perfect $\Gamma$-cycle packing in $\J$, and in view of points~$(1)$ and ~$(2)$ above.

Hence, we have shown that \textsc{Perfect $\Gamma$-Cycle Packing} is \NP-hard when $\intv \in \mathbb{N}_{\geq 3} \cup \{\infty\}$, $\segv = \{1\}^n$, $\sv = \{1\}^n$, and $n=3$. Therefore, \textsc{Perfect $\Gamma$-Cycle Packing} is \NP-complete under the conditions on $\Gamma$ stated in the lemma statement.
\end{proof}

\begin{lemma}\label{lem:n=2l=2k=inftyAnycsSEGV}
    For a set of country-specific parameters $\Gamma= (n, \emph{\intv}, \emph{\natv},\emph{\segv}, \emph{\sv})$ with 
$n \geq 2$,  
$\emph{\intv}\in \mathbb{N}_{\geq 3}\cup\{\infty\}$, 
$\emph{\natv}_j \in \mathbb{N}_{\geq 3}\cup\{\infty\}$ for some $j\in\N$,
and $\emph{\natv}_i \in \mathbb{N}_{\geq 0}\cup\{\infty\}$ for all $i\in\N\backslash \{j\}$,
$\emph{\sv}_i \in \mathbb{N}_{\geq 1}\cup\{\infty\}$ for all $i\in \N$,
$\emph{\segv}_j=1$ for some $j\in \N$, and
$\emph{\segv}_i\in \mathbb{N}_{\geq 1}\cup\{\infty\}$ for all $i\in \N\setminus \{j\}$,
{\sc Perfect $\Gamma$-Cycle Packing} is \NP-complete.
\end{lemma}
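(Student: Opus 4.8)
The plan is to follow the reduction of Lemma~\ref{lem:n=2l=2k=inftyAnycs} almost verbatim, reducing from \textsc{P3DM} and reusing the very same gadget of Figure~\ref{fig:l2kinty}. Membership in \NP\ is immediate from Lemma~\ref{lem:member}. Given a \textsc{P3DM} instance $(X,Y,Z,T)$, I would create one vertex per element of $X\cup Y\cup Z$ and one copy of the gadget (with internal vertices $a_i^1,\dots,a_i^6$ and the arcs shown in Figure~\ref{fig:l2kinty}) for every triple $t_i=(x,y,z)\in T$. I take as the ``circular'' country $V_1=\{x,y,z,a_i^2,a_i^4,a_i^6\mid t_i\in T\}$ a country $j$ with $\natv_j\in\mathbb{N}_{\geq 3}\cup\{\infty\}$, which exists by hypothesis, so that the national triangle $a_i^2\to a_i^4\to a_i^6\to a_i^2$ is a national $\Gamma$-cycle; the ``triangular'' country $V_2$ receives the remaining gadget vertices. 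For each of the other $n-2$ countries I add one isolated (and thereafter ignored) vertex, so it suffices to treat the case $n=2$.

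The one genuinely new point compared with Lemma~\ref{lem:n=2l=2k=inftyAnycs} is that here $\intv\in\mathbb{N}_{\geq 3}\cup\{\infty\}$ instead of $\intv=2$, so I can no longer invoke the length bound to exclude long international cycles. The key step is therefore a short structural observation about the gadget graph itself. Inspecting out-neighbourhoods, each of $x,y,z$ has in-degree and out-degree $1$ and hence lies only on the $2$-cycle with its unique neighbour; and any simple cycle using a national arc out of $a_i^2,a_i^4,a_i^6$ either is the triangle $\langle a_i^2,a_i^4,a_i^6\rangle$ or leaves the triangle into one arm, from which it cannot return to the triangle without repeating a vertex. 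Consequently the gadget contains no simple directed cycle of length at least $4$, and its unique $3$-cycle is the national triangle. Thus raising $\intv$ above $2$ creates no new international $\Gamma$-cycles: the international cycles are still exactly the six bidirectional $2$-cycles per gadget, each of which has one vertex in $V_1$ and one in $V_2$, so all of its segments have size $1$ and it remains a $\Gamma$-cycle for every admissible $\segv$, in particular whenever $\segv_j=1$ for some $j$. This yields
\[
\gcyc=\{\langle x, a_i^1\rangle, \langle y, a_i^3\rangle, \langle z, a_i^5\rangle, \langle a_i^1, a_i^2\rangle, \langle a_i^3, a_i^4\rangle,\langle a_i^5, a_i^6\rangle,\langle a_i^2, a_i^4, a_i^6\rangle \mid t_i \in T \},
\]
exactly as in Lemma~\ref{lem:n=2l=2k=inftyAnycs}.

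With the $\Gamma$-cycles identified, the correctness argument is inherited verbatim from Lemma~\ref{lem:n=2l=2k=inftyAnycs}: from a perfect matching $M$, for $t_i\in M$ I select $\langle x,a_i^1\rangle,\langle y,a_i^3\rangle,\langle z,a_i^5\rangle,\langle a_i^2,a_i^4,a_i^6\rangle$ and for $t_i\notin M$ the three $2$-cycles $\langle a_i^1,a_i^2\rangle,\langle a_i^3,a_i^4\rangle,\langle a_i^5,a_i^6\rangle$, covering every vertex exactly once; conversely, the two-case analysis on whether $\langle x,a_i^1\rangle\in\C$ forces each gadget to be packed in one of these two ways, and adding $t_i$ to $M$ precisely when $x,y,z$ are covered inside gadget $t_i$ produces a perfect matching. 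I expect the main (though modest) obstacle to be making the ``no cycle of length $\geq 4$'' claim fully rigorous and explicitly robust to which country carries $\segv=1$ versus $\natv\geq 3$; once that structural fact is pinned down, the reduction and its correctness follow directly from Lemma~\ref{lem:n=2l=2k=inftyAnycs}.
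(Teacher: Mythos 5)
Your high-level plan coincides with the paper's: the paper's entire proof of this lemma is the one-line observation that the reduction of Lemma~\ref{lem:n=2l=2k=inftyAnycs} ``induces the same set $\gcyc$ given these parameters'', and you reuse that reduction too. The problem is the argument you supply for that key claim. Your structural observation --- that each of $x,y,z$ has in- and out-degree $1$, hence the construction contains no simple directed cycle of length at least $4$ --- is false for the constructed graph. An element vertex receives one incident $2$-cycle \emph{per triple containing it}, so its in- and out-degree equal the number of such triples, and cycles spanning several gadgets do exist. For instance, if $t_i$ and $t_j$ are distinct triples both containing $x$ and $y$, then
\[
x \to a_i^1 \to a_i^2 \to a_i^4 \to a_i^3 \to y \to a_j^3 \to a_j^4 \to a_j^6 \to a_j^2 \to a_j^1 \to x
\]
is a simple directed cycle of length $11$ in the constructed graph. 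Since here $\intv\in\mathbb{N}_{\geq 3}\cup\{\infty\}$ (in particular $\intv=\infty$ is allowed), no length bound excludes it, and your per-gadget analysis, which is correct as far as it goes, simply does not see it.

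What actually keeps $\gcyc$ unchanged --- and what your write-up explicitly claims is not needed (``it remains a $\Gamma$-cycle for every admissible $\segv$'') --- is the hypothesis $\segv_j=1$, imposed on the \emph{same} country that carries $\natv_j\geq 3$, namely the circular country $V_1=\{x,y,z,a_i^2,a_i^4,a_i^6\mid t_i\in T\}$ (cf.\ Row~\ref{row:n=2l=2k=inftyAnycsSEGV} of Table~\ref{tab:dicho}, where both restrictions sit in the first coordinate). Any cycle that enters a gadget at one element vertex and exits at another must traverse consecutive circular vertices ($a_i^2,a_i^4$, and possibly $a_i^6$), i.e.\ a $V_1$-segment of size at least $2$; as $\segv_1=1$, no multi-gadget cycle is a $\Gamma$-cycle, while $V_2$-segments are harmless because $V_2$ spans no arcs. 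Note that the cycle displayed above has $V_1$-segments of sizes $2$ and $3$ but all $V_2$-segments of size $1$: so if the country with $\segv=1$ were allowed to differ from the country with $\natv\geq 3$, that cycle \emph{would} be a $\Gamma$-cycle and the backward direction of the correctness proof would collapse. This answers your open ``robustness'' question in the opposite direction from what you anticipated: the reduction is not robust to separating the two roles, and the proof must fix them on one country. Once this segment-size argument replaces your structural claim, the rest of your proposal (membership in \NP, padding with isolated vertices for the remaining $n-2$ countries, and both correctness directions inherited from Lemma~\ref{lem:n=2l=2k=inftyAnycs}) goes through.
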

\begin{proof}
    The proof follows from Lemma~\ref{lem:n=2l=2k=inftyAnycs}, observe that the same reduction induces the same set $\gcyc$ given these parameters. 
\end{proof}

\begin{lemma}\label{lem:n=2l=2k=2Anycs}
    For a set of country-specific parameters $\Gamma= (n, \emph{\intv}, \emph{\natv},\emph{\segv}, \emph{\sv})$ with 
$n \geq 2$,  
$\emph{\intv}=3$, 
$\emph{\natv}_i \in \mathbb{N}_{\geq 0}\cup\{\infty\}$ for all $i\in\N$,
$\emph{\sv}_i \in  \mathbb{N}_{\geq 1}\cup\{\infty\}$ for all $i\in \N$,
$\emph{\segv}_j\in \mathbb{N}_{\geq 2}$ for some $j\in \N$, and
$\emph{\segv}_i\in \mathbb{N}_{\geq 1}\cup\{\infty\}$ for all $i\in \N\setminus \{j\}$,
{\sc Perfect $\Gamma$-Cycle Packing} is \NP-complete.
\end{lemma}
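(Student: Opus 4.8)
The plan is to prove membership in \NP\ via Lemma~\ref{lem:member}, and to establish \NP-hardness by a polynomial reduction from \textsc{P3DM}. The key observation is that the present regime differs from Lemma~\ref{lem:n=2l=2k=inftyAnycs} only in that $\natv$ may now be too small (e.g.\ $0$ or $2$) to host a national triangle, while we are instead handed a segment-size budget $\segv_j\geq 2$ in some country~$j$. I would therefore reuse the digraph and the reduction of Lemma~\ref{lem:n=2l=2k=inftyAnycs} verbatim, but \emph{re-assign the countries} so that the single closing triangle $\langle a_i^2,a_i^4,a_i^6\rangle$ becomes an \emph{international} $\Gamma$-cycle that spends one size-$2$ segment in country~$j$.

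Concretely, assume without loss of generality $j=1$, and reduce from a \textsc{P3DM} instance $(X,Y,Z,T)$ with one gadget per $t_i=(x,y,z)$ as in Figure~\ref{fig:l2kinty}. I would place $X\cup Y\subseteq V_1$ and $Z\subseteq V_2$, and for each gadget put $a_i^2,a_i^4,a_i^5\in V_1$ and $a_i^1,a_i^3,a_i^6\in V_2$. Then all six length-$2$ cycles remain international (each has one endpoint in each country), while the closing cycle $\langle a_i^2,a_i^4,a_i^6\rangle$ now has one $V_1$-segment $\{a_i^2,a_i^4\}$ of size~$2$ and one $V_2$-segment $\{a_i^6\}$ of size~$1$; hence it is a $\Gamma$-cycle precisely because $\intv\geq 3$ and $\segv_1\geq 2$ (and $\segv_2,\sv_1,\sv_2\geq 1$). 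Crucially, under this assignment \emph{every} cycle of the construction is international and both $G[V_1]$ and $G[V_2]$ are acyclic (the only intra-country arcs are the pairwise disjoint arcs $a_i^2\to a_i^4$), so no national cycle ever exists and the reduction is valid for every admissible $\natv$, including $\natv_i=\infty$. For $n>2$ I would add, in each remaining country, one fresh vertex joined to a fresh vertex of $V_1$ by a forced international $2$-cycle, covering those vertices uniquely and introducing no new $\Gamma$-cycles.

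For correctness I would first argue \emph{confinement}: since $\intv=3$, any candidate $\Gamma$-cycle has at most three vertices, and the only vertices shared between gadgets are the element vertices, whose sole neighbours are the auxiliary ``port'' vertices $a_i^1,a_i^3,a_i^5$; as no arc joins two ports of different gadgets, no cycle of length at most~$3$ can span two gadgets (indeed the shortest element-to-element path inside a gadget already has length~$5$). The set of $\Gamma$-cycles inside each gadget is then exactly the seven listed cycles, so the per-gadget forcing argument of Lemma~\ref{lem:n=2l=2k=inftyAnycs} applies unchanged: each gadget admits exactly a ``select'' covering (using $\langle x,a_i^1\rangle,\langle y,a_i^3\rangle,\langle z,a_i^5\rangle$ together with the closing triangle, which covers $x,y,z$) and a ``reject'' covering (using the three middle $2$-cycles, which leaves $x,y,z$ uncovered). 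This yields the usual bijection between perfect matchings of $T$ and perfect $\Gamma$-cycle packings of the constructed instance.

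The main obstacle is robustness to the \emph{entire} admissible range of the remaining parameters rather than any single step: because $\segv_i$ (for $i\neq j$) and every $\sv_i$ and $\natv_i$ may be as large as $\infty$, I must guarantee that this enlarged family of admissible cycles introduces no spurious $\Gamma$-cycle, neither a third covering mode of a gadget nor a short cross-gadget cycle exploiting a size-$2$ $V_1$-segment. This is exactly why the recolouring is chosen to keep every intended cycle international with at most one segment per country and to leave each $G[V_i]$ acyclic: the underlying digraph is identical to that of Lemma~\ref{lem:n=2l=2k=inftyAnycs}, so its cycle set is unchanged, and the length cap $\intv=3$ together with the acyclicity of $G[V_1]$ and $G[V_2]$ ensures that only the intended short international cycles survive as $\Gamma$-cycles for every $\Gamma$ in the stated family.
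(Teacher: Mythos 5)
Your proposal is correct and takes essentially the same route as the paper's own proof: the paper likewise reduces from \textsc{P3DM}, reuses the gadget digraph of Lemma~\ref{lem:n=2l=2k=inftyAnycs} unchanged, and merely recolours the vertices (there $V_1=\{x,a_i^2,a_i^3,a_i^5\}$ per gadget, so the closing triangle $\langle a_i^2,a_i^4,a_i^6\rangle$ becomes international with its size-$2$ segment in the country having $\segv_j\geq 2$), which—exactly as in your argument—keeps all seven gadget cycles international, leaves both induced national subgraphs acyclic so the result holds for every admissible $\natv$, and lets $\intv=3$ confine all $\Gamma$-cycles to single gadgets. The one divergence is your treatment of $n>2$ via forced international $2$-cycles, which is in fact cleaner than the paper's device of adding isolated vertices for the extra countries, since an isolated vertex can never be covered and is therefore delicate to justify in the \emph{Perfect} variant.
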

\begin{proof}
Membership of {\sc Perfect $\Gamma$-Cycle Packing} in \NP\ was shown in Lemma~\ref{lem:member}. To show \NP-hardness, we reduce from an instance $\I$ of \textsc{P3DM}~\cite{karp1972} to a constructed instance $\J$  of {\sc Perfect $\Gamma$-Cycle Packing}. In particular, we show hardness when 
$n=2$, $\intv=3$,  there is some $j \in\N$ such that $\segv_j\in \mathbb{N}_{\geq 2}$ while the remaining country $j' \in \N\backslash\{j\}$ has $\segv_{j'}\in \mathbb{N}_{\geq 1}\cup\{\infty\}$ and for all $i \in \N$  have  $\sv_i\in \mathbb{N}_{\geq 1}\cup\{\infty\}$ and $\natv_i \in \mathbb{N}_{\geq 0}\cup\{\infty\}$. 
\begin{figure*}[t]
    \centering
\resizebox{0.34\columnwidth}{!}{
\begin{tikzpicture}[triangle/.style = {regular polygon, regular polygon sides=3 }]

\begin{scope}[every node/.style={circle,thick,draw,minimum size=0.9cm}]
    \node[regular polygon, thick, regular polygon sides=3, draw,
     inner sep=0.005cm] (a1) at (2.5,3) {$a_i^1$};
    \node (x) at (2.5,0) {$x$};
    \node (a3) at (2.5,7) {$a_i^2$};

    \node (a4) at (5,3) {$a_i^3$};
    \node[regular polygon, thick, regular polygon sides=3, draw,
     inner sep=0.005cm]  (y) at (5,0) {$y$};
    \node[regular polygon, thick, regular polygon sides=3, draw,
     inner sep=0.005cm]  (a6) at (5,6) {$a_i^4$};

    \node(a7) at (7.5,3) {$a_i^5$};
    \node[regular polygon, thick, regular polygon sides=3, draw,
     inner sep=0.005cm]  (z) at (7.5,0) {$z$};
    
    \node[regular polygon, thick, regular polygon sides=3, draw,
     inner sep=0.005cm] (a9) at (7.5,7) {$a_i^6$};
\end{scope}

 \begin{scope}[>={Stealth[black]},
               every edge/.style={draw,very thick}]
 
    \path [->] (x) edge[bend left] node {} (a1);
    \path [->] (a1) edge[bend left] node {} (x);
    \path [->] (a1) edge[bend left] node {} (a3);
    \path [->] (a3) edge[bend left] node {} (a1);

    \path [->] (y) edge[bend left] node {} (a4);
    \path [->] (a4) edge[bend left] node {} (y);
    \path [->] (a4) edge[bend left] node {} (a6);
    \path [->] (a6) edge[bend left] node {} (a4);

    \path [->] (z) edge[bend left] node {} (a7);
    \path [->] (a7) edge[bend left] node {} (z);
    \path [->] (a7) edge[bend left] node {} (a9);
    \path [->] (a9) edge[bend left] node {} (a7);

    \path [->] (a3) edge[] node {} (a6);
    \path [->] (a6) edge[] node {} (a9);
    \path [->] (a9) edge[] node {} (a3);

\end{scope}
\end{tikzpicture}}
\Description{}
    \caption{Gadget used in the proof of Lemma~\ref{lem:n=2l=2k=2Anycs} showing \NP-hardness when $n=2$, $\intv=3$,  there is some $j \in\N$ such that $\segv_j\in \mathbb{N}_{\geq 2}$ (represented by the triangular nodes in the figure) while the remaining country $j' \in \N\backslash\{j\}$ has $\segv_{j'}\in \mathbb{N}_{\geq 1}\cup\{\infty\}$ and for all $i \in \N$  have  $\sv_i\in \mathbb{N}_{\geq 1}\cup\{\infty\}$ and $\natv_i \in \mathbb{N}_{\geq 0}\cup\{\infty\}$. 
    }
    \label{fig:lintyk0c12s11n2}
\end{figure*}

In our reduction, we create one vertex for every $w \in  X \cup Y \cup Z$. The remaining vertices and all edges are added via gadgets. One gadget is created for each $t_i=(x,y,z) \in T$, as depicted in Figure~\ref{fig:lintyk0c12s11n2}. Hence, we have the following vertices and edges:
\[
\begin{array}{c}
     V = X \cup Y \cup Z \cup \{a_i^1, \ldots, a_i^6 \mid t_i \in T\} \quad \text{  and}\\
      A = \{(x, a_i^1), (a_i^1, x), (y, a_i^3), (a_i^3, y), (z, a_i^5), (a_i^5, z), (a_i^1, a_i^2), (a_i^2, a_i^1),(a_i^3, a_i^4), \\
    (a_i^4, a_i^3), (a_i^5, a_i^6), (a_i^6, a_i^5), (a_i^2, a_i^4), (a_i^4, a_i^6), (a_i^6, a_i^2) \mid t_i = (x,y,z) \in T\}.
\end{array}
\]

In our reduction, we define the countries as $\N = \{1,2\}$ where
\[V_1=\{x, a_i^2, a_i^3, a_i^5 \mid t_i = (x,y,z) \in T\} \quad \text{ and } \quad V_2= V\backslash V_1.\] 
The vertices from countries $1$ and $2$ are represented by circular and triangular vertices in Figure~\ref{fig:lintyk0c12s11n2}, respectively. We set $\segv_2 = 2$, and let $\segv_1\in\mathbb{N}_{\geq 1}\cup\{\infty\}$, and $\sv_i\in \mathbb{N}_{\geq 1}\cup\{\infty\}$ for all $i\in \N$. 
Observe that the only $\Gamma$-cycles in the $\pG$ are those present within each gadget. Specifically, we define:
\[
\gcyc = \{\langle a_i^2, a_i^4, a_i^6\rangle, \langle a_i^1, a_i^2\rangle, \langle a_i^3, a_i^4\rangle, \langle a_i^5, a_i^6\rangle, \langle a_i^1, x\rangle, \langle a_i^3, y\rangle, \langle a_i^5, z\rangle \mid t_i = (x,y,z) \in T\}.
\]
Moreover, any cycle spanning multiple gadgets would violate $\intv=3$. Specifically, entering a gadget via one vertex $w \in t_i$ and exiting via another $w' \in t_i$ with $w \neq w'$ would require at least three segments from both country $1$ and country $2$, which contradicts the given constraints.

The proof of correctness follows the same reasoning as in Lemma~\ref{lem:n=2l=2k=inftyAnycs}, given that the reduction contains the same method of the creation of gadgets and the same set of $\Gamma$-cycles. Hence, we have shown that {\sc Perfect $\Gamma$-Cycle Packing} is \NP-complete when $n=2$, $\intv=3$,  there is some $j \in\N$ such that $\segv_j\in \mathbb{N}_{\geq 2}$ while the remaining countries $j' \in \N\backslash\{j\}$ have $\segv_{j'}\in \mathbb{N}_{\geq 1}\cup\{\infty\}$ and for all $i \in \N$  have  $\sv_i\in \mathbb{N}_{\geq 1}\cup\{\infty\}$ and $\natv_i \in \mathbb{N}_{\geq 0}\cup\{\infty\}$. 
\end{proof}

\begin{lemma}\label{lem:Gl=3c=3s=1n=4}
    For a set of country-specific parameters $\Gamma= (n, \emph{\intv}, \emph{\natv},\emph{\segv}, \emph{\sv})$ with 
$n \geq 3$,  
$\emph{\intv}\in \mathbb{N}_{\geq 4}\cup\{\infty\}$, 
$\emph{\natv}_i \in  \mathbb{N}_{\geq 0}\cup\{\infty\}$ for all $i\in\N$,
$\emph{\sv}_j =1$ for some $j\in \N$, and $\emph{\sv}_i \in  \mathbb{N}_{\geq 1}\cup\{\infty\}$ for all $i\in \N\backslash\{i\}$,
$\emph{\segv}_i\in  \mathbb{N}_{\geq 1}\cup\{\infty\}$ for all $i\in \N$,
{\sc Perfect $\Gamma$-Cycle Packing} is \NP-complete.
\end{lemma}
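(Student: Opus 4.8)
The plan is to show membership in \NP\ by Lemma~\ref{lem:member}, and to prove \NP-hardness by a reduction from {\sc Perfect $3$-Dimensional Matching} ({\sc P3DM}) that adapts the gadget of Lemma~\ref{lem:Gl=3c=3s=1n=3}. The conceptual point is that, once $\intv\ge 4$, the bound on cycle length no longer confines $\Gamma$-cycles to a single gadget, so the hypothesis $\sv_j=1$ must take over that role. I would fix the three active countries to be $1,2,3$ with $j=1$ (adding an isolated, ignorable vertex for each additional country, exactly as in the reduction behind Lemma~\ref{lem:l=inftyk>=0c=1s>=2n=2}).

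For every triple $t_i=(x,y,z)\in T$ I would introduce a gadget on fresh internal vertices, together with one shared vertex per element of $X\cup Y\cup Z$; these element vertices are the only vertices shared between gadgets. The decisive design choice, differing from Lemma~\ref{lem:Gl=3c=3s=1n=3}, is to place \emph{all} element vertices into country~$j=1$ and to make each of them adjacent only to vertices of countries~$2$ and~$3$. As before, the gadget should admit exactly two covering patterns: a \emph{selection} pattern covering the internal vertices together with $x,y,z$, and a \emph{non-selection} pattern covering only the internal vertices. I would realise both patterns using international $3$-cycles that each contain exactly one country-$1$ vertex; this keeps every intended cycle a $\Gamma$-cycle uniformly for all admissible $\Gamma$, including the extreme case $\segv=\{1\}^3$, $\sv=\{1\}^3$, $\natv=\{0\}^3$ (where cycles must be international with one vertex per country). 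Because the central triangle of Lemma~\ref{lem:Gl=3c=3s=1n=3} becomes inconsistent once the three element vertices are pulled into a common country, I would add extra internal country-$1$ ``junction'' vertices so that the two patterns remain expressible as one-vertex-per-country triangles.

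With this in hand, the correctness equivalence runs as in Lemma~\ref{lem:Gl=3c=3s=1n=3}: a perfect matching yields a perfect $\Gamma$-cycle packing by choosing the selection pattern on matched triples and the non-selection pattern otherwise, and conversely a perfect $\Gamma$-cycle packing induces, gadget by gadget, a consistent choice of patterns and hence a perfect matching. The replacement for the earlier ``no $\Gamma$-cycle leaves a gadget'' step is now immediate from $\sv_1=1$: any cycle meeting two gadgets must pass through at least two distinct element vertices, and since every element vertex lies in country~$1$ and is flanked in any cycle by vertices of countries~$2$ or~$3$, such a cycle would contain at least two $V_1$-segments, which $\sv_1=1$ forbids.

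The hard part will be \emph{uniformity across the whole parameter range}. The single constraint $\sv_1=1$ controls only cycles that enter country~$1$; with $\intv\ge 4$ and unbounded $\segv_2,\segv_3,\sv_2,\sv_3$ one must separately exclude \emph{spurious} cycles that live entirely inside countries~$2$ and~$3$ and that would otherwise become valid $\Gamma$-cycles and permit alternative packings. The decisive requirement on the gadget is therefore that every cycle of the construction pass through a country-$1$ vertex (so that $\sv_1=1$ and the separation property bound it), while still keeping every intended cycle valid when $\segv_1=1$ and $\natv_1=0$. I expect to discharge this by routing all internal adjacencies of the gadget through country-$1$ junction vertices and then enumerating the resulting finite, parameter-independent set $\gcyc$ explicitly, just as the analogous cycle set is listed in the proof of Lemma~\ref{lem:Gl=3c=3s=1n=3}; confirming that no such spurious cycle survives is the main obstacle.
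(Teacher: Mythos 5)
Your high-level reading of the lemma is correct: once $\intv\ge 4$, the cycle-length bound no longer confines $\Gamma$-cycles to a single gadget, and the hypothesis $\sv_j=1$ must take over that role. But your proposal then diverges into a redesign whose key step you yourself leave open, and that open step is exactly where the proof lives. The paper's proof reuses the gadget of Lemma~\ref{lem:Gl=3c=3s=1n=3} \emph{unchanged}: in that gadget, every cycle contained in a single gadget is one of the seven triangles with exactly one vertex from each of the three countries (a purely graph-theoretic fact about the gadget, independent of $\Gamma$), and every path traversing a gadget between two of its element vertices --- e.g.\ $[x, a_i^1, a_i^2, a_i^3, a_i^6, a_i^4, a_i^5, y]$, whose country sequence is $1,3,2,1,2,1,3,2$ --- contains at least two $V_i$-segments for \emph{every one} of the three countries. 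Hence, for whichever single $j$ satisfies $\sv_j=1$, all cross-gadget cycles are excluded, the set $\gcyc$ is exactly the same as in Lemma~\ref{lem:Gl=3c=3s=1n=3}, and the correctness argument carries over verbatim. No relocation of the element vertices is needed.

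Your redesign --- placing all element vertices in country $j$ and adding country-$j$ junction vertices --- creates two problems that you do not resolve. First, as you concede, the central coordination triangle becomes inconsistent, and you never exhibit a replacement; note the tension you are designing against: uniformity over the whole parameter range (which includes $\segv=\sv=\{1\}^3$ and $\natv=\{0\}^3$) forces every \emph{intended} cycle to contain at most one vertex per country, while your exclusion argument simultaneously needs every cycle of the construction to contain a country-$j$ vertex. Second, the ``spurious cycles inside countries $2$ and $3$'' issue is an artifact of your own design, and you explicitly state that ruling them out is ``the main obstacle.'' A proof cannot stop at its main obstacle: without an explicit gadget, an enumeration of its cycles, and a verification that the selection and non-selection patterns are the only covering patterns, the reduction does not yet exist. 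It is worth observing that the paper's original country assignment already satisfies both of your desiderata --- every cycle of the construction is a one-vertex-per-country triangle, so every cycle meets every country --- which is precisely why no redesign is required.
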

\begin{proof}
    The proof follows from Lemma~\ref{lem:Gl=3c=3s=1n=3}, observe that the same reduction induces the same set $\gcyc$ given these parameters. 
\end{proof}

\begin{lemma}\label{lem:n=2l=2k=2Anycs4}
    {\sc Perfect $\Gamma$-Cycle Packing} is \NP-complete when $n=2$, $\intv\in \{\mathbb N_{\geq 4}, \infty\}$,  there is some $j \in\N$ such that $\segv_j\in \mathbb{N}_{\geq 2}$ while the remaining countries $j' \in \N\backslash\{j\}$ have $\segv_{j'}\in \mathbb{N}_{\geq 1}\cup\{\infty\}$, for some $k\in\N$ has  $\sv_k=1$ while the remaining $k'\in \N\backslash\{k\}$ have $\sv_{k'}\in \mathbb{N}_{\geq 1}\cup\{\infty\}$,  and for all $i \in \N$ we have $\natv_i \in \mathbb{N}_{\geq 0}\cup\{\infty\}$.
\end{lemma}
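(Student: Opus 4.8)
The plan is to reduce from \textsc{P3DM} reusing \emph{verbatim} the gadget of Lemma~\ref{lem:n=2l=2k=2Anycs} (Figure~\ref{fig:lintyk0c12s11n2}), together with its vertex--country assignment. Membership in \NP\ is immediate from Lemma~\ref{lem:member}, so the entire burden is to show that, under the present parameters, this construction induces the \emph{same} set $\gcyc$ of $\Gamma$-cycles as in Lemma~\ref{lem:n=2l=2k=2Anycs}. Once that is done, the equivalence ``$\I$ has a perfect matching if and only if $\J$ has a perfect $\Gamma$-cycle packing'' transfers with no change, since it depends only on which cycles lie in $\gcyc$.

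The key step is to control the effect of raising $\intv$ from $3$ to an arbitrary value in $\mathbb{N}_{\geq 4}\cup\{\infty\}$. First I would check that no gadget contains a cycle on four or more vertices, so that inside a single gadget the only cycles are the seven short $\Gamma$-cycles listed in the proof of Lemma~\ref{lem:n=2l=2k=2Anycs}; each of these uses exactly one $V_i$-segment per country and therefore still respects $\sv_k=1$. Second --- and this is the crucial point --- I would show that any path entering a gadget at one shared vertex of $\{x,y,z\}$ and leaving it at another is forced to interleave the two countries, so it contains at least two $V_i$-segments for \emph{each} $i\in\{1,2\}$. Hence every cross-gadget cycle has at least two segments in the country $k$ with $\sv_k=1$ and is excluded, no matter how large $\intv$ is. This is precisely where the hypothesis $\sv_k=1$ takes over the role played by the length bound $\intv=3$ in Lemma~\ref{lem:n=2l=2k=2Anycs}, leaving $\gcyc$ unchanged.

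It then remains to realise every admissible parameter tuple. The unique triangle $\langle a_i^2,a_i^4,a_i^6\rangle$ carries one segment of size $2$, so I would assign the country with $\segv_j\in\mathbb{N}_{\geq 2}$ to the role holding that segment, swapping the two country labels if necessary (which is harmless). Because the cross-gadget traversals above produce at least two segments of \emph{both} countries, whichever country carries $\sv_k=1$ blocks them; this covers all four placements of the distinguished indices $j,k$ (equal or distinct) permitted by the statement. The gadget has no national cycle in either country, so every $\natv_i\in\mathbb{N}_{\geq 0}\cup\{\infty\}$ is admissible; the only same-country arc is $a_i^4\to a_i^6$, giving a segment of size exactly $2$, so the triangle's country needs merely $\segv\ge 2$ (and $\infty$ is fine), while the other country has only singleton same-country runs and thus admits any $\segv\in\mathbb{N}_{\geq 1}\cup\{\infty\}$. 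With $\gcyc$ pinned down, the correctness argument is inherited directly from Lemma~\ref{lem:n=2l=2k=2Anycs}.

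I expect the only delicate part to be the second claim above: ruling out \emph{all} cross-gadget cycles once arbitrarily long (indeed unbounded) international cycles are allowed. This reduces to a finite case analysis of the possible traversals of a gadget between the shared vertices $x,y,z$, verifying in each case that the traversal necessarily alternates the two countries and so violates $\sv_k=1$. Everything else is bookkeeping already carried out for Lemma~\ref{lem:n=2l=2k=2Anycs}.
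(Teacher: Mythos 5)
Your proposal is correct and matches the paper's proof, which simply observes that the reduction of Lemma~\ref{lem:n=2l=2k=2Anycs} induces the same set $\gcyc$ under the new parameters; your elaboration---that every gadget traversal between distinct shared vertices alternates countries and hence has at least two (in fact three) $V_i$-segments per country, so the hypothesis $\sv_k=1$ takes over the exclusion role previously played by $\intv=3$---is exactly the observation the paper relies on (and already hints at in the proof of Lemma~\ref{lem:n=2l=2k=2Anycs}). Your additional bookkeeping on placing the distinguished indices $j,k$ and on the absence of national cycles is sound and completes the argument in the same spirit.
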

\begin{proof}
    The proof follows from Lemma~\ref{lem:n=2l=2k=2Anycs}, observe that the same reduction induces the same set $\gcyc$ given these parameters. 
\end{proof}

\begin{lemma}\label{lem:n=2l=2k=inftyAnycsSV}
    For a set of country-specific parameters $\Gamma= (n, \emph{\intv}, \emph{\natv},\emph{\segv}, \emph{\sv})$ with 
    $n \geq 2$,  
    $\emph{\intv}\in \mathbb{N}_{\geq 3}\cup\{\infty\}$, 
    $\emph{\natv}_j \in  \mathbb{N}_{\geq 3}\cup\{\infty\}$ for some $j\in\N$, and $\emph{\natv}_i \in \mathbb{N}_{\geq 0}\cup\{\infty\}$ for all $i\in\N\backslash\{j\}$,
    $\emph{\sv}_j=1$ for all $j\in \N$, and
    $\emph{\sv}_i \in\mathbb{N}_{\geq 1}\cup\{\infty\}$ for all $i\in \N\backslash\{j\}$,
    $\emph{\segv}_i\in\mathbb{N}_{\geq 1}\cup\{\infty\}$ for all $i\in \N$,
    {\sc Perfect $\Gamma$-Cycle Packing} is \NP-complete.
\end{lemma}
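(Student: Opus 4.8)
The plan is to reuse verbatim the reduction from \textsc{P3DM} and the gadget (Figure~\ref{fig:l2kinty}) employed in the proof of Lemma~\ref{lem:n=2l=2k=inftyAnycs}, relabelling the two countries so that the one hosting the national triangle $\langle a_i^2, a_i^4, a_i^6\rangle$ has $\natv\geq 3$ (such a country exists by hypothesis). Membership in $\NP$ is immediate from Lemma~\ref{lem:member}. Since the underlying graph $\pG$ and the partition $\V$ are identical to those of Lemma~\ref{lem:n=2l=2k=inftyAnycs}, it suffices to verify that, under the parameters of the present statement ($\intv\in\mathbb{N}_{\geq 3}\cup\{\infty\}$, $\segv_i\geq 1$ for all $i$, and $\sv_j=1$ for some $j$), the set $\gcyc$ of $\Gamma$-cycles is exactly the one obtained there, namely the six international $2$-cycles together with the national $3$-cycle per gadget. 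Once this is established, the forward and backward correctness arguments transfer word for word, so the equivalence between perfect matchings of the \textsc{P3DM} instance and perfect $\Gamma$-cycle packings of $\pG$ is inherited.

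The only parameter that has genuinely loosened is $\intv$, which may now exceed $2$; hence the crux is to show that no international cycle of length greater than $2$ can be a $\Gamma$-cycle, and this is precisely where $\sv_j=1$ does the work. I would argue as follows. Because there are only two countries, in any international cycle the $V_1$-segments and the $V_2$-segments alternate and therefore occur in equal numbers, so $\sv_j=1$ for either choice of $j$ forces every international cycle to consist of exactly one $V_1$-segment followed by exactly one $V_2$-segment. Inspecting the arc set of the gadget shows that no arc joins two vertices of $V_2=\{a_i^1,a_i^3,a_i^5\mid t_i\in T\}$, so any $V_2$-segment is a single vertex. Finally, the only national arcs inside $V_1$ are the within-gadget triangles $a_i^2\to a_i^4\to a_i^6\to a_i^2$, so a $V_1$-segment of length at least two lies inside one gadget; a short case check on the three possible length-two runs shows that none can be closed back through a single $V_2$-vertex, whereas a length-one $V_1$-segment closes only into one of the six listed $2$-cycles. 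Thus the only international $\Gamma$-cycles are those $2$-cycles, and raising $\intv$ to any value $\geq 3$ (or $\infty$) introduces none.

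I expect the main obstacle to be ruling out \emph{cross-gadget} international cycles that exploit the shared vertices $x,y,z$: for instance, the path $x\to a_i^1\to a_i^2\to a_i^4\to a_i^3\to y$ already exists in $\pG$, so without a segment-number bound one might fear a long cycle threading several gadgets. The argument above neutralises exactly this threat, since such a path uses three $V_1$-segments and two $V_2$-segments and hence violates $\sv_j=1$; no directed cycle assembled from these fragments can respect a single segment per country. With cross-gadget cycles excluded, $\gcyc$ coincides with that of Lemma~\ref{lem:n=2l=2k=inftyAnycs}, and the correctness proof given there applies unchanged, establishing \NP-completeness.
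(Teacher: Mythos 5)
Your proposal is correct and takes essentially the same route as the paper: the paper's own proof of this lemma is a one-line deferral to Lemma~\ref{lem:n=2l=2k=inftyAnycs}, asserting that the same reduction induces the same set $\gcyc$ under the relaxed parameters. Your alternation argument (equal numbers of $V_1$- and $V_2$-segments in any international cycle of a $2$-partitioned graph, so $\sv_j=1$ for either country forces one segment per country, after which the independence of $V_2$ and the case check on runs inside the triangles pin down $\gcyc$) is precisely the verification the paper leaves implicit, and it correctly covers both choices of which country carries $\sv_j=1$ and which carries $\natv_j\geq 3$.
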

\begin{proof}
    The proof follows from Lemma~\ref{lem:n=2l=2k=inftyAnycs}, observe that the same reduction induces the same set $\gcyc$ given these parameters. 
\end{proof}

\begin{lemma}\label{lem:l=4k>=2c=1s>=2n=2}
    For a set of country-specific parameters $\Gamma= (n, \emph{\intv}, \emph{\natv},\emph{\segv}, \emph{\sv})$ with 
    $n \geq 2$,  
    $\emph{\intv}=\infty$, 
    $\emph{\natv}_i \in \mathbb{N}_{\geq 0}\cup\{\infty\}$ for all $i\in\N$,
    $\emph{\sv}_j \in \mathbb{N}_{\geq 2}$ for all $j\in \N$,
    and $\emph{\sv}_i \in \mathbb{N}_{\geq 2}\cup\{\infty\}$ for all $i\in \N\backslash\{j\}$,
    $\emph{\segv}_i\in  \mathbb{N}_{\geq 1}\cup\{\infty\}$ for all $i\in \N$,
    {\sc Perfect $\Gamma$-Cycle Packing} is \NP-complete.
\end{lemma}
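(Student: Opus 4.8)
The plan is to prove membership in \NP\ via Lemma~\ref{lem:member}, and then to establish \NP-hardness by a reduction from \textsc{P4DM}, adapting the gadget used for Lemma~\ref{lem:l=inftyk>=0c=1s>=2n=2}. Since the hypotheses here leave $\intv$, every $\segv_i$, and all but one $\sv_i$ unbounded, the only usable obstruction is the finite segment number of one country; so I would assume without loss of generality that country~$2$ is a country with $\sv_2=s\in\mathbb{N}_{\geq 2}$, and use this bound to block cross-gadget cycles. The guiding idea is to \emph{dualise} the construction of Lemma~\ref{lem:l=inftyk>=0c=1s>=2n=2}: there a transit of a gadget was forbidden because it forced a single $V_2$-segment of size $\segv_2+1$; here, because segment \emph{sizes} are unbounded, I would instead arrange that any transit forces $s+1$ \emph{distinct} $V_2$-segments, which violates $\sv_2=s$.

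Concretely, I would retain the topology of the gadget of Figure~\ref{fig:l=inftyk>=0c=1s>=2n=2*} -- one gadget per quadruple $t_i=(w,x,y,z)\in T$, with the four element vertices serving as ports -- but reassign countries so that the forced entry-to-exit ``bridge'' of each gadget becomes a comb with $s+1$ country-$2$ teeth separated by country-$1$ vertices; equivalently, the consecutive vertices of each main ring alternate between $V_1$ and $V_2$ instead of forming one long run in $V_2$, with the ring length chosen as a function of $s$. The two covering families are inherited: for $t_i\in M$ the \emph{selected} family of short cycles covers every internal vertex together with $w,x,y,z$, and for $t_i\notin M$ the \emph{unselected} family covers every internal vertex but leaves $w,x,y,z$ free. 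I would keep every covering cycle short (a $2$-cycle or a triangle, plus the inter-element connector), so that each uses at most two segments per country and is therefore a $\Gamma$-cycle for \emph{every} admissible $\sv_i\geq 2$ and any $\segv_i$. For $n\geq 3$ I would attach to each extra country a single, uniquely coverable international $2$-cycle to a fresh $V_1$-vertex, and I would verify (adjusting the gadget if necessary) that $G[V_1]$ and $G[V_2]$ contain no directed cycle, so that the values $\natv_i\in\mathbb{N}_{\geq 0}\cup\{\infty\}$ play no role.

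The correctness argument then follows the usual two-step pattern. The structural heart is the claim that \emph{every $\Gamma$-cycle lies inside a single gadget}: any cycle meeting two gadgets must transit some gadget between two of its ports, and by construction every such transit path meets $V_2$ in at least $s+1$ maximal pieces, contradicting $\sv_2=s$; this replaces the segment-size argument of Lemma~\ref{lem:l=inftyk>=0c=1s>=2n=2}. Granting this, a perfect matching $M$ produces a perfect $\Gamma$-cycle packing by taking the selected family on each $t_i\in M$ and the unselected family otherwise, while conversely the single-gadget property forces the cycles inside each gadget to realise exactly one of the two modes, so the quadruples realised in selected mode form a perfect matching.

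The main obstacle is the re-engineering of the gadget interior. One must choose the alternating country pattern and the attached rotor cycles so that, simultaneously: (i) \emph{every} transit, for \emph{every} ordered pair of ports, genuinely meets $V_2$ in at least $s+1$ segments; (ii) both the selected and the unselected family still decompose the whole gadget into valid short $\Gamma$-cycles even in the tight case $\sv_i=2$ -- in particular the inter-element connector must keep at most two segments per country after relabelling, and the long bridge must be coverable by local rotors rather than by traversing it as one (invalid) cycle; and (iii) no unintended short cross-gadget cycle, and no national cycle for any admissible $\natv_i$, yields an alternative covering. Verifying the single-gadget claim over all port pairs while keeping each within-gadget decomposition inside the segment-number budget is the delicate part; the remaining correspondence bookkeeping is analogous to Lemma~\ref{lem:l=inftyk>=0c=1s>=2n=2}.
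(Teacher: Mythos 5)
Your high-level mechanism is exactly the one the paper's proof uses: reduce from {\sc P4DM}, assign countries to the gadget vertices in an alternating fashion so that the constructed graph has no national arcs (making the values $\natv_i$ irrelevant) and every segment of every cycle automatically has size~$1$ (making the values $\segv_i\geq 1$ irrelevant), and then tune the gadget so that any cycle crossing between gadgets accumulates more $V_2$-segments than $\sv_2=s$ permits. Your \NP-membership step (Lemma~\ref{lem:member}) and your handling of extra countries for $n\geq 3$ are also fine.

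The gap is in the gadget itself, and it is not mere bookkeeping. You impose two requirements simultaneously: (a) every covering cycle in both families uses at most two segments per country (so that the families are valid uniformly for all $\sv_i\geq 2$), and (b) the alternating bridge cannot be traversed by any valid cycle and must instead be ``coverable by local rotors.'' Together these destroy precisely the bistability that encodes the matching. In the rotor gadget of Lemma~\ref{lem:l=inftyk>=0c=1s>=2n=2}, a ring vertex can be covered either by its rotor triangle or by the ring cycle itself, and the unselected mode is exactly the mode that traverses the ring as one cycle. If, as in your plan, no valid cycle may traverse the ring, then in every perfect packing each ring vertex is forced into its rotor triangle, all rotor states are determined, and the element vertices $w,x,y,z$ --- whose only within-gadget cycles pass through ring vertices --- can no longer be covered at all; the two covering modes collapse and the reduction breaks. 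The paper's proof of Lemma~\ref{lem:l=4k>=2c=1s>=2n=2} resolves this tension by abandoning your requirement (a): it uses a different, simpler gadget (Figure~\ref{fig:l=4k>=2c=1s>=2n=2}) in which each element ring has length $m=2\sigma-2$ with $\sigma=\sv_2$, so that the ring alone ($\sigma-1$ segments per country) and the ring extended by $a_i^1,a_i^2$ (exactly $\sigma$ segments per country) are \emph{both} valid $\Gamma$-cycles --- these two options are the two modes --- while any transit path such as $[w,a_i^1,a_i^2,a_i^4,x_i^1,\ldots,x_i^m,a_i^3,x]$ alternates through more than $\sigma$ segments per country and is blocked. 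In other words, the segment numbers of the covering cycles must be allowed to scale with $\sigma$; insisting that they be uniformly short is what makes your sketched construction unworkable, and the proposal exhibits no alternative bistable structure satisfying your constraints.
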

\begin{proof}
    \begin{figure*}[t]
    \centering
\resizebox{0.75\columnwidth}{!}{
\begin{tikzpicture}[triangle/.style = {regular polygon, regular polygon sides=3 }]

\begin{scope}[every node/.style={circle,thick,draw,minimum size=0.9cm}]
    \node (a1) at (-5,3) {$a_i^1$};
    \node[regular polygon, thick, regular polygon sides=3, draw,
     inner sep=0.005cm] (w) at (-5,0) {$w$};
    \node[regular polygon, thick, regular polygon sides=3, draw,
     inner sep=0.005cm] (a2) at (-5,10) {$a_i^2$};

    \node[regular polygon, thick, regular polygon sides=3, draw,
     inner sep=0.005cm] (a3) at (0,3) {$a_i^3$};
    \node (x) at (0,0) {$x$};
    \node (a4) at (0,10) {$a_i^4$};

    \node (a5) at (5,3) {$a_i^5$};
    \node[regular polygon, thick, regular polygon sides=3, draw,
     inner sep=0.005cm] (y) at (5,0) {$y$};
    \node[regular polygon, thick, regular polygon sides=3, draw,
     inner sep=0.005cm] (a6) at (5,10) {$a_i^6$};

    \node[regular polygon, thick, regular polygon sides=3, draw,
     inner sep=0.005cm] (a7) at (10.5,3) {$a_i^7$};
    \node (z) at (10.5,0) {$z$};
    
    \node (a8) at (10.5,10) {$a_i^8$};

    \node[regular polygon, thick, regular polygon sides=3, draw,
     inner sep=0.005cm] (z1) at (11.5,8) {$z_i^1$};
    \node (zs) at (11.5,5) {$z_i^{m}$};

    \node (y1) at (6,8) {$y_i^1$};
    \node[regular polygon, thick, regular polygon sides=3, draw,
     inner sep=0.005cm] (ys) at (6,5) {$y_i^{m}$};
    \node[regular polygon, thick, regular polygon sides=3, draw,
     inner sep=0.005cm] (x1) at (1,8) {$x_i^1$};
    \node (xs) at (1,5) {$x_i^{m}$};
    \node (w1) at (-4,8) {$w_i^1$};
    \node[regular polygon, thick, regular polygon sides=3, draw,
     inner sep=0.005cm] (ws) at (-4,5) {$w_i^{m}$};
   
\end{scope}
 \node (zdot) at (11.5,6.5) {$\cdots$};
  \node (ydot) at (6,6.5) {$\cdots$};
   \node (xdot) at (1,6.5) {$\cdots$};
    \node (wdot) at (-4,6.5) {$\cdots$};

 \begin{scope}[>={Stealth[black]},
               every edge/.style={draw,very thick}]

    \path [->] (w) edge[bend left] node {} (a1);
    \path [->] (a1) edge[bend left] node {} (w);
    \path [->] (a1) edge[bend left] node {} (a2);
    \path [->] (a2) edge[bend left] node {} (w1);
    \path [->] (w1) edge node {} (wdot);
    \path [->] (wdot) edge node {} (ws);
    \path [->] (ws) edge[bend left] node {} (a1);

    \path [->] (x) edge[bend left] node {} (a3);
    \path [->] (a3) edge[bend left] node {} (x);
    \path [->] (a3) edge[bend left] node {} (a4);
    \path [->] (a4) edge[bend left] node {} (x1);

    \path [->] (x1) edge node {} (xdot);
    \path [->] (xdot) edge node {} (xs);
    \path [->] (xs) edge[bend left] node {} (a3);

    \path [->] (y) edge[bend left] node {} (a5);
    \path [->] (a5) edge[bend left] node {} (y);
    \path [->] (a5) edge[bend left] node {} (a6);
    \path [->] (a6) edge[bend left] node {} (y1);

    \path [->] (y1) edge node {} (ydot);
    \path [->] (ydot) edge node {} (ys);
    \path [->] (ys) edge[bend left] node {} (a5);

    \path [->] (z) edge[bend left] node {} (a7);
    \path [->] (a7) edge[bend left] node {} (z);
    \path [->] (a7) edge[bend left] node {} (a8);
    \path [->] (a8) edge[bend left] node {} (z1);

    \path [->] (z1) edge node {} (zdot);
    \path [->] (zdot) edge node {} (zs);
    \path [->] (zs) edge[bend left] node {} (a7);

    \path [->] (a2) edge[] node {} (a4);
    \path [->] (a4) edge[] node {} (a6);
    \path [->] (a6) edge[] node {} (a8);
    \path [->] (a8) edge[bend right=10] node {} (a2);

    \path [->] (ws) edge[bend left] node {} (w1);
    \path [->] (xs) edge[bend left] node {} (x1);
    \path [->] (ys) edge[bend left] node {} (y1);
    \path [->] (zs) edge[bend left] node {} (z1);
\end{scope}
\end{tikzpicture}}
\Description{}
    \caption{Gadget used in the proof of Lemmas~\ref{lem:l=4k>=2c=1s>=2n=2} 
    and~\ref{lem:l=4k>=2c=1s=2n=2}. 
    Note that we let $m =2\min_{i\in\N}(\sv_i)-2$ in Lemma~\ref{lem:l=4k>=2c=1s>=2n=2} and $m = \min(2\min_{i\in\N}(\sv_i), \intv)-2$ in Lemma~\ref{lem:l=4k>=2c=1s=2n=2}.}
    \label{fig:l=4k>=2c=1s>=2n=2}
\end{figure*}

Membership of {\sc Perfect $\Gamma$-Cycle Packing} in \NP\ was shown in Lemma~\ref{lem:member}. To show \NP-hardness, we transform an instance $\I$ of {\sc P4DM} to a constructed instance $\J$ of {\sc Perfect $\Gamma$-Cycle Packing}. We provide a reduction for the case when $\N=\{1,2\}$, $\intv=\infty$, $\segv=(1,1)$, $\natv=(0,0)$, and 
$\sv_1\in  \mathbb{N}_{\geq 2}\cup\{\infty\}$ and $\sv_2=\sigma\geq 2$ for some fixed constant $\sigma\in\mathbb{N}$.
We assume without loss of generality that $\sv_2\leq \sv_1$ (else we reverse the roles of countries $1$ and $2$).
Note that given these country-specific parameters, the longest possible $\Gamma$-cycle will have at most~$2\sigma$ vertices (both countries having $\sigma$ segments of size~$1$). 

In our reduction, we first create a vertex for each $u \in W \cup X \cup Y \cup Z$. We create a gadget for every $t_i = (w, x, y, z) \in T$, adding the remaining vertices and edges as depicted in Figure~\ref{fig:l=4k>=2c=1s>=2n=2}. 
Hence, the vertex and edge sets are:
\[
\begin{array}{c}
V = W \cup X \cup Y \cup Z \cup \{a_i^1, \dots, a_i^8, w_i^1, \dots, w_i^m, x_i^1, \dots, x_i^m, y_i^1, \dots, y_i^m, z_i^1, \dots, z_i^m \mid  t_i \in T\}, \text{  and}\\
A = \{(w, a_i^1), (a_i^1, w), (a_i^1, a_i^2), (a_i^2, w_i^1), (w_i^{q}, w_i^{q+1}), (w_i^m, a_i^1),
(x, a_i^3), (a_i^3, x), \\ (a_i^3, a_i^4), (a_i^4, x_i^1),  (x_i^m, a_i^3), (x_i^{q}, x_i^{q+1}),
(y, a_i^5), (a_i^5, y), (a_i^5, a_i^6), (a_i^6, y_i^1), (y_i^{q}, y_i^{q+1}), \\(y_i^m, a_i^5),
(z, a_i^7), (a_i^7, z), (a_i^7, a_i^8),(a_i^8, z_i^1),(z_i^{q}, z_i^{q+1}),(z_i^m, a_i^7),
(a_i^2, a_i^4), (a_i^4, a_i^6), (a_i^6, a_i^8),\\ (a_i^8, a_i^2),
(w_i^m, w_i^1), (x_i^m, x_i^1), (y_i^m, y_i^1), (z_i^m, z_i^1)
 \mid q \in [1, m-1], t_i \in T\}.
\end{array}
\]
We divide the vertices into $V_1$ and $V_2$ such that:
\[
V_1 = \{a_i^1, w_i^q, a_i^4, x_i^{q+1}, x, y_i^q, a_i^5, a_i^8, z_i^{q+1}, z \mid  t_i \in T, \text{ and odd } q \in [1, 2\sigma - 1]\} \text{ and } V_2=V\backslash V_1,
\]
where vertices in $V_1$ and $V_2$ are represented by circular and triangular nodes in Figure~\ref{fig:l=4k>=2c=1s>=2n=2}, respectively.
Note that the vertices are distributed across the countries in an alternating manner, ensuring that there are no national arcs. Finally, we set $m=2\min_{i\in\N}(\sv_i)-2=2\sigma-2$.

Next, we consider which cycles in the compatibility graph are $\Gamma$-cycles. We observe that every $\Gamma$-cycle is contained within a single gadget. 
Any cycle $C$ that spans at least two gadgets must enter and exit one of the gadgets $t_i$ through the nodes $w, x, y$, and $z$. Without loss of generality, assume that the cycle $C$ contains a path from $w$ to $x$. Hence, $C$ includes the path:
$[w, a_i^1, a_i^2, a_i^4, x_i^1, \dots, x_i^m, a_i^3, x]$
where, given the assumption that $m = 2\sigma - 2$, there are $2\sigma$ $V_1$-segments and $2\sigma$ $V_2$-segments in this path. Therefore, it cannot be part of any $\Gamma$-cycle.
Thus, the set of $\Gamma$-cycles is:
\[
\begin{array}{c}
\gcyc = \{\langle w, a_i^1 \rangle, 
\langle x, a_i^3 \rangle, 
\langle y, a_i^5 \rangle, 
\langle z, a_i^7 \rangle, 
\langle a_i^2, a_i^4, a_i^6, a_i^8 \rangle, \\
\langle w_i^1, \dots, w_i^m, a_i^1, a_i^2 \rangle, 
\langle w_i^1, \dots, w_i^m \rangle, 
\langle x_i^1, \dots, x_i^m, a_i^3, a_i^4 \rangle, 
\langle x_i^1, \dots, x_i^m \rangle, \\
\langle y_i^1, \dots, y_i^m, a_i^5, a_i^6 \rangle, 
\langle y_i^1, \dots, y_i^m \rangle, 
\langle z_i^1, \dots, z_i^m, a_i^7, a_i^8 \rangle, 
\langle z_i^1, \dots, z_i^m \rangle \mid \ t_i \in T
\}.
\end{array}
\]

We now provide a proof of correctness by first assuming that we have an instance of {\sc P4DM} for which there is a perfect matching $ M $. We can then build a perfect $\Gamma$-cycle packing depending on whether some $ t_i \in T $ is included in $ M $. If $ t_i \in M $, we select the following cycles and add them to $\C$: 
\[\begin{array}{c}
\langle w, a_i^1 \rangle,   \langle x, a_i^3 \rangle,   \langle y, a_i^5 \rangle,   \langle z, a_i^7 \rangle,   \langle a_i^2, a_i^4, a_i^6, a_i^8 \rangle,  \\ \langle w_i^1, \cdots, w_i^m \rangle,  
\langle x_i^1, \cdots, x_i^m \rangle,   \langle y_i^1, \cdots, y_i^m \rangle,   \langle z_i^1, \cdots, z_i^m \rangle \in C.
\end{array}\]
By selecting these cycles, every vertex within the gadget corresponding to $ t_i $ has been selected in some $\Gamma$-cycle. Next, when $ t_i \notin M $, we select the following cycles:
$$\langle w_i^1, \cdots, w_i^m, a_i^1, a_i^2 \rangle,   \langle x_i^1, \cdots, x_i^m, a_i^3, a_i^4 \rangle,  \langle y_i^1, \cdots, y_i^m, a_i^5, a_i^6 \rangle,\langle z_i^1, \cdots, z_i^m, a_i^7, a_i^8 \rangle\in \C.$$
Here, all vertices selected in the gadget, except for $ w, x, y, $ and $ z $, have been selected.
Therefore, given that $ M $ is a perfect matching, every vertex in our reduction will be covered by the cycle packing described.

Conversely, suppose there is a perfect $\Gamma$-cycle packing $\C$ of instance $\J$. We construct a perfect matching $M$ in {\sc P4DM} instance $\I$ as follows. First, recall that no $\Gamma$-cycle contains edges from different gadgets. 
We argue about the cycles of gadget $t_i$ that are selected in perfect $\Gamma$-cycle packing $\C$ via the following two cases:
\begin{enumerate}
    \item Suppose $\langle w, a_i^1\rangle\in \C$. Thus, the only way vertices $w_i^1, \ldots, w_i^m$ can be selected is via $\C$ containing the cycle $\langle w_i^1, \ldots, w_i^m\rangle$. Then, $a_i^2$ must be covered by $\langle a_i^2, a_i^4,a_i^6, a_i^8 \rangle\in \C$. Therefore,  to ensure that these vertices are covered, we have that $\langle x_i^1, \ldots, x_i^m\rangle, \langle y_i^1, \ldots, y_i^m\rangle, \langle z_i^1, \ldots, z_i^m\rangle \in\C$. Finally, the remaining vertices labelled with $a$ must be selected as such: $\langle x, a_i^3\rangle, \langle y, a_i^5\rangle, \langle z, a_i^7\rangle\in \C$. Thus, whenever a perfect $\Gamma$-cycle packing selects $\langle w, a_i^1\rangle$, it must be the case that $w,x, y,$ and $z$ are all selected by the same gadget corresponding to $t_i$. 
    
    \item Suppose $\langle w, a_i^1\rangle\notin \C$. For $a_i^1$ to be covered in $\C$, we see that the cycle $\langle a_i^1, a_i^2, w_i^1, \ldots, w_i^m\rangle\in \C$. Thus, for vertices $a_i^4, a_i^6$, and $a_i^8$ to be covered, then $\langle a_i^3, a_i^4, x_i^1, \ldots, x_i^m\rangle$, $\langle a_i^5, a_i^6, y_i^1, \ldots, y_i^m\rangle$, $\langle a_i^7, a_i^8, z_i^1, \ldots, z_i^m\rangle \in\C$. We see that when $\langle w, a_i^1\rangle\notin \C$ and vertex $w$ is not covered by the gadget corresponding to $t_i$, then none of the vertices $w,x,y$, and $z$ are covered by $t_i$.  
\end{enumerate}
Therefore, for each $t_i \in T$, it must be the case that either: \[
\begin{array}{c}
    \{ \langle w, a_i^1 \rangle,   \langle x, a_i^3 \rangle,   \langle y, a_i^5 \rangle,   \langle z, a_i^7 \rangle,   \langle a_i^2, a_i^4, a_i^6, a_i^8 \rangle,  \langle w_i^1, \cdots, w_i^m \rangle,\\ \langle x_i^1, \cdots, x_i^m \rangle,   \langle y_i^1, \cdots, y_i^m \rangle,   \langle z_i^1, \cdots, z_i^m \rangle\}\subseteq \C, \quad\text{  or}\\
\{\langle w_i^1, \cdots, w_i^m, a_i^1, a_i^2 \rangle,   \langle x_i^1, \cdots, x_i^m, a_i^3, a_i^4 \rangle,   \langle y_i^1, \cdots, y_i^m, a_i^5, a_i^6 \rangle,\langle z_i^1, \cdots, z_i^m, a_i^7, a_i^8 \rangle. \} \subseteq \C.\end{array}\]
In the former case, we add $t_i$ to $M$.
Finally, $M$ is a perfect matching in $\I$ because $\C$ is a perfect $\Gamma$-cycle packing in $\J$, and in view of points~$(1)$ and ~$(2)$ above.

The proof of correctness follows the same reasoning as in Lemma~\ref{lem:n=2l=2k=inftyAnycs}. Hence, we have shown that {\sc Perfect $\Gamma$-Cycle Packing} is \NP-complete  when $n=2$, $\intv=\infty$, for all $i\in \N$ we have that $\natv_i\in \mathbb{N}_{\geq 0}\cup\{\infty\}$ and $\segv_i\in \mathbb{N}_{\geq 1}\cup\{\infty\}$, and for some $j\in \N$ ,  $\sv_j\in\mathbb{N}_{\geq 2}$  while the remaining countries  $j' \in \N\backslash\{j\}$ have $\sv_{j'}\in  \mathbb{N}_{\geq 2}\cup\{\infty\}$.
\end{proof}

\begin{lemma}\label{lem:l=4k>=2c=1s=2n=2} 
      For a set of country-specific parameters $\Gamma= (n, \emph{\intv}, \emph{\natv},\emph{\segv}, \emph{\sv})$ with 
    $n \geq 2$,  
    $\emph{\intv}\in \mathbb{N}_{\geq 4}$, 
    $\emph{\natv}_i \in  \mathbb{N}_{\geq 0}\cup\{\infty\}$ for all $i\in\N$,
    $\emph{\sv}_i \in\mathbb{N}_{\geq 2}\cup\{\infty\}$ for all $i\in \N$,
    $\emph{\segv}_i\in \mathbb{N}_{\geq 1}$ for all $\in \N$, 
    {\sc Perfect $\Gamma$-Cycle Packing} is \NP-complete.
\end{lemma}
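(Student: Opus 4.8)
The plan is to recycle the reduction from \textsc{P4DM} built for Lemma~\ref{lem:l=4k>=2c=1s>=2n=2}, using the very same gadget (Figure~\ref{fig:l=4k>=2c=1s>=2n=2}), but to let the finite international cycle limit play the role that a finite segment‑number bound played there. Concretely, write $\sigma=\min_{i\in\N}\sv_i$ (with $\sigma=\infty$ if every $\sv_i=\infty$) and take the petal length $m$ to be the largest \emph{even} integer with $m+2\le\min(2\sigma,\intv)$; thus $m=\min(2\sigma,\intv)-2$ when this value is even, and $m=\min(2\sigma,\intv)-3$ otherwise (which can only occur when $\intv<2\sigma$ and $\intv$ is odd). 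Exactly as in Lemma~\ref{lem:l=4k>=2c=1s>=2n=2}, I would assign the two countries alternately along every petal and along the central cycle $\langle a_i^2,a_i^4,a_i^6,a_i^8\rangle$, so that the construction has no national arcs (making $\natv$ irrelevant) and every segment of every legitimate cycle has size $1$ (making the reduction valid for every $\segv_i\ge1$).

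After recording membership in \NP\ through Lemma~\ref{lem:member}, the first substantive step is to pin down $\gcyc$ for the given $\Gamma$. The new feature relative to Lemma~\ref{lem:l=4k>=2c=1s>=2n=2} is that cross-gadget cycles must be excluded by \emph{whichever} of the two bounds is active. I would first observe that any cycle leaving and re-entering a gadget is forced through a complete petal, since the only arc into an $a_i^3$-type vertex other than from the shared element vertex comes from the petal end $x_i^m$; consequently such a cycle contains an internal arc of length at least $m+5$ and simultaneously strictly more than $\sigma=\sv_{i^\ast}$ segments of the minimising country $i^\ast$. When $\intv\le2\sigma$ the length bound $m+5>\intv$ rules these cycles out, and when $\intv>2\sigma$ the segment-number bound does. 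Dually, one checks that the longest legitimate within-gadget cycle has length $m+2\le\min(2\sigma,\intv)\le\intv$ and at most $\tfrac{m+2}{2}\le\sigma\le\sv_i$ segments per country, while the selecting cycle $\langle a_i^2,a_i^4,a_i^6,a_i^8\rangle$ consumes precisely the hypotheses $\intv\ge4$ and $\sv_i\ge2$. Hence $\gcyc$ is exactly the within-gadget family already listed in Lemma~\ref{lem:l=4k>=2c=1s>=2n=2}.

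Once $\gcyc$ is fixed, the correspondence between perfect matchings of the \textsc{P4DM} instance and perfect $\Gamma$-cycle packings of $\J$ is word-for-word that of Lemma~\ref{lem:l=4k>=2c=1s>=2n=2}: for $t_i\in M$ select the three short cycles $\langle w,a_i^1\rangle,\dots$, the central $4$-cycle, and the four pure petal cycles; for $t_i\notin M$ select the four extended petal cycles; and the two-case analysis on whether $\langle w,a_i^1\rangle\in\C$ shows that a gadget covers its shared vertices $w,x,y,z$ if and only if it covers all four. I would simply cite that argument rather than reproduce it.

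The step I expect to be the main obstacle is the characterisation of $\gcyc$ under the \emph{combined} length/segment constraints, and in particular getting the parity of $m$ right. For odd $\intv$ with $\intv<2\sigma$ the naive choice $m=\intv-2$ is odd, and then the last petal vertex meets $a_i^1$ in a size‑$2$ segment, which is fatal when $\segv_i=1$; the remedy is to shorten the petal by one so that $m$ is even, exploiting the fact that in this regime cross-gadget cycles are already excluded by length, so the shortening cannot manufacture a cross-gadget cycle of length $\le\intv$. Checking that this rounding never simultaneously introduces a new short cross-gadget cycle nor deletes a legitimate cycle needed in the two cases above is the delicate bookkeeping the proof must discharge carefully.
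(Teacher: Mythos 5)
Your proposal is correct and follows essentially the same route as the paper: the paper's proof of this lemma simply reuses the gadget and argument of Lemma~\ref{lem:l=4k>=2c=1s>=2n=2} with petal length $m = \min(2\min_{i\in\N}(\sv_i), \intv)-2$ (as stated in the caption of Figure~\ref{fig:l=4k>=2c=1s>=2n=2}), relying on exactly your observation that cross-gadget cycles are excluded by the length bound when $\intv \le 2\min_{i\in\N}\sv_i$ and by the segment-number bound otherwise, while the within-gadget cycles and the matching correspondence are unchanged. Your rounding of $m$ down to the nearest even value is in fact a refinement the paper glosses over: when $\intv$ is odd and $\intv < 2\min_{i\in\N}\sv_i$, the paper's formula yields odd $m$, and then the alternating country assignment would force a size-$2$ segment in the petal cycles (fatal when $\segv_i = 1$, which the lemma permits), so your parity correction makes the construction rigorous where the paper's two-line proof is imprecise.
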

\begin{proof}
 Membership of {\sc Perfect $\Gamma$-Cycle Packing} in \NP\ was shown in Lemma~\ref{lem:member}. The hardness result follows from Lemma~\ref{lem:l=4k>=2c=1s>=2n=2}, which establishes that only cycles within a single gadget are permitted, given the same country-specific parameters and the absence of national cycles within the reduction.
\end{proof}

\begin{lemma}\label{lem:l=inftyk=0c=1s=inftyn=2}
For a set of country-specific parameters $\Gamma= (n, \emph{\intv}, \emph{\natv},\emph{\segv}, \emph{\sv})$ with 
$n \geq 2$,  
$\emph{\intv}\in \mathbb{N}_{\geq 4}\cup\{ \infty\}$, 
$\emph{\natv}_j \in \mathbb{N}_{\geq 2}\cup\{\infty\}$ for some $j\in\N$,
and  $\emph{\natv}_i \in \mathbb{N}_{\geq 0}\cup\{\infty\}$ for all $i\in\N\backslash\{j\}$,
$\emph{\sv}_i \in\mathbb{N}_{\geq 2}\cup\{\infty\}$ for all $i\in \N$,
$\emph{\segv}_j=1$ for some $j\in \N$, and
$\emph{\segv}_i\in \mathbb{N}_{\geq 1}\cup\{\infty\}$ for all $i\in \N\setminus \{j\}$,
{\sc Perfect $\Gamma$-Cycle Packing} is \NP-complete.
\end{lemma}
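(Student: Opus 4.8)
The plan is to obtain membership in $\NP$ from Lemma~\ref{lem:member} and then to prove $\NP$-hardness by reducing from \textsc{P4DM}, reusing the alternating gadget of Figure~\ref{fig:l=4k>=2c=1s>=2n=2} that was introduced for Lemma~\ref{lem:l=4k>=2c=1s>=2n=2}. As in the earlier hardness lemmas it suffices to treat the canonical case $n=2$, placing an isolated vertex in every further country, so fix $\N=\{1,2\}$ with country~$1$ pinned to $\segv_1=1$ and country~$2$ allowed $\segv_2\in\mathbb{N}_{\geq 1}\cup\{\infty\}$, while at least one country carries a national cycle limit of at least~$2$. First I would create one copy of the gadget per quadruple $t_i=(w,x,y,z)\in T$, sharing the element vertices $w,x,y,z$ across gadgets exactly as before, but now setting the chain length to $m=\min\bigl(2\min_{i\in\N}\sv_i,\ \intv\bigr)-2$, the choice already used in Lemma~\ref{lem:l=4k>=2c=1s=2n=2} to accommodate a finite international cycle limit (for $\intv=\infty$ this is just $m=2\min_{i\in\N}\sv_i-2$, as in Lemma~\ref{lem:l=4k>=2c=1s>=2n=2}).

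The crux is to verify that, despite the extra freedom allowed by the present parameter ranges, the set $\gcyc$ of $\Gamma$-cycles coincides with the one determined for Lemma~\ref{lem:l=4k>=2c=1s>=2n=2}. Two structural observations carry this. First, the gadget distributes its vertices in a strictly alternating manner between the two countries, so it has no national arc whatsoever; hence no national cycle exists at all, and the hypothesis that some $\natv_j\geq 2$ is vacuous, as it can neither remove an intended international cycle nor supply a new way of covering the element vertices. Second, since the two endpoints of every arc lie in different countries, every $V_i$-segment of every cycle of $G$ has size exactly~$1$; consequently the value of $\segv_2$ is immaterial, and in particular permitting $\segv_2=\infty$ creates no additional cycles, while the pinned value $\segv_1=1$ leaves each intended cycle (all of which use only single-vertex segments) intact.

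It then remains to exclude cycles spanning more than one gadget. Any such cycle enters and leaves some gadget $t_i$ through two of the element vertices $w,x,y,z$, and the alternating structure forces it to traverse an entire length-$m$ chain together with the incident connector vertices, exactly as analysed for Lemma~\ref{lem:l=4k>=2c=1s>=2n=2}. The resulting portion of the cycle is strictly alternating and therefore uses more than $\min_{i\in\N}\sv_i$ single-vertex segments of some country when $m=2\min_{i\in\N}\sv_i-2$ (violating $\sv$), or has length exceeding $\intv$ when $m=\intv-2$ (violating the international cycle limit); the choice of $m$ guarantees that at least one of these bounds is breached, and the freedom $\segv_2=\infty$ cannot help since no two consecutive vertices share a country and so no segment can be merged. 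Hence every $\Gamma$-cycle lies inside a single gadget and $\gcyc$ agrees with the previous set. With $\gcyc$ identified, the correctness equivalence is transcribed verbatim from Lemma~\ref{lem:l=4k>=2c=1s>=2n=2}: a quadruple $t_i$ is put into the matching $M$ precisely when its gadget is covered by the ``selected'' family (which also covers $w,x,y,z$), whereas the ``not selected'' family covers every gadget vertex except $w,x,y,z$, so $G$ admits a perfect $\Gamma$-cycle packing if and only if $T$ has a perfect matching. I expect the main obstacle to be exactly the verification in the two preceding paragraphs that relaxing the bounds on $\natv$ and $\segv$ does not enlarge $\gcyc$; once the alternation argument pins every segment to size~$1$ and eliminates national arcs, the rest is a routine re-run of the earlier reduction with the adjusted chain length $m$.
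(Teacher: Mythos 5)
Your reduction has a fatal gap: the blocking mechanism you rely on does not exist for all parameter sets covered by the lemma. The lemma's hypothesis allows $\sv_i=\infty$ for \emph{every} country \emph{and} $\intv=\infty$ simultaneously (only $\segv_j=1$ is pinned to a finite value); for such $\Gamma$ your chain length $m=\min\bigl(2\min_{i\in\N}\sv_i,\ \intv\bigr)-2$ is not even defined, and more fundamentally, in your strictly alternating gadget every $V_i$-segment has size~$1$, so the constraint $\segv_j=1$ is satisfied by \emph{every} cycle of the graph. With $\sv$ and $\intv$ both unbounded, every cycle is then a $\Gamma$-cycle, cross-gadget cycles are legal, and your identification of $\gcyc$ collapses (indeed, for this regime perfect cycle packing with no binding constraints is polynomial-time solvable by the folklore reduction to bipartite matching, so no correct reduction can take your form there). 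Your claim that ``the choice of $m$ guarantees that at least one of these bounds is breached'' is exactly where the argument fails: when both bounds are $\infty$, no finite $m$ breaches either.

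The paper's proof uses a different, compact gadget (Figure~\ref{fig:l=4k>=2c=1s=2n=2}) designed so that the \emph{segment-size} bound $\segv_j=1$, the one parameter guaranteed finite, does the blocking: the partition places $w,a_i^4,y,a_i^8$ in country~$1$ and everything else in country~$2$, so any cycle leaving a gadget must contain a path such as $[w,a_i^1,a_i^2,a_i^4,a_i^3,x]$, which has $V_2$-segments of size~$2$ and hence violates $\segv_2=1$ no matter what $\sv$ and $\intv$ are. This is also why the hypothesis $\natv_j\in\mathbb{N}_{\geq 2}\cup\{\infty\}$ is \emph{not} vacuous, contrary to your second structural observation: the paper's gadget contains national $V_2$-cycles of length~$2$ (e.g.\ $\langle a_i^1,a_i^2\rangle$, $\langle x,a_i^3\rangle$) that are essential to the covering argument and are $\Gamma$-cycles only because $\natv_2\geq 2$. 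A correct proof here must build the gadget around the finitely pinned parameters ($\segv_j=1$ and $\natv_j\geq 2$), not around $\sv$ or $\intv$, which the lemma leaves unbounded.
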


\begin{proof}

    The membership of {\sc Perfect $\Gamma$-Cycle Packing} in \NP\ was shown in Lemma~\ref{lem:member}. To prove \NP-hardness, we reduce from the \NP-complete problem {\sc P4DM}. Given an arbitrary instance $\I$ of {\sc P4DM}, we construct an instance $\J$ of {\sc Perfect $\Gamma$-Cycle Packing} with the following parameters listed in the lemma statement.

\begin{figure*}[t]
    \centering
\resizebox{0.75\columnwidth}{!}{
\begin{tikzpicture}[triangle/.style = {regular polygon, regular polygon sides=3 }]

\begin{scope}[every node/.style={circle,thick,draw,minimum size=0.9cm}]
    \node[regular polygon, thick, regular polygon sides=3, draw,
     inner sep=0.005cm] (a1) at (-5,3) {$a_i^1$};
    \node (w) at (-5,0) {$w$};
    \node[regular polygon, thick, regular polygon sides=3, draw,
     inner sep=0.005cm] (a2) at (-5,7) {$a_i^2$};

    \node[regular polygon, thick, regular polygon sides=3, draw,
     inner sep=0.005cm] (a3) at (0,3) {$a_i^3$};
    \node[regular polygon, thick, regular polygon sides=3, draw,
     inner sep=0.005cm] (x) at (0,0) {$x$};
    \node (a4) at (0,7) {$a_i^4$};

    \node[regular polygon, thick, regular polygon sides=3, draw,
     inner sep=0.005cm] (a5) at (5,3) {$a_i^5$};
    \node (y) at (5,0) {$y$};
    \node[regular polygon, thick, regular polygon sides=3, draw,
     inner sep=0.005cm] (a6) at (5,7) {$a_i^6$};

    \node[regular polygon, thick, regular polygon sides=3, draw,
     inner sep=0.005cm] (a7) at (10.5,3) {$a_i^7$};
    \node[regular polygon, thick, regular polygon sides=3, draw,
     inner sep=0.005cm] (z) at (10.5,0) {$z$};
    
    \node (a8) at (10.5,7) {$a_i^8$};

\end{scope}

 \begin{scope}[>={Stealth[black]},
               every edge/.style={draw,very thick}]

    \path [->] (w) edge[bend left] node {} (a1);
    \path [->] (a1) edge[bend left] node {} (w);
    \path [->] (a1) edge[bend left] node {} (a2);
    \path [->] (a2) edge[bend left] node {} (a1);

    \path [->] (x) edge[bend left] node {} (a3);
    \path [->] (a3) edge[bend left] node {} (x);
    \path [->] (a3) edge[bend left] node {} (a4);
    \path [->] (a4) edge[bend left] node {} (a3);

    \path [->] (y) edge[bend left] node {} (a5);
    \path [->] (a5) edge[bend left] node {} (y);
    \path [->] (a5) edge[bend left] node {} (a6);
    \path [->] (a6) edge[bend left] node {} (a5);

    \path [->] (z) edge[bend left] node {} (a7);
    \path [->] (a7) edge[bend left] node {} (z);
    \path [->] (a7) edge[bend left] node {} (a8);
    \path [->] (a8) edge[bend left] node {} (a7);

    \path [->] (a2) edge[] node {} (a4);
    \path [->] (a4) edge[] node {} (a6);

    \path [->] (a6) edge[] node {} (a8);

    \path [->] (a8) edge[bend right=10] node {} (a2);
\end{scope}
\end{tikzpicture}}
\Description{}
    \caption{Gadget used in the proof of Lemma~\ref{lem:l=inftyk=0c=1s=inftyn=2} showing \NP-hardness when $n=2$, $\intv\in \mathbb{N}_{\geq 4}\cup\{ \infty\}$  there exists a $j \in \N$ such that $\natv_j\in \mathbb{N}_{\geq 2}\cup\{\infty\}$ (represented by the triangular vertices)  while the remaining countries $j'\in \N\backslash\{j\}$ have $\natv_{j'}\in \mathbb{N}_{\geq 0}\cup\{\infty\}$,  for every $i\in\N$ has $\sv_i\in \mathbb{N}_{\geq 2}\cup\{\infty\}$, and there is some $k \in \N$ such that $\segv_k=1$ while the remaining $k\in \N\backslash\{k\}$  have $\segv_i\in \mathbb{N}_{\geq 1}\cup\{\infty\}$. 
    }
    \label{fig:l=4k>=2c=1s=2n=2}
\end{figure*}

Our instance $\J$ consists of four sets of elements: $W$, $X$, $Y$, and $Z$, and a set of quadruples $T \subseteq W \times X \times Y \times Z$. We first create a vertex for each $u \in W \cup X \cup Y \cup Z$. For every quadruple $t_i = (w, x, y, z) \in T$, we create a corresponding gadget, as depicted in
Thus, the translated instance $\J$ has the set of vertices $V = W \cup X \cup Y \cup Z \cup \{a_i^1, \dots, a_i^8 \mid  t_i \in T\}$ and the set of edges
\[\begin{array}{c}A = \{(w, a_i^1), (x, a_i^3), (y, a_i^5), (z, a_i^7), (a_i^1, a_i^2), (a_i^3, a_i^4), (a_i^5, a_i^6), 
(a_i^7, a_i^8), (a_i^2, a_i^4, a_i^6, a_i^8) \mid  t_i \in T\}.\end{array}\]
We define the set of countries as $\N = \{1, 2\}$, where $V_1 = \{w, a_i^4, y, a_i^8 \mid  t_i = (w, x, y, z) \in T\}$ and $V_2 = V \setminus V_1$. The vertices of countries~$1$ and $2$ are represented by circular and triangular nodes in Figure~\ref{fig:l=4k>=2c=1s=2n=2}, respectively. The country-specific parameters are as stated in the lemma, with $\natv_2 \in\{\mathbb{N}_{\geq  2}, \infty\}$ and $\segv_2=1$.    
Next, we list the $\Gamma$-cycles in our instance $\J$: 
\[
\begin{array}{c}\gcyc = \{\langle w, a_i^1 \rangle, \langle x, a_i^3 \rangle, \langle y, a_i^5 \rangle, \langle z, a_i^7 \rangle, \langle a_i^1, a_i^2 \rangle, \langle a_i^3, a_i^4 \rangle,\\
\langle a_i^5, a_i^6 \rangle, \langle a_i^7, a_i^8 \rangle, \langle a_i^2, a_i^4, a_i^6, a_i^8 \rangle \mid  t_i = (w, x, y, z) \in T \}.
\end{array}\]
This set includes the five international cycles and the four $V_2$-cycles in each gadget.
To show that no other cycle in the instance is a $\Gamma$-cycle, consider an arbitrary cycle $C$ that spans over more than one gadget. Without loss of generality, assume that $C$ enters a gadget corresponding to $t_i = (w, x, y, z)$ via the vertex $w$ and exits via the vertex $x$. Thus, $C$ must contain the path
$[w, a_i^1, a_i^2, a_i^4, a_i^3, x]$. 
Observe that this path contains two $V_2$-segments of size $2$, namely $[a_i^1, a_i^2]$ and $[a_i^3, x]$. Since $\segv_2 = 1$, this path violates the condition for being a $\Gamma$-cycle, which requires each $V_2$-segment to have size at most $1$. Thus, $C$ cannot be a $\Gamma$-cycle.
This proves that all $\Gamma$-cycles must be contained within a single gadget.

Next, to show the correctness of our reduction, we show that when an instance $\I$ of {\sc P4DM} has a perfect matching exactly when the translated instance $\J$ has a perfect $\Gamma$-cycle packing. 

We now provide a proof of correctness by first assuming that we have an instance of {\sc P4DM} for which there is a perfect matching $ M $. We can then build a perfect $\Gamma$-cycle packing depending on whether each $ t_i \in T $ is included in $ M $. If $ t_i \in M $, we select the following cycles and add them to $\C$: 

\[ \{\langle w, a_i^1 \rangle, \langle x, a_i^3 \rangle, \langle y, a_i^5 \rangle, \langle z, a_i^7 \rangle, \langle a_i^2, a_i^4, a_i^6, a_i^8 \rangle \} \subseteq \C. \]

Thus, all the nodes within the gadget are covered by some $\Gamma$-cycle.  If a quadruple $t_i=(w,x,y,z)$ is not chosen in $M$, then the following cycles are added to $\C$: 
\[ \{\langle a_i^1, a_i^2 \rangle, \langle a_i^3, a_i^4 \rangle, \langle a_i^5, a_i^6 \rangle, \langle a_i^7, a_i^8 \rangle \} \subseteq \C. \]
Thus, all vertices in the instance $\J$ are covered by a $\Gamma$-cycle in the cycle packing $\C$. Therefore, $\J$ has a perfect $\Gamma$-cycle packing.

Conversely, suppose there is a perfect $\Gamma$-cycle packing $\C$  of instance $\J$. We construct a perfect matching $M$ in {\sc P4DM} instance $\I$ as follows. First, recall that no $\Gamma$-cycle contains edges from different gadgets. Hence, each cycle in $\C$ belongs to a single gadget, and we argue about the cycles of gadget $t_i$ that are selected in perfect $\Gamma$-cycle packing $\C$ via the following two cases:
\begin{enumerate}
    \item Suppose $\langle w, a_i^1\rangle\in \C$. Thus, the only way vertex $a_i^2$ can be selected is via $\C$ containing the cycle  $\langle a_i^2, a_i^4,a_i^6, a_i^8 \rangle\in \C$.  Then, the remaining vertices labelled with $a$ must be selected as such: $\langle x, a_i^3\rangle, \langle y, a_i^5\rangle, \langle z, a_i^7\rangle\in \C$. Thus, whenever a perfect $\Gamma$-cycle packing contains $\langle w, a_i^1\rangle$, it must be the case that $w,x, y,$ and $z$ are all selected by the same gadget corresponding to $t_i$. 
    
    \item Suppose $\langle w, a_i^1\rangle\notin \C$. For $a_i^1$ to be covered in $\C$, we see that the cycle $\langle a_i^1, a_i^2\rangle\in \C$. Thus, for vertices $a_i^4, a_i^6$, and $a_i^8$ to be covered, then $\langle a_i^3, a_i^4\rangle$, $\langle a_i^5, a_i^6\rangle$, $\langle a_i^7, a_i^8\rangle \in\C$. We see that when $\langle w, a_i^1\rangle\notin \C$ and vertex $w$ is not covered by the gadget corresponding to $t_i$, then none of the vertices $w,x,y$, and $z$ are covered by cycles in the gadget corresponding to $t_i$.  
\end{enumerate}
Therefore, it must be the case that either $$\{\langle w, a_i^1 \rangle, \langle x, a_i^3 \rangle, \langle y, a_i^5 \rangle, \langle z, a_i^7 \rangle, \langle a_i^2, a_i^4, a_i^6, a_i^8 \rangle  \}\subseteq \C \text{  or  }\{ \langle a_i^1, a_i^2 \rangle, \langle a_i^3, a_i^4 \rangle, \langle a_i^5, a_i^6 \rangle, \langle a_i^7, a_i^8 \rangle \} \subseteq \C$$, for each $t_i \in T$. In the former case, we add $t_i$ to $M$.
Finally, $M$ is a perfect matching in $\I$ because $\C$ is a perfect $\Gamma$-cycle packing in $\J$, and in view of points~$(1)$ and ~$(2)$ above.

Therefore, {\sc Perfect $\Gamma$-Cycle Packing} is \NP-complete when $n=2$, $\intv\in \mathbb{N}_{\geq 4}\cup\{ \infty\}$  there exists a $j \in \N$ such that $\natv_j\in \mathbb{N}_{\geq 2}\cup\{\infty\}$  while the remaining countries $j'\in \N\backslash\{j\}$ have $\natv_{j'}\in \mathbb{N}_{\geq 0}\cup\{\infty\}$,  for every $i\in\N$ has $\sv_i\in \mathbb{N}_{\geq 2}\cup\{\infty\}$, and there is some $k \in \N$ such that $\segv_k=1$ while the remaining $k\in \N\backslash\{k\}$  have $\segv_i\in \mathbb{N}_{\geq 1}\cup\{\infty\}$. 
\end{proof}

The next lemma has been sketched in Section~\ref{sec:complexityMax}. Due to this, the lemma is stated and proven for all $n\geq 2$ instead of $n=2$ as indicated in Table~\ref{t-dicho}.

\sampleLem*

\begin{proof}
Membership of \textsc{Perfect $\Gamma$-Cycle Packing} in \NP\ was shown in Lemma~\ref{lem:member}. To show \NP-hardness, we reduce from the \NP-complete problem \textsc{P4DM}.  Take an arbitrary instance $\I$ of \textsc{P4DM} and construct an instance $\J$ of \textsc{Perfect $\Gamma$-Cycle Packing} when $n \geq 2$,  $\intv=\infty$,  for all $i\in\N$, $\natv_i\in \mathbb{N}_{\geq 0}\cup\{\infty\},    \sv_i\in \mathbb{N}_{\geq 2}\cup\{\infty\}$, $\sv=\{\infty\}^n$, and for some $j \in \N$ $\segv_j\in \mathbb{N}_{\geq 2}\cup\{\infty\}$ with the remaining $j'\in\N\backslash\{j\}$ having $\segv_j\in \mathbb{N}_{\geq 1}\cup\{\infty\}$.

\begin{figure*}[t]
\centering
\resizebox{0.75\columnwidth}{!}{
\begin{tikzpicture}[triangle/.style = {regular polygon, regular polygon sides=3 }]

\begin{scope}[every node/.style={circle,thick,draw,minimum size=0.9cm}]
    \node[regular polygon, thick, regular polygon sides=3, draw, inner sep=-0.1cm] (a1) at (-5,3.5) {$w_{i,1}^{m-1}$};
    \node (w) at (-5,0) {$w$};
    \node (a2) at (-5,10) {$w_{i,1}^0$};

    \node (a3) at (0,3) {$x_{i,1}^{m+1}$};
    \node[regular polygon, thick, regular polygon sides=3, draw, inner sep=0.005cm] (x) at (0,0) {$x$};
    \node[regular polygon, thick, regular polygon sides=3, draw, inner sep=0.005cm] (a4) at (0,10) {$x_{i,1}^2$};

    \node[regular polygon, thick, regular polygon sides=3, draw,
     inner sep=-0.1cm] (a5) at (5,3.5) {$y_{i,1}^{m-1}$};
    \node (y) at (5,0) {$y$};
    \node (a6) at (5,10) {$y_{i,1}^0$};

    \node (a7) at (10.5,3) {$z_{i,1}^{m+1}$};
    \node[regular polygon, thick, regular polygon sides=3, draw, inner sep=0.005cm] (z) at (10.5,0) {$z$};
    
    \node[regular polygon, thick, regular polygon sides=3, draw, inner sep=0.005cm] (a8) at (10.5,10) {$z_{i,1}^2$};

    \node[regular polygon, thick, regular polygon sides=3, draw, inner sep=0.005cm] (z1) at (7.25,10) {$z_{i,1}^1$};
    \node[regular polygon, thick, regular polygon sides=3, draw, inner sep=0.005cm] (zs) at (11.5,5) {$z_{i,1}^{m}$};

    \node[regular polygon, thick, regular polygon sides=3, draw,
     inner sep=0.005cm] (y1) at (6,8) {$y_{i,1}^1$};
    \node[regular polygon, thick, regular polygon sides=3, draw, inner sep=0.005cm] (ys) at (6,1.5) {$y_{i,1}^{m}$};
    \node[regular polygon, thick, regular polygon sides=3, draw,
     inner sep=0.005cm] (x1) at (1,5) {$x_{i,1}^{m}$};
    \node[regular polygon, thick, regular polygon sides=3, draw, inner sep=0.005cm] (xs) at (-1.75,10) {$x_{i,1}^{1}$};

    \node[regular polygon,  regular polygon sides=3, draw,
     inner sep=-0.09cm] (wm-2) at (-4,5) {$w_{i,1}^{m-2}$};
     \node[regular polygon, thick, regular polygon sides=3, draw,
     inner sep=0.005cm] (x3) at (1,8) {$x_{i,1}^{3}$};
     \node[regular polygon, thick, regular polygon sides=3, draw,
     inner sep=-0.1cm] (ym-2) at (6,5) {$y_{i,1}^{m-2}$};
     \node[regular polygon, thick, regular polygon sides=3, draw,
     inner sep=0.005cm] (z3) at (11.5,8) {$z_{i,1}^{3}$};
    
    \node[regular polygon, thick, regular polygon sides=3, draw,
     inner sep=0.005cm] (w1) at (-4,8) {$w_{i,1}^1$};
  
    \node[regular polygon, thick, regular polygon sides=3, draw, inner sep=0.005cm] (ws) at (-4,1.5) {$w_{i,1}^{m}$};

    \node (k1) at (-3,9) {$w_{i,3}^{1}$};
    \node[regular polygon, thick, regular polygon sides=3, draw, inner sep=-0.05cm] (k2) at (-3,7
    ) {$w_{i,2}^1$};

    \node (k3) at (-3,2.5) {$w_{i,3}^{m}$};
    \node[regular polygon, thick, regular polygon sides=3, draw, inner sep=-0.005cm] (k4) at (-3,0.5
    ) {$w_{i,2}^m$};

    \node (k5) at (-2.75,11.1) {$x_{i,3}^{1}$};
    \node[regular polygon, thick, regular polygon sides=3, draw, inner sep=-0.05cm] (k6) at (-0.5,11.1) {$x_{i,2}^{1}$};

    \node (k7) at (2,6) {$x_{i,3}^{m}$};
    \node[regular polygon, thick, regular polygon sides=3, draw, inner sep=0.005cm] (k8) at (2,4
    ) {$x_{i,2}^{m}$};

    \node (k9) at (7,9) {$x_{i,3}^{1}$};
    \node[regular polygon, thick, regular polygon sides=3, draw, inner sep=-0.05cm] (k10) at (7,7
    ) {$y_{i,2}^{1}$};

    \node (k11) at (7,2.5) {$y_{i,3}^{m}$};
    \node[regular polygon, thick, regular polygon sides=3, draw, inner sep=0.05cm] (k12) at (7,0.5
    ) {$y_{i,2}^{m}$};

    \node (k13) at (6,11) {$z_{i,3}^{1}$};
    \node[regular polygon, thick, regular polygon sides=3, draw, inner sep=-0.05cm] (k14) at (8.25,11) {$z_{i,2}^{1}$};

    \node (k15) at (12.5,6) {$z_{i,3}^{m}$};
    \node[regular polygon, thick, regular polygon sides=3, draw, inner sep=0.005cm] (k16) at (12.5,4) {$z_{i,2}^{m}$};
    
    \node (k17) at (12.5,9) {$z_{i,3}^{3}$};
    \node[regular polygon, thick, regular polygon sides=3, draw, inner sep=-0.05cm] (k18) at (12.5,7) {$z_{i,2}^{3}$};

    \node[inner sep=-0.05cm] (k19) at (7,6) {$y_{i,3}^{m-2}$};
    \node[regular polygon, thick, regular polygon sides=3, draw, inner sep=-0.05cm] (k20) at (7,4) {$y_{i,2}^{m-2}$};

    \node (k21) at (2,9) {$x_{i,3}^{3}$};
    \node[regular polygon, thick, regular polygon sides=3, draw, inner sep=-0.05cm] (k22) at (2,7) {$x_{i,2}^{3}$};

    \node[inner sep=-0.03cm] (k23) at (-2.5,6) {$w_{i,3}^{m-2}$};
    \node[regular polygon, thick, regular polygon sides=3, draw, inner sep=-0.09cm] (k24) at (-2.5,3.75) {$w_{i,2}^{m-2}$};

\end{scope}
 \node (zdot) at (11.5,6.5) {$\cdots$};
  \node (ydot) at (6,6.5) {$\cdots$};
   \node (xdot) at (1,6.5) {$\cdots$};
    \node (wdot) at (-4,6.5) {$\cdots$};

 \begin{scope}[>={Stealth[black]},
               every edge/.style={draw,very thick}]

    \path [->] (w) edge[bend left] node {} (a1);
    \path [->] (ws) edge[bend left] node {} (w);
    \path [->] (a1) edge[bend left] node {} (a2);
    \path [->] (a2) edge[bend left=10] node {} (w1);
    \path [-] (w1) edge node {} (wdot);
    \path [->] (wdot) edge node {} (wm-2);
    \path [->] (wm-2) edge[bend left=10] node {} (a1);
    \path [->] (a1) edge[bend left=10] node {} (ws);

    \path [->] (x) edge[bend left] node {} (a3);
    \path [->] (a3) edge[bend left] node {} (x);
    \path [->] (a3) edge[bend left] node {} (a4);
    \path [->] (a4) edge[bend left=10] node {} (x3);
    \path [-] (x3) edge node {} (xdot);

    \path [->] (x1) edge[bend left=10] node {} (a3);
    \path [->] (xdot) edge node {} (x1);
    \path [->] (a2) edge node {} (xs);
    \path [->] (xdot) edge node {} (x1);
    \path [->] (xs) edge node {} (a4);

    \path [->] (y) edge[bend left] node {} (a5);
    \path [->] (ys) edge[bend left] node {} (y);
    \path [->] (a5) edge[bend left] node {} (a6);
    \path [->] (a6) edge[bend left=10] node {} (y1);

    \path [-] (y1) edge node {} (ydot);
    \path [->] (ydot) edge node {} (ym-2);
    \path [->] (ym-2) edge[bend left=10] node {} (a5);
    
    \path [->] (a5) edge[bend left=10] node {} (ys);

    \path [->] (z) edge[bend left] node {} (a7);
    \path [->] (a7) edge[bend left] node {} (z);
    \path [->] (a7) edge[bend left] node {} (a8);

    \path [->] (a8) edge[bend left=10] node {} (z3);
    \path [-] (z3) edge node {} (zdot);
    \path [->] (zdot) edge node {} (zs);
    \path [->] (zs) edge[bend left=10] node {} (a7);

    \path [->] (a4) edge[] node {} (a6);
    \path [->] (a6) edge[] node {} (z1);
    \path [->] (z1) edge[] node {} (a8);
    \path [->] (a8) edge[bend right=45] node {} (a2);

    \path [->] (k1) edge[] node {} (w1);
    \path [->] (w1) edge[] node {} (k2);
    \path [->] (k1) edge[bend left=15] node {} (k2);
    \path [->] (k2) edge[bend right=15] node {} (k1);

    \path [->] (k3) edge[] node {} (ws);
    \path [->] (ws) edge[] node {} (k4);
    \path [->] (k3) edge[bend left=15] node {} (k4);
    \path [->] (k4) edge[bend right=15] node {} (k3);

    \path [->] (k5) edge[] node {} (xs);
    \path [->] (xs) edge[] node {} (k6);
    \path [->] (k5) edge[bend left=15] node {} (k6);
    \path [->] (k6) edge[bend right=15] node {} (k5);

    \path [->] (k7) edge[] node {} (x1);
    \path [->] (x1) edge[] node {} (k8);
    \path [->] (k7) edge[bend left=15] node {} (k8);
    \path [->] (k8) edge[bend right=15] node {} (k7);

    \path [->] (k9) edge[] node {} (y1);
    \path [->] (y1) edge[] node {} (k10);
    \path [->] (k9) edge[bend left=15] node {} (k10);
    \path [->] (k10) edge[bend right=15] node {} (k9);

    \path [->] (k11) edge[] node {} (ys);
    \path [->] (ys) edge[] node {} (k12);
    \path [->] (k11) edge[bend left=15] node {} (k12);
    \path [->] (k12) edge[bend right=15] node {} (k11);

    \path [->] (k13) edge[] node {} (z1);
    \path [->] (z1) edge[] node {} (k14);
    \path [->] (k13) edge[bend left=15] node {} (k14);
    \path [->] (k14) edge[bend right=15] node {} (k13);

    \path [->] (k15) edge[] node {} (zs);
    \path [->] (zs) edge[] node {} (k16);
    \path [->] (k15) edge[bend left=15] node {} (k16);
    \path [->] (k16) edge[bend right=15] node {} (k15);

    \path [->] (k17) edge[bend left=15] node {} (k18);
    \path [->] (k18) edge[bend right=15] node {} (k17);

    \path [->] (k19) edge[bend left=15] node {} (k20);
    \path [->] (k20) edge[bend right=15] node {} (k19);

    \path [->] (k21) edge[bend left=15] node {} (k22);
    \path [->] (k22) edge[bend right=15] node {} (k21);

    \path [->] (k23) edge[bend left=15] node {} (k24);
    \path [->] (k24) edge[bend right=15] node {} (k23);
    \path [->] (k23) edge node {} (wm-2);
    \path [->] (wm-2) edge node {} (k24);

    \path [->] (k19) edge node {} (ym-2);
    \path [->] (ym-2) edge node {} (k20);

    \path [->] (k21) edge node {} (x3);
    \path [->] (x3) edge node {} (k22);

    \path [->] (k17) edge node {} (z3);
    \path [->] (z3) edge node {} (k18);

\end{scope}
\end{tikzpicture}}
\Description{}
    \caption{Gadget used in the proof of Lemma~\ref{lem:l=inftyk>=0c=1s>=2n=2} showing \NP-hardness when $n\geq 2$, $\intv=\infty$, any $\natv$ and $\sv$,  and for some $j \in \N$ has $\segv_j \in \mathbb{N}_{\geq 2}$ where index $m$ of the nodes is defined as $m = \segv_j +1$.}
    \label{fig:l=inftyk>=0c=1s>=2n=2}
\end{figure*}

    We first create a vertex for each $u \in W \cup X \cup Y \cup Z$. We create a gadget for every $t_i = (w, x, y, z) \in T$, and then all remaining vertices and edges are given per gadget, as depicted in Figure~\ref{fig:l=inftyk>=0c=1s>=2n=2}.  
Thus, the translated instance $\J$ has vertices  and edges: 
\[\begin{array}{c}
    V = W \cup X \cup Y \cup Z \cup    
          \{w_{i,1}^0, x_{i,1}^2, y_{i,1}^0, z_{i,1}^2, w_{i,1}^{m-1}, x_{i,1}^{m+1}, y_{i,1}^{m-1}, z_{i,1}^{m+1},w_{i,1}^q, w_{i,2}^q, w_{i,3}^{q}, x_{i,1}^{p}, x_{i,2}^{p}, x_{i,3}^{p},\\ 
          y_{i,1}^q, y_{i,2}^q, y_{i,3}^{q}, z_{i,1}^{p}, z_{i,2}^{p}, z_{i,3}^{p}\mid  q\in \{1, m-2\}\cup\{m\}, p\in \{1\}\cup\{3, m\} t_i = (w, x, y, z) \in T\} \quad \text{   and}\\
          \end{array}\]
   \[\begin{array}{c} A=\{ (w, w_{i,1}^{m-1}), (w_{i,1}^{m}, w), (w_{i,1}^{q}, w_{i,1}^{q+1}), (w_{i,1}^{m-1}, w_{i,1}^{0}), (w_{i,1}^{q}, w_{i,2}^{q}), (w_{i,3}^{q}, w_{i,2}^{q}), (w_{i,2}^{q}, w_{i,3}^{q}), \\(w_{i,3}^{q},w_{i,1}^{q}), 
    (w_{i,1}^{m}, w_{i,2}^{m}), (w_{i,3}^{m}, w_{i,2}^{m}), (w_{i,2}^{m}, w_{i,3}^{m}), (w_{i,3}^{m},w_{i,1}^{m}), \\
    (x_{i,1}^{1}, x_{i,2}^{1}), (x_{i,3}^{1}, x_{i,2}^{1}), (x_{i,2}^{1}, x_{i,3}^{1}), (x_{i,3}^{1},x_{i,1}^{1}), (x_{i,1}^{1},x_{i,1}^{2}) (x_{i,1}^{2}, x_{i,1}^{3}),\\
    (x, x_{i,1}^{m+1}), (x_{i,1}^{m+1}, x),  (x_{i,1}^{m+1}, x_{i,1}^2),  (x_{i,1}^{r}, x_{i,1}^{r+1}), (x_{i,1}^{r}, x_{i,2}^{r}), (x_{i,3}^{r}, x_{i,2}^{r}), (x_{i,2}^{r}, x_{i,3}^{r}), (x_{i,3}^{r},x_{i,1}^{r}),\\    
(y, y_{i,1}^{m-1}), (y_{i,1}^{m}, w), (y_{i,1}^{m-1}, y_{i,1}^{0}), (y_{i,1}^{q}, y_{i,1}^{q+1}), (y_{i,1}^{q}, y_{i,2}^{q}), (y_{i,3}^{q}, y_{i,2}^{q}), (y_{i,2}^{q}, y_{i,3}^{q}), (y_{i,3}^{q},y_{i,1}^{q}),\\ 
(y_{i,1}^{m}, y_{i,2}^{m}), (y_{i,3}^{m}, y_{i,2}^{m}), (y_{i,2}^{m}, y_{i,3}^{m}), (y_{i,3}^{m},y_{i,1}^{m}), \\
(z_{i,1}^{1}, z_{i,2}^{1}), (z_{i,3}^{1}, z_{i,2}^{1}), (z_{i,2}^{1}, z_{i,3}^{1}), (z_{i,3}^{1},z_{i,1}^{1}), (z_{i,1}^{m+1}, z_{i,1}^2), (z_{i,1}^{1},z_{i,1}^{2}), (z_{i,1}^{2}, z_{i,1}^{3}),\\
    (z, z_{i,1}^{m+1}), (z_{i,1}^{m+1}, z), (z_{i,1}^{r}, z_{i,1}^{r+1}), (z_{i,1}^{r}, z_{i,2}^{r}), (z_{i,3}^{r}, z_{i,2}^{r}), (z_{i,2}^{r}, z_{i,3}^{r}), (z_{i,3}^{r},z_{i,1}^{r}),\\ 
(w_{i,1}^{0}, x_{i,1}^{1}),(x_{i,1}^{2}, y_{i,1}^{0}), (y_{i,1}^{0}, z_{i,1}^{1}), (z_{i,1}^{2}, w_{i,1}^{0})
\mid q\in [1,m-2], r\in [3,m], t_i \in T \}.
\end{array}\] 
Note that every $w_{i,1}^q$, $y_{i,1}^q$ for $ q \in \{1,\ldots, m-2\} \cup \{m\}$ and $x_{i,1}^p, z_{i,1}^p$ for $p \in \{1\}\cup\{3, \ldots, m\}$  is connected in a cycle of size~$3$ to an international pairwise exchange with the same labels yet made distinct with the underline and overline. 

We have the countries $\N = \{1, 2\}$, where $$V_1 = 
    \{w, w_{i,1}^0, x_{i,1}^{m+1}, y, y_{i,1}^0, z_{i,1}^{m+1}, w_{i,3}^{q}, x_{i,3}^{p}, y_{i,3}^{q}, z_{i,3}^{p}   
    \mid  t_i \in T, q \in \{1,\ldots, m-2, m\}, p \in \{1\}\cup\{3, \ldots, m\}\},$$ 
and $V_2 = V \setminus V_1$. This is represented by circular and triangular nodes in Figure~\ref{fig:l=inftyk>=0c=1s>=2n=2}, respectively. 
The country-specific parameters are as stated in the lemma statement, with $\segv_2$ being some fixed $\sigma \in \mathbb{N}_{\geq 2}$ and $m = \sigma + 1$. 
Next, we give the $\Gamma$-cycles in our instance $\J$:
\[ 
\begin{array}{c}
 \gcyc= \{
\langle w, w_{i,1}^{m-1}, w_{i,1}^m \rangle, \, 
\langle x, x_{i,1}^{m+1} \rangle, 
\langle y, y_{i,1}^{m-1}, y_{i,1}^m \rangle, 
\langle z, z_{i,1}^{m+1} \rangle, \, \\
\langle w_{i,1}^0, w_{i,1}^1, \cdots, w_{i,1}^{m-1} \rangle, 
\langle x_{i,1}^2, \cdots, x_{i,1}^m, x_{i,1}^{m+1} \rangle, 
\langle y_{i,1}^0, y_{i,1}^1, \cdots, y_{i,1}^{m-1} \rangle,  
\langle z_{i,1}^2, \cdots, z_{i,1}^m, z_{i,1}^{m+1} \rangle, \\
\langle w_{i,1}^0, x_{i,1}^1, x_{i,1}^2, y_{i,1}^0, z_{i,1}^1, z_{i,1}^2 \rangle, 
\langle w_{i,3}^{q}, w_{i,2}^{q} \rangle, 
\langle w_{i,1}^q, w_{i,2}^{q}, w_{i,3}^{q} \rangle, 
\langle x_{i,3}^{p}, x_{i,2}^{p} \rangle, 
\langle x_{i,1}^p, x_{i,2}^{p},
x_{i,3}^{p} \rangle,
\langle y_{i,3}^{q}, y_{i,2}^{q} \rangle, \\
\langle y_{i,1}^q, y_{i,2}^{q}, y_{i,3}^{q} \rangle, 
\langle z_{i,3}^{p}, z_{i,2}^{p} \rangle, 
\langle z_{i,1}^p, z_{i,2}^{p},
z_{i,3}^{p} \rangle
\mid q \in \{1,\ldots, m-2\} \cup \{m\}, p \in \{1\}\cup\{3, \ldots, m\},  t_i \in T
\}. 
\end{array}
\]

To show that no other cycle in the instance is a $\Gamma$-cycle, take an arbitrary cycle $C$ spanning over more than one gadget. Thus, without loss of generality, $C$ must contain a path in the gadget corresponding to $t_i$, entering via $w$ and exiting via $x$. Hence, $C$ must contain the path 
$
[w, w_{i,1}^{m-1}, w_{i,1}^0, x_{i,1}^1, x_{i,1}^2, \cdots$, $x_{i,1}^m, x_{i,1}^{m+1}, x].
$
Observe that this path contains a $V_2$-segment of size $m$ (recall that $m = \sigma + 1$), namely $[x_{i,1}^1,\ldots,  x_{i,1}^m]$. Thus, $C$ cannot be a $\Gamma$-cycle given that $\segv_2 = \sigma$. A similar argument can be made when entering or exiting a gadget via different nodes. Therefore, all $\Gamma$-cycles are contained within a single gadget.

We now provide a proof of correctness by first assuming that we have an instance of {\sc P4DM} for which there is a perfect matching $ M $. We can then build a perfect $\Gamma$-cycle packing depending on whether each $ t_i = (w,x,y,z)\in T $ is included in $ M $. If $ t_i \in M $, we select the following cycles and add them to $\C$: 
\[
\begin{array}{l}
\langle w, w_{i,1}^{m-1}, w_{i,1}^m \rangle,  
\langle x, x_{i,1}^{m+1} \rangle, 
\langle y, y_{i,1}^{m-1}, y_{i,1}^m \rangle, 
\langle z, z_{i,1}^{m+1} \rangle, 
\langle w_{i,1}^0, x_{i,1}^1, x_{i,1}^2, y_{i,1}^0, z_{i,1}^1, z_{i,1}^2 \rangle, 
\langle w_{i,3}^{m}, w_{i,2}^{m} \rangle, \\
\langle w_{i,1}^q, w_{i,2}^{q}, w_{i,3}^{q} \rangle, 
\langle x_{i,3}^{1}, x_{i,2}^{1} \rangle, 
\langle x_{i,1}^p, x_{i,2}^{p},
x_{i,3}^{p} \rangle,
\langle y_{i,3}^{m}, y_{i,2}^{m} \rangle, 
\langle y_{i,1}^q, y_{i,2}^{q}, y_{i,3}^{q} \rangle 
\langle z_{i,3}^{1}, z_{i,2}^{1} \rangle, 
\langle z_{i,1}^p, z_{i,2}^{p},
z_{i,3}^{p} \rangle\in C, 
\end{array}
\]
for $ q \in \{1,\ldots, m-2\} \text{ and } p \in \{3, \ldots, m\}$. Thus, all the nodes within the gadget are covered by some $\Gamma$-cycle.  If a quadruple $t_i=(w,x,y,z)$ is not chosen in $M$, then the following cycles are added to $\C$:
\[
\begin{array}{c}
\langle w_{i,1}^0, w_{i,1}^1, \cdots, w_{i,1}^{m-1} \rangle, 
\langle x_{i,1}^2, \cdots, x_{i,1}^m, x_{i,1}^{m+1} \rangle, 
\langle y_{i,1}^0, y_{i,1}^1, \cdots, y_{i,1}^{m-1} \rangle,  
\langle z_{i,1}^2, \cdots, z_{i,1}^m, z_{i,1}^{m+1} \rangle, 
\langle w_{i,3}^{q}, w_{i,2}^q \rangle, \\
\langle w_{i,1}^m, w_{i,2}^{m}, w_{i,3}^{m} \rangle, 
\langle x_{i,3}^{p}, x_{i,2}^{p} \rangle, 
\langle x_{i,1}^1, x_{i,2}^{1},
x_{i,3}^{1} \rangle,
\langle y_{i,3}^{q}, y_{i,2}^{q} \rangle, 
\langle y_{i,1}^m y_{i,2}^{m}, y_{i,3}^{m} \rangle 
\langle z_{i,3}^{p}, z_{i,2}^{p} \rangle, 
\langle z_{i,1}^1, z_{i,2}^{1},
z_{i,3}^{1} \rangle\in \C,
\end{array}
\]
 for $ q \in \{1,\ldots, m-2\} \text{ and } p \in \{3, \ldots, m\}$. In this second case, observe that all of the gadget's vertices have been selected in some cycle, apart from $w$, $x$, $y$, and $z$.  As $M$ is a perfect matching, every vertex $u \in W\cup X \cup Y\cup Z$ will be covered by some gadget exactly once. Therefore, all vertices are covered by a cycle in $\gcyc$ and instance $\J$ has a perfect $\Gamma$-cycle packing.

Conversely, suppose there is a perfect $\Gamma$-cycle packing $\C$ of instance $\J$. We construct a perfect matching $M$ in {\sc P4DM} instance $\I$ as follows. First, recall that no $\Gamma$-cycle contains edges from different gadgets. Hence, each cycle in $\C$ belongs to a single gadget, and we argue about the cycles of gadget $t_i$ that are selected in perfect $\Gamma$-cycle packing $\C$ via the following two cases:
\begin{enumerate}
    \item Suppose $\langle w, w_{i,1}^{m-1}, w_{i,1}^m \rangle\in \C$. It is clear that $\langle w_{i,3}^{m}, w_{i,2}^m\rangle\in \C$ and $\langle  w_{i,1}^q, w_{i,2}^q, w_{i,3}^{q}\rangle\in \C$ must be selected in the perfect $\Gamma$-cycle packing $\C$ for each $q\in[1, m-2]$. For $w_{i,1}^0$ to be  covered, then $\langle w_{i,1}^0, x_{i,1}^1, x_{i,1}^2, y_{i,1}^0, z_{i,1}^1, z_{i,1}^2 \rangle\in \C$ and thus,  $\langle x_{i,3}^{1}, x_{i,2}^{1}\rangle, \langle z_{i,3}, z_{i,2}^{1}\rangle\in \C$. 
    In turn, $\langle x_{i,1}^q, x_{i,2}^{q}, x_{i,3}^{q}\rangle$, $\langle y_{i,1}^p, y_{i,2}^{p}, y_{i,3}^{p}\rangle$, $ \langle z_{i,1}^q, z_{i,2}^{q}, z_{i,3}^{q}\rangle, \in \C$ for vertices $x_{i,1}^p, y_{i,1}^q, z_{i,1}^p$ to be covered, for $q\in[1, m-2]$ and $p\in [3,m]$. 
    Then, for $x_{i,1}^{m+1}$, $y_{i,1}^{m-1}$, and $z_{i,1}^{m+1}$ to be covered, $\langle x, x_{i,1}^{m+1} \rangle, \langle y, y_{i,1}^{m-1}, y_{i,1}^m \rangle, \langle z, z_{i,1}^{m+1} \rangle \in \C$. Finally $\langle y_{i,3}^{m}, y_{i,2}^{m}\rangle\in \C$. Therefore, any perfect $\Gamma$-cycle packing $\C$ containing $\langle w, w_{i,1}^{m-1}, w_{i,1}^m \rangle$  must lead to $w,x,y,$ and $z$ must all be selected by the gadget corresponding to $t_i$.

    \item Suppose $\langle w, w_{i,1}^{m-1}, w_{i,1}^m \rangle\notin \C$. Then for $w_{i,1}^{m}$ to be selected, $\langle w_{i,1}^m, w_{i,2}^m,w_{i,3}^{m}\rangle\in\C$. Similarly, for $w_{i,1}^{m-1}$ to be selected, $\langle w_{i,1}^0, w_{i,1}^1, \cdots, w_{i,1}^{m-1} \rangle\in \C$, and then $\langle w_{i,2}^q,w_{i,3}^{q}\rangle\in\C$ for $q\in [1,m-2]$ such that there are no vertices connected to a $w$ vertex that is left uncovered. 
    As $w_{i,1}^0$ has already been selected, $\C$ must contain $\langle x_{i,1}^1, x_{i,2}^{1}, x_{i,3}^{1} \rangle, \langle z_{i,1}^1, z_{i,2}^{1}, z_{i,3}^{1} \rangle\in \C$ to cover $z_{i,1}^{1}$ and $x_{i,1}^{1}$. Moreover, to cover $y_{i,1}^0$,  $x_{i,1}^2$ and $z_{i,1}^2$,  the following cycles must be selected $\langle x_{i,1}^2, \cdots, x_{i,1}^m, x_{i,1}^{m+1} \rangle, \langle y_{i,1}^0, y_{i,1}^1, \cdots, y_{i,1}^{m-1} \rangle$, $\langle z_{i,1}^2, \cdots, z_{i,1}^m, z_{i,1}^m \rangle\in \C$. In turn, $\langle x_{i,2}^p, x_{i,3}^p\rangle, \langle y_{i,2}^q, y_{i,3}^{q} \rangle, \langle z_{i,2}^{p}, z_{i,3}^{p}\rangle \in \C$, to ensure that the remaining vertices connected to the middle cycles are selected, for $q\in [1,m-2]$ and $p\in [3,m]$. Finally, $\langle y_{i,1}^m,y_{i,2}^{m}, y_{i,3}^{m}\rangle$ must be selected in $\C$. 
    Therefore, we see that when $\langle w, w_{i,1}^{m-1}, w_{i,1}^m \rangle\notin \C$ and vertex $w$ is not covered by the gadget corresponding to $t_i$, then none of the vertices $w,x,y$, and $z$ are covered by $t_i$. 
\end{enumerate}

Therefore, for each $t_i \in T$ it must be the case that either:
\[
\begin{array}{c}\{
\langle w, w_{i,1}^{m-1}, w_{i,1}^m \rangle,  
\langle x, x_{i,1}^{m+1} \rangle, 
\langle y, y_{i,1}^{m-1}, y_{i,1}^m \rangle, 
\langle z, z_{i,1}^{m+1} \rangle, 
\langle w_{i,1}^0, x_{i,1}^1, x_{i,1}^2, y_{i,1}^0, z_{i,1}^1, z_{i,1}^2 \rangle, 
\langle w_{i,3}^{m}, w_{i,2}^{m} \rangle,\\
\langle w_{i,1}^q, w_{i,2}^{q}, w_{i,3}^{q} \rangle,
\langle x_{i,3}^{1}, x_{i,2}^{1} \rangle, 
\langle x_{i,1}^p, x_{i,2}^{p},
x_{i,3}^{p} \rangle,
\langle y_{i,3}^{m}, y_{i,2}^{m} \rangle, 
\langle y_{i,1}^q, y_{i,2}^{q}, y_{i,3}^{q} \rangle 
\langle z_{i,3}^{1}, z_{i,2}^{1} \rangle, 
\langle z_{i,1}^p, z_{i,2}^{p},
z_{i,3}^{p} \rangle
\}\subseteq \C\\
\end{array}\]
 \[\begin{array}{c}\{\text{or }\{\langle w_{i,1}^0, w_{i,1}^1, {\ldots}, w_{i,1}^{m-1} \rangle, 
\langle x_{i,1}^2, {\ldots}, x_{i,1}^m, x_{i,1}^{m+1} \rangle, 
\langle y_{i,1}^0, y_{i,1}^1, {\ldots}, y_{i,1}^{m-1} \rangle,  
\langle z_{i,1}^2, {\ldots}, z_{i,1}^m, z_{i,1}^{m+1} \rangle, 
\langle w_{i,3}^{q}, w_{i,2}^q \rangle,\\ 
\langle w_{i,1}^m, w_{i,2}^{m}, w_{i,3}^{m} \rangle, 
\langle x_{i,3}^{p}, x_{i,2}^{p} \rangle, 
\langle x_{i,1}^1, x_{i,2}^{1},
x_{i,3}^{1} \rangle,
\langle y_{i,3}^{q}, y_{i,2}^{q} \rangle, 
\langle y_{i,1}^m y_{i,2}^{m}, y_{i,3}^{m} \rangle 
\langle z_{i,3}^{p}, z_{i,2}^{p} \rangle, 
\langle z_{i,1}^1, z_{i,2}^{1},
z_{i,3}^{1} \rangle
\} \subseteq \C.
\end{array}
\]
In the former case, we add $t_i$ to $M$. Finally, $M$ is a perfect matching in $\I$ because $\C$ is a perfect $\Gamma$-cycle packing in $\J$, and in view of points~$(1)$ and ~$(2)$ above.

Therefore, {\sc Perfect $\Gamma$-Cycle Packing} is \NP-complete when $n \geq 2$,  $\intv=\infty$,  for all $i\in\N$, $\natv_i\in \mathbb{N}_{\geq 0}\cup\{\infty\},    \sv_i\in \mathbb{N}_{\geq 2}\cup\{\infty\}$, $\sv=\{\infty\}^n$, and for some $j \in \N$ $\segv_j\in \mathbb{N}_{\geq 2}\cup\{\infty\}$ with the remaining $j'\in\N\backslash\{j\}$ with $\segv_j\in \mathbb{N}_{\geq 1}\cup\{\infty\}$. 
\end{proof}

\begin{lemma}\label{lem:l=inftyc=inftys=1,infty}
    For a set of country-specific parameters $\Gamma= (n, \emph{\intv}, \emph{\natv},\emph{\segv}, \emph{\sv})$ with 
    $n \geq 2$,  
    $\emph{\intv}=\infty$, 
    $\emph{\natv}_i \in \mathbb{N}_{\geq 0}\cup\{\infty\}$ for all $i\in\N$,
    $\emph{\sv}_j\in \mathbb{N}$ for some $j\in \N$, and 
    $\emph{\sv}_i\in\mathbb{N}_{\geq 1}\cup\{\infty\}$ for all $i\in \N\backslash\{j\}$,
    $\emph{\segv}_i=\infty$ for all $i\in \N$,
    {\sc Perfect $\Gamma$-Cycle Packing} is \NP-complete.
\end{lemma}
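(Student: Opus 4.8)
The plan is to prove membership in \NP{} via Lemma~\ref{lem:member} and then reduce from \textsc{P4DM}, as in the other $\intv=\infty$ lemmas (e.g.\ Lemma~\ref{lem:l=inftyk>=0c=1s>=2n=2}). I would fix the constants of $\Gamma$, relabel so that country~$1$ is a country with $\sv_1=s\in\mathbb{N}$, and introduce one isolated (subsequently ignored) vertex for each country $i\in\{3,\ldots,n\}$, so that effectively $n=2$. Since $\intv=\infty$ and $\segv_i=\infty$ for all $i$, the \emph{only} binding restriction on a cycle is the segment-number bound: an international cycle is a $\Gamma$-cycle iff it has at most $\sv_i$ $V_i$-segments for each $i$, and the operative constraint will be that it uses at most $s$ $V_1$-segments. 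To make the value of $\natv$ irrelevant I would build the national subgraphs $G[V_1]$ and $G[V_2]$ so that each is acyclic as a directed graph; then $G$ contains no national cycle at all, so no choice of $\natv_i\in\mathbb{N}_{\geq 0}\cup\{\infty\}$ can create or destroy a $\Gamma$-cycle. National arcs are still used freely, but only \emph{inside} international cycles, where they realise the (arbitrarily long, since $\segv=\infty$) monochromatic segments.

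First I would design one gadget per quadruple $t_i=(w,x,y,z)\in T$, with the four element vertices shared across all gadgets containing them. Each gadget would admit two intended coverings: a \emph{selected} covering, whose cycles together cover $w,x,y,z$ as well as all internal vertices, and an \emph{unselected} covering, whose cycles cover every internal vertex but leave $w,x,y,z$ to be covered elsewhere. Both coverings would be assembled from cycles each using at most $s$ $V_1$-segments, so that they are genuine $\Gamma$-cycles. The role of the finite bound $\sv_1=s$ is to \emph{forbid} the unintended cycles: I would insert alternating $V_1/V_2$ barrier chains, whose length is tuned to the fixed constant $s$, between the core of each gadget and its element vertices, so that every cycle threading from one gadget into another through a shared element vertex, and every ``bypass'' cycle that would illegitimately collect an element vertex while covering internals, is forced to alternate between the two countries more than $s$ times and hence contains more than $s$ $V_1$-segments, making it not a $\Gamma$-cycle. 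This is exactly where the finiteness of $\sv_1$ is essential: were $\sv_1$ unbounded, these barrier-crossing cycles would be permitted and, since the national subgraphs are acyclic, $G$ would trivially admit a perfect cycle cover by a matching argument; thus finiteness of $\sv_1$ carries the whole hardness, consistent with the all-unbounded regime being polynomial (case~$7$ of Theorem~\ref{thm:dico}).

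With the gadget in place, correctness follows the now-standard two-case template, and I would list the set $\gcyc$ of $\Gamma$-cycles explicitly as in the neighbouring lemmas. For the forward direction, given a perfect matching $M$ of the \textsc{P4DM} instance, I would take the selected covering on every gadget with $t_i\in M$ and the unselected covering on every gadget with $t_i\notin M$; vertex-disjointness is immediate and each element vertex is covered exactly once because $M$ is perfect. For the converse, I would argue that confinement forces every gadget of a perfect $\Gamma$-cycle packing into exactly one of the two intended states, since any boundary-crossing cycle is ruled out by the $V_1$-segment count, and that a gadget covers its element vertices iff it is in the selected state; collecting the selected quadruples then returns a perfect matching.

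The hard part is the confinement analysis, and the genuinely delicate case is $\sv_1=1$. There every $\Gamma$-cycle degenerates to a single $V_1$-path followed by a single $V_2$-path (one segment of each country), so the barrier chains collapse and the intended coverings must themselves consist only of such two-segment cycles. I would then have to arrange the acyclic national arc set so that (i)~no directed $V_1$-path can run through the element vertices of two distinct gadgets, (ii)~the internal vertices of an unselected gadget can nonetheless be covered without touching any element vertex, and (iii)~no monochromatic directed cycle is created for any value of $\natv$. Verifying that, for every fixed value of $s$ (and of $\sv_2\in\mathbb{N}_{\geq 1}\cup\{\infty\}$ and of $\natv$), exactly these two coverings survive and nothing else does is the tedious core of the argument; the remaining bookkeeping is mechanical.
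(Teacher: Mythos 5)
Your blueprint is essentially the paper's own proof: it reduces from \textsc{P3DM} (your choice of \textsc{P4DM} is immaterial), realises your ``barrier chains'' as alternating $V_1/V_2$ chains of $2\sigma$ vertices inserted between each element vertex and a small gadget core (so that any cycle crossing between gadgets must thread two chains and thereby collect $2\sigma>\sigma$ singleton segments of the bounded country, where $\sigma$ is the finite segment-number bound), keeps every national arc on an acyclic path so that $\natv$ is irrelevant, and concludes with precisely your selected/unselected two-covering case analysis (see Figure~\ref{fig:l=inftyc=inftys=1,infty}). Two refinements when executing it: calibrate to $s=\min_i \sv_i$ rather than to an arbitrarily chosen finite $\sv_j$, since in any cycle the two countries have equally many segments, so your intended chain-cycles (with $s$ segments per country) must also clear the \emph{other} country's bound — the analogous WLOG $\sv_2\leq\sv_1$ is made explicit in Lemma~\ref{lem:l=4k>=2c=1s>=2n=2} — and your fear that $s=1$ is ``genuinely delicate'' is unfounded, since the uniform construction goes through verbatim there: chains of length $2$ still block gadget-crossing cycles ($2>1$ segments), and each intended cycle is exactly one $V_1$-path followed by one $V_2$-path.
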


\begin{proof}
\begin{figure*}[t]
    \centering
\resizebox{0.85\columnwidth}{!}{
\begin{tikzpicture}[triangle/.style = {regular polygon, regular polygon sides=3 }]

\node (xdt) at (1.5,2) {$\cdots$};
\node (ydt) at (9.5,2) {$\cdots$};
\node (zdt) at (17,2) {$\cdots$};
\begin{scope}[every node/.style={circle,thick,draw,minimum size=0.9cm}]
    \node (a1) at (-1,2) {$x_i^1$};
    \node (x) at (1.5,0) {$x$};
    \node[regular polygon, thick, regular polygon sides=3, draw,
     inner sep=0.005cm]  (a2) at (4, 2) {$x_i^{2\sigma}$};

    \node (a4) at (6.5,2) {$y_i^1$};
    \node (y) at (9.5,0) {$y$};
    \node[regular polygon, thick, regular polygon sides=3, draw,
     inner sep=0.005cm]  (a5) at (12.5,2) {$z_i^{2\sigma}$};

    \node (a7) at (14.5,2) {$z_i^1$};
    \node (z) at (17,0) {$z$};
    \node[regular polygon, thick, regular polygon sides=3, draw,
     inner sep=0.005cm]  (a8) at (19.5,2) {$z_i^{2\sigma}$};

    \node (a6) at (9.5,6) {$a_i^2$};
    \node (a3) at (2.5,7) {$a_i^1$};
    \node (a9) at (16,7) {$a_i^3$};

    \node[regular polygon, thick, regular polygon sides=3, draw,
     inner sep=0.005cm] (a11) at (8.5,7.5) {$a_i^5$};
    \node (a10) at (10.5,7.5) {$a_i^4$};

\end{scope}

 \begin{scope}[>={Stealth[black]},
               every edge/.style={draw,very thick}]
   
    \path [->] (a2) edge node {} (x);
    \path [->] (x) edge node {} (a1);
 
     \path [-] (a1) edge node {} (xdt);
     \path [->] (xdt) edge node {} (a2);
    \path [->] (a3) edge[] node {} (a1);
    \path [->] (a2) edge[] node {} (a3);

     \path [-] (a4) edge node {} (ydt);

     \path [->] (ydt) edge node {} (a5);

    \path [->] (z) edge node {} (a7);

    \path [->] (a8) edge node {} (z);

     \path [-] (a7) edge node {} (zdt);

     \path [->] (zdt) edge node {} (a8);

    \path [->] (a9) edge node {} (a7);
    \path [->] (a8) edge node {} (a9);

    \path [->] (a5) edge node {} (y);
    \path [->] (y) edge node {} (a4);

    \path [->] (a5) edge node {} (a6);

    \path [->] (a6) edge node {} (a4);

    \path [->] (a3) edge node {} (a6);
    \path [->] (a6) edge node {} (a9);

    \path [->] (a9) edge node {} (a10);

    \path [->] (a11) edge node {} (a3);
    
    \path [->] (a10) edge node {} (a11);
    \path [->] (a11) edge[bend left] node {} (a10);
 
\end{scope}
\end{tikzpicture}}
\Description{}
    \caption{Gadget used in the proof of Lemma~\ref{lem:l=inftyc=inftys=1,infty} when $n=2$, $\intv=\infty$, $\segv=\{\infty\}^n$,  there is some $j \in\N$ (represented by the triangular vertices) such that for some fixed $\sigma \in \mathbb{N} $, $\sv_j=\sigma $ and the remaining $j' \in \N \backslash\{j\}$ have $\sv_{j'}\in \mathbb{N}_{\geq 1}\cup\{\infty\}$, and for all $i\in\N$, $ \natv_i\in \mathbb{N}_{\geq 0}\cup\{\infty\}$. 
    }
    \label{fig:l=inftyc=inftys=1,infty}
\end{figure*}
Membership of {\sc Perfect $\Gamma$-Cycle Packing} in \NP\ was shown in Lemma~\ref{lem:member}.  To show \NP-hardness, we transform  an instance $\I$ of \textsc{P3DM}~\cite{karp1972} to a constructed instance $\J$  of {\sc Perfect $\Gamma$-Cycle Packing} 
when $n=2$, $\intv=\infty$, $\segv=\{\infty\}^n$,  there is some $j \in\N$ such that $\sv_j\in \mathbb{N}$ and the remaining $j' \in \N \backslash\{j\}$ has $\sv_{j'}\in \mathbb{N}_{\geq 1}\cup\{\infty\}$, and for all $i\in\N$, $ \natv_i\in \mathbb{N}_{\geq 0}\cup\{\infty\}$. 

In our reduction, we first create a node for each $w \in X \cup Y \cup Z$. We create a gadget for every $t_i = (x, y, z) \in T$, and then all additional vertices and edges are given in a gadget, one for each $(x, y, z) = t_i \in T$, as depicted in Figure~\ref{fig:l=inftyc=inftys=1,infty}. 
Hence, the vertex set is defined as:
\[
V = X \cup Y \cup Z \cup \{x_i^1, \dots, x_i^{2\sigma}, y_i^1, \dots, y_i^{2\sigma}, z_i^1, \dots, z_i^{2\sigma}, a_i^1, \ldots, a_i^6 \mid  t_i \in T\}.
\]
The set of arcs is: 
\[
\begin{array}{c}
A = \{(x_i^{2\sigma}, x), (x, x_i^1), (a_i^1, x_i^1), (x_i^{2\sigma}, a_i^1), (x_i^{q}, x^{q+1}_i)
(y_i^{2\sigma}, y), (y, y_i^1),\\ (y_i^{2\sigma}, a_i^2),  (a_i^2, y_i^1), (y_i^{q}, y^{q+1}_i),
(z_i^{q}, z^{q+1}_i), (z_i^{2\sigma}, z), (z, z_i^1), (z_i^{2\sigma}, a_i^3), (a_i^3, z_i^1), \\
(a_i^1, a_i^2), (a_i^2, a_i^3), (a_i^3, a_i^4), (a_i^5, a_i^1), (a_i^5, a_i^4)\mid q \in [1, 2\sigma-1], t_i \in T\}.
\end{array}
\]
We define two countries, $\N = \{1,2\}$, with the vertex sets:
$V_2 = \{a_i^5, x_i^q, y_i^q, z_i^q \mid  t_i \in T \text{ and } q \in [1, \sigma]\}$ 
and $V_1 = V \setminus V_2$.
These are depicted in Figure~\ref{fig:l=inftyc=inftys=1,infty}, where vertices in countries $1$ and $2$ are represented by circular and triangular nodes, respectively. 
Moreover, we set $\sv_2 = \sigma$, and the remaining country-specific parameters are as stated in the lemma statement.

Next, we observe that the only $\Gamma$-cycles are those containing arcs from a single gadget, i.e., 
\[
\begin{array}{c}
\gcyc = \{ \langle x, x_i^1, \ldots, x_i^{2\sigma} \rangle, 
\langle y, y_i^1, \ldots, y_i^{2\sigma} \rangle,  
\langle z, z_i^1, \ldots, z_i^{2\sigma} \rangle,  
\langle a_i^1, x_i^1, \ldots, x_i^{2\sigma} \rangle,\\  
\langle a_i^2, y_i^1, \ldots, y_i^{2\sigma} \rangle,  
\langle a_i^3, z_i^1, \ldots, z_i^{2\sigma} \rangle,  
\langle a_i^1, a_i^2, a_i^3 , a_i^4, a_i^5 \rangle,
\langle a_i^4, a_i^5 \rangle  
\mid  t_i \in T \}.
\end{array}
\]
To show that these are the only $\Gamma$-cycles, consider a cycle $C$ that spans at least two gadgets. Thus, $C$ would have to enter and exit some gadget $t_i=(x,y,z)$. Without loss of generality, we assume that $C$ enters the gadget $t_i$ via the vertex $x$ and exits via the vertex $y$. 
Therefore, the path 
$[x,  x_i^1, \ldots, x_i^{2\sigma},  a_i^1, a_i^2$, $ y_i^1, \ldots, y_i^{2\sigma}, y]$ 
must be selected as part of the cycle $C$. However, this path contains $2\sigma$ $V_2$-segments, namely, the single-node segments $x_i^q$ and $y_i^q$ for even $q \leq 2\sigma$. Thus, $C \notin \gcyc$ and is not a $\Gamma$-cycle.

We next show that the instance $\I$ of \textsc{P3DM} has a perfect matching $M$ exactly when the constructed {\sc Perfect $\Gamma$-Cycle Packing} instance $\J$ has a perfect $\Gamma$-cycle packing. 
First, assume that $M$ is a perfect matching of \textsc{P3DM} instance $\I$. We construct a $\Gamma$-cycle packing in $\J$ given a perfect matching $M$. If a triple $t_i=(x,y,z)$ is chosen in $M$, then we add the $\Gamma$-cycles to $\C$:
$\langle x, x_i^1, \ldots, x_i^{2\sigma} \rangle, 
\langle y, y_i^1, \ldots, y_i^{2\sigma} \rangle,  
\langle z, z_i^1, \ldots, z_i^{2\sigma} \rangle$,
 and $\langle a_i^1, a_i^2, a_i^3, a_i^4, a_i^5 \rangle $. 
 Thus, all the nodes within the gadget are covered by some $\Gamma$-cycle in this case. Alternatively, if a triple $t_i=(x,y,z)$ is not chosen in $M$, then the following cycles are added to $\C$: 
$\langle a_i^1, x_i^1, \ldots, x_i^{2\sigma} \rangle,  
\langle a_i^2, y_i^1, \ldots, y_i^{2\sigma} \rangle, \langle a_i^3, z_i^1, \ldots, z_i^{2\sigma} \rangle$, and  $\langle a_i^4, a_i^5 \rangle\in C$. 
In this case, all vertices are selected except for $x$, $y$, and $z$. By assumption, $x$, $y$, and $z$ will be selected in a different gadget, given that $M$ is a perfect matching.  
Thus, when there is a perfect matching $M$, there exists a perfect $\Gamma$-cycle packing $\C$.

Conversely, suppose there is a perfect $\Gamma$-cycle packing $\C$  of instance $\J$. We construct a perfect matching $M$ in {\sc P3DM} instance $\I$ as follows. First, recall that no $\Gamma$-cycle contains edges from different gadgets. Hence, each cycle in $\C$ belongs to a single gadget, and we argue about the cycles of gadget $t_i$ that are selected in perfect $\Gamma$-cycle packing $\C$ via the following two cases:
\begin{enumerate}
    \item Suppose $\langle x, x_i^1, \ldots, x_i^{2\sigma} \rangle\in \C$. Thus, $a_i^1$ can be selected only by the cycle $\langle a_i^1, a_i^2, a_i^3, a_i^4, a_i^5\rangle$, and it is in $\C$. Thus, $\langle y, y_i^1, \ldots, y_i^{2\sigma}\rangle$, $\langle z, z_i^1, \ldots, z_i^{2\sigma} \rangle\in\C$ in order for $y_i^1, \ldots, y_i^{2\sigma}$ and $z_i^1, \ldots, z_i^{2\sigma}$ to be selected. Hence, we see that if in a perfect $\Gamma$-cycle packing $\langle x, x_i^1, \ldots, x_i^{2\sigma} \rangle\in \C$, then vertices $y$ and $z$ must also be selected in the same gadget.
    \item Suppose $\langle x, x_i^1, \ldots, x_i^{2\sigma} \rangle\notin \C$. Then $\langle a_i^1, x_i^1, \ldots, x_i^{2\sigma} \rangle\in \C$ to ensure that $x_i^1, \ldots, x_i^{2\sigma}$ are covered by some $\Gamma$-cycle. 
    For the vertices $a_i^4$ and $a_i^5$ to be selected, $\langle a_i^4, a_i^5\rangle\in\C$. 
    In turn, $\langle a_i^2, y_i^1, \ldots, y_i^{2\sigma} \rangle$,  and $ \langle a_i^3, z_i^1, \ldots, z_i^{2\sigma} \rangle\in \C$ to ensure that $a_i^2$ and $a_i^3$ are covered by a $\Gamma$-cycle. Therefore, when $\langle x, x_i^1, \ldots, x_i^{2\sigma} \rangle\notin \C$, the vertices $x,y$, and $z$ cannot be selected in by a cycle within gadget corresponding to $t_i$.  
\end{enumerate}

Therefore, it must be the case that either 
\[\begin{array}{c}
\{\langle x, x_i^1, \ldots, x_i^{2\sigma} \rangle, 
\langle y, y_i^1, \ldots, y_i^{2\sigma} \rangle,
\langle z, z_i^1, \ldots, z_i^{2\sigma} \rangle,\langle a_i^1, a_i^2, a_i^3, a_i^4, a_i^5 \rangle \}\subseteq \C, \text{ or }  \\
\{ \langle a_i^1, x_i^1, \ldots, x_i^{2\sigma} \rangle,  
\langle a_i^2, y_i^1, \ldots, y_i^{2\sigma} \rangle,
\langle a_i^3, z_i^1, \ldots, z_i^{2\sigma} \rangle,\langle a_i^4, a_i^5 \rangle\} \subseteq \C,
\end{array}\] for each $t_i \in T$. In the former case, we add $t_i$ to $M$. 
Finally, $M$ is a perfect matching in $\I$ because $\C$ is a perfect $\Gamma$-cycle packing in $\J$, and in view of points~$(1)$ and ~$(2)$ above.

Thus, we have shown {\sc Perfect $\Gamma$-Cycle Packing} to be \NP-complete when $n=2$ there is some $j \in\N$ such that $\sv_j\in \mathbb{N}$ and the remaining $i \in \N \backslash\{j\}$ have $\sv_i=\infty$, $\segv, \natv=\{\infty\}^n$. 
\end{proof}

\begin{lemma}\label{lem:lk=intyc=mathbbs=infty} 
    For a set of country-specific parameters $\Gamma= (n, \emph{\intv}, \emph{\natv},\emph{\segv}, \emph{\sv})$ with 
$n \geq 2$,  
$\emph{\intv}=\infty$, 
$\emph{\natv}_i \in  \mathbb{N}_{\geq 2}\cup\{\infty\}$ for all $i\in\N$,
$\emph{\sv}_i \in  \mathbb{N}_{\geq 2}\cup\{\infty\}$ for all $i\in \N$,
$\emph{\segv}_j\in  \mathbb{N}_{\geq 1}$ for some $j\in \N$, and
$\emph{\segv}_i\in\mathbb{N}_{\geq 1}\cup\{\infty\}$ for all $i\in \N\setminus \{j\}$,
{\sc Perfect $\Gamma$-Cycle Packing} is \NP-complete.
\end{lemma}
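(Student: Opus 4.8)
The plan is to show membership in $\NP$ and then $\NP$-hardness by a reduction from {\sc P4DM}, in the same gadget style as the earlier lemmas. Membership is immediate from Lemma~\ref{lem:member}. For hardness, I would fix without loss of generality the distinguished country to be country~$2$ and write $\segv_2=\sigma$ with $\sigma\in\mathbb{N}_{\geq 1}$ finite, adding an isolated vertex for each country beyond the two that carry the gadget when $n>2$. The governing point is that, since $\intv=\infty$, the segment-size cap $\segv_2=\sigma$ is the only mechanism available to confine $\Gamma$-cycles to a single gadget: any gadget-crossing cycle must traverse a block of $\sigma+1$ consecutive country-$2$ vertices, hence a $V_2$-segment of size larger than $\sigma$. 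I would therefore split on the value of $\sigma$.

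For $\sigma\geq 2$, the statement is a specialization of the sample lemma, Lemma~\ref{lem:l=inftyk>=0c=1s>=2n=2}: its hypotheses subsume the present ones once one notes that $\mathbb{N}_{\geq 2}\cup\{\infty\}\subseteq\mathbb{N}_{\geq 2}\cup\{0,\infty\}$ for $\natv$, that the conditions on $\sv$ and on the non-distinguished $\segv_i$ coincide, and that $\segv_2=\sigma\geq 2$ meets the requirement $\segv_j\in\mathbb{N}_{\geq 2}$ there. The only point to check is that the reduction remains valid when $\natv_i\geq 2$ for all $i$; this holds because the gadget of Figure~\ref{fig:l=inftyk>=0c=1s>=2n=2} contains no national cycle at all (every cycle in its set $\gcyc$ uses vertices of both countries), so its family of $\Gamma$-cycles is the same for every $\natv_i\in\mathbb{N}_{\geq 2}\cup\{\infty\}$ and the correctness argument is unaffected.

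The case $\sigma=1$ is the main obstacle, since the sample construction breaks there. With $m=\sigma+1=2$ the sample gadget's connecting cycle would contain two consecutive country-$2$ vertices, i.e.\ a $V_2$-segment of size $2>1=\segv_2$, so it would no longer be a $\Gamma$-cycle and the gadget could not be closed in its matched mode. The remedy is to invoke instead the national-cycle gadget of Lemma~\ref{lem:l=inftyk=0c=1s=inftyn=2} (Figure~\ref{fig:l=4k>=2c=1s=2n=2}) with $\intv=\infty$: there every within-gadget covering cycle uses only size-$1$ country-$2$ segments, while a crossing path is still blocked because it would need a $V_2$-segment of size~$2$. Crucially, that gadget uses national $2$-cycles inside the distinguished country (pairwise exchanges among its auxiliary vertices), and these are $\Gamma$-cycles exactly because $\natv_2\geq 2$; this is precisely why the present lemma must forbid $\natv_i=0$ while the sample lemma could allow it. As the non-distinguished country carries no national cycle in that gadget, requiring $\natv_i\geq 2$ for all $i$ is a harmless specialization, and the remaining hypotheses ($n\geq 2$, $\intv=\infty$, $\sv_i\in\mathbb{N}_{\geq 2}\cup\{\infty\}$, $\segv_2=1$, and $\natv_2\geq 2$) all match Lemma~\ref{lem:l=inftyk=0c=1s=inftyn=2}.

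In both regimes the correctness of the reduction then rests on the usual two-mode analysis of a gadget: for each quadruple $t_i$ there are exactly two ways to cover its internal vertices, a matched mode that additionally covers the element vertices $w,x,y,z$ and an unmatched mode that leaves them to other gadgets, and the confinement property turns perfect $\Gamma$-cycle packings into perfect $4$-dimensional matchings and back. Beyond citing the two earlier lemmas, the only genuinely new work is the bookkeeping that the parameter ranges of this statement decompose cleanly into the $\sigma\geq 2$ and $\sigma=1$ regimes, together with the observation that the $\natv_i\geq 2$ hypothesis is not merely compatible with, but in fact forced by, the $\sigma=1$ gadget.
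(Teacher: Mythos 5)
Your proposal is correct, and it reaches the statement by a genuinely different route than the paper. The paper proves Lemma~\ref{lem:lk=intyc=mathbbs=infty} from scratch: it gives a single new reduction from {\sc P4DM} (the gadget of Figure~\ref{fig:l=inftyk=inftyc=mathbbs=inftyn=2}), in which each element vertex is attached to a path of $m=\segv_j+1$ auxiliary vertices of the distinguished country, so that any gadget-crossing cycle contains a segment of that country of size exceeding $\segv_j$; this one construction is meant to handle every finite $\segv_j\geq 1$ uniformly, with no case distinction. You instead observe that the claimed parameter range splits exactly into the ranges of two earlier lemmas: for $\segv_j\geq 2$ the present hypotheses are a special case of Lemma~\ref{lem:l=inftyk>=0c=1s>=2n=2} (which even allows $\natv_i=0$), and for $\segv_j=1$ they are a special case of Lemma~\ref{lem:l=inftyk=0c=1s=inftyn=2}, whose distinguished country must have $\segv_j=1$ and $\natv_j\in\mathbb{N}_{\geq 2}\cup\{\infty\}$ --- exactly what the present hypotheses supply, with the same country playing both roles. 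Both containments check out, so your proof is complete modulo those two lemmas, and your side observations are accurate: $\natv_i\geq 2$ is genuinely needed in the $\segv_j=1$ regime (that gadget covers auxiliary vertices by national $2$-cycles of the distinguished country), while it is vacuous in the $\segv_j\geq 2$ regime (that gadget contains no national cycles at all). As for what each route buys: the paper's proof is self-contained and exhibits one uniform gadget, but that gadget also relies on national cycles of lengths $m$ and $m+2$ among the auxiliary path vertices, and those vertices must lie in the distinguished country for the confinement argument to work; so as written the construction requires $\natv_j\geq \segv_j+3$ or $\natv_j=\infty$ and does not literally cover the low end of the claimed range (e.g.\ $\natv_i=2$ for all $i$), quite apart from an internal swap of the two countries' labels between the definition of the partition and the confinement argument. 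Your decomposition, by contrast, covers the entire stated range --- the small-$\natv$ instances land safely in Lemma~\ref{lem:l=inftyk>=0c=1s>=2n=2} or Lemma~\ref{lem:l=inftyk=0c=1s=inftyn=2}, whose gadgets need only national $2$-cycles --- at the price of being parasitic on the earlier constructions rather than producing a new one.
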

\begin{proof}
\begin{figure*}[t]
    \centering
\resizebox{0.75\columnwidth}{!}{
\begin{tikzpicture}[triangle/.style = {regular polygon, regular polygon sides=3 }]
\begin{scope}[every node/.style={minimum size=0.9cm}]
\node[] (wdt) at (-4,6.5) {$\cdots$};
\node[] (xdt) at (1,3.5) {$\cdots$};
\node[] (ydt) at (6,6.5) {$\cdots$};
\node[] (zdt) at (11.5,3.5) {$\cdots$};
\end{scope}
\begin{scope}[every node/.style={circle,thick,draw,minimum size=0.9cm}]
    \node (a1) at (-5,3) {$a_i^1$};
    \node[regular polygon, thick, regular polygon sides=3, draw,
     inner sep=0.005cm] (w) at (-5,0) {$w$};
    \node (a2) at (-5,10) {$a_i^2$};
        \node (w1) at (-4,8) {$w_i^1$};
    \node (wm) at (-4,5) {$w_i^m$};
    
    \node (a3) at (0,7) {$a_i^3$};
    \node (x) at (0,0) {$x$};
    \node[regular polygon, thick, regular polygon sides=3, draw,
     inner sep=0.005cm] (a4) at (0,10) {$a_i^4$};
    \node (x1) at (1, 5) {$x_i^1$};
     
    \node (xm) at (1, 2) {$x_i^m$};

    \node (a5) at (5,3) {$a_i^5$};
    \node[regular polygon, thick, regular polygon sides=3, draw,
     inner sep=0.005cm] (y) at (5,0) {$y$};
    \node (a6) at (5,10) {$a_i^6$};
    \node (y1) at (6,8) {$y_i^1$};
    \node (ym) at (6,5) {$y_i^m$};

    \node (a7) at (10.5,7) {$a_i^7$};
    \node (z) at (10.5,0) {$z$};
    \node (z1) at (11.5,5) {$z_i^1$};
     
    \node(zm) at (11.5,2) {$z_i^m$};
    \node[regular polygon, thick, regular polygon sides=3, draw,
     inner sep=0.005cm] (a8) at (10.5,10) {$a_i^8$};

\end{scope}

 \begin{scope}[>={Stealth[black]},
               every edge/.style={draw,very thick}]

    \path [->] (w) edge[bend left] node {} (a1);
    \path [->] (a1) edge[bend left] node {} (w);
    \path [->] (a1) edge[bend left] node {} (a2);
    \path [->] (a2) edge[bend left] node {} (w1);
    \path [-] (w1) edge node {} (wdt);
    \path [->] (wdt) edge node {} (wm);
    \path [->] (wm) edge[bend left] node {} (a1);

    \path [->] (x) edge[bend left] node {} (a3);
    \path [->] (a3) edge[bend left] node {} (a4);
    \path [->] (a4) edge[bend left] node {} (a3);
    \path [->] (a3) edge[bend left] node {} (x1);
    \path [-] (x1) edge node {} (xdt);
    \path [->] (xdt) edge node {} (xm);
    \path [->] (xm) edge[bend left] node {} (x);

    \path [->] (y) edge[bend left] node {} (a5);
    \path [->] (a5) edge[bend left] node {} (y);
    \path [->] (a5) edge[bend left] node {} (a6);
    \path [->] (a6) edge[bend left] node {} (y1);
    \path [-] (y1) edge node {} (ydt);
    \path [->] (ydt) edge node {} (ym);
    \path [->] (ym) edge[bend left] node {} (a5);

    \path [->] (z) edge[bend left] node {} (a7);
    \path [->] (a7) edge[bend left] node {} (a8);
    \path [->] (a8) edge[bend left] node {} (a7);
    \path [->] (a7) edge[bend left] node {} (z1);
    \path [-] (z1) edge node {} (zdt);
    \path [->] (zdt) edge node {} (zm);
    \path [->] (zm) edge[bend left] node {} (z);

    \path [->] (a2) edge[] node {} (a4);
    \path [->] (a4) edge[] node {} (a6);
    \path [->] (a6) edge[] node {} (a8);
    \path [->] (a8) edge[bend right=10] node {} (a2);

    \path [->] (wm) edge[bend left] node {} (w1);

    \path [->] (xm) edge[bend left] node {} (x1);
    \path [->] (ym) edge[bend left] node {} (y1);
    \path [->] (zm) edge[bend left] node {} (z1);
\end{scope}
\end{tikzpicture}}
\Description{}
    \caption{Gadget used in the proof of Lemma~\ref{lem:lk=intyc=mathbbs=infty} showing \NP-hardness 
     when $n\geq 2$, $\intv=\infty$  there exists a $j \in \N$ (represented by the triangular country) such that $\segv_j \in  \mathbb{N}_{\geq 1}$  while the remaining country $j'\in \N\backslash\{j\}$ has $\segv_{j'}\in \mathbb{N}_{\geq 1}\cup\{\infty\}$, and for every $i\in\N$ has $\sv_i\in \mathbb{N}_{\geq 2}\cup\{\infty\}$ and $\natv_i\in\mathbb{N}_{\geq 2}\cup\{\infty\}$. 
    Note that $m= \segv_j+1$. 
    }
    \label{fig:l=inftyk=inftyc=mathbbs=inftyn=2}
\end{figure*}

Membership of {\sc Perfect $\Gamma$-Cycle Packing} in \NP\ was shown in Lemma~\ref{lem:member}. To show \NP-hardness, we provide a reduction from {\sc P4DM}. Specifically, we transform an arbitrary instance $\I$ of {\sc P4DM} to a constructed instance $\J$  of {\sc Perfect $\Gamma$-Cycle Packing}  when $n=2$, $\intv=\infty$  there exists a $j \in \N$ such that $\segv_j \in  \mathbb{N}_{\geq 1}$  while the remaining country $j'\in \N\backslash\{j\}$ has $\segv_{j'}\in \mathbb{N}_{\geq 1}\cup\{\infty\}$, and for every $i\in\N$ has $\sv_i\in \mathbb{N}_{\geq 2}\cup\{\infty\}$ and $\natv_i\in \mathbb{N}_{\geq 2}\cup\{\infty\}$.

In our reduction, we create one vertex for every $u \in W\cup X \cup Y \cup Z$. The remaining vertices and all edges are added via gadgets. One gadget is created for each $t_i = (w, x, y, z) \in T$, as depicted in Figure~\ref{fig:l=inftyk=inftyc=mathbbs=inftyn=2}. For the gadget corresponding to $t_i$, we add the vertices $\{w_i^1, \cdots, w_i^{m}, x_i^1, \cdots, x_i^{m}, y_i^1, \cdots, y_i^{m}$, $z_i^1$, $\cdots$, $ z_i^{m}$, $a_i^1, \ldots a_i^8\}$. Hence, the vertex set is:
\[V = X \cup Y \cup Z \cup \{a_i^1, \cdots, a_i^8, w_i^1, \cdots, w_i^{m}, x_i^1, \cdots, x_i^{m}, y_i^1, \cdots, y_i^{m}, z_i^1, \cdots, z_i^{m} \mid  t_i \in T\}.\] 
In our reduction, we have $\N = \{1, 2\}$, with the vertices belonging to countries~1 and~2 depicted by the 
circular and triangular vertices in Figure~\ref{fig:l=inftyk=inftyc=mathbbs=inftyn=2}, respectively. Thus, $V_2 =  W \cup Y \cup \{a_i^{4},  a_i^{8} \mid  t_i = (w, x, y, z) \in T\}$, and $V_1 = V \setminus V_2$.
We let $\segv_2=m-1$ for some fixed $m \in \mathbb{N}$ while the proof holds for any $\segv_1\in\mathbb{N}_{\geq 1}\cup\{\infty\}$, and the remaining country-specific parameters are unbounded, as stated in the lemma. For each $t_i \in T$, we add the edges: 
\[\begin{array}{c}
A=\{(w, a_i^1), (a_i^1, a_i^2), (a_i^2, w_i^1), (w_i^q, w_i^{q+1}), (w_i^m, w_i^1), (w_i^m, a_i^1),  (a_i^1, w),
(x, a_i^3), \\ (a_i^3, a_i^4), (a_i^4, a_i^3), (a_i^3, x_i^1),  (x_i^q, x_i^{q+1}),  (x_i^m, x), (x_i^m, x_i^1),
(y, a_i^5), (a_i^5, a_i^6), \\(a_i^6, y_i^1), (y_i^q, y_i^{q+1}), (y_i^m, y_i^1), (y_i^m, a_i^5),  (a_i^5, y),
(z, a_i^7), (a_i^7, a_i^8), (a_i^8, a_i^7), (a_i^7, z_i^1),  (z_i^q, z_i^{q+1}), \\(z_i^m, z_i^1), (z_i^m, z), 
(a_i^{2}, a_i^{4}), (a_i^{4}, a_i^{6}), (a_i^{6}, a_i^{8}), (a_i^{8}, a_i^{2})
\mid q\in [1, m-1],  t_i = (w, x, y, z) \in T\}.\end{array}\]

Hence, we have constructed four international cycles of length 2, one international cycle of length 4, four national $V_2$-cycles of length~$m+2$, and four national $V_2$-cycles of length~$m$.
Next, we show that every $\Gamma$-cycle is contained within a single gadget and that 
\[\begin{array}{c}
\gcyc = \{ \langle w, a_i^1\rangle, \langle a_i^1, a_i^2, w_i^1, \ldots, w_i^m\rangle, \langle  w_i^1, \ldots, w_i^m\rangle, \langle a_i^3, a_i^4\rangle, \\ \langle x, a_i^3, x_i^1, \ldots, x_i^m\rangle, \langle x_i^1, \ldots, x_i^m\rangle, 
\langle y, a_i^5\rangle,  \langle a_i^5, a_i^6, y_i^1, \ldots, y_i^m\rangle, \langle y_i^1, \ldots, y_i^m\rangle, \\ \langle a_i^7, a_i^8\rangle, \langle z, a_i^7, z_i^1, \ldots, z_i^m\rangle, \langle z_i^1, \ldots, z_i^m\rangle, \langle a_i^{2},  a_i^4,a_i^{6},  a_i^8  \rangle
 \mid  t_i = (w,x, y, z) \in T\}.
\end{array}\] 
To show there are no other $\Gamma$-cycles, assume that there were some $\Gamma$-cycle $C$ that spans more than one gadget, including a gadget corresponding to $t_i$. Moreover, without loss of generality, assume that the cycle $C$ enters the gadget via the vertex $w$ and leaves via $x$ (the other possibilities of entering and exiting follow the same reasoning). This requires $C$ to contain the path $[w, a_i^1, a_i^2, a_i^4,  a_i^3, x_i^1, \ldots, x_i^{m}, x]$. However, this path contains the $V_2$-segment $[a_i^3, x_i^1, \ldots, x_i^{m}, x]$  of length~$m+2$ which is longer than $\segv_2=m$. Therefore, $C$ cannot be a $\Gamma$-cycle, and $\gcyc$ consists only of the cycles listed above.

We next show that the instance $\I$ of \textsc{P3DM} has a perfect matching $M$ exactly when the constructed {\sc Perfect $\Gamma$-Cycle Packing} instance $\J$ has a perfect $\Gamma$-cycle packing. 
First, assume that $M$ is a perfect matching of \textsc{P3DM} instance $\I$. We construct a $\Gamma$-cycle packing in $\J$ given a perfect matching $M$. If a quadruple $t_i=(w, x,y,z)$ is chosen in $M$, then we add the $\Gamma$-cycles to $\C$:
$$\langle w, a_i^1\rangle, 
\langle  w_i^1, \ldots, w_i^m\rangle, 
\langle x, a_i^3, x_i^1, \ldots, x_i^m\rangle, 
\langle y, a_i^5\rangle,  
\langle y_i^1, \ldots, y_i^m\rangle,
\langle z, a_i^7, z_i^1, \ldots, z_i^m\rangle,\langle a_i^{2},  a_i^4,a_i^{6},  a_i^8  \rangle\in \C.$$ Thus, all the nodes within the gadget are covered by some $\Gamma$-cycle in this case. Alternatively, if a quadruple $t_i=(w,x,y,z)$ is not chosen in $M$, then the following cycles are added to $\C$:
$$\langle a_i^1, a_i^2, w_i^1, \ldots, w_i^m\rangle,\langle a_i^3, a_i^4\rangle,\langle x_i^1, \ldots, x_i^m\rangle,
 \langle a_i^5, a_i^6, y_i^1, \ldots, y_i^m\rangle,  \langle a_i^7, a_i^8\rangle,\langle z_i^1, \ldots, z_i^m\rangle \in\C.$$
Note that when $t_i\notin M$ all nodes of the gadget, excluding $w$, $x$, $y$, and $z$, are covered by some $\Gamma$-cycle. Since $M$ is a perfect matching, every vertex $u \in W\cup X \cup Y \cup Z$ will be covered by some gadget exactly once. Therefore, all vertices are covered by a cycle in $\gcyc$, and instance $\J$ has a perfect $\Gamma$-cycle packing $\C$.

Conversely, suppose there is a perfect $\Gamma$-cycle packing $\C$ of instance $\J$. We construct a perfect matching $M$ in {\sc P3DM} instance $\I$ as follows. First, recall that no $\Gamma$-cycle contains edges from different gadgets. Hence, each cycle in $\C$ belongs to a single gadget, and we argue about the cycles of gadget $t_i$ that are selected in perfect $\Gamma$-cycle packing $\C$ via the following two cases:
\begin{enumerate}
    \item Suppose $ \langle w, a_i^1\rangle\in \C$. Then, for $w_i^1, \ldots, w_i^m$ to be selected, $\langle w_i^1, \ldots, w_i^m\rangle \in \C$. In turn, for $a_i^2$ to then be selected, $\langle a_i^2, a_i^4, a_i^6, a_i^8\rangle\in \C$. For $a_i^3$ to be selected in $\C$, it must be the case that $\langle x, a_i^3, x_i^1, \ldots, x_i^m\rangle\in\C$. For $y_i^1, \ldots, y_i^m$ to be selected, $\langle y_i^1, \ldots, y_i^m\rangle\in\C$, and in turn, for $a_i^5$ to be selected, $\langle y, a_i^5\rangle\in\C$. Finally, for $a_i^7$ to be selected, $\langle z, a_i^7, z_i^1, \ldots, z_i^m\rangle\in\C$. 
    Hence, when $ \langle w, a_i^1\rangle\in \C$ and $w$ is selected by the gadget $t_i$, then the vertices $w, x,y$, and $z$ must all be selected.
    \item Suppose $\langle w, a_i^1\rangle\notin \C$. For $a_i^1$ to be selected by $\C$, then $\langle a_i^1, a_i^2, w_i^1, \ldots, w_i^m\rangle\in\C$. 
    Thus, for $a_i^4, a_i^6$, and $a_i^8$ to be selected then $\langle a_i^3, a_i^4\rangle$, $ \langle a_i^5, a_i^6, y_i^1, \ldots, y_i^m\rangle$, $\langle a_i^7, a_i^8\rangle\in\C$. Finally, for $x_i^1, \ldots,x_i^m$, and $z_i^1, \ldots,z_i^m$ to be selected, then $\langle x_i^1, \ldots, x_i^m\rangle$, $\langle z_i^1, \ldots, z_i^m\rangle\in\C$. 
    Therefore, we see that when $\langle w, a_i^1\rangle\notin \C$ and vertex $w$ is not covered by the gadget corresponding to $t_i$, then none of the vertices $w,x,y$, and $z$ are covered by $t_i$. 
\end{enumerate}

Therefore, for each $t_i \in T$ it must be the case that either:
\[
\begin{array}{c}
\{\langle w, a_i^1\rangle, 
\langle  w_i^1, \ldots, w_i^m\rangle, 
\langle x, a_i^3, x_i^1, \ldots, x_i^m\rangle, 
\langle y, a_i^5\rangle,\\  
\langle y_i^1, \ldots, y_i^m\rangle,
\langle z, a_i^7, z_i^1, \ldots, z_i^m\rangle,\langle a_i^{2},  a_i^4,a_i^{6},  a_i^8  \rangle\}\subseteq \C, \quad  \text{ or  } \\ 
 \{ \langle a_i^1, a_i^2, w_i^1, \ldots, w_i^m\rangle,\langle a_i^3, a_i^4\rangle,\langle x_i^1, \ldots, x_i^m\rangle,
 \langle a_i^5, a_i^6, y_i^1, \ldots, y_i^m\rangle,  \langle a_i^7, a_i^8\rangle,\langle z_i^1, \ldots, z_i^m\rangle\} \subseteq \C.\end{array}\]
In the former case, we add $t_i$ to $M$.
Finally, $M$ is a perfect matching in $\I$ because $\C$ is a perfect $\Gamma$-cycle packing in $\J$, and in view of points~$(1)$ and ~$(2)$ above.

Therefore, {\sc Perfect $\Gamma$-Cycle Packing} is \NP-complete  when $n=2$, $\intv=\infty$  there exists a $j \in \N$ such that $\segv_j \in  \mathbb{N}_{\geq 1}$  while the remaining countries $j'\in \N\backslash\{j\}$ have $\segv_{j'}\in \mathbb{N}_{\geq 1}\cup\{\infty\}$, and for every $i\in\N$ has $\sv_i\in \mathbb{N}_{\geq 2}\cup\{\infty\}$ and $\natv_i\in \mathbb{N}_{\geq 2}\cup\{\infty\}$.  
\end{proof}

\begin{lemma}\label{lem:l=inftyk=2c=inftys=inftyn=2}
For a set of country-specific parameters $\Gamma= (n, \emph{\intv}, \emph{\natv},\emph{\segv}, \emph{\sv})$ with 
$n \geq 2$,  
$\emph{\intv}=\infty$, 
$\emph{\natv}_j \in \mathbb{N}_{\geq 0}$ for some $j\in\N$, 
and $\emph{\natv}_i \in\mathbb{N}_{\geq 0}\cup\{\infty\}$ for all $i\in\N\backslash\{j\}$,
$\emph{\sv}_i =\infty$ for all $i\in \N$,
$\emph{\segv}_i=\infty$ for all $i\in \N$,
{\sc Perfect $\Gamma$-Cycle Packing} is \NP-complete.
\end{lemma}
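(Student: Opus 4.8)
Membership in \NP\ follows from Lemma~\ref{lem:member}, so I focus on \NP-hardness. As in the neighbouring lemmas of this section, the plan is to reduce from {\sc P3DM} (or {\sc P4DM}), building one gadget per triple whose element vertices $x,y,z$ are shared between gadgets, so that a perfect $\Gamma$-cycle packing selects exactly the gadgets corresponding to a perfect matching.

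The decisive structural observation is that, because $\intv=\segv_i=\sv_i=\infty$ for every $i\in\N$, the international constraints impose nothing: a cycle fails to be a $\Gamma$-cycle only if it is a national cycle contained in $V_j$ of length exceeding $\natv_j$ (or, for another country $i$ with finite $\natv_i$, a national cycle in $V_i$ that is too long). In particular, if every $\natv_i$ were $\infty$ as well, then {\sc Perfect $\Gamma$-Cycle Packing} would degenerate to asking for a perfect cycle cover of the digraph, which is polynomial-time solvable (case~7 of Theorem~\ref{thm:dico}). Hence the finiteness of $\natv_j$ must be the \emph{sole} source of hardness, and the gadget has to be engineered so that every way of covering its vertices that does not commit to the intended ``triple in''/``triple out'' decision is forced to use a national $V_j$-cycle of length larger than $\natv_j$.

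Concretely, I would give each gadget two legitimate covering configurations, both built from international cycles through the shared element vertices: a ``selected'' configuration that covers $x,y,z$ together with the gadget's private vertices, and an ``unselected'' configuration that covers only the private vertices and leaves $x,y,z$ to be consumed by other gadgets. The private vertices are placed in $V_j$ and wired so that their only non-international covering option --- the ``cheat'' that would let a gadget be covered without making a consistent choice, and that would let cross-gadget cycles close into a perfect cover --- is a national $V_j$-cycle of length $\natv_j+1$ (for $\natv_j=0$, already a $V_j$-cycle of length~$2$). Setting the relevant gadget parameter to $m=\natv_j+1$ makes this cheat cycle exactly one longer than the limit, so it is excluded, while the two legitimate configurations remain valid $\Gamma$-cycles for every finite $\natv_j$. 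The forward direction (a perfect matching yields a perfect $\Gamma$-cycle packing by taking the selected configuration on matched triples and the unselected configuration elsewhere) is then routine.

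The main obstacle is the converse direction, and it is genuinely harder here than in the surrounding lemmas. There, a finite $\intv$, $\segv$, or $\sv$ immediately confines every $\Gamma$-cycle to a single gadget; here no international parameter is available, so cross-gadget international cycles are themselves perfectly valid $\Gamma$-cycles and cannot be excluded outright. The argument must instead show that the graph's funnel structure makes any deviating cover incompletable: pursuing a cross-gadget cycle, or otherwise covering a gadget off-pattern, strands a private $V_j$-vertex whose only remaining covering option is the forbidden length-$(\natv_j+1)$ national $V_j$-cycle. Establishing that the single finite national bound on one country suffices to rule out every cover not induced by a perfect matching is where the care lies, and it is exactly the step that distinguishes this ``combine $2$ and $\infty$'' regime (cf.\ the remark that $\natv=(2,\infty)$ is already \NP-complete) from the polynomial all-$\infty$ case. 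For completeness I note that when $\natv_j\geq 3$ one may alternatively embed a single-country hard instance (Theorem~\ref{t-dicho}) entirely inside $V_j$ and attach a trivially coverable dummy in the remaining countries; a uniform gadget is preferable, however, since the cases $\natv_j\in\{0,2\}$ are polynomial for $n=1$ and genuinely require the construction above.
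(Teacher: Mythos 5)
Your overall strategy is the right one and matches the paper's: membership via Lemma~\ref{lem:member}; hardness via per-triple gadgets reducing from {\sc P3DM}; the observation that only the finite $\natv_j$ can be exploited, since every international parameter is unbounded; and the crucial recognition that cross-gadget international cycles are themselves legal $\Gamma$-cycles, so the converse direction must show that any off-pattern cover strands a private $V_j$-vertex whose only remaining cover is a forbidden national cycle. Your side remark for $\natv_j \geq 3$ (embed a single-country hard instance of Theorem~\ref{t-dicho} entirely inside $V_j$) is exactly how the paper disposes of that case.

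The genuine gap is that the proof's entire technical content --- the gadget itself and the converse-direction case analysis --- is never supplied. You state the properties the gadget should have (``two legitimate configurations'', ``the only cheat is a national $V_j$-cycle of length $\natv_j+1$'') but exhibit no vertices, arcs, or country partition, and you explicitly defer the stranding argument (``where the care lies'') without carrying it out. Moreover, the uniform gadget parameterized by $m=\natv_j+1$ that you claim handles all finite $\natv_j$ at once is not obviously realizable: as $\natv_j$ grows, national $V_j$-cycles of every length up to $\natv_j$ become legal, so any collection of private $V_j$-vertices acquires many legal short covers that could enable unintended packings, and the wiring must exclude all of them, not just the single ``cheat'' cycle. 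This is precisely why the paper does not attempt a uniform gadget: it splits into $\natv_j\geq 3$ (delegated to Theorem~\ref{t-dicho}), $\natv_j=0$, and $\natv_j=2$, with two structurally different constructions for the last two cases --- the $\natv_j=2$ gadget needs triples of private vertices $x_i^1,x_i^2,x_i^3$ with additional $2$-cycles among them, exactly because length-$2$ national cycles are legal there and the ``unselected'' configuration must still be coverable, while the analysis then has to rule out the newly possible length-$3$ and length-$4$ national cycles. Until a concrete gadget is given and its perfect packings are classified, the lemma remains unproved for $\natv_j\in\{0,2\}$, which are the cases that cannot be delegated to the single-country result.
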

\begin{proof}
\begin{figure*}[t]
    \centering
\resizebox{0.85\columnwidth}{!}{
\begin{tikzpicture}[triangle/.style = {regular polygon, regular polygon sides=3}]

\begin{scope}[every node/.style={circle,thick,draw,minimum size=0.9cm,font=\huge}]
    \node[regular polygon, thick, regular polygon sides=3, draw,
     inner sep=0.005cm] (a11) at (-2,-1) {$a_i^1$};
    \node (x) at (1,-5) {$x$};
    \node [regular polygon, thick, regular polygon sides=3, draw,
     inner sep=0.005cm] (a12) at (4, -1) {$a_i^2$};
     \node [regular polygon, thick, regular polygon sides=3, draw,
     inner sep=0.005cm] (x1) at (1, -1) {$x_i^1$};

    \node[regular polygon, thick, regular polygon sides=3, draw,
     inner sep=0.005cm] (a31) at (15.5,-1) {$a_i^7$};
    \node (z) at (18.5,-5) {$z$};
    \node[regular polygon, thick, regular polygon sides=3, draw,
     inner sep=0.005cm]  (a32) at (21.5, -1) {$a_i^8$};
     \node [regular polygon, thick, regular polygon sides=3, draw,
     inner sep=0.005cm] (z1) at (18.5, -1) {$z_i^1$};

    \node[regular polygon, thick, regular polygon sides=3, draw,
     inner sep=0.005cm] (a21) at (6.5,-1) {$a_i^4$};
    \node (y) at (9.5,-5) {$y$};
    \node[regular polygon, thick, regular polygon sides=3, draw,
     inner sep=0.005cm]  (a22) at (12.5, -1) {$a_i^5$};
     \node [regular polygon, thick, regular polygon sides=3, draw,
     inner sep=0.005cm] (y1) at (9.5, -1) {$y_i^1$};

    \node  (a6) at (9.5,5) {$a_i^6$};
    \node(a3) at (3,5) {$a_i^3$};
    
    \node (a9) at (16.5,5) {$a_i^9$};

    \node[regular polygon, thick, regular polygon sides=3, draw,
     inner sep=0.005cm]  (t1) at (11.5,6.5) {$a_i^{10}$};
    \node  (t2) at (7.5,6.5) {$a_i^{11}$};
\end{scope}
 \begin{scope}[>={Stealth[black]},
               every edge/.style={draw,very thick}]
   
    \path [->] (x) edge node {} (a12);
    \path [->] (a11) edge node {} (x);
    \path [->] (a11) edge[bend left=10] node {} (x1);
    \path [->] (x1) edge[bend left=10] node {} (a11);
    \path [->] (a12) edge[bend left=10] node {} (x1);
    \path [->] (x1) edge[bend left=10] node {} (a12);
        
    \path [->] (a12) edge node {} (a3);
    \path [->] (a3) edge node {} (a11);

    \path [->] (y) edge node {} (a22);
    \path [->] (a21) edge[bend left=10] node {} (y1);
    \path [->] (y1) edge[bend left=10] node {} (a21);
    \path [->] (a22) edge[bend left=10] node {} (y1);
    \path [->] (y1) edge[bend left=10] node {} (a22);

    \path [->] (a21) edge node {} (y);

    \path [->] (a22) edge node {} (a6);
    \path [->] (a6) edge node {} (a21);

    \path [->] (z) edge node {} (a32);
    \path [->] (a31) edge[bend left=10] node {} (z1);
    \path [->] (z1) edge[bend left=10] node {} (a31);
    \path [->] (a32) edge[bend left=10] node {} (z1);
    
    \path [->] (z1) edge[bend left=10] node {} (a32);
    \path [->] (a31) edge node {} (z);

    \path [->] (a32) edge node {} (a9);
    \path [->] (a9) edge node {} (a31);

    \path [->] (a6) edge node {} (a9);
    \path [->] (a9) edge node {} (t1);
    \path [->] (a3) edge node {} (a6);

    \path [->] (t1) edge node {} (t2);
    \path [->] (t2) edge[bend left] node {} (t1);

    \path [->] (t2) edge node {} (a3);
\end{scope}
\end{tikzpicture}}
\Description{}
    \caption{Gadget used in the proof of Lemma~\ref{lem:l=inftyk=2c=inftys=inftyn=2} showing \NP-hardness when $\N=\{1,2\}$, $ \intv=\infty, \sv, \segv=\{\infty\}^2$, and for some fixed $\natv_2=0$ where country~$2$ is represented by the triangular vertices while $\natv_{1}\in \mathbb{N}_{\geq 0}\cup\{\infty\} $.}
    \label{fig:l=inftyk=0c=inftys=inftyn=2}
\end{figure*}

Membership of {\sc Perfect $\Gamma$-Cycle Packing} in \NP\ was shown in Lemma~\ref{lem:member}. To show \NP-hardness, we split the proof into three cases corresponding to \emph{(i)} $\natv_j=0$, \emph{(ii)} $\natv_j=2$ , and \emph{(iii)} $\natv_j\in \mathbb{N}_{\geq 3}$ for some $j\in \N$.  The proof of case \emph{(iii}) 
follows from Theorem~\ref{t-dicho}. To see this, assume that all vertices belong to country $j$ in the instance $\J$ constructed by Theorem~\ref{t-dicho}; then a perfect $\Gamma$-cycle packing in $\J$ is a perfect $\natv_j$-cycle packing. We now address cases (i) and (ii). 

\paragraph{Case (i).} 
We transform an arbitrary instance $\I$ of \textsc{P3DM} to a constructed instance $\J$  of {\sc Perfect $\Gamma$-Cycle Packing} when $n=2$,  there is some $j\in\N$ such that $\natv_j=0$, $\intv=\infty, \sv=\{\infty\}^2$ and $\segv=\{\infty\}^2$. 
In our reduction, we create one vertex for every $w \in  X \cup Y \cup Z$. The remaining vertices and all edges are added via gadgets. One gadget is created for each $t_i=(x,y,z) \in T$, as depicted in Figure~\ref{fig:l=inftyk=0c=inftys=inftyn=2}. We add the following vertices $\{a_i^1, \cdots, a_i^9, x_i^1, y_i^1, z_i^1\}$ for the gadget corresponding to $t_i$. Moreover, we add the following vertices and edges:
\[\begin{array}{c}
V = X \cup Y\cup Z\cup \{ a_i^1, a_i^2, a_i^3, a_i^4, a_i^5, a_i^6, a_i^7, a_i^8, a_i^9, a_i^{10}, a_i^{11}, x_i^1, y_i^1, z_i^1 \mid t_i \in T\} \quad \text{  and  }\\
A = \{ (x, a_i^2),  (a_i^1, x),  a_i^1, x_i^1),  (x_i^1, a_i^1),  (a_i^2, x_i^1),  (x_i^1, a_i^2),  (a_i^2, a_i^3),  (a_i^3, a_i^1),  
(y, a_i^5),  (a_i^5, y), \\ (a_i^4, y_i^1),  
(y_i^1, a_i^4), (a_i^5, y_i^1),  
(y_i^1, a_i^5), (a_i^4, y),  
(a_i^5, a_i^6), (a_i^6, a_i^4),  
(z, a_i^8),  (a_i^7, z_i^1),  
(z_i^1, a_i^7), (a_i^8, z_i^1),  \\
(a_i^7, z),  (a_i^8, a_i^9),  
(a_i^9, a_i^7),  (a_i^6, a_i^9),  
(a_i^9, a_i^{10}), (a_i^{10}, a_i^{11}), (a_i^{11}, a_i^{10}),  (a_i^{10}, a_i^{3}),   (a_i^3, a_i^6)  \mid t_i\in T
\}.
\end{array}
\]

In our reduction, we have $\N=\{1,2\}$, with the vertices belonging to countries~$1$ and $2$ depicted by the circular and triangular vertices in Figure~\ref{fig:l2kinty}, respectively. Thus, $V_1=\{x,y,z,a_i^3,a_i^6, a_i^9, a_i^{11} \mid  t_i=(x,y,z)\in T\}$ and $V_2 =V\backslash V_1$.
We let $\natv_1\in \mathbb{N}_{\geq 0}\cup\{\infty\}$,  $\natv_2=0$, $\segv=(\infty,\infty)$ and $\sv=(\infty,\infty)$ and, thus, our restrictions for the country-specific parameters, as given in the lemma statement, are being upheld.
Hence, observe that the following  cycles are $\Gamma$-cycles: 
\[\begin{array}{c}
\{
\langle x, a_i^2, x_i^1, a_i^1\rangle, 
\langle a_i^3, a_i^1, x_i^1, a_i^2 \rangle,  
\langle y, a_i^5, y_i^1, a_i^4 \rangle,
\langle a_i^4, y_i^1, a_i^5, a_i^6\rangle,\\
\langle z, a_i^8, z_i^1, a_i^9\rangle, 
\langle a_i^7, z_i^1, a_i^8, a_i^9\rangle,  
\langle  a_i^3, a_i^6, a_i^9, a_i^{10},a_i^{11}\rangle, 
\langle  a_i^{10},a_i^{11}\rangle
\mid t_i \in T\}\subset \gcyc.
\end{array}\] 
Note that there are other $\Gamma$-cycles within the instance. Yet, we will show that these are the only cycles that will be selected in any perfect $\Gamma$-cycle packing. 

Next, we give a proof of correctness and first assume that there is a perfect matching $M$ of instance $\I$ of \textsc{P3DM} and construct a perfect $\Gamma$-cycle packing $\C$ of instance $\J$. 
If a triple $t_i=(x,y,z)$ is chosen in $M$, then the following cycles are added to $\C$: 
\[\langle x, a_i^2, x_i^1, a_i^1\rangle, 
\langle y, a_i^5, y_i^1, a_i^4\rangle,
\langle z, a_i^8, z_i^1, a_i^9\rangle, 
\langle a_i^9, a_i^3, a_i^6,  a_i^{10},a_i^{11}\rangle  \in \C
 \]
Thus, all the nodes within the gadget are covered by some $\Gamma$-cycle.  If a triple $t_i=(x,y,z)$ is not chosen in $M$, then the following cycles are added to $\C$: 
 \[
\langle a_i^3, a_i^1, x_i^1, a_i^2 \rangle, 
\langle a_i^4, y_i^1, a_i^5, a_i^6\rangle,
\langle a_i^7, z_i^1, a_i^8, a_i^9\rangle, 
\langle a_i^{10},a_i^{11}\rangle
\in \C
 \]
Observe in this second case that all of the gadget's vertices have been selected in some cycle, apart from $x$, $y$, and $z$.  As $M$ is a perfect matching, every vertex $w \in X \cup Y\cup Z$ will be covered by some gadget exactly once. Therefore, all vertices are covered by a cycle in $\gcyc$ and instance $\J$ has a perfect $\Gamma$-cycle packing. 

Conversely, suppose there is a perfect $\Gamma$-cycle packing $\C$ of instance $\J$. We construct a perfect matching $M$ in {\sc P3DM} instance $\I$ as follows. We argue about the cycles of gadget $t_i$ that are selected in perfect $\Gamma$-cycle packing $\C$ via the following two cases:
\begin{enumerate}
    \item Suppose $\langle x, a_i^2, x_i^1, a_i^1\rangle\in \C$. For the vertex $a_i^3$ to be covered by a $\Gamma$-cycle then $\langle a_i^3, a_i^6, a_i^9,  a_i^{10},a_i^{11} \rangle$ must also be in $\C$. Following this, we see that $\langle y, a_i^5, y_i^1, a_i^4 \rangle$ and $\langle z, a_i^8, z_i^1, a_i^9\rangle $ must also be selected to ensure that $ a_i^5, y_i^1, a_i^4 a_i^8, z_i^1$, and $a_i^9$ are all selected.  
    \item Suppose $\langle x, a_i^2, x_i^1, a_i^1\rangle\notin \C$. Consider the $\Gamma$-cycles that could select $a_i^2$ that contain the path $[x, a_i^2]$. As  $\langle x, a_i^2, x_i^1, a_i^1\rangle\notin\C$, this entails that either \emph{(a)} the cycle $\langle x, a_i^2, x_i^3, a_i^1 \rangle$ or \emph{(b)} the cycle must span over more than one gadget and contain the path $[x, a_i^2, a_i^3, a_i^6]$. 
    \begin{enumerate}
        \item If the cycle $\langle x, a_i^2, x_i^3, a_i^1 \rangle\in\C$, this would mean that the node $x_i^1$ would be left uncovered by $\C$. Therefore, $\langle x, a_i^2, x_i^3, a_i^1 \rangle\notin\C$.
        \item If the path $[x, a_i^2, a_i^3, a_i^6]$ were contained within a cycle in the perfect $\Gamma$-cycle packing, this would entail that the cycle $\langle a_i^1, x_i^1\rangle$ is selected by $\C$ as well.  However, this cannot be the case, as $\langle a_i^1, x_i^1\rangle$ is a national $V_2$-cycle, yet $\natv_2=0$. Hence, this path cannot be contained in any cycle within a perfect $\Gamma$-cycle packing. 
    \end{enumerate}
    Therefore, the only remaining cycle that can cover $a_i^2$ when $\langle x, a_i^2, x_i^1, a_i^1\rangle\notin \C$ is $\langle a_i^1, x_i^1, a_i^2, a_i^3\rangle\in\C$. 
    For $ a_i^{10}$ and $a_i^{11}$ to be selected, $\langle  a_i^{10},a_i^{11}\rangle\in\C$. 
    Then, for $a_i^6$ to be covered by a $\Gamma$-cycle, either $\langle a_i^6, a_i^4, y_i^1, a_i^5\rangle\in\C$ or $\langle a_i^6, a_i^4, y, a_i^5\rangle\in\C$. In the latter case, this would leave the node $y_i^1$ uncovered by any $\Gamma$-cycle. Therefore, $\langle a_i^6, a_i^4, y_i^1, a_i^5\rangle\in\C$. Similar reasoning leads to $\langle a_i^7, z_i^1, a_i^8, a_i^9\rangle \in \C$. 
\end{enumerate}

 Therefore, it must be the case that either 
 \[\begin{array}{c}
 \{\langle x, a_i^2, x_i^1, a_i^1\rangle, 
\langle y, a_i^5, y_i^1, a_i^4 \rangle,
\langle z, a_i^8, z_i^1, a_i^9\rangle, 
\langle a_i^9, a_i^3, a_i^6,  a_i^{10},a_i^{11}\rangle\}\subseteq \C \text{  or}\\
\{\langle a_i^3, a_i^1, x_i^1, a_i^2 \rangle,  
\langle a_i^4, y_i^1, a_i^5, a_i^6\rangle,
\langle a_i^7, z_i^1, a_i^8, a_i^9\rangle, 
 \langle a_i^{10},a_i^{11}\rangle \} \subseteq \C,
 \end{array}\] for each $t_i \in T$. In the former case, we add $t_i$ to $M$. 
Finally, $M$ is a perfect matching in $\I$ because $\C$ is a perfect $\Gamma$-cycle packing in $\J$, and in view of points~$(1)$ and ~$(2)$ above. 

Therefore, {\sc Perfect $\Gamma$-Cycle Packing} is \NP-complete when $n =2$, $\intv=\infty$, $\natv_j=0$ for some $j \in \N$, and $\segv=\sv=(\infty,\infty)$.

\paragraph{Case (ii).}
We transform an arbitrary instance $\I$ of \textsc{P3DM} to a constructed instance $\J$  of {\sc Perfect $\Gamma$-Cycle Packing} when $n=2$,  there is some $j\in\N$ such that $\natv_j=2$, $\intv=\infty, \sv=\{\infty\}^2$ and $\segv=\{\infty\}^2$. 

\begin{figure*}[t]
    \centering
\resizebox{0.95\columnwidth}{!}{
\begin{tikzpicture}[triangle/.style = {regular polygon, regular polygon sides=3}]
\begin{scope}[every node/.style={circle,thick,draw,minimum size=0.9cm,font=\huge}]
    \node[regular polygon, thick, regular polygon sides=3, draw,
     inner sep=0.005cm] (a11) at (-4,-3) {$a_i^1$};
    \node (x) at (0,-9) {$x$};
    \node [regular polygon, thick, regular polygon sides=3, draw,
     inner sep=0.005cm] (a12) at (4, -3) {$a_i^2$};
  
    \node[regular polygon, thick, regular polygon sides=3, draw,
     inner sep=0.005cm] (a31) at (18.5,-3) {$a_i^7$};
    \node (z) at (22.5,-9) {$z$};
    \node[regular polygon, thick, regular polygon sides=3, draw,
     inner sep=0.005cm]  (a32) at (27, -3) {$a_i^8$};
     
    \node[regular polygon, thick, regular polygon sides=3, draw,
     inner sep=0.005cm] (a21) at (7.5,-3) {$a_i^4$};
    \node (y) at (11.5,-9) {$y$};
    \node[regular polygon, thick, regular polygon sides=3, draw,
     inner sep=0.005cm]  (a22) at (15.5, -3) {$a_i^5$};

    \node  (a6) at (11.5,3) {$a_i^6$};
    \node(a3) at (0,3) {$a_i^3$};
    
    \node (a9) at (23,3) {$a_i^9$};

    \node[regular polygon, thick, regular polygon sides=3, draw,
     inner sep=0.005cm]  (y1) at (11.5,-5) {$y_i^1$};
    \node[regular polygon, thick, regular polygon sides=3, draw,
     inner sep=0.005cm]  (y2) at (11.5,-3) {$y_i^2$};
    \node[regular polygon, thick, regular polygon sides=3, draw,
     inner sep=0.005cm]  (y3) at (11.5,-1) {$y_i^3$};

     \node[regular polygon, thick, regular polygon sides=3, draw,
     inner sep=0.005cm]  (x1) at (0,-5) {$x_i^1$};
    \node[regular polygon, thick, regular polygon sides=3, draw,
     inner sep=0.005cm]  (x2) at (0,-3) {$x_i^2$};
    \node[regular polygon, thick, regular polygon sides=3, draw,
     inner sep=0.005cm]  (x3) at (0,-1) {$x_i^3$};

     \node[regular polygon, thick, regular polygon sides=3, draw,
     inner sep=0.005cm]  (z1) at (23,-5) {$z_i^1$};
    \node[regular polygon, thick, regular polygon sides=3, draw,
     inner sep=0.005cm]  (z2) at (23,-3) {$z_i^2$};
    \node[regular polygon, thick, regular polygon sides=3, draw,
     inner sep=0.005cm]  (z3) at (23,-1) {$z_i^3$};

     \node[regular polygon, thick, regular polygon sides=3, draw,
     inner sep=0.005cm]  (t1) at (13.5,5.5) {$a_i^{10}$};
    \node  (t2) at (9.5,5.5) {$a_i^{11}$};
\end{scope}
 \begin{scope}[>={Stealth[black]},
               every edge/.style={draw,very thick}]

    \path [->] (a11) edge node {} (x);
    \path [->] (a11) edge node {} (x3);
    \path [->] (x) edge node {} (a12);
    \path [->] (a12) edge node {} (x1);
    \path [->] (x1) edge[bend left] node {} (x2);
    \path [->] (x1) edge node {} (a11);
    \path [->] (x2) edge[bend left] node {} (x1);
     \path [->] (x3) edge[bend left] node {} (x2);
    \path [->] (x2) edge[bend left] node {} (x3);
    \path [->] (x1) edge[bend right=50] node {} (x3);
    \path [->] (x3) edge node {} (a12);
        
    \path [->] (a12) edge node {} (a3);
    \path [->] (a3) edge node {} (a11);

    \path [->] (y) edge node {} (a22);

    \path [->] (a21) edge node {} (y);

    \path [->] (a22) edge node {} (a6);
    \path [->] (a6) edge node {} (a21);

   \path [->] (a22) edge node {} (y1);
    \path [->] (y1) edge[bend left] node {} (y2);
    \path [->] (y1) edge node {} (a21);
    \path [->] (y2) edge[bend left] node {} (y1);
     \path [->] (y3) edge[bend left] node {} (y2);
    \path [->] (y2) edge[bend left] node {} (y3);
    \path [->] (y1) edge[bend right=50] node {} (y3);
    \path [->] (y3) edge node {} (a22);
    \path [->] (a21) edge node {} (y3);

    \path [->] (z) edge node {} (a32);
    \path [->] (a31) edge node {} (z);

    \path [->] (a32) edge node {} (a9);
    \path [->] (a9) edge node {} (a31);

   \path [->] (a32) edge node {} (z1);
    \path [->] (z1) edge[bend left] node {} (z2);
    \path [->] (z1) edge node {} (a31);
    \path [->] (z2) edge[bend left] node {} (z1);
     \path [->] (z3) edge[bend left] node {} (z2);
    \path [->] (z2) edge[bend left] node {} (z3);
    \path [->] (z1) edge[bend right=50] node {} (z3);
    \path [->] (a31) edge node {} (z3);
    \path [->] (z3) edge node {} (a32);

    \path [->] (a6) edge node {} (a9);
    \path [->] (a9) edge node {} (t1);
    \path [->] (a3) edge node {} (a6);
    \path [->] (t1) edge node {} (t2);
    \path [->] (t2) edge[bend left] node {} (t1);
    \path [->] (t2) edge node {} (a3);

\end{scope}
\end{tikzpicture}}
\Description{}
    \caption{Gadget used in the proof of Lemma~\ref{lem:l=inftyk=2c=inftys=inftyn=2} showing \NP-hardness 
     when $\N=\{1,2\}$, $ \intv=\infty, \sv, \segv=\{\infty\}^2$, and for some fixed $\natv_2=2$ where country~$2$ is represented by the triangular vertices while $\natv_{1}\in \mathbb{N}_{\geq 0}\cup\{\infty\} $.
    }
    \label{fig:l=inftyk=2c=inftys=inftyn=2}
\end{figure*}

In our reduction, we create one vertex for every $w \in  X \cup Y \cup Z$. The remaining vertices and all edges are added via gadgets. One gadget is created for each $t_i=(x,y,z) \in T$, as depicted in Figure~\ref{fig:l=inftyk=0c=inftys=inftyn=2}. We add the following vertices $\{a_i^1, \cdots, a_i^9, x_i^1,x_i^2, x_i^3,y_i^1,y_i^2,y_i^3, z_i^1,z_i^2,z_i^3\}$ for the gadget corresponding to $t_i$. Hence, we have the following vertices and edges:
\[\begin{array}{c}
V = X \cup Y\cup Z \{ a_i^1, \ldots, a_i^{11}, x_i^1,x_i^2, x_i^3,y_i^1,y_i^2,y_i^3, z_i^1,z_i^2,z_i^3 \mid t_i \in T\} \quad \text{  and} \\
A = \{ (x, a_i^2),  (a_i^1, x),  (a_i^1, x_i^3),  (x_i^3, a_i^2),  (x_i^1, a_i^2),  (a_i^2, x_i^1), (x_i^2, x_i^3),  (x_i^3, x_i^2), (x_i^1, x_i^2), \\  (x_i^2, x_i^1), (x_i^1, x_i^3),  (a_i^2, a_i^3),  (a_i^3, a_i^1),  
(y, a_i^5),  (a_i^4, y),  (a_i^4, y_i^3),  (y_i^3, a_i^5),
(y_i^1, a_i^4), \\ (a_i^5, y_i^1),  
(y_i^2, y_i^3), (y_i^3, y_i^2),  (y_i^1, y_i^2),  (y_i^2, y_i^1), (y_i^1, y_i^3),  (a_i^5, a_i^6),  (a_i^6, a_i^4),  
(z, a_i^8),\\  (a_i^7, z),  (a_i^7, z_i^3),  (z_i^3, a_i^8),
(z_i^1, a_i^7),  (a_i^8, z_i^1),  
(z_i^2, y_i^3),  (z_i^3, z_i^2),  (z_i^1, y_i^2),  (z_i^2, z_i^1),\\ (z_i^1, y_i^3),  (a_i^8, a_i^9),  (a_i^9, a_i^7),  
 (a_i^6, a_i^9),  
(a_i^9, a_i^3),  (a_i^3, a_i^6) (a_i^{10}, a_i^{11}), (a_i^{11}, a_i^{10}) \mid t_i\in T
\}.
\end{array}
\]

In our reduction, we have $\N=\{1,2\}$, with the vertices belonging to countries~$1$ and $2$ depicted by the circular and triangular vertices in Figure~\ref{fig:l=inftyk=2c=inftys=inftyn=2}, respectively. Thus, $V_1=\{x,y,z,a_i^3,a_i^6, a_i^9, a_i^{11}\mid  t_i=(x,y,z)\in T\}$ and $V_2 =V\backslash V_1$.
We let $\natv_1\in \mathbb{N}_{\geq 0}\cup\{\infty\}$,  $\natv_2=2$, $\segv=(\infty,\infty)$ and $\sv=(\infty,\infty)$ and, thus, our restrictions for the country-specific parameters are as given in the Lemma statement. 
The following  cycles are $\Gamma$-cycles and abide by the country-specific parameters: 

\[\begin{array}{c}
\{
\langle x, a_i^2, x_i^1, a_i^1\rangle, 
\langle  x_i^2, x_i^3\rangle, 
\langle a_i^3, a_i^1, x_i^3, a_i^2 \rangle, 
\langle x_i^2, x_i^1\rangle, 
\langle y, a_i^5, y_i^1, a_i^4 \rangle,
\langle  y_i^2,y_i^3\rangle,
\langle  y_i^2,y_i^1 \rangle, 
\langle  z_i^3, z_i^2\rangle, 
\langle z_i^1, z_i^2\rangle, \\
\langle z, a_i^8, z_i^1,a_i^9\rangle,
\langle a_i^4, y_i^3, a_i^5, a_i^6\rangle,
\langle a_i^7,z_i^3, a_i^8, a_i^9\rangle,  
\langle  a_i^3, a_i^6, a_i^9, a_i^{10}, a_i^{11}\rangle , 
\langle  a_i^{10}, a_i^{11}\rangle 
\mid t_i \in T\}\subset \gcyc.
\end{array}\]

Next, we give a proof of correctness and first assume that there is a perfect matching $M$ of instance $\I$ of \textsc{P3DM} and construct a perfect $\Gamma$-cycle packing $\C$ of instance $\J$. 
 If a triple $t_i=(x,y,z)$ is chosen in $M$, then the following cycles are added to $\C$: 
\[\begin{array}{c}
\langle x, a_i^2, x_i^1, a_i^1\rangle, 
\langle  x_i^2, x_i^3\rangle, 
\langle y, a_i^5, y_i^1, a_i^4 \rangle,
\langle  y_i^2,y_i^3\rangle,
\langle z, a_i^8, z_i^1,a_i^9\rangle, 
\langle  z_i^3, z_i^2\rangle,  
\langle  a_i^3, a_i^6, a_i^9, a_i^{10}, a_i^{11}\rangle 
\in \C.
\end{array}\] 
Thus, all the nodes within the gadget are covered by some $\Gamma$-cycle.  If a triple $t_i=(x,y,z)$ is not chosen in $M$, then the following cycles are added to $\C$: 
\[\begin{array}{c} 
\langle a_i^3, a_i^1, x_i^3, a_i^2 \rangle,  
\langle x_i^2, x_i^1\rangle, 
\langle a_i^4, y_i^3, a_i^5, a_i^6\rangle,
\langle  y_i^2,y_i^1 \rangle, 
\langle a_i^7,z_i^3, a_i^8, a_i^9\rangle,  
\langle z_i^1, z_i^2\rangle , 
\langle a_i^{10}, a_i^{11} \rangle 
\in \C.
\end{array}\] 
Observe that, in the second case, all of the gadget's vertices have been selected in some cycle, apart from $x$, $y$, and $z$.  As $M$ is a perfect matching, every vertex $w \in X \cup Y\cup Z$ will be covered by some gadget exactly once. Therefore, all vertices are covered by a cycle in $\gcyc$ and instance $\J$ has a perfect $\Gamma$-cycle packing.

Conversely, suppose there is a perfect $\Gamma$-cycle packing $\C$  of instance $\J$. We construct a perfect matching $M$ in {\sc P3DM} instance $\I$ as follows. We argue about the cycles of gadget $t_i$ that are selected in perfect $\Gamma$-cycle packing $\C$ via the following two cases:
\begin{enumerate}
    \item Suppose $\langle x, a_i^2, x_i^1, a_i^1\rangle\in \C$. This means that $\langle x_i^2, x_i^3\rangle\in\C$, to ensure that these vertices are covered by a $\Gamma$-cycle. 
    In turn, $a_i^3$ must then be covered by $\langle  a_i^3, a_i^6, a_i^9, a_i^{10}, a_i^{11} \rangle\in\C$. 
    Then observe that the only $\Gamma$-cycle to cover $a_i^4$ is then $\langle  y, a_i^5, y_i^1, a_i^4 \rangle$, all other cycles are national $V_2$-cycles with~$4$ vertices,  exceeding country~$2$'s national cycle limit $\natv_2=2$.  
    Similar reasoning entails that $\langle z, a_i^8, z_i^1,a_i^9\rangle\in \C$. 
    Thus,  when $\langle x, a_i^2, x_i^1, a_i^1\rangle\in \C$ and $x$ is selected by the gadget corresponding to $t_i$, then $y$ and $z$ must also be selected by cycles from the same gadget.
    \item Suppose $\langle x, a_i^2, x_i^1, a_i^1\rangle\notin \C$. Then, the possible ways that $a_i^2$ can then be covered is either  \emph{(a)} $\langle a_i^1, x_i^3, a_i^2, x_i^1 \rangle$ and $\langle x_i^3, a_i^2, x_i^1, x_i^2 \rangle$, \emph{(b)} $\langle a_i^3, a_i^1, x, a_i^2 \rangle$, \emph{(c)} a cycle $C$ containing the path $[x, a_i^2, a_i^3, a_i^6]$, or \emph{(d)} $\langle a_i^3, a_i^1, x_i^3, a_i^2 \rangle$. 
    \begin{enumerate}
        \item We first consider the cycles $\langle a_i^1, x_i^3, a_i^2, x_i^1 \rangle$ and $\langle x_i^3, a_i^2, x_i^1, x_i^2 \rangle$. Note that these are not $\Gamma$-cycles, as they are both $V_2$-cycles that exceed $\natv_2=2$. 
        \item Next consider if $a_i^2$ were covered by $\langle a_i^3, a_i^1, x, a_i^2 \rangle$ in $\C$. This would leave the vertices $x_i^1$, $x_i^2$, and $x_i^3$  to be covered by $\langle x_i^1, x_i^2, x_i^3\rangle$, a $V_2$-cycle of length~$3$, exceeding $\natv_2=2$. Therefore,  $\langle a_i^3, a_i^1, x, a_i^2 \rangle\notin\C$. 
        \item Next consider if $a_i^2$ were covered by a cycle $C$ containing the path $[x, a_i^2, a_i^3, a_i^6]$. This would entail that the cycle $\langle a_i^1, x_i^3, x_i^2, x_i^1 \rangle$ must be selected in $\C$; however, this is a national $V_2$-cycle of length~$4$, exceeding $\natv_2=2$.
    \end{enumerate}
    Therefore, the only option for $a_i^2$ to be covered is \emph{(d)} $\langle a_i^3, a_i^1, x_i^3, a_i^2 \rangle\in\C$. Thus, $a_i^{10}$ and $a_i^{11}$ must then be selected  by $\langle a_i^{10},a_i^{11}\rangle\in\C$. Next, we address how $a_i^6$ can be covered in $\C$. It cannot be in a cycle that spans more than one gadget, given that $a_i^3$ has previously been selected. Therefore, $a_i^6$ must either be covered by $\langle a_i^6, a_i^4, y, a_i^5 \rangle$ or  $\langle a_i^6, a_i^4, y_i^3, a_i^5 \rangle$. In the former case, this would leave $y_i^1, y_i^2$, and $y_i^3$ left to be covered by $\langle y_i^1, y_i^2, y_i^3\rangle$, a $V_2$-cycle of length~$3$, exceeding $\natv_2=2$. Therefore, $\langle a_i^6, a_i^4, y_i^3, a_i^5 \rangle\in\C$. Consequently, $\langle  y_i^2,y_i^1 \rangle\in\C$ to cover these vertices. A similar argument holds to entail that $\langle a_i^9, a_i^7, z_i^3, a_i^8 \rangle, \langle  z_i^2,z_i^1 \rangle\in\C$. 
    Hence, when $\langle x, a_i^2, x_i^1, a_i^1\rangle\notin \C$ and thus, $x$ is not covered with gadget $t_i$, then $y$ and $z$ cannot be covered by $t_i$ without leaving some vertex uncovered that only exists in gadget $t_i$. 
\end{enumerate}
 Therefore, it must be the case that either 
 \[\begin{array}{c}\{\langle x, a_i^2, x_i^1, a_i^1\rangle, 
\langle  x_i^2, x_i^3\rangle, 
\langle y, a_i^5, y_i^1, a_i^4 \rangle,
\langle  y_i^2,y_i^3\rangle,
\langle z, a_i^8, z_i^1,a_i^9\rangle, 
\langle  z_i^3, z_i^2\rangle,  
\langle  a_i^3, a_i^6, a_i^9, a_i^{10}, a_i^{11}\rangle \}
\subset \C \text{  or}\\
\{\langle a_i^3, a_i^1, x_i^3, a_i^2 \rangle,  
\langle x_i^2, x_i^1\rangle, 
\langle a_i^4, y_i^3, a_i^5, a_i^6\rangle,
\langle  y_i^2,y_i^1 \rangle, 
\langle a_i^7,z_i^3, a_i^8, a_i^9\rangle,  
\langle z_i^1, z_i^2\rangle , 
\langle  a_i^{10}, a_i^{11}\rangle\}
\subset \C,
\end{array}\] for each $t_i \in T$. In the former case, we add $t_i$ to $M$. 
Finally, $M$ is a perfect matching in $\I$ because $\C$ is a perfect $\Gamma$-cycle packing in $\J$, and in view of points~$(1)$ and ~$(2)$ above. 

Therefore, {\sc Perfect $\Gamma$-Cycle Packing} is \NP-complete when $n =2$, $\intv=\infty$, $\natv_j=2$ for some $j \in \N$, and $\segv=\sv=(\infty,\infty)$.
\end{proof}

\subsection{Establishing Theorem~\ref{thm:dico}}\label{app:Cases} 

We will use the results summarised in Table~\ref{tab:dicho} to prove that we have a dichotomy with respect to the country-specific parameters between the polynomial-time solvable and \NP-hard cases of {\sc Max $\Gamma$-Cycle Packing}. 
Given Theorem~\ref{t-dicho}, we assume that $n \geq 2$ and we show for each value of $\intv$ that we have a dichotomy. 

\begin{lemma}\label{lem:l=0}
 Let $\Gamma=(n,\intv,\natv,\segv,\sv)$ be given where $\intv=0$. 
 {\sc Max $\Gamma$-Cycle Packing} is polynomial-time solvable when $\natv_i\in \{0,2,\infty\}$ for all $i \in \N$. Otherwise, it is \NP-hard.
\end{lemma}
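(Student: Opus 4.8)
The plan is to exploit the fact that when $\intv=0$ the problem decomposes completely into $n$ independent single-country problems, each of which is exactly the KEP setting governed by Theorem~\ref{t-dicho}. First I would observe that no international cycle can be a $\Gamma$-cycle: an international cycle has length at least~$2$, whereas $\intv=0$ forbids any international cycle of positive length. Hence every $\Gamma$-cycle is national, i.e.\ it lies entirely within a single $G[V_i]$ and has length at most $\natv_i$. Consequently any $\Gamma$-cycle packing $\C$ splits as $\C=\bigcup_{i\in\N}\C_i$, where $\C_i$ is a national $\natv_i$-cycle packing of $G[V_i]$, and since the sets $V_i$ are pairwise disjoint we have $\|\C\|=\sum_{i\in\N}\|\C_i\|$ with the $\C_i$ choosable independently. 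Therefore a maximum $\Gamma$-cycle packing is obtained precisely by computing, for each country $i$, a maximum $\natv_i$-cycle packing of the one-country graph $G[V_i]$ and taking the union.

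For the polynomial direction, suppose $\natv_i\in\{0,2,\infty\}$ for every $i\in\N$. Viewing each $G[V_i]$ as a single-country instance with national cycle limit $\natv_i$, Theorem~\ref{t-dicho} supplies a polynomial-time algorithm computing a maximum $\natv_i$-cycle packing of $G[V_i]$. Running this for all $n$ countries and taking the union solves {\sc Max $\Gamma$-Cycle Packing} in polynomial time by the decomposition above.

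For the hardness direction, suppose some $\natv_j\notin\{0,2,\infty\}$; since we always assume $\natv_j\neq1$, this forces $\natv_j\in\mathbb{N}_{\geq3}$. By Theorem~\ref{t-dicho}, computing a maximum $\natv_j$-cycle packing of a single directed graph is \NP-hard (indeed, even deciding the perfect version is \NP-complete). I would reduce from this single-country problem: given an input graph $H$, build an $n$-partitioned graph $(G,\V)$ by placing $H$ inside $V_j$ (so $G[V_j]=H$) and, to respect the non-emptiness of the partition, adding for each $i\neq j$ a dummy part consisting of a single isolated vertex. As no international arcs are present, the decomposition gives that the size of a maximum $\Gamma$-cycle packing of $G$ equals the size of a maximum $\natv_j$-cycle packing of $H$, the isolated vertices contributing nothing. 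Thus any polynomial-time algorithm for {\sc Max $\Gamma$-Cycle Packing} would solve the \NP-hard single-country problem, establishing \NP-hardness.

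The argument is essentially structural, so the only care needed is in the reduction bookkeeping, and the main obstacle is a modelling subtlety rather than a genuine difficulty: the non-emptiness requirement on every $V_i$ clashes with countries having $\natv_i=0$, whose vertices can never be covered when $\intv=0$. This is precisely why the reduction targets {\sc Max} (where uncovered dummy vertices are harmless) rather than insisting on the perfect variant. If one additionally wants hardness to carry over to {\sc Perfect $\Gamma$-Cycle Packing} in those sub-cases where every other country admits a coverable gadget (e.g.\ $\natv_i\in\mathbb{N}_{\geq2}\cup\{\infty\}$ for all $i\neq j$), one simply replaces each dummy isolated vertex by a perfectly packable national $2$-cycle, so that a perfect packing of $G$ exists if and only if $H$ has a perfect $\natv_j$-cycle packing.
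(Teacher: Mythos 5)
Your proposal is correct and follows essentially the same route as the paper: observe that $\intv=0$ kills all international $\Gamma$-cycles so the problem decomposes into independent national instances, then invoke Theorem~\ref{t-dicho} on each $G[V_i]$ for the polynomial cases and embed the hard single-country instance (padding the other countries with isolated vertices) for \NP-hardness. Your write-up is simply a more explicit version of the paper's terse argument, and your closing remark about why the reduction targets {\sc Max} rather than {\sc Perfect} (uncovered dummy vertices being harmless) is a legitimate subtlety the paper glosses over.
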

\begin{proof}
The polynomial-time solvability of {\sc Max $\Gamma$-Cycle Packing} when $\intv=0$ and $\natv_i\in \{0,2,\infty\}$ for all $i\in \N$ follows from Theorem~\ref{t-dicho}. 
As there are no international cycles, we can perform cycle packing on the individual national KEP pools, which can each be solved in polynomial time since $\natv_i \in \{0,2,\infty\}$ for all $i \in \N$.

If $\intv=0$ and $\natv_i= 3$ for some $i \in\N$, then {\sc Max $\Gamma$-Cycle Packing} being \NP-hard follows from Theorem~\ref{t-dicho}. 
\end{proof}

\begin{lemma}
 Let $\Gamma=(n,\intv,\natv,\segv,\sv)$ be given where $\intv=2$. {\sc Max $\Gamma$-Cycle Packing} is polynomial-time solvable when $\natv_i \in \{0,2\}$ for all $i \in \N$. Otherwise, it is \NP-hard. 
   \end{lemma}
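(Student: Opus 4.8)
## Proof Plan for the Dichotomy when $\intv = 2$

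The plan is to prove this lemma by establishing the polynomial-time cases first and then systematically covering all hardness cases via the lemmas summarised in Table~\ref{tab:dicho}. Since $\intv = 2$, the only possible international cycles are of length~$2$, i.e., pairwise exchanges $\langle u, v\rangle$ where $u$ and $v$ belong to different countries. Crucially, any international cycle of length~$2$ automatically satisfies all segment-size and segment-number constraints (each country contributes exactly one vertex, forming a single segment of size~$1$). Hence, when $\intv = 2$, the parameters $\segv$ and $\sv$ are irrelevant to which international cycles qualify as $\Gamma$-cycles, and the only meaningful restrictions come from $\natv$.

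For the polynomial-time direction, I would assume $\natv_i \in \{0, 2\}$ for all $i \in \N$. The key observation is that when $\natv_i = 0$, country~$i$ contributes no national cycles, and when $\natv_i = 2$, its national cycles are also pairwise exchanges. Thus \emph{every} $\Gamma$-cycle in $G$ (national or international) has length exactly~$2$. A maximum $\Gamma$-cycle packing then corresponds precisely to a maximum matching in the underlying undirected graph $G'$ obtained by placing an edge between $u$ and $v$ whenever both arcs $(u,v)$ and $(v,u)$ exist in $G$ (and the pair is permitted by $\natv$). As noted in the paragraph preceding Theorem~\ref{thm:dico}, this reduces to a maximum matching problem in a general graph, which is solvable in polynomial time. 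I would state this reduction explicitly and invoke the standard polynomial-time matching algorithm.

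For the hardness direction, I must show that if $\natv_j \notin \{0, 2\}$ for some $j \in \N$, i.e., $\natv_j \in \mathbb{N}_{\geq 3} \cup \{\infty\}$, then even \textsc{Perfect $\Gamma$-Cycle Packing} is \NP-complete. This is exactly the content of Lemma~\ref{lem:n=2l=2k=inftyAnycs} (Row~\ref{row:n=2l=2k=inftyAnycs} of Table~\ref{tab:dicho}), which covers $n \geq 2$, $\intv = 2$, $\natv_j \in \mathbb{N}_{\geq 3} \cup \{\infty\}$ for some $j$, with the remaining countries having $\natv_i \in \mathbb{N}_{\geq 0} \cup \{\infty\}$ and arbitrary (admissible) $\segv, \sv$. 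Since our standing assumption forces $\natv_i \neq 1$, the condition ``$\natv_i \notin \{0,2\}$'' is equivalent to ``$\natv_i \in \mathbb{N}_{\geq 3} \cup \{\infty\}$'', so Lemma~\ref{lem:n=2l=2k=inftyAnycs} applies directly.

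The main obstacle, though a minor one, is verifying that the two cases are genuinely exhaustive and mutually exclusive given the standing conventions: because $\natv_i = 1$ is excluded by assumption, each $\natv_i$ lies in $\{0, 2\} \cup \mathbb{N}_{\geq 3} \cup \{\infty\}$, so exactly one of ``all $\natv_i \in \{0,2\}$'' or ``some $\natv_j \in \mathbb{N}_{\geq 3} \cup \{\infty\}$'' holds. I would write this out as follows.

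\begin{proof}
Since $\intv = 2$, every international $\Gamma$-cycle has length exactly~$2$, and such a cycle consists of one vertex from each of two distinct countries, forming a single segment of size~$1$ per country. Hence the constraints imposed by $\segv$ and $\sv$ are automatically satisfied by any international cycle of length~$2$, and only $\natv$ affects which cycles are $\Gamma$-cycles. Recall also that $\natv_i \neq 1$ for all $i \in \N$, so each $\natv_i$ lies in $\{0,2\} \cup \mathbb{N}_{\geq 3} \cup \{\infty\}$.

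\emph{Polynomial case.} Suppose $\natv_i \in \{0,2\}$ for all $i \in \N$. Then every national $\Gamma$-cycle also has length~$2$, so in fact \emph{every} $\Gamma$-cycle of $G$ has length exactly~$2$. Construct an undirected graph $G'$ on vertex set $V$ by adding an edge $\{u,v\}$ whenever $\langle u,v\rangle$ is a $\Gamma$-cycle of $G$ (that is, both arcs $(u,v)$ and $(v,u)$ are present, and the pair is permitted by the relevant national or international constraint). A set of pairwise vertex-disjoint $\Gamma$-cycles in $G$ corresponds bijectively to a matching in $G'$, and the size of the $\Gamma$-cycle packing equals twice the size of the matching. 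Thus a maximum $\Gamma$-cycle packing of $G$ is obtained from a maximum matching of $G'$, which can be computed in polynomial time.

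\emph{Hardness case.} Otherwise, some $\natv_j \notin \{0,2\}$; since $\natv_j \neq 1$, we have $\natv_j \in \mathbb{N}_{\geq 3} \cup \{\infty\}$. The remaining parameters satisfy $n \geq 2$, $\intv = 2$, $\natv_i \in \mathbb{N}_{\geq 0}\cup\{\infty\}$ for all $i \in \N$, and $\segv_i, \sv_i \in \mathbb{N}_{\geq 1}\cup\{\infty\}$ for all $i \in \N$. These are precisely the conditions of Lemma~\ref{lem:n=2l=2k=inftyAnycs}, which establishes that \textsc{Perfect $\Gamma$-Cycle Packing} is \NP-complete. Consequently {\sc Max $\Gamma$-Cycle Packing} is \NP-hard in this case.
\end{proof}
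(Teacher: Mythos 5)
Your proposal is correct and follows essentially the same route as the paper's proof: the polynomial case is handled by observing that all $\Gamma$-cycles have length~$2$ and reducing to maximum matching (Edmonds' algorithm), and the hardness case invokes Lemma~\ref{lem:n=2l=2k=inftyAnycs} exactly as the paper does. Your explicit remarks that length-$2$ international cycles automatically satisfy the $\segv$/$\sv$ constraints and that the case split is exhaustive because $\natv_i\neq 1$ are welcome clarifications, but they do not change the argument.
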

\begin{proof}
     We first show that {\sc Max $\Gamma$-Cycle Packing} is polynomial-time solvable when $\intv=2$ and for all $i \in \N $, $\natv_i \in \{0,2\}$. As $\intv, \natv_i \leq 2$ for all $ i \in \N$, all $\Gamma$-cycles are of length at most~$2$. 
     Hence, Edmonds' algorithm~\cite{micali1980v} can be used, as shown by \citet{roth2005pairwise} in the KEP setting when there are only cycles of length~$2$. Note that this algorithm runs in polynomial time. The algorithm converts the compatibility graph into an undirected graph where an edge represents a pairwise exchange between two vertices. Note that for countries $j \in\N$ with $\natv_j=0$, undirected edges are not added for their national pairwise exchanges. 

    This polynomial time solvable case, given that $\intv=2$,  is when  $\natv_i \in \{0,2\}$ for all $i \in \N$. Therefore, in all remaining cases, there must be some $i \in \N$ such that $\natv_i \geq 3$.
    First, assume $n=2$.
    Recall that Lemma~\ref{lem:n=2l=2k=inftyAnycs} show that {\sc Max $\Gamma$-Cycle Packing} is \NP-hard  when $\intv=2$ and there is some $i \in \N$ such that $\natv_i\in \mathbb{N}_{\geq 3}\cup\{\infty\}$. This is reiterated in Row~\ref{row:n=2l=2k=inftyAnycs} of Table~\ref{tab:dicho}, showing that when $\intv=2$, and there is some $i \in \N$ such that $\natv_i \in \mathbb{N}_{\geq 3}\cup\{\infty\}$ while there are no other restrictions placed on $\natv_j, \segv, \sv$, i.e., $\natv_j\in \mathbb{N}_{\geq 0}\cup\{\infty\}$ for $j \in \N\backslash\{i\}$, $\segv\in\mathbb{N}_{\geq 1}\cup\{\infty\}^n$ and $\sv\in\mathbb{N}_{\geq 1}\cup\{\infty\}^n$. Hence, all other cases of {\sc Max $\Gamma$-Cycle Packing} are \NP-hard when $\intv=2$.
    We can extend this to any $n\geq 3$ by adding countries that are isolated vertices in the computability graph.   
\end{proof}

\begin{lemma}
    Let $\Gamma=(n,\intv,\natv,\segv,\sv)$ be given where $\intv=3$. {\sc Max $\Gamma$-Cycle Packing} is polynomial-time solvable when $n=2$, $\segv=(1,1)$ and $\natv_i \in \{0,2\}$ for all $i \in \N$. Otherwise, it is \NP-hard.
\end{lemma}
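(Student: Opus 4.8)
The plan is to split the statement into the single polynomial-time solvable regime and a complete case analysis of the remaining (``otherwise'') parameter settings, each of which I would reduce to an existing hardness lemma from Table~\ref{tab:dicho}.

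First, for the polynomial case I would assume $n=2$, $\segv=(1,1)$ and $\natv_i\in\{0,2\}$ for both $i$, and argue that every $\Gamma$-cycle has length at most~$2$. National $\Gamma$-cycles have length at most $\natv_i\le 2$ by assumption. For international $\Gamma$-cycles the key observation is that, since there are only two countries and $\segv_1=\segv_2=1$, every $V_i$-segment is a single vertex, so such a cycle must strictly alternate between $V_1$ and $V_2$; an alternating cycle on two parts has even length, and together with $\intv=3$ this forces length exactly~$2$. In particular there are no international $\Gamma$-cycles of length~$3$: any triangle on two countries has two vertices of the same country joined by an arc, i.e.\ a $V_i$-segment of size~$2$, contradicting $\segv=(1,1)$. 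Hence the set of $\Gamma$-cycles consists solely of $2$-cycles, and --- exactly as in the preceding lemma for $\intv=2$ --- I would build an undirected graph with an edge $\{u,v\}$ for each $\Gamma$-cycle $\langle u,v\rangle$ (omitting national pairs of any country with $\natv_i=0$) and compute a maximum matching in polynomial time using Edmonds' algorithm~\cite{micali1980v}, following \citet{roth2005pairwise}; a maximum matching of size $\mu$ yields a maximum $\Gamma$-cycle packing of size $2\mu$. The values of $\sv$ are irrelevant in this regime, since each international cycle uses exactly one segment per country and $\sv_i\ge 1$.

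For the hardness direction I would verify that the negation of the polynomial condition (with $\intv=3$ fixed) decomposes into three mutually exhaustive subcases, each already proven \NP-complete even for {\sc Perfect $\Gamma$-Cycle Packing}. \emph{(i)} If $n\ge 3$, hardness follows from Lemma~\ref{lem:Gl=3c=3s=1n=3}. \emph{(ii)} If $n=2$ but $\segv\ne(1,1)$, then $\segv_j\in\mathbb{N}_{\geq 2}$ for some $j$, and hardness follows from Lemma~\ref{lem:n=2l=2k=2Anycs}. \emph{(iii)} If $n=2$ and $\segv=(1,1)$ but $\natv_i\notin\{0,2\}$ for some $i$, then $\natv_i\in\mathbb{N}_{\geq 3}\cup\{\infty\}$; since $\segv=(1,1)$ meets the hypothesis $\segv_j=1$, hardness follows from Lemma~\ref{lem:n=2l=2k=inftyAnycsSEGV}, whose admissible range of $\intv$ includes~$3$. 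Any $\Gamma$ with $\intv=3$ lying outside the tractable regime falls into exactly one of \emph{(i)}--\emph{(iii)}, which completes the dichotomy for $\intv=3$. The main obstacle is structural rather than computational: getting the length-$2$ reduction right hinges on the clean observation that $\intv=3$ together with $\segv=(1,1)$ and $n=2$ still admits no length-$3$ international cycle, so the problem collapses to maximum matching despite the larger cycle limit; and on the hardness side the care lies in confirming that the three cited lemmas jointly cover every parameter tuple outside the single tractable case, in particular matching the boundary value $\segv=(1,1)$ to the $\segv_j=1$ hypothesis of Lemma~\ref{lem:n=2l=2k=inftyAnycsSEGV}.
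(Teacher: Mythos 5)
Your proposal is correct and takes essentially the same route as the paper: the tractable case is argued by showing every $\Gamma$-cycle (national or international) has length at most $2$ and reducing to maximum matching via Edmonds' algorithm, and the hardness cases are discharged by exactly the same three lemmas the paper invokes (Lemma~\ref{lem:Gl=3c=3s=1n=3} for $n\geq 3$, Lemma~\ref{lem:n=2l=2k=2Anycs} for enlarged segment size, and Lemma~\ref{lem:n=2l=2k=inftyAnycsSEGV} for enlarged national cycle limits), with the same observation that $\sv$ is irrelevant in the tractable regime. The only cosmetic slip is that in your case \emph{(ii)} the negation of $\segv=(1,1)$ gives $\segv_j\in\mathbb{N}_{\geq 2}\cup\{\infty\}$ rather than $\segv_j\in\mathbb{N}_{\geq 2}$; this is still covered, since Table~\ref{tab:dicho} records Lemma~\ref{lem:n=2l=2k=2Anycs} with range $N^\infty(2)$, which is also how the paper itself applies it.
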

\begin{proof}
    We first show that {\sc Max $\Gamma$-Cycle Packing} is polynomial-time solvable when $\intv=3$,  $n=2$, $\segv=(1,1)$ and $\natv_i \in \{0,2\}$ for all $i \in \N$. This is due to all $\Gamma$-cycles being of length~$2$. National $\Gamma$-cycles are of length at most~$2$ given that $\natv_i \in \{0,2\}$ for all $i \in \N$. We next show that all international $\Gamma$-cycles are of length~$2$ when $n=2$, $\segv=(1,1)$ and $\intv=3$. Consider an international cycle with a segment of size~$1$ from country~$1$ followed by a segment of size~$1$ from country~$2$. This international cycle cannot be extended to a cycle of length~$3$ given that $\segv=(1,1)$. Adding another segment of size~$1$ from either country to this international cycle will create a segment of length~$2$ when the segments are joined to make a cycle. As all $\Gamma$-cycles are of size~$2$, we can use Edmonds' algorithm~\cite{micali1980v} as shown by \citet{roth2005pairwise} in the KEP setting.

    To show the dichotomy in complexity when restricted to $\intv=3$, we show that {\sc Max $\Gamma$-Cycle Packing} becomes \NP-hard when increasing each country-specific parameter from the polynomial-time solvable case.  

    We first focus on increasing the number of countries to $n \geq 3$ and show that the problem is then \NP-hard. Recall that Lemma~\ref{lem:Gl=3c=3s=1n=3} shows \NP-hardness when $n=3$, $\intv=3$, $\segv=\{1\}^n$, and $\natv_i \in \{0,2\}$ for all $i \in \N$. We can extend this to any $n\geq 4$ by adding countries that are isolated vertices in the computability graph.
    Observe Row~\ref{row:Gl=3c=3s=1n=3} of Table~\ref{tab:dicho}, showing that  Lemma~\ref{lem:Gl=3c=3s=1n=3} holds for any $\natv_i\in \mathbb{N}_{\geq 0}\cup\{\infty\}$ and $\segv_i, \sv_i\in \mathbb{N}_{\geq 1}\cup\{\infty\}$, and therefore in all other cases when $\intv=3$ and the number of countries is increased $n \geq 3$, we see that the problem becomes \NP-hard.

    We next increase the national cycle limits such that at least one country has a national cycle limit of at least length~$3$. 
    Recall that   
    Lemma~\ref{lem:n=2l=2k=inftyAnycsSEGV} shows that when $\intv=3$, $n = 2$, $\segv=(1,1)$ and there is some $i \in \N$ such that $\natv_i=3$, 
    then {\sc Max $\Gamma$-Cycle Packing} is \NP-hard. Moreover, as shown in Row~\ref{row:n=2l=2k=inftyAnycsSEGV} of Table~\ref{tab:dicho}, Lemma~\ref{lem:n=2l=2k=inftyAnycsSEGV} also holds for any any $\natv$ such that  for some $i \in \N$ we have that $\natv_i \in \mathbb{N}_{\geq 3}\cup\{\infty\} $ and for the other country $j \in \N\backslash \{i\}$ we have that $\natv_j\in \mathbb{N}_{\geq 0}\cup\{\infty\}$. Thus, it holds for any variation of national cycle limits and, therefore, in all other cases when $n=2$, $\intv=3$, $\segv=(1,1)$, any $\sv$, and any $\natv$ when one country has a national limit of at least $3$ or unbounded we see that {\sc Max $\Gamma$-Cycle Packing} is \NP-hard. 
    
    Finally, we address altering the international segment sizes such that at least one country has its international segment size to be at least~$2$. 
    Recall that Lemma~\ref{lem:n=2l=2k=2Anycs} shows that {\sc Max $\Gamma$-Cycle Packing} is \NP-hard when $\N=\{1,2\}$, $\intv=3$, $\natv_1, \natv_2 \in\mathbb{N}_{\geq 0}\cup\{\infty\}$, $\sv_1, \sv_2\in \{\mathbb{N}_{\geq 
    1}, \infty\}$, $\segv_1 \in \mathbb{N}_{\geq 2}\cup\{\infty\}$ and $\segv_2 \in \mathbb{N}_{\geq 1}\cup\{\infty\}$ (also given in Row~\ref{row:n=2l=2k=2Anycs} of  Table~\ref{tab:dicho}). 
    Therefore, we have shown that for all other cases the problem is \NP-hard when altering the international segment sizes from this polynomial time solvable case. 
    
    Thus, we have shown a dichotomy in the complexity of {\sc Max $\Gamma$-Cycle Packing} when  $\intv=3$.
\end{proof}

\begin{lemma}
     Let $\Gamma=(n,\intv,\natv,\segv,\sv)$ be given where $\intv\in \mathbb{N}_{\geq 4}$. {\sc Max $\Gamma$-Cycle Packing} is polynomial-time solvable when $n=2$, $\segv=(1,1)$, $\natv_i \in \{0,2\}$ for all $i \in \N$ and there is some $j \in\N$ such that $\sv_j=1$. Otherwise, it is \NP-hard.\label{lem:l>4}
\end{lemma}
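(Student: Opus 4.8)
The plan is to follow the template of the preceding $\intv=3$ lemma: exhibit the single polynomial-time island and then certify \NP-hardness everywhere else by invoking the hardness lemmas of Table~\ref{tab:dicho}. The only structural difference is that the polynomial region is now carved out by the \emph{disjunctive} condition that $\sv_j=1$ for some $j$, so the hardness argument must be split into two regimes rather than run as a plain ``increase one parameter'' sweep.

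For the tractable case ($n=2$, $\segv=(1,1)$, $\natv_i\in\{0,2\}$, and $\sv_j=1$ for some $j$) I would argue that every $\Gamma$-cycle has length exactly $2$. National $\Gamma$-cycles have length at most $2$ since $\natv_i\in\{0,2\}$. For international cycles, $\segv=(1,1)$ forces strict alternation between $V_1$ and $V_2$, since two consecutive vertices of the same country would create a segment of size $2>1$; hence the numbers of $V_1$- and $V_2$-segments coincide and each equals half the length. Assuming without loss of generality $\sv_1=1$, at most one $V_1$-segment (and thus one $V_2$-segment) is permitted, forcing length $2$. With all $\Gamma$-cycles of length $2$, a maximum $\Gamma$-cycle packing is exactly a maximum-cardinality matching in the undirected graph whose edges are the length-$2$ $\Gamma$-cycles (omitting national edges for countries with $\natv_j=0$), which is solvable in polynomial time by Edmonds' algorithm~\cite{micali1980v}, exactly as \citet{roth2005pairwise} observe in the KEP setting.

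For hardness I would first peel off the region where the $\sv$-disjunct fails, namely $\sv_1\geq 2$ and $\sv_2\geq 2$. Here Lemma~\ref{lem:l=4k>=2c=1s=2n=2} (Row~\ref{row:l=4k>=2c=1s=2n=2}) gives \NP-hardness for every $n\geq 2$, every $\natv$, and every finite $\segv$, so this entire region is hard irrespective of the other parameters. It then remains to treat tuples with $\sv_j=1$ for some $j$, where the residual tractability requirements are $n=2$, $\segv=(1,1)$, and $\natv_i\in\{0,2\}$; I would violate these one at a time. Increasing to $n\geq 3$ is covered by Lemma~\ref{lem:Gl=3c=3s=1n=4} (Row~\ref{row:Gl=3c=3s=1n=4}), extended to $n\geq 4$ by padding with countries consisting of a single isolated vertex. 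Keeping $n=2$ but allowing some $\natv_i\in\mathbb{N}_{\geq 3}\cup\{\infty\}$ is covered by Lemma~\ref{lem:n=2l=2k=inftyAnycsSV} (Row~\ref{row:n=2l=2k=inftyAnycsSV}), whose reduction is valid for any $\segv\geq 1$ and $\sv_j=1$; and keeping $n=2$ with $\natv_i\in\{0,2\}$ but allowing some $\segv_i\in\mathbb{N}_{\geq 2}$ is covered by Lemma~\ref{lem:n=2l=2k=2Anycs4} (Row~\ref{row:n=2l=2k=2Anycs4}). Since every non-tractable tuple satisfies at least one of ``$\sv_1,\sv_2\geq 2$'', ``$n\geq 3$'', ``some $\natv_i\notin\{0,2\}$'', or ``some $\segv_i\geq 2$'', these four lemmas exhaust the hard cases and the dichotomy follows.

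The hardest part will be the bookkeeping of the two regimes: because the polynomial island is defined by a disjunction on $\sv$, one cannot simply increase a single parameter from a canonical tractable point, and care is needed so that the union of the four cited hardness regions, together with the polynomial island, genuinely covers all admissible $\Gamma$. A secondary check is the behaviour of the unbounded values: since $\intv$ is finite, any country parameter equal to $\infty$ is effectively capped by $\intv$, so the gadgets produce the same set $\gcyc$ as the corresponding finite choice, which is precisely the observation underpinning the ``same reduction induces the same $\gcyc$'' remarks used to derive Lemmas~\ref{lem:Gl=3c=3s=1n=4} and~\ref{lem:n=2l=2k=2Anycs4}.
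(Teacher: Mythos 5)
Your proposal is correct and follows essentially the same route as the paper: the identical polynomial-time argument (every $\Gamma$-cycle has length $2$, so Edmonds' maximum matching algorithm applies) and the same four hardness ingredients, namely Lemmas~\ref{lem:Gl=3c=3s=1n=4}, \ref{lem:n=2l=2k=inftyAnycsSV}, \ref{lem:n=2l=2k=2Anycs4} and~\ref{lem:l=4k>=2c=1s=2n=2}. Your explicit two-regime split on the disjunctive $\sv$-condition and the remark that finite $\intv$ caps any $\infty$-valued segment parameter are presentational refinements (the paper instead sweeps one parameter at a time and leans on the ranges in Table~\ref{tab:dicho}), but the underlying case analysis and citations coincide.
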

\begin{proof}
    We first show that {\sc Max $\Gamma$-Cycle Packing} is polynomial-time solvable when $\intv\in\mathbb{N}_{\geq 4}$, $n=2$, $\segv=(1,1)$, there is some $j \in\N$ such that $\sv_j=1$ and $\natv_i\in \{0,2\}$ for all $i \in \N$. This is due to all $\Gamma$-cycles having at most~$2$ vertices.  
    First, observe that all national $\Gamma$-cycles are of length at most~$2$ due to $\natv_i \in \{0,2\}$ for all $i \in \N$. Secondly, all international $\Gamma$-cycles  also have at most~$2$ vertices. Given that $\segv=(1,1)$, every international $\Gamma$-cycle consists of alternating international segments of size~$1$ from the two countries.  
    As there is some $j \in \N$ with $\sv_j=1$, all international $\Gamma$-cycles will consist of $2$ vertices (i.e., one cycle of length~$2$, built from two segments of size~$1$, one from each country). Although the second country may have a higher segment number, i.e., for country $i\in \N$ such that $i\neq j$ with  $\sv_i > 1$, another $V_i$-segment cannot be added without another $V_j$-segment. Thus, all $\Gamma$-cycles are of length at most~$2$, and we can solve the problem in polynomial time again using Edmonds' algorithm~\cite{micali1980v}.

    We next show that the remaining cases of {\sc Max $\Gamma$-Cycle Packing} when $\intv\in \mathbb{N}_{\geq 4}$ are \NP-hard.

    We first show that the problem becomes \NP-hard when the number of countries is increased such that $n\geq3$.   
    Observe Row~\ref{row:Gl=3c=3s=1n=4}   of Table~\ref{tab:dicho}, showing that  Lemma~\ref{lem:Gl=3c=3s=1n=4} holds  when $n=3$, $\intv\in \mathbb{N}_{\geq 4}$, for any $\natv_i\in \mathbb{N}_{\geq 0}\cup\{\infty\}$, $\segv_i\in \mathbb{N}_{\geq 1}\cup\{\infty\}$ for all $i \in \N$, and for some $j \in \N$ we have that $\sv_j=1$  and no restriction on the remaining $j' \in \N\backslash\{j\}$ having $\sv_{j'}\in \mathbb{N}_{\geq 1}\cup\{\infty\}$.
    We can extend this to any $n\geq 4$ by adding countries that are isolated vertices in the computability graph.
    Therefore, in all other cases when $\intv\in \mathbb{N}_{\geq 4}    $ and the number of countries is increased $n \geq 3$, we see that {\sc Max $\Gamma$-Cycle Packing} becomes \NP-hard.

    Next, we address increasing the international segment size vector such that at least one country has a segment size of at least~$2$. Thus, we need to show that {\sc Max $\Gamma$-Cycle Packing} is \NP-hard when  $n=2$, $\intv\in\mathbb{N}_{\geq 4}$, and at least one country has an international segment number of one i.e., for some $i \in \N$ $\segv_i=1 $  while $i' \in \N\backslash\{i\}$ has $\segv_{i'} \in \mathbb{N}_{\geq 1}\cup\{\infty\}$, and least one country has an international segment size of least~$2$, without a restriction on the other countries, i.e., for some $j \in \N$, $\segv_j \in \mathbb{N}_{\geq 2}\cup\{\infty\}$  while $j' \in \N\backslash\{j\}$ has $\segv_{j'} \in \mathbb{N}_{\geq 1}\cup\{\infty\}$. This was shown in Lemma~\ref{lem:n=2l=2k=2Anycs4} as seen in Row~\ref{row:n=2l=2k=2Anycs4} of Table~\ref{tab:dicho}. 
    
    Next, we address increasing the national cycle limit such that at least one country has a national cycle limit of $3$. We assume that $n=2$, $\intv\in\mathbb{N}_{\geq 4}$, $\segv=(1,1)$ and there is some $j \in\N$ such that $\sv_j=1$ while the other country $j'\in\N$ has $\sv_{j'}\in\mathbb{N}_{\geq 1}\cup\{\infty\}$, and there is some $i \in \N$ such that $\natv_i \in  \mathbb{N}_{\geq 3}\cup\{\infty\}$ while the other country $i'\in\N$ has $\sv_{i'}\in\mathbb{N}_{\geq 0}\cup\{\infty\}$.  Lemma~\ref{lem:n=2l=2k=inftyAnycsSV} (and Row~\ref{row:n=2l=2k=inftyAnycsSV} of Table~\ref{tab:dicho}) shows it to be \NP-hard. 

    Finally, we increase the international segment numbers such that no country has an international segment number of~$1$. In Lemma~\ref{lem:l=4k>=2c=1s=2n=2} we show that {\sc Max $\Gamma$-Cycle Packing} is \NP-hard when $\N=\{1,2\}$, $\intv\in\mathbb{N}_{\geq 4}$, $\segv=(1,1)$, $\natv_1, \natv_2 \in \{0,2\}$, and $ \sv_1, \sv_2 \in  \mathbb{N}_{\geq 2}\cup\{\infty\}$ (see Row~\ref{row:l=4k>=2c=1s=2n=2} of Table~\ref{tab:dicho}). That is, for any increase in international segment numbers such that no country has a segment number of~1, then {\sc Max $\Gamma$-Cycle Packing} becomes \NP-hard. 

    Hence, we have shown that in all other cases {\sc Max $\Gamma$-Cycle Packing} is \NP-hard given that $\intv\in \mathbb{N}_{\geq 4}$. 
 \end{proof}

\begin{lemma}
    Let $\Gamma=(n,\intv,\natv,\segv,\sv)$ be given where $\intv=\infty$. {\sc Max $\Gamma$-Cycle Packing} is polynomial-time solvable in the following three cases: 
    \begin{enumerate}
    \item when $n= 2$, $\natv_i \in \{0,2\}$ for all $i \in \N$, $\segv=(1,1)$ and there is some $j\in \N$ such that $\sv_j=1$;
    \item when $\natv=\{0\}^n$, $\segv=\{1\}^n$, $\sv=\{\infty\}^n$;
    \item  and, when $\natv{=}\{\infty\}^n$, $\segv{=}\{\infty\}^n$, $\sv{=}\{\infty\}^n$.
    \end{enumerate}
Otherwise, it is \NP-hard.
\end{lemma}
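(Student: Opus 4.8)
The plan is to split the argument into the three polynomial cases and the hardness cases, exactly as in the proofs for the other values of $\intv$ (cf.\ Lemma~\ref{lem:l>4}), and for each polynomial case to show that the admissible $\Gamma$-cycles collapse to a matching-solvable structure. In case~(1) ($n=2$, $\natv_i\in\{0,2\}$, $\segv=(1,1)$, some $\sv_j=1$) every national $\Gamma$-cycle has length at most~$2$, while every international $\Gamma$-cycle strictly alternates between $V_1$ and $V_2$ (because $\segv=(1,1)$) and visits the country with $\sv_j=1$ only once, forcing length~$2$; hence all $\Gamma$-cycles have length~$2$ and Edmonds' algorithm on the associated undirected graph applies, as used by \cite{roth2005pairwise,micali1980v}. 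In case~(2) ($\natv=\{0\}^n$, $\segv=\{1\}^n$, $\sv=\{\infty\}^n$) there are no national $\Gamma$-cycles and, since every segment has size~$1$, every international $\Gamma$-cycle uses international arcs only; with no bound on length or segment number the $\Gamma$-cycles are exactly the cycles of the international subgraph $G_{\intt}$, so a maximum packing follows from a maximum-weight perfect matching in the bipartite split graph, as in the $\natv=\infty$ case of Theorem~\ref{t-dicho}. Case~(3) imposes no restriction, so the same assignment/bipartite-matching reduction applied to all of $G$ solves it.

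For hardness I would build a case tree over the complement, splitting first on the segment-number vector $\sv$. If all $\sv_i\geq 2$ with at least one finite, Lemma~\ref{lem:l=4k>=2c=1s>=2n=2} applies for every $\natv$ and $\segv$. If some $\sv_j=1$, then for $n\geq 3$ Lemma~\ref{lem:Gl=3c=3s=1n=4} applies (it already admits $\intv=\infty$), while for $n=2$ a deviation from case~(1) must occur: some $\segv_i\geq 2$ invokes Lemma~\ref{lem:n=2l=2k=2Anycs4}, some $\natv_i\in\mathbb{N}_{\geq 3}\cup\{\infty\}$ invokes Lemma~\ref{lem:n=2l=2k=inftyAnycsSV}, and otherwise we are back in case~(1). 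The remaining branch is $\sv=\{\infty\}^n$: if all $\segv_i=\infty$, then either all $\natv_i=\infty$ (case~(3)) or some $\natv_j$ is finite, the latter handled by Lemma~\ref{lem:l=inftyk=2c=inftys=inftyn=2}; if some $\segv_j$ is finite and at least~$2$, Lemma~\ref{lem:l=inftyk>=0c=1s>=2n=2} applies; and Lemma~\ref{lem:l=inftyc=inftys=1,infty} covers $\segv=\{\infty\}^n$ together with a finite segment number. Throughout, a $n=2$ hardness construction is lifted to $n\geq 3$ by adjoining isolated-vertex countries.

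The delicate residual is $\sv=\{\infty\}^n$ together with every $\segv_i\in\{1,\infty\}$ (at least one equal to~$1$), some $\natv_i=0$, while not being in case~(2). Any such instance with a finite $\natv_j\in\mathbb{N}_{\geq 3}$ I would settle by reducing from the $n=1$ hardness of Theorem~\ref{t-dicho}: embed the hard national instance in country~$j$ using national arcs only, and pad every other country with vertices forced into international $2$-cycles (permissible since $\intv=\infty$ and $\segv_i,\sv_i\geq 1$), so that a perfect $\Gamma$-cycle packing exists precisely when the source is a yes-instance. The configurations that remain have $\natv_i\in\{0,2,\infty\}$ only, some $\natv_i=0$, and $\segv$ mixing the values $1$ and $\infty$; these sit just outside the hypotheses of Lemmas~\ref{lem:l=inftyk>=0c=1s>=2n=2}, \ref{lem:lk=intyc=mathbbs=infty} and \ref{lem:l=inftyk=2c=inftys=inftyn=2} (which respectively require a finite $\segv_j\geq 2$, all $\natv_i\geq 2$, or $\segv=\{\infty\}^n$), and establishing their hardness is the step I expect to dominate the work. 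One must adapt the single-gadget reduction of Lemma~\ref{lem:lk=intyc=mathbbs=infty} so that the countries with $\natv_i=0$ contribute only international cycles (which the mixed $\segv$ still permits) while no $\Gamma$-cycle can escape a gadget, and this must be done while keeping the reduction valid for \textsc{Perfect $\Gamma$-Cycle Packing}, since zero national limits forbid the convenient trick of leaving auxiliary vertices uncovered. The surrounding bookkeeping --- verifying that the split on $\sv$, then $\segv$, then $\natv$ is exhaustive and that every non-polynomial leaf is matched to a row of Table~\ref{tab:dicho} --- is routine but error-prone, and is where the dichotomy is ultimately clinched.
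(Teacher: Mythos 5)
Your architecture is the same as the paper's: the three polynomial cases handled by Edmonds' matching (case~(1)) and the assignment/bipartite-split trick restricted to international arcs (cases~(2) and~(3)), followed by a case tree over $\sv$, then $\segv$, then $\natv$, with each hardness leaf discharged by a row of Table~\ref{tab:dicho} and lifted to larger $n$ via isolated-vertex countries; most of your branches are matched to the correct lemmas. The genuine gap is the residual you flag but never close. After the lemmas you cite are applied, what truly remains is: $\intv=\infty$, $\sv=\{\infty\}^n$, $\natv=\{0\}^n$, and $\segv$ taking only the values $1$ and $\infty$ with at least one of each (minimal example: $n=2$, $\natv=(0,0)$, $\segv=(1,\infty)$, $\sv=(\infty,\infty)$). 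None of Lemmas~\ref{lem:l=4k>=2c=1s>=2n=2}, \ref{lem:l=inftyk=0c=1s=inftyn=2}, \ref{lem:l=inftyk>=0c=1s>=2n=2}, \ref{lem:l=inftyc=inftys=1,infty}, \ref{lem:lk=intyc=mathbbs=infty} or \ref{lem:l=inftyk=2c=inftys=inftyn=2} applies there: they require, respectively, a finite $\sv_j\geq 2$, some $\natv_j\geq 2$, a finite $\segv_j\geq 2$, $\segv=\{\infty\}^n$, all $\natv_i\geq 2$, or $\segv=\sv=\{\infty\}^n$. For this configuration the $\Gamma$-cycles are exactly the cycles that avoid arcs internal to the $\segv_i=1$ countries and are not wholly national, and the prohibition of purely-national cycles (forced by $\natv=\{0\}^n$) is precisely what breaks the assignment reduction used in case~(2); whether this is \NP-hard genuinely needs a new gadget, and your proposal only declares the intent to "adapt the single-gadget reduction of Lemma~\ref{lem:lk=intyc=mathbbs=infty}" without producing it. Until that reduction exists, the dichotomy is not established by your argument. (You have, in fact, put your finger on the thinnest point of the published proof as well: the paper's paragraph on extending Case~2 asserts that Lemma~\ref{lem:l=inftyk>=0c=1s>=2n=2} covers "any change of $\segv$", although that lemma's hypothesis demands a finite $\segv_j\geq 2$, so the change $\segv_i\colon 1\mapsto\infty$ is not covered there either.)

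Two corrections to your bookkeeping around that residual. First, your residual is larger than necessary: Lemma~\ref{lem:l=inftyk=0c=1s=inftyn=2} applies whenever some $\natv_j\in\mathbb{N}_{\geq 2}\cup\{\infty\}$, some $\segv_k=1$, and all $\sv_i\in\mathbb{N}_{\geq 2}\cup\{\infty\}$, so every configuration in your residual with $\natv\neq\{0\}^n$ — including finite $\natv_j\in\mathbb{N}_{\geq 3}$ — is already covered, making your proposed embedding of the $n=1$ hardness of Theorem~\ref{t-dicho} redundant; only the all-zero national vector survives. Second, your citation of Lemma~\ref{lem:l=inftyc=inftys=1,infty} is misplaced: under the branch $\sv=\{\infty\}^n$ it is vacuous (it needs a finite $\sv_j$); where it is actually needed is your $n=2$, $\sv_j=1$ branch with $\segv=(\infty,\infty)$, since the stated hypothesis of Lemma~\ref{lem:n=2l=2k=2Anycs4} requires a finite $\segv_j\in\mathbb{N}_{\geq 2}$ (one must either invoke Lemma~\ref{lem:l=inftyc=inftys=1,infty}, or argue that the gadget of Lemma~\ref{lem:n=2l=2k=2Anycs4} still works when the large segment size is $\infty$, which is how Table~\ref{tab:dicho} records that row).
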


\begin{proof}

We first show that the three cases listed in the Lemma statement are all polynomial-time solvable. Following this, we show \NP-hardness for the remaining cases. 

\paragraph{Case~1.} We first show that {\sc Max $\Gamma$-Cycle Packing} is solvable in polynomial time when $\intv=\infty$, $n= 2$, $\natv_i\in\{0,2\}$ for all $i \in \N$, $\segv=\{1\}^n$ and there exists a $j\in \N$ such that $\sv_j=1$. This follows the same reasoning and algorithm as the polynomial-time solvable case in Lemma~\ref{lem:l>4}. 
    
\paragraph{Case~2.} Our second polynomial time solvable case of {\sc Max $\Gamma$-Cycle Packing} is when $\intv=\infty$, $n\geq2$, $\natv=\{0\}^n$, $\segv=\{1\}^n$, $\sv=\{\infty\}^n$. As $\natv=\{0\}^n$ and $\segv=\{1\}^n$, we see that any $\Gamma$-cycle will not contain any national arcs. Hence, a cycle of any length that uses only international arcs will be a $\Gamma$-cycle.  Thus, the same folklore trick described by \citet{abraham2007origin} for the unbounded KEP setting (as given in Theorem~\ref{t-dicho}) can be used here. However, in the modified graph, we only consider international arcs.

\paragraph{Case~3.} Our final polynomial-time solvable case of {\sc Max $\Gamma$-Cycle Packing} is when all of the country-specific parameters are unbounded, i.e., $\intv=\infty$, $n \geq 2$, $\natv=\{\infty\}^n$, $\segv=\{\infty\}^n$, and $\sv=\{\infty\}^n$. The same folklore trick described by \citet{abraham2007origin} for the unbounded KEP setting (as given in Theorem~\ref{t-dicho}) can be applied in our setting. 

We next show these are the only polynomial-time solvable cases when $\intv=\infty$. 

\paragraph{Extending Case 1.} This follows from the proof of Lemma~\ref{lem:l>4}. The remaining hardness cases extending these polynomial-time solvable case are covered by Rows~\ref{row:Gl=3c=3s=1n=4}, \ref{row:n=2l=2k=inftyAnycsSV}, \ref{row:n=2l=2k=2Anycs4}, and~\ref{row:l=4k>=2c=1s>=2n=2} in Table~\ref{tab:dicho}.
We can extend all these hardness results to larger values of $n$ by adding extra countries as isolated vertices in the computability graph.

\paragraph{Extending Case 2.}
Returning to Case~2 where  $\natv=\{0\}^n$, $\segv=\{1\}^n$,  and $\sv=\{\infty\}^n$, we next show that altering any of these parameters leads to {\sc Max $\Gamma$-Cycle Packing} being \NP-hard.

First, we increase the length of the national cycle of at least one country. Hence, we assume that there is some $j \in \N$ such that $\natv_j\in \mathbb{N}_{\geq 2}\cup\{\infty\}$, while $\intv=\infty$, $n=2$, $\segv=\{1\}^n$, $\sv=\{\infty\}^n$.   Lemma~\ref{lem:l=inftyk=0c=1s=inftyn=2}, which we can extend to $n\geq 2$ by adding new countries as isolated vertices in the computability graph, shows that {\sc Max $\Gamma$-Cycle Packing} is \NP-hard when $\intv=\infty$, $\N=\{1, 2\}$, $\segv=(1, 1)$, $\sv=(\infty, \infty)$ and $\natv_1\in \mathbb{N}_{\geq 2}\cup\{\infty\}$ and $\natv_2\in \mathbb{N}_{\geq 0}\cup\{\infty\}$ (see Row~\ref{row:l=inftyk=0c=1s=inftyn=2} of Table~\ref{tab:dicho}). That is, for any other national cycle limit vector $\natv$, {\sc Max $\Gamma$-Cycle Packing} becomes \NP-hard.

Next, we address the effect of an increasing $\segv$; thus, assuming that $\intv=\infty$, $\natv=\{0\}^n$, $\sv=\{\infty\}^n$, and $\segv_i \in \mathbb{N}_{\geq 2}$ for some $i \in \N$ and $\segv_j \in \mathbb{N}_{\geq 1}\cup\{\infty\}$ for all $j \in \N\backslash\{i\}$. 
In Lemma~\ref{lem:l=inftyk>=0c=1s>=2n=2} (see Row~\ref{row:l=inftyk>=0c=1s>=2n=2} in Table~\ref{tab:dicho}), {\sc Max $\Gamma$-Cycle Packing} was shown to be \NP-hard under these conditions and showing that for any change of $\segv$ will entail \NP-hardness. We can extend both lemmas to be valid for larger values of $n$ by adding new countries as isolated vertices in the computability graph.

Finally, we address changing the segment number of at least one country from being unbounded to being a natural number, i.e., there exists some $j$ such that $\sv_j\in \mathbb{N}$ while $\natv=\{0\}^n$, $\segv=\{1\}^n$. 
First suppose $n=2$. Recall that in Case~1, if one country's segment number is~$1$, the problem becomes solvable in polynomial time. Now suppose $n\geq 3$. In this case we apply Lemma~\ref{lem:n=2l=2k=2Anycs} (see Row~\ref{row:n=2l=2k=2Anycs} in Table~\ref{tab:dicho}).
Thus, we must show that {\sc Max $\Gamma$-Cycle Packing} is \NP-hard when $\intv=\infty$, $\natv=\{0\}^n$, $\segv=\{1\}^n$ and for some $i \in \N$ we have that $\sv_i\in  \mathbb{N}_{\geq 2} $ while for all $j \in \N\backslash\{i\}$, we have that $\sv_j\in  \mathbb{N}_{\geq 2}\cup\{\infty\}$. Hence, {\sc Max $\Gamma$-Cycle Packing} was shown to be \NP-hard in the remainder of the cases when varying the segment numbers in Lemma~\ref{lem:l=4k>=2c=1s>=2n=2} (see Row~\ref{row:l=4k>=2c=1s>=2n=2} in Table~\ref{tab:dicho}). 
Again, we can extend the results used in these lemmas to be valid for larger values of $n$ by adding new countries as isolated vertices in the computability graph.

\paragraph{Extending Case 3.}
Lastly, we modify parameters relating to the third polynomial time solvable case of {\sc Max $\Gamma$-Cycle Packing} where $\intv=\infty$, $\natv=\{\infty\}^n$, $\segv=\{\infty\}^n$, $\sv=\{\infty\}^n$ with $n\geq2$.

We first look at restricting at least one country's segment number to be a natural number, hence assuming that $\intv=\infty$, $\natv=\{\infty\}^n$, $\segv=\{\infty\}^n$, there is some $i \in \N$ such that $\sv_i \in \mathbb{N}_{\geq 1}$ and the remaining countries $j \in \N\backslash\{i\}$ have $\sv_j \in \mathbb{N}_{\geq 1}\cup\{\infty\}$. 
{\sc Max $\Gamma$-Cycle Packing} was shown to be \NP-hard under these conditions in Lemma~\ref{lem:l=inftyc=inftys=1,infty} (see Row~\ref{row:l=inftyc=inftys=1,infty} in Table~\ref{tab:dicho}). 

 Next, we assume that at least one country's segment size is restricted to be a natural number, hence assuming that $\intv=\infty$, $\natv=\{\infty\}^n$, $\sv=\{\infty\}^n$, there is some $i \in \N$ such that $\segv_i \in \mathbb{N}_{\geq 1}$ and the remaining countries $j \in \N\backslash\{i\}$ we have $\segv_j \in \mathbb{N}_{\geq 1}\cup\{\infty\}$. Recall that {\sc Max $\Gamma$-Cycle Packing} was shown to be \NP-hard under these conditions in Lemma~\ref{lem:lk=intyc=mathbbs=infty} (see Row~\ref{row:lk=intyc=mathbbs=infty} in Table~\ref{tab:dicho}). 

Finally, we address altering the national cycle limits, and at least one country has reduced its national cycle limit to a natural number. 
Thus, we need to show that $n \geq 2$, $\intv=\infty$, $\sv, \segv=\{\infty\}^n$, for some $i \in \N$ we have that $\natv_i \in \mathbb{N}_{\geq 0}$ and the remaining countries $j \in \N\backslash\{i\}$ have any national cycle limit $\natv_j \in \mathbb{N}_{\geq 0}\cup\{\infty\}$.
Lemma~\ref{lem:l=inftyk=2c=inftys=inftyn=2} shows this case of {\sc Max $\Gamma$-Cycle Packing} to be \NP-hard (see Row~\ref{row:l=inftyk=2c=inftys=inftyn=2} in Table~\ref{tab:dicho}). 
We note that can extend all lemmas in this case to be valid for larger values of $n$ by adding new countries as isolated vertices in the computability graph.

We conclude that the three stated cases of {\sc Max $\Gamma$-Cycle Packing} in this lemma are polynomial-time solvable and that all the remaining cases are \NP-hard. 
\end{proof}

Combining the previous lemmas with Theorem~\ref{t-dicho} for the case where $n=1$ provides us with our dichotomy theorem, which we restate below.

\dicho*

\end{document}